\newcommand{\mbP}{\mathbb P}
\newcommand{\mbZ}{\mathbb Z}
\newcommand{\mbC}{\mathbb C}
\newcommand{\cP}{\mathcal P}
\newcommand{\oM}{\overline{\mathcal M}}
\newcommand{\tu}{{\widetilde u}}
\newcommand{\og}{\overline g}
\newcommand{\oh}{\overline h}
\newcommand{\hLambda}{\widehat\Lambda}
\def\cM{{\mathcal{M}}}
\def\oM{{\overline{\mathcal{M}}}}
\def\CP{{{\mathbb C}{\mathbb P}}}
\def\mbQ{{\mathbb Q}}
\def\d{{\partial}}
\newcommand{\<}{\left<}
\renewcommand{\>}{\right>}
\newcommand{\eps}{\varepsilon}
\newcommand{\Odim}{O_\mathrm{dim}}
\newcommand{\ustr}{u^{\mathrm{str}}}
\newcommand{\str}{\mathrm{str}}
\newcommand{\cA}{\mathcal A}
\newcommand{\hcA}{\widehat{\mathcal A}}
\newcommand{\DR}{\mathrm{DR}}
\newcommand{\DZ}{\mathrm{DZ}}
\newcommand{\odd}{\mathrm{odd}}
\newcommand{\even}{\mathrm{even}}
\newcommand{\ct}{\mathrm{ct}}
\newcommand{\cF}{\mathcal F}
\renewcommand{\th}{\widetilde h}
\newcommand{\Coef}{\mathrm{Coef}}
\newcommand{\mcN}{\mathcal N}
\newcommand{\Deg}{\mathrm{Deg}}
\renewcommand{\top}{\mathrm{top}}
\newcommand{\red}{\mathrm{red}}
\newcommand{\grj}{\mathfrak{j}}
\newcommand{\ci}{\mathrm{i}}
\newcommand{\st}{\mathbf{H}}
\newcommand{\cQ}{\mathcal{Q}}
\newcommand{\ZZ}{\mathbb{Z}}
\newcommand{\fq}{\mathfrak{q}}
\DeclareMathOperator{\Aut}{Aut}
\newcommand{\dil}{\mathrm{dil}}
\newcommand{\tOmega}{\widetilde\Omega}
\newcommand{\MST}{\mathrm{MST}}
\newcommand{\AMST}{\mathrm{AMST}}
\newcommand{\gl}{\mathrm{gl}}
\newcommand{\oalpha}{{\overline\alpha}}
\newtheorem{theorem}{Theorem}[section]
\newtheorem{proposition}[theorem]{Proposition}
\newtheorem{lemma}[theorem]{Lemma}
\newtheorem{corollary}[theorem]{Corollary}
\newtheorem{conjecture}[theorem]{Conjecture}
\theoremstyle{definition}
\newtheorem{example}[theorem]{Example}
\newtheorem{remark}[theorem]{Remark}
\def\&{\vspace{-5pt}&}
\numberwithin{equation}{section}
\title{Tau-structure for the Double Ramification Hierarchies}
\author{Alexandr Buryak}
\author{Boris Dubrovin}
\author{J\'er\'emy Gu\'er\'e}
\author{Paolo Rossi}
\thanks{A.~Buryak: School of Mathematics, University of Leeds, Leeds LS2 9JT, United Kingdom, a.buryak@leeds.ac.uk.\\
\indent B.~Dubrovin: SISSA, via Bonomea 265, Trieste 34136, Italy, dubrovin@sissa.it.\\
\indent J.~Gu\'er\'e: Humboldt Universit\"at, Unter den Linden 6, 10099 Berlin, Germany, jeremy.guere@hu-berlin.de.\\
\indent P.~Rossi: IMB, UMR5584 CNRS, Universit\'e de Bourgogne Franche-Comt\'e, F-21000 Dijon, France, paolo.rossi@u-bourgogne.fr.}
\subjclass[2010]{37K10, 14H10}
\begin{document}

\begin{abstract}
In this paper we continue the study of the double ramification hierarchy of \cite{Bur15}. After showing that the DR hierarchy satisfies tau-symmetry we define its partition function as the (logarithm of the) tau-function of the string solution and show that it satisfies various properties (string, dilaton and divisor equations plus some important degree constraints). We then formulate a stronger version of the conjecture from \cite{Bur15}: for any semisimple cohomological field theory, the Dubrovin-Zhang and double ramification hierarchies are related by a normal (i.e. preserving the tau-structure \cite{DLYZ14}) Miura transformation which we completely identify in terms of the partition function of the CohFT. In fact, using only the partition functions, the conjecture can be formulated even in the non-semisimple case (where the Dubrovin-Zhang hierarchy is not defined). We then prove this conjecture for various CohFTs (trivial CohFT, Hodge class, Gromov-Witten theory of $\mathbb{CP}^1$, $3$-, $4$- and $5$-spin classes) and in genus $1$ for any semisimple CohFT. Finally we prove that the higher genus part of the DR hierarchy is basically trivial for the Gromov-Witten theory of smooth varieties with non-positive first Chern class and their analogue in Fan-Jarvis-Ruan-Witten quantum singularity theory \cite{FJRW}.
\end{abstract}

\maketitle

\tableofcontents

\markboth{A. Buryak, B. Dubrovin, J. Gu\'er\'e, P. Rossi}{Tau-structure for the Double Ramification Hierarchies}

\section{Introduction}

The double ramification (DR) hierarchy, introduced in \cite{Bur15} by the first author and further studied in \cite{BR14,BG15}, is an integrable system of evolutionary Hamiltonian PDEs associated to any given cohomological field theory (CohFT) on the moduli space of curves $\oM_{g,n}$. In its construction, the geometry of the cycles $\lambda_g\cdot\DR_g (a_1,\ldots,a_n)$ is involved, where $\lambda_g$ is the top Chern class of the Hodge bundle on $\oM_{g,n}$ and $\DR_g(a_1,\ldots,a_n)$ is the double ramification cycle \cite{Hai11}, the push-forward to $\oM_{g,n}$ of the virtual fundamental class of the moduli space of maps to $\mathbb{P}^1$ relative to $0$ and $\infty$, with ramification profile (orders of poles and zeros) given by $(a_1,\ldots,a_n)\in \mathbb{Z}^n$.\\

The Dubrovin-Zhang hierarchy \cite{DZ05} is another integrable system of tau-symmetric evolutionary PDEs associated to any semisimple CohFT. It is a central object in the generalization to any semisimple CohFT of the Witten-Kontsevich theorem \cite{Wit91,Kon92}. This theorem, which is equivalent to the Givental-Teleman reconstruction of the full CohFT starting from genus $0$ \cite{Teleman}, says that the partition function of the CohFT is (the logarithm of) the tau-function of the topological solution to the DZ hierarchy.\\

It is natural to ask what is the relation between the DR and DZ hierarchies. While it is trivial to see that they coincide in genus $0$, in \cite{Bur15} the first author, guided by the first computed examples, conjectured that the two hierarchies are related by a Miura transformation, i.e. a change of coordinates in the formal phase space on which the two hierarchies are defined. This conjecture was proved in a number of examples in \cite{BR14,BG15}, where some of the properties of the DR hierarchy were also studied.\\

In this paper, with the aim of better understanding the DR/DZ equivalence, we prove that the DR hierarchy, as the DZ hierarchy, is {\it tau-symmetric}, which means that hamiltonian densities with a special symmetry property (a tau-structure) exist, such that, to each solution of the hierarchy of PDEs, one can associate a single function of times, called a {\it tau-function}, encoding the time evolution of all the above hamiltonian densities. We define the {\it partition function} of the DR hierarchy as the tau-function of a special solution (the string solution, the analogue of the topological solution in Dubrovin-Zhang's theory).\\

We then formulate a stronger version of the DR/DZ equivalence conjecture: the DR and DZ hierarchies are related by a Miura transformation preserving the tau-structures (a normal Miura transformation, see \cite{DLYZ14}).\\

This makes the comparison between the DR and DZ hierarchies much more direct, as we can compare their respective partition functions (and the hierarchies themselves can be reconstructed uniquely from the partition functions). This comparison and some vanishing results for the DR partition function allow us to further predict the explicit form of the normal Miura transformation in terms of the DZ partition function. Indeed, there is a unique normal Miura transformation transforming the DZ partition function into a reduced partition function with the same vanishing properties as the DR partition function. So our conjecture becomes that this reduced DZ partition function and the DR partition function coincide.\\

One immediate application of the conjecture, when proved true, is to give a quantization of any Dubrovin-Zhang hierarchy via the above equivalence to the DR hierarchy and the quantization construction of \cite{BR15}, see also \cite{BG15} for more examples. Another application, in case the conjecture holds for any CohFT, is to provide a form of the Witten-Kontsevich theorem in the non-semisimple case. There are, moreover, implications on the study of relations in the cohomology ring of $\oM_{g,n}$ which will be addressed in a future work.\\

In this paper we prove the {\it strong DR/DZ equivalence} conjecture for the trivial CohFT, the full Chern class of the Hodge bundle, the Gromov-Witten theory of $\mathbb{CP}^1$ and Witten's $3$-, $4$- and $5$-spin classes. Furthermore, we prove it in genus $1$ for any semisimple CohFT. We then remark that the DR hierarchy construction works also for generalized forms of CohFTs (satisfying weaker axioms), like the partial CohFTs of \cite{LRZ} or the even part of the Gromov-Witten theory of a target variety. We then study the higher genus deformations of the genus $0$ DR/DZ hierarchies associated to $2$-dimensional polynomial Frobenius manifolds which satisfy the recursion equations from \cite{BR14} and compare it with the ones associated to Fan-Jarvis-Ruan-Witten quantum singularity theory. Finally we show how the DR hierarchy associated to the (even) Gromov-Witten theory of smooth varieties with non-positive first Chern class is basically trivial in positive genus and the same result holds for the analogous situation in Fan-Jarvis-Ruan-Witten quantum singularity theory \cite{FJRW}.\\

\subsection{Acknowledgements}

We would like to thank Andrea Brini, Guido Carlet, Rahul Pandharipande, Sergey Shadrin and Dimitri Zvonkine for useful discussions. A. B. was supported by Grant ERC-2012-AdG-320368-MCSK in the group of R.~Pandharipande at ETH Zurich, Grant RFFI-16-01-00409 and the Marie Curie Fellowship (project ID 797635). B.~D.~ was partially supported by PRIN 2010-11 Grant ``Geometric and analytic theory of Hamiltonian systems in finite and infinite dimensions'' of Italian Ministry of Universities and Researches. J. G. was supported by the Einstein foundation. P.~R.~was partially supported by a Chaire CNRS/Enseignement superieur 2012-2017 grant.

Part of the work was completed during the visits of B.~D. and P.~R to the Forschungsinstitut f\"ur Mathematik at ETH Z\"urich in 2014 and 2015.\\ 


\section{Double ramification hierarchy}\label{section:DR hierarchy}

In this section we briefly recall the main definitions from~\cite{Bur15} (see also~\cite{BR14}). The double ramification hierarchy is a system of commuting Hamiltonians on an infinite dimensional phase space that can be heuristically thought of as the loop space of a fixed vector space. The entry datum for this construction is a cohomological field theory  in the sense of Kontsevich and Manin~\cite{KM94}. Denote by $c_{g,n}\colon V^{\otimes n} \to H^{\even}(\oM_{g,n},\mbC)$ the system of linear maps defining the cohomological field theory, $V$ its underlying $N$-dimensional vector space, $\eta$ its metric tensor and $e_1\in V$ the unit of the cohomological field theory.

\subsection{Formal loop space} 

The loop space of $V$ will be defined somewhat formally by describing its ring of functions. Following~\cite{DZ05} (see also~\cite{Ros10}), let us consider formal variables~$u^\alpha_i$, $\alpha=1,\ldots,N$, $i=0,1,\ldots$, associated to a basis $e_1,\ldots,e_N$ of $V$. Always just at a heuristic level, the variable $u^\alpha:=u^\alpha_0$ can be thought of as the component $u^\alpha(x)$ along $e_\alpha$ of a formal loop $u\colon S^1\to V$, where $x$ is the coordinate on $S^1$, and the variables $u^\alpha_{x}:=u^\alpha_1, u^\alpha_{xx}:=u^\alpha_2,\ldots$ as its $x$-derivatives. We then define the ring $\cA_N$ of {\it differential polynomials} as the ring of polynomials $f(u;u_x,u_{xx},\ldots)$ in the variables~$u^\alpha_i, i>0$, with coefficients in the ring of formal power series in the variables $u^\alpha=u^\alpha_0$. We can differentiate a differential polynomial with respect to $x$ by applying the operator $\partial_x := \sum_{i\geq 0} u^\alpha_{i+1} \frac{\partial}{\partial u^\alpha_i}$ (in general, we use the convention of sum over repeated Greek indices, but not over repeated Latin indices). In the following, when it does not give rise to confusion, we will often employ the lighter notation $f(u)$ for a differential polynomial $f(u; u_x, u_{xx},\ldots)$. Finally, we consider the quotient~$\Lambda_N$ of the ring of differential polynomials first by additive constants and then by the image of~$\partial_x$, and we call its elements {\it local functionals}. A local functional, that is the equivalence class of a differential polynomial~$f=f(u;u_x,u_{xx},\ldots)$, will be denoted by $\overline{f}=\int f dx$. Notice here that, since the operators $\d_x$ and $\frac{\d}{\d u^\alpha}$ commute, the derivative $\frac{\d \overline{f}}{\d u^\alpha}$ is well defined in $\Lambda_N$.\\

Differential polynomials and local functionals can also be described using another set of formal variables, corresponding heuristically to the Fourier components $p^\alpha_k$, $k\in\mbZ$, of the functions $u^\alpha=u^\alpha(x)$. Let us, hence, define a change of variables
\begin{gather}\label{eq:u-p change}
u^\alpha_j = \sum_{k\in\mbZ} (i k)^j p^\alpha_k e^{i k x},
\end{gather}
which allows us to express a differential polynomial $f(u;u_x,u_{xx},\ldots)$ as a formal Fourier series in $x$ where the coefficient of $e^{i k x}$ is a power series in the variables $p^\alpha_j$ (where the sum of the subscripts in each monomial in $p^\alpha_j$ equals $k$). Moreover, the local functional~$\overline{f}$ corresponds to the constant term of the Fourier series of $f$.

Let us describe a natural class of Poisson brackets on the space of local functionals. Given an $N\times N$ matrix~$K=(K^{\mu\nu})$ of differential operators of the form $K^{\mu\nu} = \sum_{j\geq 0} K^{\mu\nu}_j \partial_x^j$, where the coefficients $K^{\mu\nu}_j$ are differential polynomials and the sum is finite, we define
$$
\{\overline{f},\overline{g}\}_{K}:=\int\left(\frac{\delta \overline{f}}{\delta u^\mu}K^{\mu \nu}\frac{\delta \overline{g}}{\delta u^\nu}\right)dx,
$$
where we have used the variational derivative $\frac{\delta \overline{f}}{\delta u^\mu}:=\sum_{i\geq 0} (-\partial_x)^i \frac{\partial f}{\partial u^\mu_i}$. Imposing that such bracket satisfies the anti-symmetry and the Jacobi identity will translate, of course, into conditions for the coefficients~$K^{\mu \nu}_j$. An operator that satisfies such conditions will be called hamiltonian. A standard example of a hamiltonian operator is given by $\eta \partial_x$, where $\eta$ is a constant nondegenerate symmetric matrix. The corresponding Poisson bracket also has a nice expression in terms of the variables $p^\alpha_k$:
\begin{gather}\label{eq:bracket of p's}
\{p^\alpha_k, p^\beta_j\}_{\eta \partial_x} = i k \eta^{\alpha \beta} \delta_{k+j,0}.
\end{gather}

Finally, we will need to consider extensions $\hcA_N$ and $\hLambda_N$ of the spaces of differential polynomials and local functionals. First, let us introduce a grading $\deg u^\alpha_i = i$  and a new variable~$\eps$ with $\deg\eps = -1$. Then $\hcA^{[k]}_N$ and $\hLambda^{[k]}_N$ are defined, respectively, as the subspaces of degree~$k$ of $\hcA_N:=\cA_N[[\eps]]$ and of~$\hLambda_N:=\Lambda_N[[\eps]]$. Their elements will still be called differential polynomials and local functionals. We can also define Poisson brackets as above, starting from a hamiltonian operator $K=(K^{\mu\nu})$, $K^{\mu\nu} = \sum_{i,j\geq 0} (K^{[i]}_j)^{\mu\nu} \eps^i \partial_x^j$, where $(K^{[i]}_j)^{\mu\nu}\in\cA_N$ and $\deg (K^{[i]}_j)^{\mu\nu}=i-j+1$. The corresponding Poisson bracket will then have degree $1$. In the sequel only such hamiltonian operators will be considered.

A hamiltonian system of PDEs is a system of the form
\begin{gather}\label{eq:Hamiltonian system}
\frac{\partial u^\alpha}{\partial \tau_i} = K^{\alpha\mu} \frac{\delta\overline{h}_i}{\delta u^\mu}, \ \alpha=1,\ldots,N ,\ i=1,2,\ldots,
\end{gather}
where $\oh_i\in\hLambda^{[0]}_N$ are local functionals with the compatibility condition $\{\oh_i,\oh_j\}_K=0$, for $i,j\geq 1$. The local functionals~$\oh_i$ are called the {\it Hamiltonians} of the system~\eqref{eq:Hamiltonian system}.


\subsection{Definition of the double ramification hierarchy} 

Consider an arbitrary cohomological field theory $c_{g,n}\colon V^{\otimes n} \to H^{\even}(\oM_{g,n},\mbC)$. We denote by $\psi_i$ the first Chern class of the line bundle over~$\oM_{g,n}$ formed by the cotangent lines at the $i$-th marked point. Denote by~$\mathbb E$ the rank~$g$ Hodge vector bundle over~$\oM_{g,n}$ whose fibers are the spaces of holomorphic one-forms. Let $\lambda_j:=c_j(\mathbb E)\in H^{2j}(\oM_{g,n},\mbC)$. The Hamiltonians of the double ramification hierarchy are defined as follows:
\begin{gather}\label{DR Hamiltonians}
\og_{\alpha,d}:=\sum_{\substack{g\ge 0\\n\ge 2}}\frac{(-\eps^2)^g}{n!}\sum_{\substack{a_1,\ldots,a_n\in\mbZ\\\sum a_i=0}}\left(\int_{\oM_{g,n+1}}\DR_g(0,a_1,\ldots,a_n)\lambda_g\psi_1^d c_{g,n+1}(e_\alpha\otimes \otimes_{i=1}^n e_{\alpha_i})\right)\prod_{i=1}^n p^{\alpha_i}_{a_i},
\end{gather}
for $\alpha=1,\ldots,N$ and $d=0,1,2,\ldots$. Here $\DR_g(a_1,\ldots,a_n) \in H^{2g}(\oM_{g,n},\mbQ)$ is the double ramification cycle. The restriction~$\left.\DR_g(a_1,\ldots,a_n)\right|_{\cM_{g,n}}$ can be defined as the Poincar\'e dual to the locus of pointed smooth curves~$[C,p_1,\ldots,p_n]$ satisfying $\mathcal O_C\left(\sum_{i=1}^n a_ip_i\right)\cong\mathcal O_C$, and we refer the reader, for example, to~\cite{Bur15} for the definition of the double ramification cycle on the whole moduli space~$\oM_{g,n}$. We will often consider the Poincar\'e dual to the double ramification cycle~$\DR_g(a_1,\ldots,a_n)$. It is an element of $H_{2(2g-3+n)}(\oM_{g,n},\mbQ)$ and, abusing our notations a little bit, it will also be denoted by $\DR_g(a_1,\ldots,a_n)$. In particular, the integral in~\eqref{DR Hamiltonians} will often be written in the following way:
\begin{gather}\label{DR integral}
\int_{\DR_g(0,a_1,\ldots,a_n)}\lambda_g\psi_1^d c_{g,n+1}(e_\alpha\otimes\otimes_{i=1}^n e_{\alpha_i}).
\end{gather}

The expression on the right-hand side of~\eqref{DR Hamiltonians} can be uniquely written as a local functional from $\hLambda_N^{[0]}$ using the change of variables~\eqref{eq:u-p change}. Concretely it can be done in the following way. The integral~\eqref{DR integral} is a polynomial in $a_1,\ldots,a_n$ homogeneous of degree~$2g$. It follows from Hain's formula~\cite{Hai11}, the result of~\cite{MW13} and the fact that $\lambda_g$ vanishes on $\oM_{g,n}\setminus\cM_{g,n}^{\ct}$, where $\cM_{g,n}^{\ct}$ is the moduli space of stable curves of compact type. Thus, the integral~\eqref{DR integral} can be written as a polynomial
\begin{gather*}
P_{\alpha,d,g;\alpha_1,\ldots,\alpha_n}(a_1,\ldots,a_n)=\sum_{\substack{b_1,\ldots,b_n\ge 0\\b_1+\ldots+b_n=2g}}P_{\alpha,d,g;\alpha_1,\ldots,\alpha_n}^{b_1,\ldots,b_n}a_1^{b_1}\ldots a_n^{b_n}.
\end{gather*}
Then we have
$$
\og_{\alpha,d}=\int\sum_{\substack{g\ge 0\\n\ge 2}}\frac{\eps^{2g}}{n!}\sum_{\substack{b_1,\ldots,b_n\ge 0\\b_1+\ldots+b_n=2g}}P_{\alpha,d,g;\alpha_1,\ldots,\alpha_n}^{b_1,\ldots,b_n} u^{\alpha_1}_{b_1}\ldots u^{\alpha_n}_{b_n}dx.
$$
Note that the integral~\eqref{DR integral} is defined only when $a_1+\ldots+a_n=0$. Therefore the polynomial~$P_{\alpha,d,g;\alpha_1,\ldots,\alpha_n}$ is actually not unique. However, the resulting local functional $\og_{\alpha,d}\in\hLambda_N^{[0]}$ doesn't depend on this ambiguity (see~\cite{Bur15}). In fact, in \cite{BR14}, a special choice of differential polynomial densities $g_{\alpha,d} \in \hcA^{[0]}_N$ for $\og_{\alpha,d} = \int g_{\alpha,d} \ dx$ is selected. They are defined in terms of $p$-variables as
$$
g_{\alpha,d}:=\sum_{\substack{g\ge 0,\,n\ge 1\\2g-1+n>0}}\frac{(-\eps^2)^g}{n!}\sum_{\substack{a_0,\ldots,a_n\in\mbZ\\\sum a_i=0}}\left(\int_{\DR_g(a_0,a_1,\ldots,a_n)}\lambda_g\psi_1^d c_{g,n+1}(e_\alpha\otimes \otimes_{i=1}^n e_{\alpha_i})\right)\prod_{i=1}^n p^{\alpha_i}_{a_i} e^{-i a_0 x},
$$
and converted univocally to differential polynomials using again the change of variables (\ref{eq:u-p change}).\\

The fact that the local functionals~$\og_{\alpha,d}$ mutually commute with respect to the standard bracket~$\eta\d_x$ was proved in~\cite{Bur15}. The system of local functionals $\og_{\alpha,d}$, for $\alpha=1,\ldots,N$, $d=0,1,2,\ldots$, and the corresponding system of hamiltonian PDEs with respect to the standard Poisson bracket~$\{\cdot,\cdot\}_{\eta\partial_x}$,
$$
\frac{\d u^\alpha}{\d t^\beta_q}=\eta^{\alpha\mu}\d_x\frac{\delta\og_{\beta,q}}{\delta u^\mu},\qquad 1\le\alpha,\beta\le N,\quad q\ge 0
$$
is called the \emph{double ramification hierarchy}.


\section{Tau-symmetric hamiltonian hierarchies}

In this section, following~\cite{DZ05} (see also~\cite{DLYZ14}), we review basic notions and facts in the theory of tau-symmetric hamiltonian hierarchies. We also find a simple sufficient condition for a hamiltonian hierarchy to have a tau-structure.

\subsection{Definition of a tau-structure}

Consider the hamiltonian system defined by a hamiltonian operator $K=(K^{\alpha\beta})_{1\le\alpha,\beta\le N}$ and a family of pairwise commuting local functionals $\oh_{\beta,q}\in\hLambda^{[0]}_N$, parameterized by two indices $1\le\beta\le N$ and $q\ge 0$, $\{\oh_{\beta,q},\oh_{\gamma,p}\}_K=0$:
\begin{gather}\label{eq:hamiltonian system2}
\frac{\d u^\alpha}{\d t^\beta_q}=K^{\alpha\mu}\frac{\delta\oh_{\beta,q}}{\delta u^\mu},\qquad 1\le\alpha,\beta\le N,\quad q\ge 0.
\end{gather}
A hamiltonian system of this form is called a hamiltonian hierarchy. Let us assume that the Hamiltonian~$\oh_{1,0}$ generates the spatial translations:
$$
K^{\alpha\mu}\frac{\delta\oh_{1,0}}{\delta u^\mu}=u^\alpha_x.
$$
Consider the $\eps$-expansion $K=\sum_{i\ge 0}\eps^i K^{[i]}$. The leading term $K^{[0]}$ is also a hamiltonian operator and we have
\begin{gather*}
(K^{[0]})^{\alpha\beta}=g^{\alpha\beta}(u)\d_x+b^{\alpha\beta}_\gamma(u)u^\gamma_x,
\end{gather*}
where $g^{\alpha\beta}(u)$ and $b^{\alpha\beta}_\gamma(u)$ are formal power series in $u^1,\ldots,u^N$. A {\it tau-structure} for the hierarchy~\eqref{eq:hamiltonian system2} is a collection of differential polynomials~$h_{\beta,q}\in\hcA^{[0]}_N$, $1\le\beta\le N$, $q\ge -1$, such that the following conditions hold:
\begin{enumerate}

\item The local functionals $\oh_{\beta,-1}:=\int h_{\beta,-1}dx$ are Casimirs of the hamiltonian operator~$K$,
\begin{gather}\label{eq:tau-structure1}
K^{\alpha\mu}\frac{\delta\oh_{\beta,-1}}{\delta u^\mu}=0.
\end{gather}

\item The $N$ Casimirs $\oh_{\beta,-1}$ are linearly independent.

\item We have 
\begin{gather}\label{eq:tau-structure,non-degeneracy}
\left.\det(g^{\alpha\beta})\right|_{u^*=0}\ne 0.
\end{gather}

\item For $q\ge 0$, the differential polynomials~$h_{\beta,q}$ are densities for the Hamiltonians~$\oh_{\beta,q}$,
\begin{gather}\label{eq:tau-structure2}
\oh_{\beta,q}=\int h_{\beta,q}dx.
\end{gather}

\item Tau-symmetry:
\begin{gather}\label{eq:tau-structure3}
\frac{\d h_{\alpha,p-1}}{\d t^\beta_q}=\frac{\d h_{\beta,q-1}}{\d t^\alpha_p},\quad1\le\alpha,\beta\le N,\quad p,q\ge 0.
\end{gather}

\end{enumerate}
Recall that the bracket $\{f,\oh\}_K$ of a differential polynomial $f\in\hcA_N$ and a local functional $\oh\in\hLambda^{[0]}_N$ is defined by
\begin{gather}\label{eq:bracket of polynomial and functional}
\{f,\oh\}_K:=\sum_{n\ge 0}\frac{\d f}{\d u^\gamma_n}\d_x^n \left(K^{\gamma\mu}\frac{\delta\oh}{\delta u^\mu}\right).
\end{gather}
Therefore, condition~\eqref{eq:tau-structure3} can be equivalently written in the following way:
$$
\{h_{\alpha,p-1},\oh_{\beta,q}\}_K=\{h_{\beta,q-1},\oh_{\alpha,p}\}_K.
$$
Existence of a tau-structure imposes non-trivial constraints on a hamiltonian hierarchy. If a tau-structure exists, it is not unique. We will see it in Section~\ref{subsection:normal Miura}. A hamiltonian hierarchy~\eqref{eq:hamiltonian system2} with a fixed tau-structure will be called {\it tau-symmetric}. 


\subsection{Sufficient condition for existence of a tau-structure}

Consider a hamiltonian hierarchy~\eqref{eq:hamiltonian system2}. In the same way, as in the previous section, we assume that the Hamiltonian~$\oh_{1,0}$ generates the spatial translations. Suppose that $K=\eta\d_x$, where $\eta=(\eta^{\alpha\beta})_{1\le\alpha,\beta\le N}$ is a symmetric non-degenerate constant complex matrix.

\begin{proposition}\label{proposition:sufficient condition}
Suppose that 
$$
\frac{\d\oh_{\beta,q}}{\d u^1}=
\begin{cases}
\oh_{\beta,q-1},&\text{if $q\ge 1$},\\
\int\theta_{\beta\mu}u^\mu dx,&\text{if $q=0$},
\end{cases}
$$
where $\theta=(\theta_{\beta\mu})$ is a non-degenerate constant complex matrix. Then the differential polynomials 
$$
h_{\beta,q}:=\frac{\delta\oh_{\beta,q+1}}{\delta u^1},\quad q\ge -1,
$$
define a tau-structure for the hierarchy~\eqref{eq:hamiltonian system2}.
\end{proposition}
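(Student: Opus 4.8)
The plan is to verify, one at a time, the five conditions defining a tau-structure for the differential polynomials $h_{\beta,q}:=\frac{\delta\oh_{\beta,q+1}}{\delta u^1}$, $q\ge -1$; four of them are short, and only tau-symmetry carries content. First two remarks: $\frac{\partial}{\partial u^1}=\frac{\partial}{\partial u^1_0}$ commutes with $\d_x$ on $\hcA_N$ and annihilates constants, so it descends to $\hLambda_N$ (so the hypothesis makes sense), and $\frac{\delta}{\delta u^1}$ preserves the grading $\deg u^\alpha_i=i$, $\deg\eps=-1$, so $h_{\beta,q}\in\hcA^{[0]}_N$. The workhorse for conditions (1), (2), (4) is the elementary identity $\int\frac{\delta\overline a}{\delta u^1}\,dx=\frac{\partial\overline a}{\partial u^1}$ in $\hLambda_N$, valid for any local functional $\overline a$: picking a density $a$ and writing $\frac{\delta\overline a}{\delta u^1}=\sum_{i\ge 0}(-\d_x)^i\frac{\partial a}{\partial u^1_i}$, only the $i=0$ term survives under $\int dx$. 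Applied to $\overline a=\oh_{\beta,q+1}$ together with the hypothesis, this gives $\int h_{\beta,q}\,dx=\frac{\partial\oh_{\beta,q+1}}{\partial u^1}=\oh_{\beta,q}$ for $q\ge 0$ (condition (4)), and $\oh_{\beta,-1}:=\int h_{\beta,-1}\,dx=\int\theta_{\beta\mu}u^\mu\,dx$. The latter has constant variational derivative $\frac{\delta\oh_{\beta,-1}}{\delta u^\nu}=\theta_{\beta\nu}$, so $K^{\alpha\nu}\frac{\delta\oh_{\beta,-1}}{\delta u^\nu}=\eta^{\alpha\nu}\d_x(\theta_{\beta\nu})=0$: the $\oh_{\beta,-1}$ are Casimirs (condition (1)), and they are linearly independent since $\theta$ is non-degenerate (condition (2)). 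Condition (3) is immediate: $K=\eta\d_x$ has no $\eps$, so $K^{[0]}=\eta\d_x$ and $g^{\alpha\beta}=\eta^{\alpha\beta}$, constant with non-zero determinant.

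What remains is tau-symmetry, equivalently $\{h_{\alpha,p-1},\oh_{\beta,q}\}_K=\{h_{\beta,q-1},\oh_{\alpha,p}\}_K$ for all $p,q\ge 0$. Write $D^{(\mu)}_f:=\sum_{i\ge 0}\frac{\partial f}{\partial u^\mu_i}\d_x^i$ for the linearization operator of a differential polynomial $f$ in the $u^\mu$-direction, so that $\{f,\oh_{\beta,q}\}_K=\sum_\gamma D^{(\gamma)}_f\bigl(K^{\gamma\mu}\frac{\delta\oh_{\beta,q}}{\delta u^\mu}\bigr)$. The plan is to deduce condition (5) from the identity, valid for any local functional $\overline a$ and any $q\ge 0$,
\begin{gather}\label{eq:plan-key}
\frac{\partial}{\partial t^\beta_q}\!\left(\frac{\delta\overline a}{\delta u^1}\right)=\frac{\delta}{\delta u^1}\!\left(\frac{\partial\overline a}{\partial t^\beta_q}\right)+\sum_{\gamma,\mu}\eta^{\gamma\mu}\,D^{(\mu)}_{h_{\beta,q-1}}\!\left(\d_x\frac{\delta\overline a}{\delta u^\gamma}\right).
\end{gather}
Granting \eqref{eq:plan-key}, take $\overline a=\oh_{\alpha,p}$ with $p\ge 0$: the middle term vanishes because $\frac{\partial\oh_{\alpha,p}}{\partial t^\beta_q}=\{\oh_{\alpha,p},\oh_{\beta,q}\}_K=0$, and in the last term $\sum_\gamma\eta^{\gamma\mu}\d_x\frac{\delta\oh_{\alpha,p}}{\delta u^\gamma}=K^{\mu\gamma}\frac{\delta\oh_{\alpha,p}}{\delta u^\gamma}$ by symmetry of $\eta$, whence
$$
\frac{\partial h_{\alpha,p-1}}{\partial t^\beta_q}=\sum_\mu D^{(\mu)}_{h_{\beta,q-1}}\!\left(K^{\mu\gamma}\frac{\delta\oh_{\alpha,p}}{\delta u^\gamma}\right)=\{h_{\beta,q-1},\oh_{\alpha,p}\}_K=\frac{\partial h_{\beta,q-1}}{\partial t^\alpha_p},
$$
which is exactly tau-symmetry.

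To prove \eqref{eq:plan-key} I would expand $\frac{\partial}{\partial t^\beta_q}\frac{\delta\overline a}{\delta u^1}=\sum_i(-\d_x)^i\frac{\partial}{\partial t^\beta_q}\frac{\partial a}{\partial u^1_i}$ (using $[\frac{\partial}{\partial t^\beta_q},\d_x]=0$) and then commute $\frac{\partial}{\partial t^\beta_q}$ past $\frac{\partial}{\partial u^1_i}$; besides $\frac{\delta}{\delta u^1}\bigl(\frac{\partial\overline a}{\partial t^\beta_q}\bigr)$ this produces a correction which, after using the rule $D^{(1)}_{\d_x Z}=\d_x\circ D^{(1)}_Z$ and passing to formal adjoints, collapses to $-\sum_\gamma\bigl(D^{(1)}_{X^\gamma}\bigr)^{\!*}\!\bigl(\frac{\delta\overline a}{\delta u^\gamma}\bigr)$, where $X^\gamma=K^{\gamma\mu}\frac{\delta\oh_{\beta,q}}{\delta u^\mu}=\eta^{\gamma\mu}\d_x\frac{\delta\oh_{\beta,q}}{\delta u^\mu}$ and $*$ denotes the formal adjoint of a differential operator. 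Since $\eta$ is constant, $\bigl(D^{(1)}_{X^\gamma}\bigr)^{\!*}=-\eta^{\gamma\mu}\bigl(D^{(1)}_{\delta\oh_{\beta,q}/\delta u^\mu}\bigr)^{\!*}\circ\d_x$, and the standard fact that the variational Hessian of a local functional is formally self-adjoint — $\bigl(D^{(1)}_{\delta\oh_{\beta,q}/\delta u^\mu}\bigr)^{\!*}=D^{(\mu)}_{\delta\oh_{\beta,q}/\delta u^1}=D^{(\mu)}_{h_{\beta,q-1}}$, this being the Helmholtz condition for the exact variational $1$-form $\delta\oh_{\beta,q}/\delta u^\bullet$ — turns the correction into the form stated in \eqref{eq:plan-key}.

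I expect the single real obstacle to be the proof of \eqref{eq:plan-key}: the careful bookkeeping of the linearization-operator correction terms produced when $\frac{\partial}{\partial t^\beta_q}$ is commuted through $\frac{\delta}{\delta u^1}$, and the index- and sign-correct invocation of the self-adjointness of $\delta\oh_{\beta,q}/\delta u^\bullet$. Once \eqref{eq:plan-key} is in place, the rest — conditions (1)--(4) and the final contraction of indices in condition (5) — is routine.
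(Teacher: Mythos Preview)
Your argument is correct. Conditions (1)--(4) are handled exactly as in the paper, and your identity \eqref{eq:plan-key} together with the Helmholtz self-adjointness $\bigl(D^{(1)}_{\delta\oh_{\beta,q}/\delta u^\mu}\bigr)^{*}=D^{(\mu)}_{\delta\oh_{\beta,q}/\delta u^1}$ gives tau-symmetry cleanly.

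The route, however, differs from the paper's. The paper does not develop a commutator formula for $\bigl[\frac{\partial}{\partial t^\beta_q},\frac{\delta}{\delta u^1}\bigr]$ in the differential-polynomial formalism; instead it passes to the Fourier variables $p^\alpha_k$, where the Poisson bracket becomes $\{\oh_{\alpha,p},\oh_{\beta,q}\}_{\eta\d_x}=\sum_m im\,\eta^{\mu\nu}\frac{\partial\oh_{\alpha,p}}{\partial p^\mu_m}\frac{\partial\oh_{\beta,q}}{\partial p^\nu_{-m}}$ and $\frac{\delta}{\delta u^1}=\sum_n e^{-inx}\frac{\partial}{\partial p^1_n}$, and simply applies Leibniz to $0=\frac{\delta}{\delta u^1}\{\oh_{\alpha,p},\oh_{\beta,q}\}$. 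The two summands are then recognized as $\{h_{\alpha,p-1},\oh_{\beta,q}\}$ and $-\{h_{\beta,q-1},\oh_{\alpha,p}\}$. This is shorter and requires nothing beyond partial differentiation; your approach trades that brevity for a coordinate-free statement, making explicit that tau-symmetry here is precisely the self-adjointness of the variational Hessian of $\oh_{\beta,q}$ paired with the specific form $K=\eta\d_x$. In fact the paper's $p$-variable Leibniz step is the Fourier avatar of the Helmholtz condition you invoke, so the two proofs are dual rather than unrelated.
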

\begin{proof}
We have $\oh_{\beta,-1}=\int\theta_{\beta\mu}u^\mu dx$. Clearly, these local functionals are Casimirs for the operator~$\eta\d_x$ and are linearly independent. Condition~\eqref{eq:tau-structure,non-degeneracy} is obvious. Condition~\eqref{eq:tau-structure2} is also clear, since for $q\ge 0$ we have
$$
\int h_{\beta,q}dx=\int\frac{\delta\oh_{\beta,q+1}}{\delta u^1}dx=\frac{\d}{\d u^1}\oh_{\beta,q+1}=\oh_{\beta,q}.
$$
Let us check the tau-symmetry condition~\eqref{eq:tau-structure3}. We have the commutativity $\{\oh_{\alpha,p},\oh_{\beta,q}\}_{\eta\d_x}=0$. Let us apply the variational derivative $\frac{\delta}{\delta u^1}$ to this equation. It is much easier to do it in the $p$-variables~\eqref{eq:u-p change}. By~\eqref{eq:bracket of p's}, we have $\{\oh_{\alpha,p},\oh_{\beta,q}\}_{\eta\d_x}=\sum_{n\in\mbZ}in\eta^{\mu\nu}\frac{\d\oh_{\alpha,p}}{\d p^\mu_n}\frac{\d\oh_{\beta,q}}{\d p^\nu_{-n}}$. For the variational derivative we have $\frac{\delta\oh}{\delta u^\gamma}=\sum_{n\in\mbZ}e^{-inx}\frac{\d\oh}{\d p^\gamma_n}$ for any $\oh\in\hLambda^{[0]}_N$. Therefore, we get
\begin{align*}
0&=\frac{\delta}{\delta u^1}\{\oh_{\alpha,p},\oh_{\beta,q}\}_{\eta\d_x}=\\
&=\sum_{n\in\mbZ}e^{-inx}\frac{\d}{\d p^1_n}\left(\sum_{m\in\mbZ}im\eta^{\mu\nu}\frac{\d\oh_{\alpha,p}}{\d p^\mu_m}\frac{\d\oh_{\beta,q}}{\d p^\nu_{-m}}\right)=\\
&=\sum_{m\in\mbZ}im\eta^{\mu\nu}\frac{\d}{\d p^\mu_m}\left(\sum_{n\in\mbZ}e^{-inx}\frac{\d\oh_{\alpha,p}}{\d p^1_n}\right)\frac{\d\oh_{\beta,q}}{\d p^\nu_{-m}}+\sum_{m\in\mbZ}im\eta^{\mu\nu}\frac{\d\oh_{\alpha,p}}{\d p^\mu_m}\frac{\d}{\d p^\nu_{-m}}\left(\sum_{n\in\mbZ}e^{-inx}\frac{\d\oh_{\beta,q}}{\d p^1_n}\right)=\\
&=\sum_{m\in\mbZ}im\eta^{\mu\nu}\frac{\d}{\d p^\mu_m}\left(\frac{\delta\oh_{\alpha,p}}{\delta u^1}\right)\frac{\d\oh_{\beta,q}}{\d p^\nu_{-m}}+\sum_{m\in\mbZ}im\eta^{\mu\nu}\frac{\d\oh_{\alpha,p}}{\d p^\mu_m}\frac{\d}{\d p^\nu_{-m}}\left(\frac{\delta\oh_{\beta,q}}{\delta u^1}\right)=\\
&=\sum_{m\in\mbZ}im\eta^{\mu\nu}\frac{\d h_{\alpha,p-1}}{\d p^\mu_m}\frac{\d\oh_{\beta,q}}{\d p^\nu_{-m}}-\sum_{m\in\mbZ}im\eta^{\mu\nu}\frac{\d h_{\beta,q-1}}{\d p^\mu_m}\frac{\d\oh_{\alpha,p}}{\d p^\nu_{-m}}=\\
&=\{h_{\alpha,p-1},\oh_{\beta,q}\}_{\eta\d_x}-\{h_{\beta,q-1},\oh_{\alpha,p}\}_{\eta\d_x}=\\
&=\frac{\d h_{\alpha,p-1}}{\d t^\beta_q}-\frac{\d h_{\beta,q-1}}{\d t^\alpha_p}.
\end{align*}
The proposition is proved.
\end{proof}


\subsection{Tau-functions}\label{subsection:definition of a tau-function}

In this section we define a certain function associated to any solution of a tau-symmetric hamiltonian hierarchy. This function is called the tau-function.

Consider a hamiltonian hierarchy~\eqref{eq:hamiltonian system2}. We again assume that the Hamiltonian $\oh_{1,0}$ generates the spatial translations. Suppose that differential polynomials $h_{\beta,q}$, $1\le\beta\le N$, $q\ge -1$, define a tau-structure for our hierarchy. From conditions~\eqref{eq:tau-structure1} and~\eqref{eq:tau-structure2} it follows that for any $p,q\ge 0$ we have
\begin{gather}\label{eq:conserved quantity}
\int\frac{\d h_{\alpha,p-1}}{\d t^\beta_q}dx=0.
\end{gather}
The differential polynomial $\frac{\d h_{\alpha,p-1}}{\d t^\beta_q}$ belongs to $\hcA_N^{[1]}$, therefore it doesn't have a constant term. Thus, equation~\eqref{eq:conserved quantity} implies that there exists a unique differential polynomial $\Omega_{\alpha,p;\beta,q}\in\hcA_N^{[0]}$ such that 
\begin{gather}\label{eq:definition of Omega}
\d_x\Omega_{\alpha,p;\beta,q}=\frac{\d h_{\alpha,p-1}}{\d t^\beta_q}\quad\text{and}\quad\left.\Omega_{\alpha,p;\beta,q}\right|_{u^*_*=0}=0.
\end{gather}
The differential polynomial $\Omega_{\alpha,p;\beta,q}$ is called the {\it two-point correlation function} of the given tau-structure of the hierarchy. From condition~\eqref{eq:tau-structure3} it follows that 
\begin{gather}\label{eq:two-point}
\Omega_{\alpha,p;\beta,q}=\Omega_{\beta,q;\alpha,p}
\end{gather}
and, moreover, it implies that the differential polynomial 
\begin{gather}\label{eq:three-point}
\frac{\d\Omega_{\alpha,p;\beta,q}}{\d t^\gamma_r}
\end{gather}
is symmetric with respect to all permutations of the pairs $(\alpha,p)$, $(\beta,q)$, $(\gamma,r)$. Since the Hamiltonian $\oh_{1,0}$ generates the spatial translations, equation~\eqref{eq:definition of Omega} implies that $\d_x\Omega_{\alpha,p;1,0}=\d_x h_{\alpha,p-1}$, $p\ge 0$. Therefore,
\begin{gather}\label{eq:Omega-h relation}
\Omega_{\alpha,p;1,0}-h_{\alpha,p-1}=C,\quad p\ge 0,
\end{gather}
where $C$ is a constant.

Consider an arbitrary solution  
$$
u^\alpha=u^\alpha(x,t^*_*;\eps)\in\mbC[[x,t^*_*,\eps]],\quad \alpha=1,2,\ldots,N,
$$
of our hierarchy~\eqref{eq:hamiltonian system2}. In order to avoid convergence issues, we assume that $\left.u^\alpha(x,t^*_*;\eps)\right|_{x=t^*_*=\eps=0}=0$. Then equation~\eqref{eq:two-point} and the symmetry of~\eqref{eq:three-point} imply that there exists a function $P\in\eps^{-2}\mbC[[t^*_*,\eps]]$ such that 
$$
\left.\left(\Omega_{\alpha,p;\beta,q}(u(x,t^*_*;\eps);u_x(x,t^*_*;\eps),\ldots)\right)\right|_{x=0}=\eps^2\frac{\d^2 P}{\d t^\alpha_p\d t^\beta_q},\quad\text{for any $1\le\alpha,\beta\le N$ and $p,q\ge 0$}.
$$
The exponent $\tau:=e^P$ is called the {\it tau-function} of the solution~$u^\alpha=u^\alpha(x,t;\eps)$ with respect to the given tau-structure of our hierarchy. Since $P\in\eps^{-2}\mbC[[t^*_*,\eps]]$, the exponent $e^P$ can't be defined in the usual sense. It can be considered as a generator of a rank $1$ module over the ring $\mbC[[t^*_*]][\eps^{-1},\eps]]$. Then the derivatives $\frac{\d\tau}{\d t^\alpha_p}=\frac{\d P}{\d t^\alpha_p}\tau$ are correctly defined and are elements of the same space (see e.~g. the discussion in~\cite[Section 1.3]{Get98}). These subtleties are not so important for us, because we will mostly work with the function~$P=\log\tau$.  Clearly, the tau-function $\tau(t;\eps)$ is determined uniquely up to a transformation of the form
\begin{gather}\label{eq:ambiguity for tau-function}
\tau(t^*_*;\eps)\mapsto e^{\eps^{-2}\left(a(\eps)+\sum_{r\ge 0}b_{\gamma,r}(\eps)t^\gamma_r\right)}\tau(t^*_*;\eps),
\end{gather}
where $a(\eps),b_{\gamma,r}(\eps)\in\mbC[[\eps]]$. 


\subsection{Miura transformations}\label{subsection:Miura transformations}

Here we want to discuss changes of variables in the theory of hamiltonian systems and introduce appropriate notations. 

First of all, let us modify our notations a little bit. Recall that by $\cA_N$ we denoted the ring of differential polynomials in the variables $u^1,\ldots,u^N$. Since we are going to consider rings of differential polynomials in different variables, we want to see the variables in the notation. So for the rest of the paper we denote by~$\cA_{u^1,\ldots,u^N}$ the ring of differential polynomials in variables~$u^1,\ldots,u^N$. The same notation is adopted for the extension $\hcA_{u^1,\ldots,u^N}$ and for the spaces of local functionals~$\Lambda_{u^1,\ldots,u^N}$ and~$\hLambda_{u^1,\ldots,u^N}$.

Consider changes of variables of the form
\begin{align}
&u^\alpha\mapsto\tu^\alpha(u)=\sum_{k\ge 0}\eps^k f^\alpha_k(u),\quad \alpha=1,\ldots,N,\label{eq:Miura transformation}\\
&f^\alpha_k\in\cA_{u^1,\ldots,u^N},\quad\deg f^\alpha_k=k,\label{eq:degree condition}\\
&f^\alpha_0|_{u^*=0}=0,\quad\left.\det\left(\frac{\d f_0^\alpha}{\d u^\beta}\right)\right|_{u^*=0}\ne 0.\notag
\end{align}
They are called {\it Miura transformations}. These transformations form a group that is called the {\it Miura group}. We say that the Miura transformation is close to identity if $f^\alpha_0=u^\alpha$.

Any differential polynomial $f(u)\in\hcA_{u^1,\ldots,u^N}$ can be rewritten as a differential polynomial in the new variables $\tu^\alpha$. The resulting differential polynomial is denoted by $f(\tu)$. The last equation in line~\eqref{eq:degree condition} guarantees that, if $f(u)\in\hcA_{u^1,\ldots,u^N}^{[d]}$, then $f(\tu)\in\hcA_{\tu^1,\ldots,\tu^N}^{[d]}$. In other words, a Miura transformation defines an isomorphism $\hcA_{u^1,\ldots,u^N}^{[d]}\simeq\hcA_{\tu^1,\ldots,\tu^N}^{[d]}$. In the same way any Miura transformation identifies the spaces of local functionals $\hLambda^{[d]}_{u^1,\ldots,u^N}$ and $\hLambda^{[d]}_{\tu^1,\ldots,\tu^N}$. For any local functional $\oh[u]\in\hLambda^{[d]}_{u^1,\ldots,u^N}$ the image of it under the isomorphism $\hLambda^{[d]}_{u^1,\ldots,u^N}\stackrel{\sim}{\to}\hLambda^{[d]}_{\tu^1,\ldots,\tu^N}$ is denoted by $\oh[\tu]\in\hLambda^{[d]}_{\tu^1,\ldots,\tu^N}$. 

Let us describe the action of Miura transformations on hamiltonian systems. Consider a hamiltonian system~\eqref{eq:Hamiltonian system} and a Miura transformation~\eqref{eq:Miura transformation}. Then in the new variables~$\tu^\alpha$, the system~\eqref{eq:Hamiltonian system} looks as follows:
\begin{align}
&\frac{\d\tu^\alpha}{\d\tau_i}=K_{\tu}^{\alpha\mu}\frac{\delta\oh_i[\tu]}{\delta \tu^\mu},\quad\text{where}\notag\\
&K_{\tu}^{\alpha\beta}=\sum_{p,q\ge 0}\frac{\d \tu^\alpha(u)}{\d u^\mu_p}\d_x^p\circ K^{\mu\nu}\circ(-\d_x)^q\circ\frac{\d \tu^\beta(u)}{\d u^\nu_q}.\label{eq:transformation of an operator}
\end{align}
Suppose now that we have a tau-symmetric hamiltonian hierarchy~\eqref{eq:hamiltonian system2} with a tau-structure given by differential polynomials $h_{\beta,q}$. Then the differential polynomials $h_{\beta,q}(\tu)$ define a tau-structure for the hierarchy in the coordinates $\tu^\alpha$. Moreover, if $u^\alpha=u^\alpha(x,t^*_*;\eps)$ is a solution of our hierarchy~\eqref{eq:hamiltonian system2} and $\tau(t^*_*;\eps)$ is its tau-function, then $\tu^
\alpha=\tu^\alpha(u(x,t^*_*;\eps);u_x(x,t^*_*;\eps),\ldots)$ is a solution of the hierarchy in the coordinates $\tu^\alpha$ and $\tau(t^*_*;\eps)$ is its tau-function.

Now we would like to formulate a simple technical lemma about the behavior of the constant term of a hamiltonian operator under Miura transformations. The statement of this lemma was already noticed in~\cite{BPS12b} (see Lemma~20 there). Let $K$ be a hamiltonian operator. Consider the expansion $K=\sum_{i\ge 0}K_i\d_x^i$, where $K_i$ are matrices of differential polynomials. 

\begin{lemma}\label{lemma:constant term}
Suppose that~$K_0=0$ and a Miura transformation $u^\alpha\mapsto \tu^\alpha(u)$ has the form $\tu^\alpha(u)=u^\alpha+\d_x r^\alpha$, where $r^\alpha\in\hcA^{[-1]}_{u^1,\ldots,u^N}$. Then~$(K_{\tu})_0=0$.
\end{lemma}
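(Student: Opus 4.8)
The plan is to use the formula \eqref{eq:transformation of an operator} for the transformed operator and track the lowest-order term in $\partial_x$. Writing $\tu^\alpha(u)=u^\alpha+\partial_x r^\alpha$ with $r^\alpha\in\hcA^{[-1]}_{u^1,\ldots,u^N}$, we have $\frac{\partial\tu^\alpha}{\partial u^\mu_p}=\delta^\alpha_\mu\delta_{p,0}+\frac{\partial(\partial_x r^\alpha)}{\partial u^\mu_p}$, and the key observation is that $\partial_x r^\alpha$ lies in the image of $\partial_x$, so each of its partial derivatives $\frac{\partial(\partial_x r^\alpha)}{\partial u^\mu_p}$ will contribute, after the appropriate manipulation, a factor that carries at least one $\partial_x$.

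First I would substitute into \eqref{eq:transformation of an operator} and expand into four pieces according to whether each of the two factors $\frac{\partial\tu^\alpha}{\partial u^\mu_p}$, $\frac{\partial\tu^\beta}{\partial u^\nu_q}$ equals the ``identity part'' $\delta^\alpha_\mu\delta_{p,0}$ or the ``$\partial_x r$ part''. The identity--identity term is just $K^{\alpha\beta}$ itself, which has vanishing $\partial_x^0$-coefficient since $K_0=0$. For the two mixed terms, say the one with $\frac{\partial\tu^\alpha}{\partial u^\mu_p}=\delta^\alpha_\mu\delta_{p,0}$ and $\frac{\partial\tu^\beta}{\partial u^\nu_q}$ nontrivial, the contribution is $K^{\alpha\nu}\circ(-\partial_x)^q\circ\frac{\partial(\partial_x r^\beta)}{\partial u^\nu_q}$ summed over $q$; the point is that $\sum_q (-\partial_x)^q\left(\frac{\partial(\partial_x r^\beta)}{\partial u^\nu_q}\,\cdot\right)$ is, up to sign, the formal adjoint applied to $\partial_x r^\beta$'s ``Fréchet derivative'', and since $r^\beta$ appears only through $\partial_x r^\beta$ the resulting operator is divisible by $\partial_x$ on the right. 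Concretely, the cleanest way to see this is to use the identity $\sum_p \frac{\partial(\partial_x f)}{\partial u^\mu_p}\partial_x^p = \partial_x\circ\sum_p\frac{\partial f}{\partial u^\mu_p}\partial_x^p$ (i.e. the Fréchet derivative commutes appropriately with $\partial_x$), and its adjoint version, so that the mixed term acquires a $\partial_x$ either on the left (absorbed into $K$, which already has no constant term) or on the right (making the constant term vanish directly). The identity--identity term aside, the remaining identity--nontrivial, nontrivial--identity, and nontrivial--nontrivial terms all manifestly carry a $\partial_x$ on at least one side, hence contribute nothing to $(K_{\tu})_0$.

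The main obstacle will be bookkeeping the adjoint: in \eqref{eq:transformation of an operator} the left factor appears as $\frac{\partial\tu^\alpha}{\partial u^\mu_p}\partial_x^p\circ(\cdots)$ while the right factor appears as $(\cdots)\circ(-\partial_x)^q\circ\frac{\partial\tu^\beta}{\partial u^\nu_q}$, so for the right factor I need the statement that $\sum_q(-\partial_x)^q\circ\frac{\partial(\partial_x r^\beta)}{\partial u^\nu_q}$ equals $\left(\sum_q(-\partial_x)^q\circ\frac{\partial r^\beta}{\partial u^\nu_q}\right)\circ\partial_x$ up to lower-order rearrangement — essentially the fact that the variational/Fréchet derivative of $\partial_x r$ factors through $\partial_x$. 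Once that identity is in hand, each term in the expansion of $K_{\tu}$ other than $K$ has a $\partial_x$ on the right, and $K$ itself contributes nothing to the constant term by hypothesis; extracting the $\partial_x^0$-coefficient then gives $(K_{\tu})_0=0$. Finally, I should check the degree bookkeeping is consistent: $r^\alpha$ has degree $-1$, so $\partial_x r^\alpha$ has degree $0$, consistent with a close-to-identity Miura transformation, and this does not interfere with the argument.
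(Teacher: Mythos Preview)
Your approach works but is more laborious than needed, and one step is stated imprecisely. Splitting into four terms is fine, but be careful: a $\partial_x$ factor on the \emph{left} of an operator does \emph{not} by itself kill the constant term (e.g.\ $\partial_x\circ f$, meaning multiplication by $f$ followed by $\partial_x$, has constant term $f_x$). For your nontrivial--identity piece
\[
\sum_p\frac{\partial(\partial_x r^\alpha)}{\partial u^\mu_p}\,\partial_x^p\circ K^{\mu\beta}
=\partial_x\circ\Bigl(\sum_p\frac{\partial r^\alpha}{\partial u^\mu_p}\,\partial_x^p\Bigr)\circ K^{\mu\beta},
\]
the vanishing of the constant term genuinely comes from $K_0=0$ (so $K$ annihilates~$1$), not from the left $\partial_x$; your parenthetical ``absorbed into $K$'' hints at this but the logic should be made explicit.

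The paper's argument avoids the four-way split altogether: apply $K_{\tu}^{\alpha\beta}$ directly to the constant~$1$. The right-hand factor $\sum_{q\ge0}(-\partial_x)^q\circ\frac{\partial\tu^\beta}{\partial u^\nu_q}$ acting on~$1$ is exactly the variational derivative $\frac{\delta\tu^\beta}{\delta u^\nu}$, which equals $\delta^\beta_\nu$ since $\partial_x r^\beta$ lies in the image of $\partial_x$ and hence has zero variational derivative. Then $K$ acts on this constant and, since $K_0=0$, the result is~$0$. So recognizing the variational derivative does in one line what your adjoint-Fr\'echet factorization spreads over four cases.
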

\begin{proof}
We compute
$$
(K_{\tu})^{\alpha\beta}_0=\sum_{p,q\ge 0}\sum_{s\ge 1}\frac{\d\tu^\alpha(u)}{\d u^\mu_p}\d_x^p\left(K^{\mu\nu}_s\d_x^s(-\d_x)^q\frac{\d\tu^\beta(u)}{\d u^\nu_q}\right)=\sum_{p,q\ge 0}\sum_{s\ge 1}\frac{\d\tu^\alpha(u)}{\d u^\mu_p}\d_x^p\left(K^{\mu\nu}_s\d_x^s\frac{\delta\tu^\beta(u)}{\delta u^\nu}\right).
$$
Since $\frac{\delta\tu^\beta(u)}{\delta u^\nu}=\delta^\beta_\nu$ and $K^{\mu\nu}_0=0$, the last expression is equal to zero. The lemma is proved.
\end{proof}


\subsection{Normal coordinates of a tau-symmetric hierarchy}\label{subsection:normal coordinates}

Consider a hamiltonian hierarchy~\eqref{eq:hamiltonian system2}, where the Hamiltonian $\oh_{1,0}$ generates the spatial translations. Suppose that differential polynomials~$h_{\beta,q}$, $q\ge -1$, define a tau-structure for our hierarchy. Note that if we add some constants to~$h_{\beta,q}$, then the resulting differential polynomials also define a tau-structure for the hierarchy. Let us assume that $h_{\beta,-1}|_{u^*_*=0}=0$. Recall that
\begin{gather}\label{eq:hydrodynamic operator}
(K^{[0]})^{\alpha\beta}=g^{\alpha\beta}(u)\d_x+b^{\alpha\beta}_\gamma(u)u^\gamma_x,
\end{gather}
where $\left.\det(g^{\alpha\beta})\right|_{u^*=0}\ne 0$. Then the matrix $(g^{\alpha\beta})$ is symmetric, the inverse matrix $(g_{\alpha\beta})$ defines a flat metric and the functions $\Gamma_{\alpha\beta}^\gamma(u):=-g_{\alpha\mu}(u)b^{\mu\gamma}_\beta(u)$ are the coefficients of the Levi-Civita connection corresponding to this metric (see~\cite{DN83}). The space of Casimirs of the operator~\eqref{eq:hydrodynamic operator} is $N$-dimensional and is spanned by the local functionals $\int v^\alpha(u) dx$, $\alpha=1,2,\ldots,N$, where $v^\alpha(u^1,\ldots,u^N)$ are flat coordinates for the metric~$(g_{\alpha\beta})$. Since the local functionals $\oh_{\alpha,-1}$ are linearly independent, the differential polynomials $h_{\alpha,-1}$ have the form
$$
h_{\alpha,-1}=v_\alpha(u)+O(\eps),
$$
where $v_\alpha(u)$ are flat coordinates for the metric $(g_{\alpha\beta})$ and $v_\alpha(0)=0$. Therefore, the functions~$h_{\alpha,-1}$ define a Miura transformation
$$
u^\alpha\mapsto \tu_\alpha(u)=h_{\alpha,-1}.
$$
The dependent variables $\tu_1,\ldots,\tu_N$ are called the {\it normal coordinates} with respect to the given tau-structure. The hamiltonian operator in the normal coordinates $\tu_1,\ldots,\tu_N$ has the form
$$
(K_{\tu})_{\alpha\beta}=\eta_{\alpha\beta}\d_x+O(\eps)
$$
with a constant symmetric invertible matrix~$(\eta_{\alpha\beta})$. The variables 
$$
\tu^\alpha:=\eta^{\alpha\mu}\tu_\mu,
$$
where $(\eta^{\alpha\beta}):=(\eta_{\alpha\beta})^{-1}$, are also called the normal coordinates. 

Suppose that the coordinates $u^\alpha$ are already normal for the given tau-structure. It means that the hamiltonian operator $K=(K^{\alpha\beta})$ has the form $K^{\alpha\beta}=\eta^{\alpha\beta}\d_x+O(\eps)$ for some constant symmetric non-degenerate matrix $\eta$ and that $h_{\alpha,-1}=\eta_{\alpha\mu}u^\mu$. Then the equations of the hierarchy can be written in the following way using the two-point functions:
$$
\frac{\d u^\alpha}{\d t^\beta_q}=\eta^{\alpha\mu}\d_x\Omega_{\mu,0;\beta,q}.
$$


\subsection{Normal Miura transformations}\label{subsection:normal Miura}

Consider a hamiltonian hierarchy~\eqref{eq:hamiltonian system2}, where the Hamiltonian $\oh_{1,0}$ generates the spatial translations. Suppose that differential polynomials $h_{\beta,q}$, $q\ge -1$, define a tau-structure for the hierarchy. Consider a differential polynomial $\cF\in\hcA_N^{[-2]}$. Define differential polynomials $\th_{\beta,q}\in\hcA_N^{[0]}$, $q\ge -1$, by 
$$
\th_{\beta,q}:=h_{\beta,q}+\d_x\frac{\d\cF}{\d t^\beta_{q+1}}=h_{\beta,q}+\d_x\{\cF,\oh_{\beta,q+1}\}_K,
$$
where the bracket~$\{\cF,\oh_{\beta,q+1}\}_K$ was defined by equation~\eqref{eq:bracket of polynomial and functional}. It is easy to see that the differential polynomials $\th_{\alpha,p}$ define another tau-structure for our hierarchy. 

Let $u^\alpha(x,t^*_*;\eps)$ be a solution of our hierarchy~\eqref{eq:hamiltonian system2}. Let $\tau(t^*_*;\eps)$ be the tau-function of this solution with respect to the previous tau-structure. Then it is easy to see that the function
$$
\widetilde\tau(t^*_*;\eps)=\left.e^{\eps^{-2}\cF(u(x,t^*_*;\eps);u_x(x,t^*_*;\eps),\ldots)}\right|_{x=0}\tau(t^*_*;\eps)
$$
is the tau-function of this solution with respect to the new tau-structure.

Suppose now that the coordinates $u^\alpha$ are normal for our hierarchy. Therefore, the hamiltonian operator $K=(K^{\alpha\beta})$ has the form $K^{\alpha\beta}=\eta^{\alpha\beta}\d_x+O(\eps)$, where $\eta=(\eta^{\alpha\beta})$ is a constant symmetric non-degenerate matrix, and $h_{\alpha,-1}=\eta_{\alpha\mu}u^\mu$.  Consider the Miura transformation 
$$
u^\alpha\mapsto\tu^\alpha(u)=u^\alpha+\eta^{\alpha\mu}\d_x\frac{\d\cF}{\d t^\mu_0}=u^\alpha+\eta^{\alpha\mu}\d_x\{\cF,\oh_{\mu,0}\}_K
$$
and the hierarchy~\eqref{eq:hamiltonian system2} in the coordinates $\tu^\alpha$. The differential polynomials~$\th_{\alpha,p}(\tu)$ define a tau-structure for this hierarchy. Clearly the coordinates~$\tu^\alpha$ are normal for this tau-structure. As a result, we have constructed the transformation that transforms an arbitrary tau-symmetric hamiltonian hierarchy written in normal coordinates to another tau-symmetric hamiltonian hierarchy also written in normal coordinates. These transformations form a group and are called {\it normal Miura transformations}. 

\subsection{Uniqueness of a tau-structure in normal coordinates}

Consider a hamiltonian hierarchy~\eqref{eq:hamiltonian system2}, where the Hamiltonian $\oh_{1,0}$ generates the spatial translations, and with a tau-structure given by differential polynomials~$h_{\beta,q}$, $q\ge -1$. 

\begin{lemma}\label{lemma:uniqueness of tau-structure}
Suppose that the coordinates $u^\alpha$ are normal and that $\left.h_{\beta,q}\right|_{u^*_*=0}=0$. Then the tau-symmetric densities $h_{\beta,q}$ are uniquely determined by the Hamiltonians $\oh_{\alpha,p}$ and the hamiltonian operator~$K$. 
\end{lemma}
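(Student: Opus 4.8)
The plan is to show that any two tau-structures $h_{\beta,q}$ and $h'_{\beta,q}$ satisfying the stated normalization differ by nothing. First I would observe that, since the coordinates $u^\alpha$ are normal, condition~\eqref{eq:tau-structure1} together with $h_{\beta,-1}|_{u^*_*=0}=0$ forces $h_{\beta,-1}=\eta_{\beta\mu}u^\mu$ for \emph{both} tau-structures: the Casimirs of $K=\eta^{\alpha\beta}\d_x+O(\eps)$ in normal coordinates are spanned by $\int u^\mu dx$ (there are no $\eps$-corrections to the space of Casimirs once we are in normal coordinates, because a Casimir is a degree-$0$ local functional annihilated by $K$, and its leading term must be $\int(\text{linear})dx$ while any higher-degree correction would have to separately be a Casimir of $K$ of positive differential degree, of which there are none in the relevant space). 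Hence $h_{\beta,-1}-h'_{\beta,-1}=0$, and in particular the value $C$ appearing in~\eqref{eq:Omega-h relation} is the same.

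Next I would run an induction on $q\ge 0$, proving $h_{\beta,q}=h'_{\beta,q}$ for all $\beta$. For the inductive step, set $\delta_{\beta,q}:=h_{\beta,q}-h'_{\beta,q}\in\hcA_N^{[0]}$. Both families have the same integrals by~\eqref{eq:tau-structure2}, so $\int\delta_{\beta,q}dx=0$, i.e.\ $\delta_{\beta,q}=\d_x(\text{something})$; more usefully, $\delta_{\beta,q}$ has no constant term and $\delta_{\beta,q}|_{u^*_*=0}=0$. The key is to pin down its $t$-derivatives. Apply the tau-symmetry relation~\eqref{eq:tau-structure3} with $p=0$, using that in normal coordinates the flow $\d/\d t^1_0=\d_x$ (since $\oh_{1,0}$ generates spatial translations and $h_{1,-1}=\eta_{1\mu}u^\mu$): this gives $\d_x h_{\beta,q-1}=\d h_{1,0}/\d t^\beta_q$ wait—more precisely, from~\eqref{eq:Omega-h relation} applied to both structures, $\Omega_{\alpha,p;1,0}=h_{\alpha,p-1}+C$ and $\Omega'_{\alpha,p;1,0}=h'_{\alpha,p-1}+C$ with the same $C$. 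I would then use that the two-point functions are determined by the Hamiltonians and $K$ alone: indeed $\Omega_{\alpha,p;\beta,q}$ is characterized by $\d_x\Omega_{\alpha,p;\beta,q}=\{h_{\alpha,p-1},\oh_{\beta,q}\}_K$ and vanishing at the origin, so an induction downward in the first slot using~\eqref{eq:two-point} (symmetry) and~\eqref{eq:Omega-h relation} would let me express everything in terms of the unambiguous data once the base case $h_{\alpha,-1}$ is fixed.

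Concretely, here is the inductive engine I expect to use. Suppose $h_{\beta,q'}=h'_{\beta,q'}$ for all $\beta$ and all $q'<q$ (base case $q'=-1$ done above). Then $\Omega_{\beta,q;\gamma,r}=\Omega'_{\beta,q;\gamma,r}$ whenever $r<q$, because by symmetry~\eqref{eq:two-point} we may put the smaller index in the first slot and use $\d_x\Omega_{\gamma,r;\beta,q}=\{h_{\gamma,r-1},\oh_{\beta,q}\}_K$ with $h_{\gamma,r-1}$ already known equal. In particular $\Omega_{\beta,q;1,0}=\Omega'_{\beta,q;1,0}$, so $h_{\beta,q-1}+C=h'_{\beta,q-1}+C$ — consistent but one step behind. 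To get $h_{\beta,q}$ itself I would instead differentiate: from $\d_x\Omega_{\beta,q;\gamma,r}=\{h_{\beta,q-1},\oh_{\gamma,r}\}_K$ and tau-symmetry $\{h_{\beta,q-1},\oh_{\gamma,r}\}_K=\{h_{\gamma,r-1},\oh_{\beta,q}\}_K$, and then use~\eqref{eq:Omega-h relation} in the form $h_{\beta,q}=\Omega_{\beta,q+1;1,0}-C$; since $\d_x\Omega_{\beta,q+1;1,0}=\{h_{\beta,q};\oh_{1,0}\}_K$... which circularly involves $h_{\beta,q}$. The clean way out is: $h_{\beta,q}-h'_{\beta,q}=\delta_{\beta,q}$ satisfies $\{\delta_{\beta,q},\oh_{\gamma,r}\}_K=\d_x(\Omega_{\beta,q;\gamma,r}-\Omega'_{\beta,q;\gamma,r})$, and by the previous paragraph the right side vanishes for all $r<q$, in particular for $r=0$; so $\{\delta_{\beta,q},\oh_{1,0}\}_K=\d_x\delta_{\beta,q}=0$ (using that $\oh_{1,0}$ generates translations and that $\{\cdot,\oh_{1,0}\}_K=\d_x$ on differential polynomials), hence $\delta_{\beta,q}$ is a constant; but $\delta_{\beta,q}|_{u^*_*=0}=0$, so $\delta_{\beta,q}=0$. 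This closes the induction.

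The main obstacle I anticipate is the base case / the structure of Casimirs: proving rigorously that in normal coordinates the only degree-$0$ local functional Casimir of $K=\eta^{\alpha\beta}\d_x+O(\eps)$ vanishing at the origin is $h_{\beta,-1}=\eta_{\beta\mu}u^\mu$ with no $\eps$-tail requires a small argument about the operator $K$ order by order in $\eps$ (each $\eps$-homogeneous piece of a Casimir must be annihilated by the leading operator $\eta\d_x$ up to lower-order contributions, and one argues inductively that positive-degree pieces must vanish because $\eta\d_x$ has no kernel among differential polynomials of positive differential degree modulo constants). Everything else is the bookkeeping with $\Omega$ above, which is routine given the identities~\eqref{eq:two-point}, \eqref{eq:three-point}, \eqref{eq:Omega-h relation} already established in Section~\ref{subsection:definition of a tau-function}.
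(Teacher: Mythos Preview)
Your ``clean way out'' is exactly the paper's proof, and everything before it is unnecessary. The paper simply observes that, by tau-symmetry with the pair $((\alpha,p),(1,0))$,
\[
\d_x h_{\alpha,p-1}=\{h_{\alpha,p-1},\oh_{1,0}\}_K=\{h_{1,-1},\oh_{\alpha,p}\}_K=\eta_{1\mu}K^{\mu\nu}\frac{\delta\oh_{\alpha,p}}{\delta u^\nu},
\]
which expresses $\d_x h_{\alpha,p-1}$ purely in terms of $K$ and the Hamiltonians; the normalization $h_{\alpha,p-1}|_{u^*_*=0}=0$ then determines $h_{\alpha,p-1}$ uniquely. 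No induction on $q$, no two-point functions, no second tau-structure $h'$ to compare against.

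Two comments on your presentation. First, your ``main obstacle'' is not an obstacle: by the definition in Section~\ref{subsection:normal coordinates}, saying the coordinates $u^\alpha$ are normal \emph{means} $h_{\alpha,-1}=\eta_{\alpha\mu}u^\mu$ exactly, so there is nothing to prove about Casimirs. Second, your key equation has an index slip: from $\d_x\Omega_{\alpha,p;\beta,q}=\{h_{\alpha,p-1},\oh_{\beta,q}\}_K$ one gets $\{\delta_{\beta,q},\oh_{\gamma,r}\}_K=\d_x(\Omega_{\beta,q+1;\gamma,r}-\Omega'_{\beta,q+1;\gamma,r})$, not with $q$ in the subscript. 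Once corrected, taking $(\gamma,r)=(1,0)$ requires only $h_{1,-1}=h'_{1,-1}$, which is the hypothesis itself---so the induction on $q$ never does any work. Strip all of that away and you have the paper's three-line argument.
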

\begin{proof}
Since the coordinates $u^\alpha$ are normal, we have $K^{\alpha\beta}=\eta^{\alpha\beta}\d_x+O(\eps)$ and $u^\alpha=\eta^{\alpha\mu}h_{\mu,-1}$. The tau-symmetry~\eqref{eq:tau-structure3} implies that $\frac{\d h_{\alpha,p-1}}{\d t^1_0}=\frac{\d h_{1,-1}}{\d t^\alpha_p}$ for $p\ge 0$. Since the Hamiltonian $\oh_{1,0}$ generates the spatial translations, we have $\frac{\d h_{\alpha,p-1}}{\d t^1_0}=\d_x h_{\alpha,p-1}$. On the other hand, $\frac{\d h_{1,-1}}{\d t^\alpha_p}=\eta_{1,\mu}\frac{\d u^\mu}{\d t^\alpha_p}=\eta_{1,\mu}K^{\mu\nu}\frac{\delta\oh_{\alpha,p}}{\delta u^\nu}$. Therefore we get
\begin{gather}\label{eq:reconstruction of tau-structure}
\d_x h_{\alpha,p-1}=\eta_{1,\mu}K^{\mu\nu}\frac{\delta\oh_{\alpha,p}}{\delta u^\nu}.
\end{gather}
Since $\left.h_{\alpha,p-1}\right|_{u^*_*=0}=0$, equation~\eqref{eq:reconstruction of tau-structure} uniquely determines the differential polynomials $h_{\alpha,p-1}$.
\end{proof}


\section{Tau-structure and the partition function of the double ramification hierarchy}\label{section:partition function}

In this section we define a tau-structure for the double ramification hierarchy and construct a specific tau-function. We call this tau-function the partition function of the double ramification hierarchy and consider it as an analogue of the partition function of the cohomological field theory. 

\subsection{Tau-structure for the double ramification hierarchy}

Consider an arbitrary cohomological field theory $c_{g,n}\colon V^{\otimes n}\to H^{\even}(\oM_{g,n},\mbC)$ and the associated double ramification hierarchy. Define differential polynomials $h^{\DR}_{\alpha,p}\in\hcA^{[0]}_N$, $p\ge -1$, by
$$
h_{\alpha,p}^{\DR}:=\frac{\delta\og_{\alpha,p+1}}{\delta u^1}.
$$
\begin{proposition}
The differential polynomials $h_{\alpha,p}^{\DR}$, $p\ge -1$, define a tau-structure for the double ramification hierarchy.
\end{proposition}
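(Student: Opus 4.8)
The plan is to verify the hypotheses of Proposition~\ref{proposition:sufficient condition} for the double ramification hierarchy and then simply invoke it. Since the DR hierarchy uses the hamiltonian operator $K=\eta\d_x$, with $\eta$ symmetric non-degenerate, and since $\og_{1,0}$ generates the spatial translations (this is part of the setup recalled from~\cite{Bur15}), it suffices to establish the two key identities
$$
\frac{\d\og_{\beta,q}}{\d u^1}=\og_{\beta,q-1}\quad(q\ge 1),\qquad \frac{\d\og_{\beta,0}}{\d u^1}=\int\eta_{\beta\mu}u^\mu\,dx,
$$
and to note that $\eta$ itself plays the role of the non-degenerate matrix $\theta$. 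Once these hold, the differential polynomials $h^{\DR}_{\alpha,p}=\frac{\delta\og_{\alpha,p+1}}{\delta u^1}$ are exactly the $h_{\beta,q}$ produced by the proposition, so they define a tau-structure.

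First I would recall that $\frac{\d}{\d u^1}$ acting on a local functional $\int f\,dx$ means $\int\frac{\d f}{\d u^1_0}\,dx$, equivalently $\frac{\d}{\d p^1_0}$ in the Fourier variables~\eqref{eq:u-p change}. Thus, working with the $p$-variable expression for $\og_{\beta,q}$, differentiating with respect to $p^1_0$ has the effect of extracting the terms where one of the insertions $e_{\alpha_i}$ is $e_1$ with momentum $a_i=0$, and deleting that factor $p^1_0$. Concretely, for $q\ge 1$,
$$
\frac{\d\og_{\beta,q}}{\d u^1}=\sum_{\substack{g\ge 0\\n\ge 1}}\frac{(-\eps^2)^g}{n!}\sum_{\substack{a_1,\ldots,a_n\\\sum a_i=0}}\left(\int_{\DR_g(0,a_1,\ldots,a_n,0)}\lambda_g\psi_1^q c_{g,n+2}(e_\beta\otimes(\otimes e_{\alpha_i})\otimes e_1)\right)\prod p^{\alpha_i}_{a_i},
$$
where the extra marked point carries $e_1$ and momentum $0$. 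The heart of the matter is then the geometric identity reducing $\psi_1^q$ with an extra $e_1$-insertion at a point with zero multiplicity to $\psi_1^{q-1}$ with one fewer marked point. This is precisely the behaviour captured by the string/dilaton-type equations combined with the structure of $\DR_g$ and the fundamental class axiom $c_{g,n+1}(v_1\otimes\cdots\otimes v_n\otimes e_1)=\pi^*c_{g,n}(v_1\otimes\cdots\otimes v_n)$. Using that $\DR_g$ behaves compatibly under forgetting a marked point with trivial multiplicity (the push-forward $\pi_*\DR_g(\ldots,0)=\DR_g(\ldots)$) and the standard $\pi_*(\psi_1^q)$ formula, one obtains exactly $\og_{\beta,q-1}$; for $q=0$ the same computation collapses the whole sum to the genus-$0$, $n=1$ term, giving $\int\eta_{\beta\mu}u^\mu\,dx$.

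The main obstacle is the careful bookkeeping of this geometric reduction: one must check that forgetting the zero-multiplicity point interacts correctly with $\lambda_g$ (which is pulled back, by $\pi^*\lambda_g=\lambda_g$), with the double ramification cycle, and with the $\psi_1$ powers at the \emph{other} distinguished point (index $1$ in $\psi_1$), being careful that the forgotten point is not the one carrying the $\psi$-class. In fact, this computation was essentially carried out in~\cite{Bur15,BR14}, where the string equation for the Hamiltonians $\og_{\alpha,d}$ is established; the cleanest route is to cite those results directly, observing that the string equation there states exactly $\frac{\d\og_{\beta,q}}{\d u^1}=\og_{\beta,q-1}$ for $q\ge 1$ and identifies the $q=0$ case, and that the relevant matrix is $\eta$, which is non-degenerate by definition of a CohFT metric. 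With these in hand, Proposition~\ref{proposition:sufficient condition} applies verbatim and the proof is complete.
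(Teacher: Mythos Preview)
Your proposal is correct and follows essentially the same approach as the paper: verify the hypotheses of Proposition~\ref{proposition:sufficient condition} by invoking the string-type identity $\frac{\d\og_{\beta,q}}{\d u^1}=\og_{\beta,q-1}$ (and its $q=0$ analogue) from~\cite{Bur15}, noting that $\theta=\eta$ is non-degenerate. One small inaccuracy in your geometric sketch: the correct relation is $\DR_g(\ldots,0)=\pi^*\DR_g(\ldots)$ (pull-back), not $\pi_*\DR_g(\ldots,0)=\DR_g(\ldots)$; but since you ultimately defer to the cited results rather than relying on that sketch, this does not affect the validity of your argument.
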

\begin{proof}
By~\cite[Lemma 4.3]{Bur15}, the Hamiltonian $\og_{1,0}$ generates the spatial translations. Lemma~4.6 from~\cite{Bur15} says that
$$
\frac{\d\og_{\alpha,d}}{\d u^1}=
\begin{cases}
\og_{\alpha,d-1},&\text{if $d\ge 1$},\\
\int\eta_{\alpha\mu}u^\mu dx,&\text{if $d=0$}.
\end{cases}
$$
Thus, the proposition follows from Proposition~\ref{proposition:sufficient condition}.
\end{proof}


\subsection{Partition function of the double ramification hierarchy}\label{subsection:partition function}

Let $(u^{\str})^\alpha(x,t^*_*;\eps)$ be the string solution of the double ramification hierarchy (see~\cite{Bur15}). Recall that it is defined as a unique solution that satisfies the initial condition
\begin{gather}\label{eq:initial for string}
\left.(u^\str)^\alpha\right|_{t^*_*=0}=\delta^{\alpha,1}x.
\end{gather}
We want to define the partition function of the double ramification hierarchy as the tau-function of the string solution with respect to the tau-structure constructed in the previous section. However, there is an ambiguity described by equation~\eqref{eq:ambiguity for tau-function}. Our idea is to fix this ambiguity in such a way that the resulting partition function will satisfy the string and the dilaton equations. Let us describe the construction in details. 

Denote by $\Omega^{\DR}_{\alpha,p;\beta,q}$ the two-point functions of the tau-structure constructed in the previous section. Since $h^\DR_{\alpha,p}$ and $g_{\alpha,p}$ differ by a total $x$-derivative and by definition $\left.g_{\alpha,p}\right|_{u^*_*=0}=0$, we also have $\left.h_{\alpha,p}^\DR\right|_{u^*_*=0}=0$, therefore equation~\eqref{eq:Omega-h relation} implies that
\begin{gather}\label{eq:OmegaDr-hDR relation}
\Omega^{\DR}_{\alpha,p;1,0}=h^\DR_{\alpha,p-1}=\frac{\delta\og_{\alpha,p}}{\delta u^1},\quad p\ge 0.
\end{gather}
Introduce a power series $\Omega^{\DR,\str}_{\alpha,p;\beta,q}\in\mbC[[t^*_*,\eps]]$ by
$$
\Omega^{\DR,\str}_{\alpha,p;\beta,q}:=\left.\left(\left.\Omega^{\DR}_{\alpha,p;\beta,q}\right|_{u^\gamma_n=(\ustr)^\gamma_n}\right)\right|_{x=0},
$$
where $(\ustr)^\gamma_n:=\d_x^n(\ustr)^\gamma$. Consider $g,n\ge 0$ such that $2g-2+n>0$. Let $d_1,\ldots,d_n\ge 0$ and $1\le\alpha_1,\ldots,\alpha_n\le N$. We define the double ramification correlator $\<\tau_{d_1}(e_{\alpha_1})\ldots\tau_{d_n}(e_{\alpha_n})\>^{\DR}_g$ by
\begin{align}
\<\tau_{d_1}(e_{\alpha_1})\ldots\tau_{d_n}(e_{\alpha_n})\>^{\DR}_g:=&\left.\Coef_{\eps^{2g}}\left(\frac{\d^{n-2}\Omega^{\DR,\str}_{\alpha_1,d_1;\alpha_2,d_2}}{\d t^{\alpha_3}_{d_3}\ldots\d t^{\alpha_n}_{d_n}}\right)\right|_{t^*_*=0},&&\text{if $n\ge 2$};
\notag\\
\<\tau_d(e_\alpha)\>^{\DR}_g:=&\left.\Coef_{\eps^{2g}}\Omega^{\DR,\str}_{\alpha,d+1;1,0}\right|_{t^*_*=0},&&\text{if $g\ge 1$};\label{eq:definition of correlator2}\\
\<\>_g^{\DR}:=&\frac{1}{2g-2}\left.\Coef_{\eps^{2g}}\Omega^{\DR,\str}_{1,2;1,0}\right|_{t^*_*=0},&&\text{if $g\ge 2$}.\label{eq:definition of correlators3}
\end{align}
Define the potential of the double ramification hierarchy by 
\begin{align*}
F^{\DR}(t^*_*;\eps):=&\sum_{g\ge 0}\eps^{2g}F^{\DR}_g(t^*_*),\quad\text{where}\\
F_g^\DR(t^*_*):=&\sum_{\substack{n\ge 0\\2g-2+n>0}}\frac{1}{n!}\sum_{d_1,\ldots,d_n\ge 0}\<\prod_{i=1}^n\tau_{d_i}(e_{\alpha_i})\>_g^{\DR}\prod_{i=1}^nt^{\alpha_i}_{d_i}.
\end{align*}
Obviously, we have
$$
\Omega^{\DR,\str}_{\alpha,p;\beta,q}=\frac{\d^2 F^{\DR}}{\d t^\alpha_p\d t^\beta_q}.
$$
The partition function of the double ramification hierarchy is defined by
$$
\tau^{\DR}:=e^{\eps^{-2}F^{\DR}}.
$$
It is clear that the partition function $\tau^{\DR}$ is the tau-function of the string solution $(u^\str)^\alpha$.

\subsection{Genus $0$ part}

Recall that the correlators of the cohomological field theory are defined by
$$
\<\tau_{d_1}(e_{\alpha_1})\ldots\tau_{d_n}(e_{\alpha_n})\>_g:=\int_{\oM_{g,n}}c_{g,n}(\otimes_{i=1}^n e_{\alpha_i})\prod_{i=1}^n\psi_i^{d_i},\quad 2g-2+n>0.
$$
It is convenient to have a separate notation for the three-point correlators in genus~$0$:
$$
\theta_{\alpha\beta\gamma}:=\<\tau_0(e_\alpha)\tau_0(e_\beta)\tau_0(e_\gamma)\>_0,\qquad \theta^\alpha_{\beta\gamma}:=\eta^{\alpha\mu}\theta_{\mu\beta\gamma}.
$$
The potential~$F(t^*_*;\eps)$ of the cohomological field theory is 
\begin{align*}
F(t^*_*;\eps):=&\sum_{g\ge 0}\eps^{2g}F_g(t^*_*),\quad\text{where}\\
F_g(t^*_*):=&\sum_{\substack{n\ge 0\\2g-2+n>0}}\frac{1}{n!}\sum_{d_1,\ldots,d_n\ge 0}\<\prod_{i=1}^n\tau_{d_i}(e_{\alpha_i})\>_g\prod_{i=1}^nt^{\alpha_i}_{d_i}.
\end{align*}
The partition function of the cohomological field theory is defined by 
$$
\tau:=e^{\eps^{-2}F}.
$$
\begin{lemma}\label{lemma:genus 0 tau}
We have $F^{\DR}_0=F_0$.
\end{lemma}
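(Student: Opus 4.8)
The plan is to show that the genus $0$ parts of the two potentials agree by reducing everything to the already-known fact that the DR hierarchy coincides with the DZ (principal) hierarchy in genus $0$, or, more self-containedly, by directly computing the genus $0$ two-point functions $\Omega^{\DR}_{\alpha,p;\beta,q}$ from the definition \eqref{DR Hamiltonians} of the Hamiltonians. First I would recall that at $\eps = 0$ the double ramification cycle $\DR_g(a_1,\ldots,a_n)$ is only nontrivial for $g = 0$, where $\DR_0 = [\oM_{0,n}]$, so the leading ($\eps$-independent) term of $\og_{\alpha,d}$ is
\begin{gather*}
\og_{\alpha,d}^{[0]} = \int \sum_{n \ge 2} \frac{1}{n!} \sum_{\alpha_1,\ldots,\alpha_n} \left( \int_{\oM_{0,n+1}} \psi_1^d\, c_{0,n+1}(e_\alpha \otimes \otimes_{i=1}^n e_{\alpha_i}) \right) u^{\alpha_1}\cdots u^{\alpha_n}\, dx,
\end{gather*}
which is precisely the standard genus $0$ principal hierarchy Hamiltonian density built from the genus $0$ CohFT potential $F_0$. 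In particular $h^{\DR}_{\alpha,p}$ at $\eps = 0$ equals $\frac{\d}{\d u^1}$ of this, i.e. it coincides with the usual dispersionless tau-structure density expressed through $F_0$, and the hierarchy at $\eps = 0$ is the genus $0$ Dubrovin--Zhang (principal) hierarchy of the Frobenius manifold determined by $\theta_{\alpha\beta\gamma}$.

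Next I would compute the genus $0$ two-point function $\Omega^{\DR,[0]}_{\alpha,p;\beta,q}$ explicitly. Using \eqref{eq:OmegaDr-hDR relation}, $\Omega^{\DR}_{\alpha,p;1,0} = h^{\DR}_{\alpha,p-1}$, and then recursively applying \eqref{eq:definition of Omega} together with the genus $0$ flows one recovers the classical formula $\Omega^{\DR,[0]}_{\alpha,p;\beta,q} = \frac{\d^2 F_0}{\d t^\alpha_p \d t^\beta_q}$ evaluated on the genus $0$ topological solution; equivalently, one checks that $\Omega^{\DR,[0]}_{\alpha,p;\beta,q}$ satisfies the same defining PDEs (symmetry in $(\alpha,p) \leftrightarrow (\beta,q)$, $\d_x \Omega_{\alpha,p;\beta,q} = \d h_{\alpha,p-1}/\d t^\beta_q$, and the normalization at the origin) as the genus $0$ DZ two-point function, and invoke uniqueness. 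Then evaluating at the string solution $(\ustr)^\alpha$, whose $\eps = 0$ part satisfies the initial condition $\left.(\ustr)^\alpha\right|_{t = 0} = \delta^{\alpha,1}x$, one sees that $(\ustr)^{\alpha,[0]}$ is exactly the topological solution of the genus $0$ DZ hierarchy (the one giving the CohFT genus $0$ potential via its tau-function). Hence $\Omega^{\DR,\str,[0]}_{\alpha,p;\beta,q} = \frac{\d^2 F_0}{\d t^\alpha_p \d t^\beta_q}$, and comparing with the definitions of the DR correlators \eqref{eq:definition of correlator2}--\eqref{eq:definition of correlators3} and of $F^{\DR}_0$ yields $F^{\DR}_0 = F_0$ up to at most an affine-linear ambiguity, which is killed by the normalization $F^{\DR}$ and $F$ have no constant or linear terms (both start at cubic order in genus $0$ since $2g - 2 + n > 0$ forces $n \ge 3$).

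The main obstacle I expect is the bookkeeping in the middle step: carefully matching the recursive definition of the DR two-point functions $\Omega^{\DR}_{\alpha,p;\beta,q}$ (which is via $\d_x \Omega = \d h/\d t$ plus vanishing at $u^*_* = 0$) with the closed genus $0$ formula $\d^2 F_0/\d t^\alpha_p \d t^\beta_q$ of Dubrovin--Zhang, and doing so uniformly for the three cases $n \ge 2$, $n = 1$ (where $g \ge 1$, hence this case is vacuous in genus $0$) and $n = 0$ (also vacuous in genus $0$). In fact, since the $n = 1$ and $n = 0$ correlator definitions only contribute at $g \ge 1$ and $g \ge 2$ respectively, in genus $0$ only the $n \ge 2$ case of \eqref{eq:definition of correlator2} is relevant, which simplifies matters considerably. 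The only genuinely nontrivial input is the identification of the leading term of the DR Hamiltonians with the principal hierarchy Hamiltonians of the Frobenius manifold attached to the CohFT, which is standard and is essentially already contained in \cite{Bur15}; everything else is the uniqueness argument for two-point functions (analogous to Lemma~\ref{lemma:uniqueness of tau-structure}) combined with a direct comparison of the genus $0$ topological solutions.
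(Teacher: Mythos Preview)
Your proposal is correct and follows essentially the same route as the paper's proof: identify the $\eps=0$ part of the DR hierarchy with the principal hierarchy of the underlying Frobenius manifold (citing \cite{Bur15}), invoke uniqueness of the tau-structure in normal coordinates (Lemma~\ref{lemma:uniqueness of tau-structure}) to match the two-point functions, identify the $\eps=0$ string solution with the topological solution via their common initial condition, and finally use the tau-function ambiguity~\eqref{eq:ambiguity for tau-function} together with the absence of constant and linear terms to conclude $F^{\DR}_0=F_0$. The paper's write-up is more compressed but structurally identical to yours.
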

\begin{proof}
In \cite{Bur15} it was noticed that the double ramification hierarchy in genus $0$ coincides with the principal hierarchy associated to the genus $0$ part of the cohomological field theory (see e.g.~\cite{BPS12b}). Both hierarchies are written in normal coordinates, therefore, by Lemma~\ref{lemma:uniqueness of tau-structure}, their tau-structures also coincide. The function~$e^{\eps^{-2}F_0}$ is the tau-function of the topological solution $(v^\top)^\alpha=\eta^{\alpha\mu}\left.\frac{\d^2 F_0}{\d t^\mu_0\d t^1_0}\right|_{t^1_0\mapsto t^1_0+x}$ of the principal hierarchy. This solution satisfies the same initial condition~\eqref{eq:initial for string}, as the string solution~$(u^\str)^\alpha$ of the double ramification hierarchy. Therefore, $(v^\top)^\alpha=\left.(u^\str)^\alpha\right|_{\eps=0}$. Both $F_0$ and $F_0^\DR$ start with cubic terms in $t^\gamma_n$, thus, from~\eqref{eq:ambiguity for tau-function} we conclude that $F_0^{\DR}=F_0$.
\end{proof}


\section{Geometric properties of the double ramification cycle}

In this section we prove geometric properties of the double ramification cycles that will be important in the study of the double ramification correlators.

\subsection{Divisibility property}

Consider the moduli space of stable curves of compact type~$\cM^{\ct}_{g,n}$. Let $b_1,\ldots,b_n$ be integers satisfying $b_1+b_2+\ldots+b_n=0$. Hain's formula~\cite{Hai11} together with the result of~\cite{MW13} imply that
\begin{gather}\label{eq:Hain's formula}
\left.\DR_g(b_1,\dotsc,b_n)\right|_{\cM^{ct}_{g,n}}=\frac{1}{g!}\left(\sum_{j=1}^n \frac{b_j^2 \psi^{\dagger}_j}{2} - \sum_{\substack{J \subset \left\lbrace 1,\dotsc,n\right\rbrace  \\ \left| J \right| \geq 2}} \left(\sum_{i,j \in J, i<j} b_i b_j\right) \delta_0^J - \frac{1}{4} \sum_{J \subset \left\lbrace 1,\dotsc,n\right\rbrace} \sum_{h=1}^{g-1} b_J^2 \delta_h^J \right)^g,
\end{gather}
where $\psi^\dagger_j$ denotes the $\psi$-class that is pulled back from~$\oM_{g,1}$, the integer $b_J$ is the sum $\sum_{j\in J} b_j$ and the class $\delta_h^J$ represents the divisor whose generic point is a nodal curve made of one smooth component of genus $h$ with the marked points labeled by the list $J$ and of another smooth component of genus $g-h$ with the remaining marked points, joined at a separating node. Formula~\eqref{eq:Hain's formula} implies that the class 
$$
\left.\DR_g\left(-\sum\nolimits_{i=1}^n a_i,a_1,\ldots,a_n\right)\right|_{\cM_{g,n+1}^\ct}\in H^{2g}(\cM_{g,n+1}^{\ct},\mbQ)
$$
is a polynomial in variables $a_1,\ldots,a_n$, homogeneous of degree $2g$. Let $\pi\colon\cM^\ct_{g,n+1}\to\cM^\ct_{g,n}$ be the forgetful map that forgets the last marked point. 

\begin{lemma}\label{proposition:divisibility}
Let $g,n\ge 1$. Then the polynomial class 
\begin{gather*}
\left.\pi_*\left(\DR_g\left(-\sum\nolimits_{i=1}^n a_i,a_1,a_2,\ldots,a_n\right)\right)\right|_{\cM_{g,n}^\ct}\in H^{2g-2}(\cM^{\ct}_{g,n},\mbQ)
\end{gather*}
is divisible by $a_n^2$.
\end{lemma}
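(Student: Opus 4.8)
The plan is to use Hain's formula \eqref{eq:Hain's formula} directly and compute the pushforward $\pi_*$ term by term. Since $\DR_g(-\sum a_i, a_1,\dots,a_n)|_{\cM^\ct_{g,n+1}}$ is a $g$-th power, I would expand it multinomially; writing $b_0 = -\sum_{i=1}^n a_i$ for the weight at the (zeroth) forgotten point is a red herring — rather, the last marked point is the $n$-th one, so I would relabel so that $a_n$ is the weight sitting at the point being forgotten by $\pi$. The key observation is that $a_n$ enters Hain's formula \eqref{eq:Hain's formula} only through (i) the term $\tfrac{a_n^2 \psi_n^\dagger}{2}$, (ii) the pairwise terms $a_i a_n \delta_0^J$ for $J \ni n$, and (iii) the terms $a_n b_{J} \delta_h^J$ (and $a_n^2\delta_h^J$) coming from the square $b_J^2$ whenever $n \in J$. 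Every monomial in the multinomial expansion of the $g$-th power that survives pushforward must be linear or quadratic in each of these "$a_n$-carrying" classes in a controlled way; what I want to show is that after applying $\pi_*$ (which drops complex degree by $1$), every surviving term still carries a factor $a_n^2$.

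**Key steps.** First, I would record how $\pi_*$ acts on the relevant classes: $\pi_*(\psi_n^\dagger)^k$, $\pi_*(\delta_0^{\{n\}\cup J'})$, $\pi_*(\delta_h^J)$ for $n\in J$ and for $n\notin J$, and products thereof, using the standard string/dilaton-type relations on $\cM^\ct$ and the fact that $\pi$ restricted to the relevant boundary strata is again a forgetful-type map. The crucial points are: (a) $\delta_h^J$ with $n\notin J$ pushes forward to $\delta_h^J$ on $\cM^\ct_{g,n}$ (no $a_n$-dependence, but also these terms carry no $a_n$ to begin with, so they contribute $0$ unless paired with an $a_n$-carrying factor from elsewhere in the product — and a product of $g$ such classes supported on compatible divisors); (b) a single factor of $\psi_n^\dagger$ or $\delta_0^{J}$ ($n\in J$, $|J|=2$, i.e. $J=\{n,j\}$) is exactly what $\pi_*$ "uses up," dropping the degree. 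So in the expansion of the $g$-th power, a term survives pushforward essentially only if it contains a factor that pushes down nontrivially, and I would organize the bookkeeping around which single factor that is. In each case one checks that the minimal power of $a_n$ appearing is $2$: either from one copy of $a_n^2\psi_n^\dagger/2$, or from $a_n^2$ inside some $b_J^2\delta_h^J$, or from a product of two distinct factors $a_i a_n\,\delta_0^{\{n,i\}}$ and $a_j a_n\,\delta_0^{\{n,j\}}$ (these are disjoint divisors, so their product need not vanish), or a $(a_ia_n\delta_0^{\{n,i\}})\cdot(a_n b_J \delta_h^J)$ cross term — in every case $a_n^2 \mid$ the monomial.

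**Main obstacle.** The genuinely delicate part is the combinatorial/geometric bookkeeping of products of boundary divisors $\delta_0^J, \delta_h^J$ in the $g$-th power: one must know precisely which products are nonzero on $\cM^\ct_{g,n+1}$ (compatibility of the nested/disjoint structure of the subsets $J$ and the genus splittings) and then compute $\pi_*$ of such an iterated boundary class, tracking how forgetting the $n$-th point interacts with a boundary stratum on which that point may lie on either side of a node. This is where one must be careful that no term with only a first power of $a_n$ can slip through; the cleanest way is probably to argue that $\DR_g(\dots)|_{\cM^\ct}$, as a polynomial in $a_n$, has vanishing constant term \emph{and} vanishing linear term \emph{already before pushforward modulo the kernel of $\pi_*$} — i.e., the linear-in-$a_n$ part of Hain's $g$-th power lies in $\ker \pi_*$. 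Indeed the linear-in-$a_n$ part is a sum of terms each containing exactly one factor $a_i a_n \delta_0^{\{n,i\}}$ or one factor $a_n b_J \delta_h^J$ and $g-1$ factors with no $a_n$; I would show each such product is killed by $\pi_*$ because forgetting the $n$-th point collapses that single distinguished factor (a $\delta_0^{\{n,i\}}$ becomes a lower-dimensional diagonal-type contribution that the remaining $g-1$ boundary factors force into degree $> \dim\cM^\ct_{g,n}$, or is literally in $\ker\pi_*$ by a dimension count). Once the linear term is shown to vanish after pushforward, divisibility by $a_n^2$ follows from homogeneity and the parity/polynomiality already established before the lemma.
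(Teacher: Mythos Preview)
Your high-level strategy --- show that the constant term and the linear term in $a_n$ vanish after pushforward --- is exactly what the paper does. But your execution has a bookkeeping gap and misses the key simplification that makes the paper's argument short.

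\textbf{The bookkeeping gap.} Your list of ``where $a_n$ enters Hain's formula'' is incomplete. Since $a_0 = -\sum_{i=1}^n a_i$ depends on $a_n$, every term in $T$ that involves $a_0$ also carries hidden $a_n$-dependence: the term $\tfrac{a_0^2}{2}\psi_0^\dagger$, the pairwise terms $a_0 a_j\,\delta_0^J$ with $0\in J$ and $j\neq n$, and the terms $a_J^2\,\delta_h^J$ with $0\in J$, $n\notin J$. You did not list these. This is not cosmetic: the eventual cancellation that kills the linear term comes precisely from the relation $\sum_{i=0}^{n-1} a_i = -a_n$, i.e.\ from the $a_0$-dependence. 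If you treat $a_0$ as independent of $a_n$, the linear term will \emph{not} vanish.

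\textbf{The missing simplification.} Your plan to expand $T^g$ multinomially and then push forward products of boundary divisors term by term is workable in principle but needlessly painful --- and it is exactly the ``main obstacle'' you yourself flagged. The paper avoids it entirely with two observations. First, by the product rule,
\[
\left.\frac{\partial}{\partial a_n}\DR_g\right|_{a_n=0} \;=\; \frac{1}{(g-1)!}\,\bigl(T|_{a_n=0}\bigr)^{g-1}\cdot \left.\frac{\partial T}{\partial a_n}\right|_{a_n=0}.
\]
Second, $T(a_1,\dots,a_{n-1},0) = \pi^*\bigl(T(a_1,\dots,a_{n-1})\bigr)$ is a pullback from $\cM^{\ct}_{g,n}$. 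The projection formula then gives
\[
\pi_*\!\left(\bigl(\pi^*T\bigr)^{g-1}\cdot \left.\frac{\partial T}{\partial a_n}\right|_{a_n=0}\right) \;=\; T^{g-1}\cdot \pi_*\!\left(\left.\frac{\partial T}{\partial a_n}\right|_{a_n=0}\right),
\]
so the whole computation reduces to the pushforward of a \emph{single} divisor class $\left.\partial_{a_n}T\right|_{a_n=0}$. Using only the elementary facts you already recorded ($\pi_*\psi_i^\dagger=0$, $\pi_*\delta_h^J=0$ for $h\ge 1$, and $\pi_*\delta_0^J$ vanishes unless $|J|=2$ and $n\in J$), this pushforward equals $\bigl.\partial_{a_n}(a_n^2)\bigr|_{a_n=0}\cdot[\cM^{\ct}_{g,n}]=0$. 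No product-of-divisors combinatorics is needed.
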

\begin{proof}
During the proof we work in the cohomology of $\cM^\ct_{g,n}$. We have (see e.g.~\cite{Bur15}) 
$$
\DR_g\left(-\sum\nolimits_{i=0}^{n-1}a_i,a_1,\ldots,a_{n-1},0\right)=\pi^*\left(\DR_g\left(-\sum\nolimits_{i=0}^{n-1}a_i,a_1,\ldots,a_{n-1}\right)\right).
$$
Therefore, $\pi_*\left(\DR_g\left(-\sum\nolimits_{i=1}^{n-1} a_i,a_1,a_2,\ldots,a_{n-1},0\right)\right)=0$. Hence, it remains to prove that
$$
\pi_*\left.\left(\frac{\d}{\d a_n}\DR_g\left(-\sum\nolimits_{i=1}^n a_i,a_1,a_2,\ldots,a_n\right)\right)\right|_{a_n=0}=0.
$$
Let $a_0:=-(a_1+\ldots+a_n)$ and
\begin{gather}\label{eq:definition of T}
T(a_1,\ldots,a_n):=\sum_{i=0}^n\frac{a_i^2\psi_i^{\dagger}}{2}-\sum_{\substack{J\subset\{0,1,\ldots,n\}\\|J|\ge 2}}\left(\sum_{i,j\in J, i<j}a_ia_j\right)\delta_0^J-\frac{1}{4}\sum_{J\subset\{0,1,\ldots,n\}}\sum_{h=1}^{g-1}a_J^2\delta_h^J\in H^2(\cM^\ct_{g,n+1},\mbQ).
\end{gather}
Here we index marked points on a curve from $\cM^{\ct}_{g,n+1}$ by $0,1,\ldots,n$. Equation~\eqref{eq:Hain's formula} says that $\DR_g(a_0,a_1,\ldots,a_n)=\frac{1}{g!}T(a_1,\ldots,a_n)^g$. Therefore,
$$
\frac{\d}{\d a_n}\DR_g(a_0,a_1,\ldots,a_n)=\frac{1}{(g-1)!}T(a_1,\ldots,a_n)^{g-1}\frac{\d}{\d a_n}T(a_1,\ldots,a_n).
$$
From equation~\eqref{eq:definition of T} it is easy to see that $T(a_1,\ldots,a_{n-1},0)=\pi^*(T(a_1,\ldots,a_{n-1}))$. Thus,
\begin{gather*}
\pi_*\left.\left(\frac{\d}{\d a_n}\DR_g\left(a_0,a_1,a_2,\ldots,a_n\right)\right)\right|_{a_n=0}=\frac{T(a_1,\ldots,a_{n-1})^{g-1}}{(g-1)!}\pi_*\left.\left(\frac{\d}{\d a_n}T(a_1,\ldots,a_n)\right)\right|_{a_n=0}.
\end{gather*}
Therefore, it is sufficient to prove that $\pi_*\left.\left(\frac{\d}{\d a_n}T(a_1,\ldots,a_n)\right)\right|_{a_n=0}=0$. Note that $\pi_*(\psi_i^\dagger)=0$ for $0\le i\le n-1$, and that $\left.\frac{\d}{\d a_n}(a_n^2\psi_n^\dagger)\right|_{a_n=0}=0$. We also have
\begin{align*}
&\pi_*(\delta_0^J)=
\begin{cases}
[\cM^{\ct}_{g,n}],&\text{if $|J|=2$ and $n\in J$};\\
0,&\text{otherwise};
\end{cases}\\
&\pi_*(\delta_h^J)=0,\quad\text{if $1\le h\le g-1$}.
\end{align*}
Therefore,
\begin{multline*}
\pi_*\left.\left(\frac{\d}{\d a_n}T(a_1,\ldots,a_n)\right)\right|_{a_n=0}=\left.\left(\frac{\d}{\d a_n}\left(-\sum\nolimits_{i=0}^{n-1}a_ia_n\right)\right)\right|_{a_n=0}[\cM_{g,n}^{\ct}]=\\
=\left.\left(\frac{\d}{\d a_n}a_n^2\right)\right|_{a_n=0}[\cM_{g,n}^{\ct}]=0.
\end{multline*}
The lemma is proved.
\end{proof}

Consider an arbitrary cohomological field theory $c_{g,n}\colon V^{\otimes n}\to H^\even(\oM_{g,n},\mbQ)$.
\begin{corollary}\label{corollary:derivative and zero}
Let $g,n\ge 1$. Then
\begin{multline*}
\int_{\DR_g(-\sum_{i=1}^n a_i-b,a_1,\ldots,a_n,b)}\lambda_g\psi_2^d c_{g,n+2}(\otimes_{i=1}^{n+1} e_{\alpha_i}\otimes e_1)=\\
=
\begin{cases}
\int_{\DR_g(-\sum_{i=1}^n a_i-b,a_1+b,a_2,\ldots,a_n)}\lambda_g\psi_2^{d-1} c_{g,n+1}(\otimes_{i=1}^{n+1} e_{\alpha_i})+O(b^2),&\text{if $d\ge 1$};\\
O(b^2),&\text{if $d=0$}.
\end{cases}
\end{multline*}
\end{corollary}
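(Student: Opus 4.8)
The plan is to reduce this corollary to the divisibility statement of Lemma~\ref{proposition:divisibility} by carefully analyzing how the $\psi$-class and the double ramification cycle behave under the forgetful map that forgets the last marked point (the one carrying the variable $b$ and the insertion $e_1$). Write $a_0:=-\sum_{i=1}^n a_i-b$ and let $\pi\colon\oM_{g,n+2}\to\oM_{g,n+1}$ be the map forgetting the last marked point. Since $\lambda_g$ vanishes outside compact type, all integrals may be computed on $\cM^\ct$, where Hain's formula \eqref{eq:Hain's formula} applies. First I would use the standard comparison $\psi_2=\pi^*\psi_2+\delta$, where $\delta$ is the boundary divisor on which the second and the last marked points collide, together with the CohFT splitting axiom applied along $\delta$, which produces the insertion $c_{g,n+1}$ with $e_{\alpha_2}$ sitting at the second point (using $e_1$ is the unit, so the extra point with $e_1$ just gets absorbed). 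More precisely, $\pi_*\big(\DR_g(a_0,a_1,\ldots,a_n,b)\,\delta\big)$ restricts, on the component of $\delta$, to $\DR_g$ with the $b$ and $a_2$ entries merged into a single entry $a_2+b$; this is exactly the mechanism that produces the $d-1$ term on the right-hand side.

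The second ingredient is the projection formula combined with Lemma~\ref{proposition:divisibility}: after pushing forward along $\pi$, the remaining term is $\pi_*\big(\DR_g(a_0,a_1,\ldots,a_n,b)\big)\cdot(\text{classes pulled back from }\oM_{g,n+1})$, and by Lemma~\ref{proposition:divisibility} (applied with $n+1$ marked points, the distinguished last variable being $b$) this push-forward is divisible by $b^2$, hence contributes only to the $O(b^2)$ error. Here one must be mildly careful that the $e_1$ insertion at the last point, together with the dilaton/string-type behavior of the unit, does not obstruct the argument: since $\psi_1$ is \emph{not} inserted at the last point (the $\psi$-class in the corollary is $\psi_2$, attached to the second point), forgetting the last point is harmless and the fundamental class argument of Lemma~\ref{proposition:divisibility} goes through verbatim after multiplying by $c_{g,n+1}$ of the remaining insertions and $\lambda_g\psi_2^d$.

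Assembling the pieces: expand $\psi_2^d=(\pi^*\psi_2+\delta)^d=\pi^*\psi_2^d+d\,\pi^*\psi_2^{d-1}\delta+(\text{terms with }\delta^{\ge 2})$. The $\delta^{\ge2}$ terms vanish because $\delta^2=-\psi\delta$ expressed via the self-intersection on the separating-node locus, or more simply because they are supported on deeper strata and, after the splitting axiom, reproduce lower-$d$ $\DR$-integrals that will themselves be seen to be $O(b^2)$ by the same divisibility lemma; in any case a short induction on $d$ handles them. The $\pi^*\psi_2^d$ term gives, by the projection formula, $\pi_*(\DR_g)\cdot\psi_2^d\cdot c_{g,n+1}(\cdots)$, which is $O(b^2)$ by Lemma~\ref{proposition:divisibility}. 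The $d\,\pi^*\psi_2^{d-1}\delta$ term, via the splitting axiom along $\delta$ and $c_{g,1}(e_1\otimes\cdot\otimes\cdot)$ reducing to $\eta$ (unit axiom), yields precisely $\int_{\DR_g(a_0,a_1+b,a_2,\ldots,a_n)}\lambda_g\psi_2^{d-1}c_{g,n+1}(\otimes_{i=1}^{n+1}e_{\alpha_i})$ plus an $O(b^2)$ correction coming from the difference between $\DR_g$ restricted to $\delta$ and the merged-variable $\DR_g$, which is again controlled by Hain's formula. For $d=0$ there is no $\delta$ term at all and only the $O(b^2)$ contribution survives.

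\medskip

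\emph{Main obstacle.} The delicate point I expect to be the crux is the identification, on the boundary divisor $\delta$ where points $2$ and $n+2$ collide, of the restriction of $\DR_g(a_0,a_1,\ldots,a_n,b)$ with the pulled-back class $\DR_g(a_0,a_1,a_2+b,a_3,\ldots,a_n)$ up to terms of order $b^2$. This requires knowing the behavior of the double ramification cycle under this particular boundary pullback — essentially that the DR cycle is compatible with "merging" two markings whose multiplicities add — which on compact type follows from Hain's formula \eqref{eq:Hain's formula} by a direct but slightly tedious comparison of the two degree-$g$ polynomials $T(\cdots)^g$, checking that the $\psi^\dagger$, $\delta_0^J$, and $\delta_h^J$ contributions agree modulo $b^2$ after restriction to $\delta$ and after applying $\pi_*$. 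Getting this bookkeeping exactly right, keeping track of which subsets $J$ contain the last marked point and how $\delta$ interacts with the $\delta_0^J$'s, is where the real work lies; everything else is a formal consequence of the projection formula, the CohFT axioms, and Lemma~\ref{proposition:divisibility}.
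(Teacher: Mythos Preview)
Your overall strategy---push forward along $\pi$, use Lemma~\ref{proposition:divisibility} for the $O(b^2)$ part, and pick up the $\psi_2^{d-1}$ term from the boundary $\delta=\delta_0^{\{2,n+2\}}$---is the right one, and it is essentially what the paper does. But there is a genuine gap in how you handle the higher powers of $\delta$.

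You expand $(\pi^*\psi_2+\delta)^d$ binomially and then assert that the $\delta^{\ge 2}$ terms are $O(b^2)$. This is false. Using the self-intersection relation $\delta^2=-\pi^*\psi_2\cdot\delta$ (equivalently $\psi_2\cdot\delta=0$), one has $\delta^k=(-\pi^*\psi_2)^{k-1}\delta$, so every $\delta^{\ge 1}$ term in the binomial expansion is a multiple of $\pi^*\psi_2^{d-1}\delta$. After intersecting with $\DR_g$ and applying the splitting, each such term produces the \emph{same} leading integral $\int_{\DR_g(a_0,a_1+b,a_2,\ldots,a_n)}\lambda_g\psi_2^{d-1}c_{g,n+1}$ up to sign---not something $O(b^2)$. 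Concretely, the $k=1$ term contributes with coefficient $d$, while the $k\ge 2$ terms contribute $\sum_{k\ge 2}\binom{d}{k}(-1)^{k-1}=1-d$; together they give coefficient~$1$. If you discard the $\delta^{\ge 2}$ terms as $O(b^2)$, your answer carries an erroneous factor of $d$. Neither of your alternative justifications works: after one application of $\delta$, the variable $b$ has already been absorbed into $a_1+b$, so the divisibility lemma no longer applies to the resulting $\DR_g$-integral, and there is no inductive structure in $d$ that makes these terms small.

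The paper sidesteps this entirely by using the exact identity $\psi_2^d=\pi^*(\psi_2^d)+\delta\cdot\pi^*(\psi_2^{d-1})$, which follows in one line from $\psi_2\cdot\delta=0$; there are then only two terms and no combinatorics. For the boundary term the paper invokes the exact restriction formula from \cite{BSSZ15},
\[
\delta_0^{\{2,n+2\}}\cdot\DR_g\bigl(-\textstyle\sum a_i-b,A,b\bigr)=\DR_0(a_1,b,-a_1-b)\boxtimes\DR_g\bigl(-\textstyle\sum a_i-b,a_1+b,A'\bigr),
\]
which identifies the boundary contribution with the desired $\psi_2^{d-1}$ integral \emph{exactly}, not merely modulo $O(b^2)$. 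So what you flag as the ``main obstacle''---matching the restricted $\DR$-cycle to $\DR_g(a_0,a_1+b,\ldots)$ via Hain's formula up to $O(b^2)$---is in fact an exact equality already available in the literature, and the real obstacle is the coefficient issue above.
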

\begin{proof}
Denote the string $a_1,\ldots,a_n$ by $A$ and the tensor product $\otimes_{i=1}^{n+1}e_{\alpha_i}$ by $e_{\oalpha}$. Let $\pi\colon\oM_{g,n+2}\to\oM_{g,n+1}$ be the forgetful map that forgets the last marked point. If $d=0$, then
$$
\int_{\DR_g(-\sum a_i-b,A,b)}\lambda_g c_{g,n+2}(e_{\oalpha}\otimes e_1)=\int_{\pi_*\DR_g(-\sum a_i-b,A,b)}\lambda_g c_{g,n+1}(e_{\oalpha})\stackrel{\text{by Prop.~\ref{proposition:divisibility}}}{=}O(b^2).
$$
If $d\ge 1$, then $\psi_2^d=\pi^*(\psi_2^d)+\delta_0^{\{2,n+2\}}\cdot\pi^*(\psi_2^{d-1})$. We compute
\begin{gather*}
\int_{\DR_g(-\sum a_i-b,A,b)}\lambda_g\pi^*(\psi_2^d)c_{g,n+2}(e_{\oalpha}\otimes e_1)=\int_{\pi_*\DR_g(-\sum a_i-b,A,b)}\lambda_g\psi_2^d c_{g,n+1}(e_{\oalpha})\stackrel{\text{by Prop.~\ref{proposition:divisibility}}}{=}O(b^2).
\end{gather*}
We have the formula (see~\cite{BSSZ15})
\begin{gather*}
\delta_0^{\{2,n+2\}}\cdot\DR_g\left(-\sum a_i-b,A,b\right)=\DR_0(a_1,b,-a_1-b)\boxtimes\DR_g\left(-\sum a_i-b,A',a_1+b\right),
\end{gather*}
where $A'$ is the string $a_2,\ldots,a_n$ and the notation $\boxtimes$ is explained in~\cite[Section 2.1]{BSSZ15}. Thus,
\begin{gather*}
\int_{\DR_g(-\sum a_i-b,A,b)}\lambda_g\delta_0^{\{2,n+2\}}\pi^*(\psi_2^{d-1})c_{g,n+2}(e_{\oalpha}\otimes e_1)=\int_{\DR_g(-\sum a_i-b,a_1+b,A')}\lambda_g\psi_2^{d-1}c_{g,n+1}(e_{\oalpha}).
\end{gather*}
The corollary is proved.
\end{proof}

\begin{corollary}\label{corollary:multiple derivative and zero}
Let $g,n,m\ge 1$. Then we have
\begin{multline*}
\int_{\DR_g(-\sum_{i=1}^n a_i-\sum_{j=1}^m b_j,a_1,\ldots,a_n,b_1,\ldots,b_m)}\lambda_g\psi_2^d c_{g,n+m+1}(\otimes_{i=1}^{n+1} e_{\alpha_i}\otimes e_1^m)=\\
=\begin{cases}
\int_{\DR_g(-\sum a_i-\sum b_j,a_1+\sum b_j,a_2,\ldots,a_n)}\lambda_g\psi_2^{d-m}c_{g,n+1}(\otimes_{i=1}^{n+1}e_{\alpha_i})+O(b_1^2)+\ldots+O(b_m^2),&\text{if $d\ge m$};\\
O(b_1^2)+\ldots+O(b_m^2),&\text{if $d<m$}.
\end{cases}
\end{multline*}
\end{corollary}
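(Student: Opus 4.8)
The plan is to prove Corollary~\ref{corollary:multiple derivative and zero} by induction on $m$, using Corollary~\ref{corollary:derivative and zero} as the base case and the inductive step. For $m=1$ the statement is exactly Corollary~\ref{corollary:derivative and zero} (with $b=b_1$). For the inductive step, suppose the statement holds for $m-1$. Starting from the left-hand side with $m$ insertions of $e_1$, I would first apply Corollary~\ref{corollary:derivative and zero} to the last insertion $b_m$: this replaces $\DR_g(\ldots,a_1,\ldots,a_n,b_1,\ldots,b_m)$ with $\DR_g(\ldots,a_1+b_m,a_2,\ldots,a_n,b_1,\ldots,b_{m-1})$ and lowers the $\psi_2$ power by one, modulo $O(b_m^2)$ (or gives $O(b_m^2)$ directly if $d=0$). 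One subtlety: Corollary~\ref{corollary:derivative and zero} absorbs $b_m$ into the \emph{first} of the remaining ``$a$-type'' variables, which after we have already fixed the ordering is $a_1$; I need to make sure the bookkeeping of which variable plays the role of ``$a_1$'' in each application is consistent, but this is just relabeling.

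After this first step I am left (modulo $O(b_m^2)$) with an integral of the same shape but with $n$ unchanged, the variables $a_1+b_m,a_2,\ldots,a_n$ in the ``$a$'' slots, $m-1$ remaining $b$'s, and $\psi_2$-power $d-1$. Now I apply the inductive hypothesis for $m-1$: this collapses the remaining $b_1,\ldots,b_{m-1}$ into the first $a$-slot, producing $\DR_g(\ldots,(a_1+b_m)+\sum_{j=1}^{m-1}b_j,a_2,\ldots,a_n)=\DR_g(\ldots,a_1+\sum_{j=1}^m b_j,a_2,\ldots,a_n)$ with $\psi_2$-power $(d-1)-(m-1)=d-m$, modulo $O(b_1^2)+\ldots+O(b_{m-1}^2)$, provided $d-1\ge m-1$, i.e.\ $d\ge m$; and if $d-1<m-1$ the inductive hypothesis gives $O(b_1^2)+\ldots+O(b_{m-1}^2)$. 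Combining the two error terms gives $O(b_1^2)+\ldots+O(b_m^2)$, which is exactly the claimed remainder. The case analysis $d\ge m$ versus $d<m$ matches up: if $d=0$ we get $O(b_m^2)$ already at the first step; if $1\le d<m$ then after the first step we have power $d-1$ with $d-1<m-1$, so the inductive hypothesis produces only error terms; and if $d\ge m$ we get the clean leading term.

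The main thing to be careful about — rather than a genuine obstacle — is the treatment of the $O(b_j^2)$ error terms: I must check that an error term of the form $O(b_m^2)$ produced at the first step is not disturbed by the subsequent application of the inductive hypothesis in the $b_1,\ldots,b_{m-1}$ variables (it is not, since those operations are polynomial in the $b_j$ and do not involve $b_m$), and symmetrically that the errors $O(b_1^2),\ldots,O(b_{m-1}^2)$ from the inductive step, which a priori multiply the leading term of the first step, remain $O(b_j^2)$ after all manipulations. Since everything in sight is polynomial in $a_1,\ldots,a_n,b_1,\ldots,b_m$ (by Hain's formula~\eqref{eq:Hain's formula} and the fact that $\lambda_g$ is supported on compact type), these error estimates are stable, and the induction closes. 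I would also note, as in the proof of Corollary~\ref{corollary:derivative and zero}, that all classes live effectively on $\cM^\ct_{g,*}$ because of the $\lambda_g$ factor, so Lemma~\ref{proposition:divisibility} and the splitting formulas from~\cite{BSSZ15} apply verbatim at each stage.
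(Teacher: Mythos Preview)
Your proof is correct and takes essentially the same approach as the paper: the paper's proof consists of the single sentence ``The corollary immediately follows from Corollary~\ref{corollary:derivative and zero},'' and your induction on $m$ is precisely the iteration that sentence is pointing to. Your handling of the error terms and the case split $d\ge m$ versus $d<m$ is the right unpacking of that one-line argument.
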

\begin{proof}
The corollary immediately follows from Corollary~\ref{corollary:derivative and zero}.
\end{proof}

\subsection{Double ramification cycle and fundamental class}

Let $g,n\ge 0$ be such that $2g-2+n>0$. Denote by $\pi\colon\oM_{g,n+g}\to\oM_{g,n}$ the forgetful map, that forgets the last~$g$ marked points. The following statement was proved in~\cite{BSSZ15} (see Example 3.7 there).

\begin{lemma}[\cite{BSSZ15}]\label{lemma:DR and fundamental class}
We have $\pi_*\left(\DR_g(a_1,\ldots,a_{n+g})\right)=g!a_{n+1}^2\ldots a_{n+g}^2[\oM_{g,n}]$.
\end{lemma}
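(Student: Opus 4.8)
The plan is to reduce the statement to the known restriction of the double ramification cycle to compact type, by combining the fact that $\lambda_g$ — which implicitly governs where $\DR_g$ matters for our purposes — is not yet in play here, so instead I would work with the pushforward formula on $\cM^\ct_{g,n}$ and then argue that the non-compact-type contributions to $\pi_*\DR_g$ vanish after pushing forward $g$ points. More precisely, recall from Hain's formula~\eqref{eq:Hain's formula} that on $\cM^\ct_{g,n+g}$ one has $\DR_g(a_1,\ldots,a_{n+g})=\frac{1}{g!}T^g$ where $T$ is the degree-$2$ class of~\eqref{eq:definition of T} (with the appropriate relabeling of the marked points). The first step is therefore to expand $T^g$ and understand which monomials survive the iterated pushforward $\pi_*$ forgetting the last $g$ points. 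Since $\pi_*$ kills any class of relative dimension $<g$ over $\oM_{g,n}$ and $T^g$ has the right codimension, only the top relative-degree part contributes.

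The key combinatorial step is to isolate, in the multinomial expansion of $T^g=\left(\sum_{i}\frac{a_i^2\psi_i^\dagger}{2}-\sum_J(\cdots)\delta_0^J-\frac14\sum_{J,h}a_J^2\delta_h^J\right)^g$, the terms that push forward nontrivially. The $\delta_h^J$ terms with $1\le h\le g-1$ push forward to zero under forgetting a point (as used already in the proof of Lemma~\ref{proposition:divisibility}), so one is left with products of the $\psi^\dagger$ terms and the $\delta_0^J$ terms. Forgetting one of the last $g$ marked points, say point $j$: the class $\psi_j^\dagger$ is pulled back from $\oM_{g,1}$ and pushes to zero, while $(\psi_j^\dagger)^2$ also pushes to zero, so the only way point $j$ can contribute to a nonzero pushforward is through exactly one factor $\frac{a_j^2(\psi_j^\dagger)}{2}$ times classes independent of $j$ — but $\pi_*(\psi_j^\dagger)=0$ — so in fact the surviving mechanism must come from the boundary divisors $\delta_0^{\{j,k\}}$ with $|J|=2$ and $j$ one of the two elements, exactly as in the $n=1$ computation inside Lemma~\ref{proposition:divisibility}. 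Iterating this over all $g$ forgotten points, each forgotten point $n+l$ contributes a factor $a_{n+l}^2$ (coming from $\frac{\d}{\d a_{n+l}}$ of the relevant $a_ia_{n+l}$ term, which equals $a_{n+l}^2$ at the appropriate locus after the combinatorics, as in the displayed computation $\left.\frac{\d}{\d a_n}a_n^2\right|_{a_n=0}$ analogue), and the remaining class becomes $\frac{1}{g!}T(a_1,\ldots,a_n)^{\text{something}}$. Matching degrees: after forgetting $g$ points we land in $H^{2g-2g}=H^0$, so the surviving class on $\oM_{g,n}$ must be a multiple of $[\oM_{g,n}]$, and the coefficient is the product $g!\,a_{n+1}^2\cdots a_{n+g}^2$ once one tracks the $\frac{1}{g!}$ prefactor and the combinatorial count of ways to distribute the $g$ boundary factors.

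I would organize the induction on $g$: forget one point at a time, at each stage using a relation of the form $\pi_*\DR_g(\ldots,a)=a^2\cdot(\text{class on the smaller space})$ together with the restriction-to-$\cM^\ct$ polynomiality, and then identify the smaller class. Concretely, forgetting the last point reduces $\DR_g(a_1,\ldots,a_{n+g})$ to $a_{n+g}^2$ times $\pi_*$ of something that, by Hain's formula and the $\delta_0^J$ splitting formula from~\cite{BSSZ15}, agrees with $\DR_g(a_1,\ldots,a_{n+g-1})$ restricted to compact type (up to the precise bookkeeping of which $\psi^\dagger$ absorbs into the remaining $T$). One continues until all $g$ of the last points are forgotten, at which point the base is $\oM_{g,n}$ with $2g-2+n>0$ and the pushed-forward class lives in $H^0$, forcing it to be $c\,[\oM_{g,n}]$ for a scalar $c$; unwinding the accumulated factors gives $c=g!$. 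The main obstacle I expect is the careful combinatorial bookkeeping in this induction — tracking exactly how the $\delta_0^{\{j,k\}}$ boundary contributions pair the forgotten points, ensuring no cross-terms among $\psi^\dagger$-classes and boundary classes survive, and getting the numerical coefficient $g!$ exactly right rather than off by a factor coming from the $\frac{1}{g!}$ in Hain's formula or from overcounting orderings of the forgotten points. Since the statement is quoted from~\cite{BSSZ15} (Example 3.7), I would in fact simply cite that computation and only sketch the above as the underlying mechanism.
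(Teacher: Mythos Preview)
The paper does not prove this lemma at all: it simply states it and cites \cite{BSSZ15}, Example~3.7. Your final sentence --- that you would ``simply cite that computation'' --- is therefore exactly what the paper does, and is the correct disposition.

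That said, the sketch you offer as the ``underlying mechanism'' has real gaps, and you should not present it as even an outline of a proof. First, Hain's formula~\eqref{eq:Hain's formula} only describes $\DR_g$ on $\cM^{\ct}_{g,n+g}$, whereas the statement concerns the pushforward to the full $\oM_{g,n}$; your claim that ``non-compact-type contributions to $\pi_*\DR_g$ vanish after pushing forward $g$ points'' is asserted without justification, and this is precisely where the actual work in \cite{BSSZ15} lies. Second, your combinatorial step is internally inconsistent: you say the only surviving contribution from a forgotten point $j$ must come from a factor $\frac{a_j^2\psi_j^\dagger}{2}$, then immediately note $\pi_*(\psi_j^\dagger)=0$, and then pivot to $\delta_0^{\{j,k\}}$ terms without explaining how those produce the factor $a_j^2$ rather than $a_ja_k$. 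Third, your proposed induction ``forgetting one point reduces $\DR_g$ to $a_{n+g}^2$ times something agreeing with $\DR_g(a_1,\ldots,a_{n+g-1})$'' cannot be right as stated: forgetting one point drops the codimension by one, so the result lives in $H^{2g-2}$, not $H^{2g}$, and is not another DR cycle of the same genus. The actual argument in \cite{BSSZ15} goes through a different route (pushforward formulas for $\psi$-classes times DR cycles). Just cite it.
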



\section{Properties of the double ramification correlators}

In this section we study properties of the double ramification correlators. In Section~\ref{subsection:one-point} we derive an explicit formula for the one-point double ramification correlators.  In Sections~\ref{subsection:string} and~\ref{subsection:dilaton} we prove the string and the dilaton equations for the potential~$F^\DR$. In Section~\ref{subsection:divisor} we consider the cohomological field theory associated to the Gromov-Witten theory of a smooth projective variety and derive the divisor equation for~$F^\DR$. In Section~\ref{subsection:homogeneity} we consider a homogeneous cohomological field theory and prove a homogeneity condition for the potential~$F^\DR$. In Section~\ref{subsection:high vanishing} we prove a certain high degree vanishing of the double ramification correlators. All properties from Sections \ref{subsection:string}-\ref{subsection:high vanishing} are analagous to the properties of the usual potential $F$ of a cohomological field theory, though the proofs are very different. However, in Section~\ref{subsection:low vanishing} we derive a certain low degree vanishing of the double ramification correlators that doesn't have an analogue for the usual correlators of a cohomological field theory.

In all parts of this section we consider an arbitrary cohomological field theory $c_{g,n}\colon V^{\otimes n}\to H^\even(\oM_{g,n},\mbC)$, unless otherwise specified.

\subsection{One-point correlators}\label{subsection:one-point}

In this section we prove an explicit formula for the one-point double ramification correlators~$\<\tau_d(e_\alpha)\>^\DR_g$.

\begin{proposition}\label{proposition:one-point}
1) Let $g\ge 1$, then we have
$$
\<\tau_0(e_1)\tau_d(e_\alpha)\>^\DR_g=
\begin{cases}
\Coef_{a^{2g}}\left(\int_{\DR_g(a,-a)}\lambda_g\psi_1^{d-2g}c_{g,2}(e_\alpha\otimes e_1)\right),&\text{if $d\ge 2g$};\\
0,&\text{if $d<2g$}.
\end{cases}
$$
2) We have
\begin{align*}
\<\tau_d(e_\alpha)\>^\DR_g=&
\begin{cases}
\Coef_{a^{2g}}\left(\int_{\DR_g(a,-a)}\lambda_g\psi_1^{d+1-2g}c_{g,2}(e_\alpha\otimes e_1)\right),&\text{if $d\ge 2g-1$ and $g\ge 1$};\\
0,&\text{if $d<2g-1$ and $g\ge 1$}.
\end{cases}\\
\<\>^\DR_g=&0,\quad g\ge 2.
\end{align*}
\end{proposition}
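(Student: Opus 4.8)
The plan is to compute the relevant two-point functions $\Omega^{\DR}_{\alpha,p;\beta,q}$ of the tau-structure evaluated on the string solution, using the definitions \eqref{eq:definition of correlator2}--\eqref{eq:definition of correlators3} together with relation \eqref{eq:OmegaDr-hDR relation}, namely $\Omega^{\DR}_{\alpha,p;1,0}=h^\DR_{\alpha,p-1}=\frac{\delta\og_{\alpha,p}}{\delta u^1}$. The starting point for all three parts is therefore an explicit description of the variational derivative $\frac{\delta\og_{\alpha,p}}{\delta u^1}$ of the DR Hamiltonian density, evaluated at the string solution and at $x=t^*_*=0$. Taking the variational derivative with respect to $u^1$ of the geometric expression \eqref{DR Hamiltonians} introduces an extra marked point carrying the unit $e_1$, and by dimensional reasons only the term of the $p$-expansion with no remaining $p$-variables survives after setting $t^*_*=0$; this produces a two-pointed DR correlator $\int_{\DR_g(a,-a)}\lambda_g\psi_1^{?}c_{g,2}(e_\alpha\otimes e_1)$ with $\psi$-power shifted appropriately.

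\textbf{Part 1.} For $\<\tau_0(e_1)\tau_d(e_\alpha)\>^\DR_g$ with $g\ge 1$, I would use that this correlator equals $\Coef_{\eps^{2g}}\Omega^{\DR,\str}_{\alpha,d;1,0}\big|_{t^*_*=0}$ up to reindexing (more precisely, the $n=2$ case of \eqref{eq:definition of correlator2} with the pair $(1,0)$ as second index), so by \eqref{eq:OmegaDr-hDR relation} it equals $\Coef_{\eps^{2g}}\left(\frac{\delta\og_{\alpha,d+1}}{\delta u^1}\big|_{u^\gamma_n=(\ustr)^\gamma_n}\right)\big|_{x=0,\,t^*_*=0}$. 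Since $\ustr|_{t^*_*=0}=\delta^{\alpha,1}x$ and we then set $x=0$, all differential-polynomial variables $u^\gamma_n$ vanish except possibly $u^1_0$ and $u^1_1$; the key point is that only the part of $\frac{\delta\og_{\alpha,d+1}}{\delta u^1}$ with no $u$-dependence at all (the pure constant term in the differential polynomial) contributes. Tracking the $p$-variable bookkeeping in the definition of $\og_{\alpha,d+1}$: the variational derivative $\frac{\delta}{\delta u^1}$ inserts one marked point labelled $e_1$, the constant term corresponds to killing all remaining $p$'s, which forces $n=2$ in \eqref{DR Hamiltonians}, leaving the cycle $\DR_g(0,a,-a)=\DR_g(a,-a)$ (the $0$ from the $\psi_1^d$-insertion point being forgettable, or absorbed), the class $\lambda_g\psi_1^{d+1}c_{g,3}(e_\alpha\otimes e_1\otimes e_1)$, and using the string/fundamental-class behaviour (Lemma~\ref{lemma:DR and fundamental class} is overkill here; rather one uses that pushing forward along the point carrying $\psi_1^{d+1}$ with the extra $e_1$ reduces $\psi_1^{d+1}$ to $\psi_1^{d+1-?}$ and produces the $a^2$-powers). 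Dimensionally, $\int_{\DR_g(a,-a)}\lambda_g\psi_1^{k}c_{g,2}$ is a polynomial in $a$ homogeneous of degree $2g$; matching this against $\Coef_{a^{2g}}$ and the $\psi$-degree gives $k=d-2g$, which is $\ge 0$ exactly when $d\ge 2g$, and the correlator vanishes otherwise since a negative power of $\psi$ is nonsense (equivalently, the polynomial has degree $<2g$).

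\textbf{Parts 2.} The one-point correlator $\<\tau_d(e_\alpha)\>^\DR_g$ for $g\ge1$ is defined in \eqref{eq:definition of correlator2} as $\Coef_{\eps^{2g}}\Omega^{\DR,\str}_{\alpha,d+1;1,0}\big|_{t^*_*=0}$, which by \eqref{eq:OmegaDr-hDR relation} equals $\Coef_{\eps^{2g}}\left(h^\DR_{\alpha,d}\big|_{\ustr}\right)\big|_{x=0,t^*_*=0}=\Coef_{\eps^{2g}}\left(\frac{\delta\og_{\alpha,d+1}}{\delta u^1}\big|_{\ustr}\right)\big|_{x=0,t^*_*=0}$ — the same object as in Part 1 but with $d$ replaced by $d+1$ in the index, hence $\psi$-power $d+1-2g$ and the threshold $d\ge 2g-1$. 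So Part 2's first statement is really Part 1 with a shift of the $\psi$-exponent by $1$, and I would just invoke the computation above. For $\<\>^\DR_g$ with $g\ge2$, by \eqref{eq:definition of correlators3} it is $\frac{1}{2g-2}\Coef_{\eps^{2g}}\Omega^{\DR,\str}_{1,2;1,0}\big|_{t^*_*=0}=\frac{1}{2g-2}\Coef_{\eps^{2g}}\left(\frac{\delta\og_{1,2}}{\delta u^1}\big|_{\ustr}\right)\big|_{x=0,t^*_*=0}$, which by the $d=1$ case of the one-point formula (with $\alpha=1$) equals $\frac{1}{2g-2}\Coef_{a^{2g}}\int_{\DR_g(a,-a)}\lambda_g\psi_1^{3-2g}c_{g,2}(e_1\otimes e_1)$; for $g\ge2$ we have $3-2g<0$, so this integrand is identically zero, giving $\<\>^\DR_g=0$.

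\textbf{Main obstacle.} The routine-but-delicate step is the bookkeeping of the variational derivative $\frac{\delta}{\delta u^1}$ on the geometric expression \eqref{DR Hamiltonians}: one must check carefully that (i) $\frac{\delta}{\delta u^1}$ acting on $\prod p^{\alpha_i}_{a_i}$ and then evaluating at the string solution with $x=0$ really does isolate the term with exactly two remaining marked points, and (ii) the effect of the extra $e_1$-insertion is precisely to forget a marked point (via the CohFT axiom $c_{g,n+1}(\cdots\otimes e_1)=\pi^*c_{g,n}(\cdots)$) and, when a $\psi$-class sits at that point, to lower its exponent by one, while the forgetful pushforward along the other inserted points produces the factors $a^2$ that upgrade the class to the top degree $2g$ in $a$. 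This is where Corollary~\ref{corollary:derivative and zero} and Lemma~\ref{lemma:DR and fundamental class} enter; assembling the string-equation manipulation and the dimension count into the clean statement of the proposition is the only real content. Once the one-point formula in part 2 is established in the form above, parts 1 and the $\<\>^\DR_g$ vanishing follow by trivial index shifts and a degree inequality.
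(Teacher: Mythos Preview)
Your overall strategy---reduce to $\Omega^{\DR}_{\alpha,d;1,0}=\frac{\delta\og_{\alpha,d}}{\delta u^1}$, evaluate on the string solution at $t^*_*=0$, and read off a two-pointed DR integral---is the right one, and part~2 does follow from part~1 by the index shifts you describe. But the central step has a genuine gap.

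First a minor index slip: in part~1 one has $\Omega^{\DR}_{\alpha,d;1,0}=\frac{\delta\og_{\alpha,d}}{\delta u^1}$, not $\frac{\delta\og_{\alpha,d+1}}{\delta u^1}$; the latter is what enters part~2.

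The real problem is your claim that evaluating at $\ustr|_{t^*_*=0}$ and then $x=0$ picks out ``the pure constant term'' of the differential polynomial and hence ``forces $n=2$'' in~\eqref{DR Hamiltonians}. It does not. At that point one has $u^\gamma_n=\delta^{\gamma,1}\delta_{n,1}$, so $u^1_1=1$ is nonzero, and every monomial $(u^1_1)^k$ contributes~$1$. Since $\Coef_{\eps^{2g}}\frac{\delta\og_{\alpha,d}}{\delta u^1}$ is homogeneous of differential-polynomial degree~$2g$, it has \emph{no} constant term for $g\ge 1$; your argument as written would give zero for every such correlator. What actually survives is the term with exactly $2g$ factors of $u^1_1$, i.e.\ in $p$-variables the $n=2g$ term with all remaining insertions equal to~$e_1$, leaving
\[
\frac{1}{(2g)!}\,\Coef_{a_1\cdots a_{2g}}\int_{\DR_g(0,-\sum a_i,a_1,\ldots,a_{2g})}\lambda_g\psi_1^d\,c_{g,2g+2}\bigl(e_\alpha\otimes e_1^{2g+1}\bigr).
\]
(The paper records this extraction as Lemma~\ref{lemma:substitution}.) The passage from this $(2g+2)$-pointed integral to the two-pointed $\int_{\DR_g(a,-a)}\lambda_g\psi_1^{d-2g}c_{g,2}(e_\alpha\otimes e_1)$ is not a single forgetful pushforward or an application of the string equation: it is exactly the content of Corollary~\ref{corollary:multiple derivative and zero}, which says that adding $m$ extra $e_1$-points on a DR cycle lowers the $\psi$-exponent by $m$ modulo $O(b_j^2)$ terms---precisely the terms killed when one extracts the coefficient of $a_1\cdots a_{2g}$. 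This is where the threshold $d\ge 2g$ and the shift $\psi_1^d\mapsto\psi_1^{d-2g}$ come from. Lemma~\ref{lemma:DR and fundamental class} is not used in this proposition.
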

\begin{proof}
Obviously, part 2 follows from part 1 and the definitions~\eqref{eq:definition of correlator2} and~\eqref{eq:definition of correlators3}. Let us prove part 1. Note that
$$
\left.(u^\str)_n^\gamma\right|_{\substack{x=0\\t^*_*=0}}=\delta_{n,1}\delta^{\gamma,1}.
$$
Then we compute
\begin{gather*}
\<\tau_0(e_1)\tau_d(e_\alpha)\>_g^{\DR}=\left.\Coef_{\eps^{2g}}\Omega^{\DR,\str}_{\alpha,d;1,0}\right|_{t^*_*=0}=\left.\Coef_{\eps^{2g}}\Omega^{\DR}_{\alpha,d;1,0}\right|_{u^\gamma_n=\delta^{\gamma,1}\delta_{n,1}}\stackrel{\text{by~\eqref{eq:OmegaDr-hDR relation}}}{=}\left.\Coef_{\eps^{2g}}\frac{\delta\og_{\alpha,d}}{\delta u^1}\right|_{u^\gamma_n=\delta^{\gamma,1}\delta_{n,1}}.
\end{gather*}
Let us now formulate the following simple lemma. 
\begin{lemma}\label{lemma:substitution}
Let $f\in\cA_N$ be a differential polynomial of degree $d$. Consider the decomposition
$$
f|_{u^\gamma_n=\sum_{a\in\mbZ}(ia)^n p^\gamma_ae^{iax}}=\sum_{k\ge 0}\sum_{a_1,\ldots,a_k\in\mbZ}P_{\alpha_1,\ldots,\alpha_k}(a_1,\ldots,a_k)p^{\alpha_1}_{a_1}\ldots p^{\alpha_k}_{a_k}e^{ix\sum a_j},
$$
where $P_{\alpha_1,\ldots,\alpha_k}(a_1,\ldots,a_k)$ are polynomials of degree $d$. Then we have
$$
f|_{u^\gamma_n=\delta^{\gamma,1}\delta_{n,1}}=(-i)^d\Coef_{a_1a_2\ldots a_d}P_{1,\ldots,1}(a_1,a_2,\ldots,a_d).
$$
\end{lemma}
\begin{proof}
Clearly, it is sufficient to check the lemma when $f$ is a monomial $u^{\beta_1}_{d_1}\ldots u^{\beta_k}_{d_k}$. In this case the proof consists of a simple direct computation.
\end{proof}

We have
\begin{multline*}
\Coef_{\eps^{2g}}\left.\frac{\delta\og_{\alpha,d}}{\delta u^1}\right|_{u^\gamma_n=\sum_{a\in\mbZ}(ia)^np^\gamma_ae^{iax}}=\\
=\sum_{n\ge 1}\frac{(-1)^g}{n!}\sum_{a_1,\ldots,a_n\in\mbZ}\left(\int_{\DR_g(0,-\sum a_i,a_1,\ldots,a_n)}\lambda_g \psi_1^dc_{g,n+2}(e_\alpha\otimes e_1\otimes\otimes_{i=1}^n e_{\alpha_i})\right)\prod_{i=1}^n p^{\alpha_i}_{a_i}e^{ix \sum a_i}.
\end{multline*}
Therefore, by Lemma~\ref{lemma:substitution}, we get
\begin{align*}
\left.\Coef_{\eps^{2g}}\frac{\delta\og_{\alpha,d}}{\delta u^1}\right|_{u^\gamma_n=\delta^{\gamma,1}\delta_{n,1}}=&\frac{1}{(2g)!}\Coef_{a_1\ldots a_{2g}}\left(\int_{\DR_g(0,-\sum a_i,a_1,\ldots,a_{2g})}\lambda_g\psi_1^d c_{g,2g+2}(e_\alpha\otimes e_1^{2g+1})\right)=\\
\stackrel{\text{by Cor.~\ref{corollary:multiple derivative and zero}}}{=}&
\begin{cases}
\frac{1}{(2g)!}\Coef_{a_1\ldots a_{2g}}\left(\int_{\DR_g(\sum a_i,-\sum a_i)}\lambda_g\psi_1^{d-2g} c_{g,2}(e_\alpha\otimes e_1)\right),&\text{if $d\ge 2g$};\\
0,&\text{if $d<2g$}.
\end{cases}\\
=&
\begin{cases}
\Coef_{a^{2g}}\left(\int_{\DR_g(a,-a)}\lambda_g\psi_1^{d-2g} c_{g,2}(e_\alpha\otimes e_1)\right),&\text{if $d\ge 2g$};\\
0,&\text{if $d<2g$}.
\end{cases}
\end{align*}
The proposition is proved.
\end{proof}

\subsection{String equation}\label{subsection:string}

Let us prove the string equation for the potential $F^{\DR}$.
\begin{proposition}\label{proposition:string for FDR}
We have
\begin{gather}\label{eq:string for FDR}
\frac{\d F^\DR}{\d t^1_0}=\sum_{n\ge 0}t^\alpha_{n+1}\frac{\d F^\DR}{\d t^\alpha_n}+\frac{1}{2}\eta_{\alpha\beta}t^\alpha_0 t^\beta_0.
\end{gather}
\end{proposition}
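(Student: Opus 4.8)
The plan is to follow the route used for the Dubrovin--Zhang free energy: reduce~\eqref{eq:string for FDR} to a differential‑polynomial identity for the two‑point functions $\Omega^\DR_{\alpha,p;\beta,q}$, prove that identity by induction, and read off the few low‑order terms of~\eqref{eq:string for FDR} from the one‑point formula. Concretely, set $\mathcal S:=\sum_{n\ge 0}t^\gamma_{n+1}\frac{\d}{\d t^\gamma_n}-\frac{\d}{\d t^1_0}$ and $G:=\frac{\d F^\DR}{\d t^1_0}-\sum_{n\ge 0}t^\gamma_{n+1}\frac{\d F^\DR}{\d t^\gamma_n}-\frac12\eta_{\alpha\beta}t^\alpha_0 t^\beta_0$, so that the Proposition is the statement $G=0$. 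Using $\frac{\d^2 F^\DR}{\d t^\alpha_p\d t^\beta_q}=\Omega^{\DR,\str}_{\alpha,p;\beta,q}$, the commutator $\big[\frac{\d}{\d t^\mu_r},\sum_n t^\gamma_{n+1}\frac{\d}{\d t^\gamma_n}\big]=\frac{\d}{\d t^\mu_{r-1}}$, and the symmetry~\eqref{eq:three-point} of the three‑point functions, one computes
\[
\frac{\d^2 G}{\d t^\mu_r\d t^\nu_s}=-\,\mathcal S\,\Omega^{\DR,\str}_{\mu,r;\nu,s}-\Omega^{\DR,\str}_{\mu,r-1;\nu,s}-\Omega^{\DR,\str}_{\mu,r;\nu,s-1}-\eta_{\mu\nu}\delta_{r,0}\delta_{s,0},
\]
with the convention that a two‑point function carrying a $-1$ descendent index is zero. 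Hence it suffices to prove that the right‑hand side vanishes and that $G|_{t^*_*=0}=0$, $\frac{\d G}{\d t^\mu_r}\big|_{t^*_*=0}=0$; the latter two follow from Proposition~\ref{proposition:one-point} (the vanishings $\langle\tau_0(e_1)\rangle^\DR_g=0$ and $\langle\tau_0(e_1)\tau_0(e_\mu)\rangle^\DR_g=0$, and the equality $\langle\tau_0(e_1)\tau_r(e_\mu)\rangle^\DR_g=\langle\tau_{r-1}(e_\mu)\rangle^\DR_g$ for $r\ge1$, all obtained by matching parts~1 and~2 there) together with Lemma~\ref{lemma:genus 0 tau} in genus~$0$, where $F^\DR_0=F_0$ starts with cubic terms. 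Once every second $t$‑derivative of $G$ vanishes, $G$ is affine in the times, so $G=0$.

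To annihilate the right‑hand side above I would first record the \emph{string equation for the string solution},
\[
\sum_{n\ge 0}t^\gamma_{n+1}\frac{\d(\ustr)^\rho}{\d t^\gamma_n}=(\ustr)^\rho_x-\delta^{\rho,1},
\]
which follows from the initial condition~\eqref{eq:initial for string}, from $\og_{1,0}$ generating the spatial translations, and from commutativity of the flows: the difference $v^\rho$ of the two sides vanishes at $t^*_*=0$ and satisfies a linear homogeneous evolutionary system (obtained by differentiating the hierarchy and using $\frac{\d\og_{\beta,q}}{\d u^1}=\og_{\beta,q-1}$), hence $v^\rho=0$. Substituting this into the chain‑rule expansion of $\mathcal S$ applied to $\Omega^{\DR,\str}_{\mu,r;\nu,s}=\Omega^\DR_{\mu,r;\nu,s}(\ustr)\big|_{x=0}$, all terms involving $\d_x\Omega^\DR_{\mu,r;\nu,s}$ cancel and one is left with $\mathcal S\,\Omega^{\DR,\str}_{\mu,r;\nu,s}=-\big(\frac{\d\Omega^\DR_{\mu,r;\nu,s}}{\d u^1}\big)(\ustr)\big|_{x=0}$. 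Thus the entire problem reduces to the differential‑polynomial identity
\[
\frac{\d\Omega^\DR_{\alpha,p;\beta,q}}{\d u^1}=\Omega^\DR_{\alpha,p-1;\beta,q}+\Omega^\DR_{\alpha,p;\beta,q-1}+\eta_{\alpha\beta}\delta_{p,0}\delta_{q,0}.
\]

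This last identity I would prove by induction on $q$. For $q=0$ one has $\Omega^\DR_{\alpha,p;\beta,0}=\frac{\delta\og_{\alpha,p}}{\delta u^\beta}$ (from $\d_x\Omega^\DR_{\beta,0;\alpha,p}=\frac{\d}{\d t^\alpha_p}h^\DR_{\beta,-1}$ with $h^\DR_{\beta,-1}=\eta_{\beta\mu}u^\mu$, the hierarchy equations, and the normalization $\Omega^\DR|_{u^*_*=0}=0$), so the identity reads $\frac{\d}{\d u^1}\frac{\delta\og_{\alpha,p}}{\delta u^\beta}=\frac{\delta\og_{\alpha,p-1}}{\delta u^\beta}+\eta_{\alpha\beta}\delta_{p,0}$, which follows by commuting $\frac{\d}{\d u^1}$ past the variational derivative and invoking~\cite[Lemma~4.6]{Bur15} (i.e. $\frac{\d\og_{\alpha,d}}{\d u^1}=\og_{\alpha,d-1}$ for $d\ge1$, $\frac{\d\og_{\alpha,0}}{\d u^1}=\int\eta_{\alpha\mu}u^\mu dx$). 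For the inductive step, differentiate $\d_x\Omega^\DR_{\alpha,p;\beta,q}=\{h^\DR_{\alpha,p-1},\og_{\beta,q}\}_{\eta\d_x}$ by $\frac{\d}{\d u^1}$: since $\frac{\d}{\d u^1}$ commutes with $\d_x$ and obeys the Leibniz rule $\frac{\d}{\d u^1}\{f,\oh\}=\{\frac{\d f}{\d u^1},\oh\}+\{f,\frac{\d\oh}{\d u^1}\}$ (a consequence of $\frac{\d}{\d u^1}\frac{\delta\oh}{\delta u^\mu}=\frac{\delta}{\delta u^\mu}\frac{\d\oh}{\d u^1}$), and using the already‑proved $q=0$ case ($\frac{\d h^\DR_{\alpha,p-1}}{\d u^1}=h^\DR_{\alpha,p-2}+\eta_{\alpha1}\delta_{p,0}$) and $\frac{\d\og_{\beta,q}}{\d u^1}=\og_{\beta,q-1}$, one gets $\d_x\frac{\d\Omega^\DR_{\alpha,p;\beta,q}}{\d u^1}=\d_x\big(\Omega^\DR_{\alpha,p-1;\beta,q}+\Omega^\DR_{\alpha,p;\beta,q-1}\big)$, i.e. the identity holds up to an additive constant.

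The step needing the most care is fixing that constant. Evaluating at $u^*_*=0$ turns the claim into $\frac{\d\Omega^\DR_{\alpha,p;\beta,q}}{\d u^1}\big|_{u^*_*=0}=\eta_{\alpha\beta}\delta_{p,0}\delta_{q,0}$; for $g\ge1$ the $\eps^{2g}$‑component of the left‑hand side vanishes automatically, since $\Omega^\DR_{\alpha,p;\beta,q}\in\hcA^{[0]}_N$ forces its $\eps^{2g}$‑part to be homogeneous of positive degree in the $u^\gamma_i$'s and therefore to vanish, together with its $u^1$‑derivative, at $u^*_*=0$; and the $\eps^0$‑component is the genus‑$0$ relation for the two‑point functions of the principal hierarchy, equivalently the genus‑$0$ string equation, which holds because $F^\DR_0=F_0$ by Lemma~\ref{lemma:genus 0 tau}. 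With the constant identified, the induction closes, and combining the three reductions above yields~\eqref{eq:string for FDR}.
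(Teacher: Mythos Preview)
Your proof is correct and follows essentially the same route as the paper's: prove the differential-polynomial identity $\frac{\d\Omega^\DR_{\alpha,p;\beta,q}}{\d u^1}=\Omega^\DR_{\alpha,p-1;\beta,q}+\Omega^\DR_{\alpha,p;\beta,q-1}+\eta_{\alpha\beta}\delta_{p,0}\delta_{q,0}$, transport it to the string solution via $O(\ustr)^\gamma=\delta^{\gamma,1}$, and handle the low-order terms with Proposition~\ref{proposition:one-point} and Lemma~\ref{lemma:genus 0 tau}. The only cosmetic differences are that you frame Step~1 as an induction on~$q$ (the paper does a direct computation covering all $p,q$ at once), you sketch a proof of the string property of $\ustr$ (the paper just cites~\cite[Lemma~4.7]{Bur15}), and you spell out the degree argument for the constant, which the paper states without comment.
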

\begin{proof}
It is convenient to use the following conventions:
\begin{align*}
&\og_{\alpha,-1}:=\int\eta_{\alpha\mu}u^\mu dx,\\
&h^{\DR}_{\alpha,-2}:=\eta_{\alpha,1},\\
&\Omega^{\DR}_{\alpha,p;\beta,q}:=0,\quad\text{if~$p$ or~$q$ is negative}.
\end{align*}
In~\cite[Lemma 4.6]{Bur15} it was proved that 
\begin{gather}\label{eq:string for og}
\frac{\d\og_{\alpha,d+1}}{\d u^1}=\og_{\alpha,d},\quad d\ge -1.
\end{gather}
Taking the variational derivative $\frac{\delta}{\delta u^1}$ of both sides we get
\begin{gather}\label{eq:string for hDR}
\frac{\d h^{\DR}_{\alpha,d}}{\d u^1}=h^{\DR}_{\alpha,d-1},\quad d\ge -1.
\end{gather}
We divide the proof of the proposition into three steps.

{\it Step 1}. Let us prove that
\begin{gather}\label{eq:string for FDR,step1}
\frac{\d\Omega^{\DR}_{\alpha,p;\beta,q}}{\d u^1}=\Omega^{\DR}_{\alpha,p-1;\beta,q}+\Omega^{\DR}_{\alpha,p;\beta,q-1}+\delta_{p,0}\delta_{q,0}\eta_{\alpha\beta},\quad p,q\ge 0.
\end{gather}
We compute
\begin{align}
\d_x\frac{\d\Omega^{\DR}_{\alpha,p;\beta,q}}{\d u^1}=&\frac{\d}{\d u^1}\left(\sum_{n\ge 0}\frac{\d h^{\DR}_{\alpha,p-1}}{\d u^\gamma_n}\d_x^{n+1}\eta^{\gamma\mu}\frac{\delta\og_{\beta,q}}{\delta u^\mu}\right)\stackrel{\text{by~\eqref{eq:string for og} and~\eqref{eq:string for hDR}}}{=}\label{eq:compuatation for string}\\
=&\sum_{n\ge 0}\frac{\d h^{\DR}_{\alpha,p-2}}{\d u^\gamma_n}\d_x^{n+1}\eta^{\gamma\mu}\frac{\delta\og_{\beta,q}}{\delta u^\mu}+\sum_{n\ge 0}\frac{\d h^{\DR}_{\alpha,p-1}}{\d u^\gamma_n}\d_x^{n+1}\eta^{\gamma\mu}\frac{\delta\og_{\beta,q-1}}{\delta u^\mu}=\notag\\
=&\d_x\Omega^{\DR}_{\alpha,p-1;\beta,q}+\d_x\Omega^{\DR}_{\alpha,p;\beta,q-1}.\notag
\end{align}
Therefore, $\frac{\d\Omega^{\DR}_{\alpha,p;\beta,q}}{\d u^1}-\Omega^{\DR}_{\alpha,p-1;\beta,q}-\Omega^{\DR}_{\alpha,p;\beta,q-1}=C$, where $C$ is a constant. Since $\left.\frac{\d\Omega^{\DR}_{\alpha,p;\beta,q}}{\d u^1}\right|_{u^*_*=0}=\delta_{p,0}\delta_{q,0}\eta_{\alpha\beta}$, we get $C=\delta_{p,0}\delta_{q,0}\eta_{\alpha\beta}$. Therefore, equation~\eqref{eq:string for FDR,step1} is proved.

{\it Step 2}. Let us prove that
\begin{gather}\label{eq:string for FDR,step2}
\left(\frac{\d}{\d t^1_0}-\sum_{n\ge 0}t^\gamma_{n+1}\frac{\d}{\d t^\gamma_n}\right)\Omega^{\DR,\str}_{\alpha,p;\beta,q}=\Omega^{\DR,\str}_{\alpha,p-1;\beta,q}+\Omega^{\DR,\str}_{\alpha,p;\beta,q-1}+\delta_{p,0}\delta_{q,0}\eta_{\alpha\beta}.
\end{gather}
Let $O:=\frac{\d}{\d t^1_0}-\sum_{n\ge 0}t^\gamma_{n+1}\frac{\d}{\d t^\gamma_n}$. We have
$
O\Omega^{\DR,\str}_{\alpha,p;\beta,q}=\left.\left(\sum_{n\ge 0}\left.\frac{\d\Omega^{\DR}_{\alpha,p;\beta,q}}{\d u^\gamma_n}\right|_{u^\mu_r=(u^\str)_r^\mu}O(\ustr)_n^\gamma\right)\right|_{x=0}.
$
Recall that $O(\ustr)^\gamma=\delta^{\gamma,1}$ (see~\cite[Lemma 4.7]{Bur15}). Therefore,
$$
O\Omega^{\DR,\str}_{\alpha,p;\beta,q}=\left.\left.\frac{\d\Omega^{\DR}_{\alpha,p;\beta,q}}{\d u^1}\right|_{u^\gamma_n=(u^\str)_n^\gamma}\right|_{x=0}\stackrel{\text{by \eqref{eq:string for FDR,step1}}}{=}\Omega^{\DR,\str}_{\alpha,p-1;\beta,q}+\Omega^{\DR,\str}_{\alpha,p;\beta,q-1}+\delta_{p,0}\delta_{q,0}\eta_{\alpha\beta}.
$$
Thus, equation~\eqref{eq:string for FDR,step2} is proved.

{\it Step 3}. Let us finally prove the proposition. Equation~\eqref{eq:string for FDR} is equivalent to the following system of equations for the double ramification correlators:
\begin{align}
\<\tau_0(e_1)\prod_{i=1}^n\tau_{d_i}(e_{\alpha_i})\>_g^{\DR}=&\sum_{\substack{1\le i\le n\\d_i>0}}\<\tau_{d_i-1}(e_{\alpha_i})\prod_{j\ne i}^n\tau_{d_j}(e_{\alpha_j})\>_g^{\DR},\quad\text{if $2g-2+n>0$},\label{eq:string1 for correlators}\\
\<\tau_0(e_1)\tau_p(e_\alpha)\tau_q(e_\beta)\>^{\DR}_0=&\delta_{p,0}\delta_{q,0}\eta_{\alpha\beta},\label{eq:string2 for correlators}\\
\<\tau_0(e_1)\>^{\DR}_1=&0.\label{eq:string3 for correlators}
\end{align}
Equation~\eqref{eq:string1 for correlators} for $n\ge 2$ follows from equation~\eqref{eq:string for FDR,step2}. For $n=1$ equation~\eqref{eq:string1 for correlators} is equivalent to the equation
\begin{gather*}
\<\tau_0(e_1)\tau_d(e_\alpha)\>^{\DR}_g=
\begin{cases}
\<\tau_{d-1}(e_\alpha)\>^{\DR}_g,&\text{if $g\ge 1$ and $d\ge 1$};\\
0,&\text{if $g\ge 1$ and $d=0$};
\end{cases}
\end{gather*}
that follows from definition~\eqref{eq:definition of correlator2} and Proposition~\ref{proposition:one-point}. Equation~\eqref{eq:string1 for correlators} for $n=0$ together with equation~\eqref{eq:string3 for correlators} say that
$\<\tau_0(e_1)\>^{\DR}_g=0$ for $g\ge 1$. This again follows from Proposition~\ref{proposition:one-point}. Equation~\eqref{eq:string2 for correlators} follows from Lemma~\ref{lemma:genus 0 tau}. The proposition is proved.
\end{proof}
Note that the string equation~\eqref{eq:string for FDR} for $F^\DR$ is almost the same as the usual string equation for the potential~$F$:
\begin{gather}\label{eq:string for F}
\frac{\d F}{\d t_0^1}=\sum_{n\ge 0}t^\alpha_{n+1}\frac{\d F}{\d t^\alpha_n}+\frac{1}{2}\eta_{\alpha\beta}t^\alpha_0 t^\beta_0+\eps^2\<\tau_0(e_1)\>_1.
\end{gather} 

\subsection{Dilaton equation}\label{subsection:dilaton}

Here we prove the dilaton equation for~$F^\DR$.

\begin{proposition}
We have
\begin{gather}\label{eq:dilaton for FDR}
\frac{\d F^\DR}{\d t^1_1}=\eps\frac{\d F^\DR}{\d\eps}+\sum_{n\ge 0}t^\alpha_n\frac{\d F^\DR}{\d t^\alpha_n}-2F^{\DR}+\eps^2\frac{N}{24}.
\end{gather}
\end{proposition}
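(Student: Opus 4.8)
The strategy mirrors the proof of the string equation (Proposition~\ref{proposition:string for FDR}): first establish a dilaton-type identity for the local functionals $\og_{\alpha,d}$, then transfer it to the two-point functions $\Omega^\DR_{\alpha,p;\beta,q}$, then evaluate on the string solution and read off the resulting relations among the correlators. The key input from \cite{Bur15} that I would invoke is a dilaton-type statement for the DR Hamiltonians, namely that the dilaton operator $\sum_{i\ge 0}u^\gamma_i\frac{\d}{\d u^\gamma_i}+\eps\frac{\d}{\d\eps}$, applied to $\og_{\alpha,d}$, produces $\og_{\alpha,d}$ times an explicit scalar (together with the lower-Hamiltonian term $\og_{\alpha,d-1}$ coming from the extra $\psi_1$; concretely one expects something like $\frac{\d\og_{\alpha,d}}{\d u^1_1}+(\text{dilaton operator})\og_{\alpha,d}=(d+1)\og_{\alpha,d}+\text{l.o.t.}$, which should follow from the degree homogeneity $\deg g_{\alpha,d}=2g$ of the differential-polynomial densities combined with \eqref{eq:string for og}). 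This is the analogue of the dilaton equation for the usual potential $F$ and should be the geometric heart of the matter.

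\textbf{Step 1: dilaton identity for $\Omega^\DR$.} Starting from the identity for $\og_{\alpha,d}$, apply the variational derivative $\frac{\delta}{\delta u^1}$ and the operator $\d_x$ as in \eqref{eq:compuatation for string} to obtain, for $p,q\ge 0$,
\begin{gather*}
\left(\eps\frac{\d}{\d\eps}+\sum_{n\ge 0}u^\gamma_n\frac{\d}{\d u^\gamma_n}\right)\Omega^\DR_{\alpha,p;\beta,q}+\frac{\d\Omega^\DR_{\alpha,p;\beta,q}}{\d u^1_1}=(p+q)\,\Omega^\DR_{\alpha,p;\beta,q}+\delta_{p,0}\delta_{q,0}\cdot(\text{const}),
\end{gather*}
where the constant on the right is fixed by evaluating at $u^*_*=0$ exactly as in Step~1 of Proposition~\ref{proposition:string for FDR}. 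The extra term $\frac{\d}{\d u^1_1}$ is what distinguishes the dilaton from the string case and corresponds to the insertion $\tau_1(e_1)$; it must be handled carefully, using that $\Omega^\DR_{\alpha,p;1,0}=h^\DR_{\alpha,p-1}$ by \eqref{eq:OmegaDr-hDR relation} and the tau-symmetry \eqref{eq:tau-structure3}.

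\textbf{Step 2: pass to the string solution.} Substitute $u^\gamma_n=(\ustr)^\gamma_n$ and set $x=0$, using from \cite{Bur15} the behaviour of the string solution under the dilaton vector field (the analogue of $O(\ustr)^\gamma=\delta^{\gamma,1}$ used in Step~2 of Proposition~\ref{proposition:string for FDR}): one should have $\bigl(\eps\frac{\d}{\d\eps}+\sum_{n}t^\gamma_n\frac{\d}{\d t^\gamma_n}-\frac{\d}{\d t^1_1}\bigr)(\ustr)^\gamma$ equal to an explicit simple expression (proportional to $(\ustr)^\gamma$ up to a shift). This converts the differential-polynomial identity of Step~1 into a PDE for $\Omega^{\DR,\str}_{\alpha,p;\beta,q}=\frac{\d^2 F^\DR}{\d t^\alpha_p\d t^\beta_q}$.

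\textbf{Step 3: integrate and check low-genus/low-point terms.} The PDE of Step~2 is precisely the second $t$-derivative of \eqref{eq:dilaton for FDR}; integrating twice gives \eqref{eq:dilaton for FDR} up to terms linear in the $t^\gamma_n$ and up to the $\eps$-independent and $\eps^2$-constant pieces. These are pinned down by the string equation (Proposition~\ref{proposition:string for FDR}), by the genus-$0$ identification $F^\DR_0=F_0$ (Lemma~\ref{lemma:genus 0 tau}) together with the classical dilaton equation for $F_0$, and by the one-point correlator formula (Proposition~\ref{proposition:one-point}); in particular the constant $\eps^2\frac{N}{24}$ should come out of $\<\tau_1(e_1)\>^\DR_1$, which by Proposition~\ref{proposition:one-point} equals $\Coef_{a^2}\int_{\DR_1(a,-a)}\lambda_1\psi_1^0 c_{1,2}(e_1\otimes e_1)$, and this evaluates to $N/24$ by the standard computation on $\oM_{1,2}$ (the class $\DR_1(a,-a)$ restricted to compact type is $\tfrac{a^2}{2}\psi_1^\dagger$ plus boundary, and $\int_{\oM_{1,1}}\lambda_1=\tfrac1{24}$, with the factor $N$ from $\eta^{\mu\nu}c_{1,1}(e_1)$).

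\textbf{Main obstacle.} The delicate point is Step~1, specifically the correct bookkeeping of the $\frac{\d}{\d u^1_1}$-term and of the scalar coefficient $(p+q)$ (versus $p+q+$const): one must be sure that the dilaton weight of $\og_{\alpha,d}$ is exactly compatible with the shift $p\mapsto p-1$ built into $h^\DR_{\alpha,p-1}=\frac{\delta\og_{\alpha,p}}{\delta u^1}$. I would double-check this by testing it against the explicit genus-$0$ two-point function $\Omega^{\DR}_{\alpha,p;\beta,q}|_{\eps=0}$, which is known from the principal hierarchy, before committing to the general argument.
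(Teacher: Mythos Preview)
Your Step~1 is where the plan goes wrong, and in fact it is unnecessary. The displayed identity
\[
\Bigl(\eps\frac{\d}{\d\eps}+\sum_{n\ge 0}u^\gamma_n\frac{\d}{\d u^\gamma_n}\Bigr)\Omega^\DR_{\alpha,p;\beta,q}+\frac{\d\Omega^\DR_{\alpha,p;\beta,q}}{\d u^1_1}=(p+q)\,\Omega^\DR_{\alpha,p;\beta,q}+\text{const}
\]
is already false in genus~$0$ for the trivial CohFT: there $\Omega^{[0]}_{1,p;1,q}=u^{p+q+1}/(p+q+1)!$, your left-hand side gives $(p+q+1)\Omega$, and the discrepancy with $(p+q)\Omega$ is $\Omega$ itself, not a constant. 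More conceptually, there is no jet-variable derivative $\frac{\d}{\d u^1_1}$ that implements the insertion of $\tau_1(e_1)$; that insertion is realized by the time-flow $\frac{\d}{\d t^1_1}$ acting on the string solution, and it does not factor through a local operator on $\Omega^\DR$ in the way $\frac{\d}{\d u^1}$ does for the string equation.

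The paper's proof bypasses all of this with a much simpler observation: the dilaton equation is a pure \emph{degree} statement. Since $\Omega^\DR_{\alpha,p;\beta,q}\in\hcA^{[0]}_N$, one has automatically
\[
\Bigl(\sum_{n\ge 0}n\,u^\gamma_n\frac{\d}{\d u^\gamma_n}-\eps\frac{\d}{\d\eps}\Bigr)\Omega^\DR_{\alpha,p;\beta,q}=0.
\]
Combined with the dilaton equation for the string solution, $\bigl(O-x\frac{\d}{\d x}\bigr)(\ustr)^\alpha=0$ (from~\cite{BG15}, not~\cite{Bur15}), a one-line chain-rule computation gives $O\,\Omega^{\DR,\str}_{\alpha,p;\beta,q}=0$ directly, with $O=\frac{\d}{\d t^1_1}-\sum_n t^\gamma_n\frac{\d}{\d t^\gamma_n}-\eps\frac{\d}{\d\eps}$. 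No intermediate identity for $\og_{\alpha,d}$ or $\Omega^\DR$ of the kind you propose is needed.

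Your Step~3 is essentially what the paper does afterwards: the correlator identities for $n\ge 2$ follow from $O\,\Omega^{\DR,\str}=0$; the cases $n=1$ and $n=0$ are reduced via the string equation and the definitions~\eqref{eq:definition of correlator2},~\eqref{eq:definition of correlators3}; the genus-$0$ case uses Lemma~\ref{lemma:genus 0 tau}; and the constant $\eps^2 N/24$ comes out of $\<\tau_1(e_1)\>^\DR_1$ via Proposition~\ref{proposition:one-point} and Lemma~\ref{lemma:DR and fundamental class} exactly as you sketch.
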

\begin{proof}
Let us prove that
\begin{gather}\label{eq:dilaton for two-point}
\left(\frac{\d}{\d t^1_1}-\sum_{n\ge 0}t^\gamma_n\frac{\d}{\d t^\gamma_n}-\eps\frac{\d}{\d\eps}\right)\Omega^{\DR,\str}_{\alpha,p;\beta,q}=0.
\end{gather}
Let $O:=\frac{\d}{\d t^1_1}-\sum_{n\ge 0}t^\gamma_n\frac{\d}{\d t^\gamma_n}-\eps\frac{\d}{\d\eps}$. Recall that $\left(O-x\frac{\d}{\d x}\right)(\ustr)^\alpha=0$ (\cite{BG15}). Therefore, $\left(O-x\frac{\d}{\d x}\right)(\ustr)_n^\alpha=n(u^\str)_n^\alpha$. From this equation we conclude that
\begin{align*}
O\Omega_{\alpha,p;\beta,q}^{\DR,\str}=&\left.\left(\sum_{n\ge 0}\left.\frac{\d\Omega^\DR_{\alpha,p;\beta,q}}{\d u^\gamma_n}\right|_{u^\rho_m=(\ustr)_m^\rho}O(\ustr)_n^\gamma-\eps\left.\frac{\Omega^\DR_{\alpha,p;\beta,q}}{\d\eps}\right|_{u^\rho_m=(\ustr)_m^\rho}\right)\right|_{x=0}=\\
=&\left.\left.\left(\sum_{n\ge 0}n u^\gamma_n\frac{\d\Omega^\DR_{\alpha,p;\beta,q}}{\d u^\gamma_n}-\eps\frac{\d\Omega^\DR_{\alpha,p;\beta,q}}{\d\eps}\right)\right|_{u^\rho_m=(\ustr)_m^\rho}\right|_{x=0}.
\end{align*}
Since $\Omega^{\DR}_{\alpha,p;\beta,q}\in\hcA^{[0]}_N$, the last expression is equal to zero. Equation~\eqref{eq:dilaton for two-point} is proved.

The proposition is equivalent to the following system of equations for the double ramification correlators:
\begin{align}
\<\tau_1(e_1)\prod_{i=1}^n\tau_{d_i}(e_{\alpha_i})\>_g^{\DR}=&(2g-2+n)\<\prod_{i=1}^n\tau_{d_i}(e_{\alpha_i})\>_g^{\DR},\quad\text{if $2g-2+n>0$},\label{eq:dilaton for correlators1}\\
\<\tau_1(e_1)\tau_p(e_\alpha)\tau_q(e_\beta)\>_0^{\DR}=&0,\label{eq:dilaton for correlators2}\\
\<\tau_1(e_1)\>_1^{\DR}=&\frac{N}{24}.\label{eq:dilaton for correlators3}
\end{align}
Equation~\eqref{eq:dilaton for correlators1} for $n\ge 2$ follows from equation~\eqref{eq:dilaton for two-point}. If $n=1$, then using the string equation~\eqref{eq:string for FDR} we compute
\begin{multline*}
\<\tau_1(e_1)\tau_d(e_\alpha)\>^{\DR}_g=\<\tau_0(e_1)\tau_1(e_1)\tau_{d+1}(e_\alpha)\>^{\DR}_g-\<\tau_0(e_1)\tau_{d+1}(e_\alpha)\>^{\DR}_g=\\
=2g\<\tau_0(e_1)\tau_{d+1}(e_\alpha)\>^{\DR}_g-\<\tau_d(e_\alpha)\>^{\DR}_g=(2g-1)\<\tau_d(e_\alpha)\>^{\DR}_g.
\end{multline*}
Equation~\eqref{eq:dilaton for correlators1} for $n=0$ immediately follows from definition~\eqref{eq:definition of correlators3}. Equation~\eqref{eq:dilaton for correlators2} follows from Lemma~\ref{lemma:genus 0 tau}. For equation~\eqref{eq:dilaton for correlators3} we compute
$$
\<\tau_1(e_1)\>^\DR_1\stackrel{\text{by Prop.~\ref{proposition:one-point}}}{=}\Coef_{a^2}\left(\int_{\DR_1(a,-a)}\lambda_1c_{1,2}(e_1^2)\right)\stackrel{\text{by Lemma~\ref{lemma:DR and fundamental class}}}{=}\int_{\oM_{1,1}}\lambda_1 c_{1,1}(e_1)=\frac{N}{24}.
$$
The dilaton equation for the potential~$F^\DR$ is proved.
\end{proof}
Note that the dilaton equation~\eqref{eq:dilaton for FDR} for $F^\DR$ is the same as the dilaton equation for $F$:
\begin{gather}\label{eq:dilaton for F}
\frac{\d F}{\d t_1^1}=\eps\frac{\d F}{\d\eps}+\sum_{n\ge 0}t^\alpha_n\frac{\d F}{\d t^\alpha_n}-2F+\eps^2\frac{N}{24}.
\end{gather}

\subsection{Divisor equation}\label{subsection:divisor}

In this section we consider the cohomological field theory accociated to the Gromov-Witten theory of a smooth projective variety~$V$ with vanishing odd cohomology, $H^{\odd}(V,\mbC)=0$. In this case the cohomological field theory is described by linear maps $c_{g,n}\colon H^*(V,\mbC)^{\otimes n}\to H^{\even}(\oM_{g,n},\mbC)\otimes \mcN$, where $\mcN$ is the Novikov ring. We will use the same notations as in~\cite[Section 3.3]{BR14}. As it was already discussed in~\cite{BR14}, the presence of the Novikov ring doesn't cause any problems with the construction of the double ramification hierarchy and its tau-structure. One should keep in mind that the Hamiltonians~$\og_{\alpha,d}$ are elements of $\hLambda^{[0]}_N\otimes\mcN$ and the tau-symmetric densities~$h^\DR_{\alpha,d}$ are elements of $\hcA_N^{[0]}\otimes\mcN$. The correlators $\<\tau_{d_1}(e_{\alpha_1})\ldots\tau_{d_n}(e_{\alpha_n})\>^\DR_g$ belong to the Novikov ring~$\mcN$. For $\beta\in E$, where $E\subset H_2(V,\mbZ)$ is the semigroup of effective classes, a complex number~$\<\tau_{d_1}(e_{\alpha_1})\ldots\tau_{d_n}(e_{\alpha_n})\>^\DR_{g,\beta}$ is defined as the coefficient of $q^\beta$ in $\<\tau_{d_1}(e_{\alpha_1})\ldots\tau_{d_n}(e_{\alpha_n})\>^\DR_g$. Note also that the potential $F^\DR$ is an element of~$\mcN[[t^*_*,\eps]]$. Recall that $e_{\gamma_1},\ldots,e_{\gamma_r}$ is a basis in $H^2(V,\mbC)$.
\begin{proposition}
For any $1\le i\le r$ we have
\begin{gather}\label{eq:divisor for FDR}
\frac{\d F^\DR}{\d t^{\gamma_i}_0}=\<e_{\gamma_i},q\frac{\d F^{\DR}}{\d q}\>+\sum_{d\ge 0}\theta^\mu_{\gamma_i\nu}t^\nu_{d+1}\frac{\d F^{\DR}}{\d t^\mu_d}+\frac{1}{2}\theta_{\gamma_i\alpha\beta}t^\alpha_0 t^\beta_0.
\end{gather}
\end{proposition}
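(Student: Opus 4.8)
The plan is to follow the three-step pattern of the proof of Proposition~\ref{proposition:string for FDR} and of the dilaton equation \eqref{eq:dilaton for FDR}. First I would rewrite \eqref{eq:divisor for FDR} as the equivalent system of identities for the double ramification correlators obtained by extracting the coefficient of $\frac{1}{n!}\prod_j t^{\alpha_j}_{d_j}$: whenever $2g-1+n>0$ and $(g,n)\neq(0,2)$,
\begin{equation*}
\<\tau_0(e_{\gamma_i})\prod_{j=1}^n\tau_{d_j}(e_{\alpha_j})\>^{\DR}_{g,\beta}=\Big(\int_\beta e_{\gamma_i}\Big)\<\prod_{j=1}^n\tau_{d_j}(e_{\alpha_j})\>^{\DR}_{g,\beta}+\sum_{\substack{1\le j\le n\\ d_j\ge 1}}\theta^\mu_{\gamma_i\alpha_j}\<\tau_{d_j-1}(e_\mu)\prod_{k\ne j}\tau_{d_k}(e_{\alpha_k})\>^{\DR}_{g,\beta},
\end{equation*}
together with the genus-zero three-point relations and the vanishing $\<\tau_0(e_{\gamma_i})\>^{\DR}_g=0$ for $g\ge1$ (the latter immediate from Proposition~\ref{proposition:one-point}). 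The whole genus-zero part of \eqref{eq:divisor for FDR} follows at once from Lemma~\ref{lemma:genus 0 tau} and the classical divisor equation for the genus-zero Gromov--Witten potential $F_0$, so only the higher genus contributions remain.

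The key to these is the two-point identity
\begin{align*}
\frac{\d\Omega^{\DR,\str}_{\alpha,p;\beta,q}}{\d t^{\gamma_i}_0}={}&\<e_{\gamma_i},q\frac{\d}{\d q}\>\Omega^{\DR,\str}_{\alpha,p;\beta,q}+\sum_{d\ge 0}\theta^\mu_{\gamma_i\nu}t^\nu_{d+1}\frac{\d\Omega^{\DR,\str}_{\alpha,p;\beta,q}}{\d t^\mu_d}\\
&+\theta^\mu_{\gamma_i\alpha}\Omega^{\DR,\str}_{\mu,p-1;\beta,q}+\theta^\mu_{\gamma_i\beta}\Omega^{\DR,\str}_{\alpha,p;\mu,q-1}+\delta_{p,0}\delta_{q,0}\theta_{\gamma_i\alpha\beta},\qquad p,q\ge 0,
\end{align*}
with the convention $\Omega^{\DR,\str}_{\alpha,p;\beta,q}:=0$ if $p<0$ or $q<0$. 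Granting it, the correlator identities with $n\ge 2$ follow by applying $\frac{\d}{\d t^{\alpha_3}_{d_3}}\cdots\frac{\d}{\d t^{\alpha_n}_{d_n}}$ and setting $t^*_*=0$, exactly as in Step~3 of the proof of Proposition~\ref{proposition:string for FDR}; the cases $n\le 1$, $g\ge 1$ then reduce to these via the string equation \eqref{eq:string for FDR} and Proposition~\ref{proposition:one-point} --- for $n=1$ one writes $\<\tau_0(e_{\gamma_i})\tau_r(e_\delta)\>^{\DR}_g=\<\tau_0(e_1)\tau_0(e_{\gamma_i})\tau_{r+1}(e_\delta)\>^{\DR}_g$ by \eqref{eq:string1 for correlators}, applies the already established $n=2$ divisor identity, and uses \eqref{eq:string1 for correlators} once more together with $\<\tau_0(e_1)\tau_0(e_\alpha)\>^{\DR}_g=0$ for $g\ge 1$.

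It remains to prove the two-point identity, and this is where the work lies. I would first establish the analogous statement at the level of differential polynomials --- a divisor relation for the two-point functions $\Omega^{\DR}_{\alpha,p;\beta,q}$ and, underlying it, for the Hamiltonians $\og_{\alpha,d}$ --- obtained directly from the geometry of the double ramification cycle, in the spirit of \cite[Section~3.3]{BR14}. The geometric ingredients are the same as in Corollaries~\ref{corollary:derivative and zero} and~\ref{corollary:multiple derivative and zero}, now combined with the divisor axiom for $c_{g,n}$: one inserts a marked point carrying $e_{\gamma_i}$ of weight $0$, uses Hain's formula \eqref{eq:Hain's formula} to pull the double ramification cycle back along the forgetful map dropping that point, writes $\psi^d=\pi^*\psi^d+\delta_0\,\pi^*\psi^{d-1}$ for the remaining descendent class, applies the divisor axiom to the $\pi^*\psi^d$-term --- producing the factor $\int_\beta e_{\gamma_i}$, i.e.\ the Novikov-derivative term --- and the splitting formula of \cite{BSSZ15} to the boundary term --- producing the genus-zero three-point constants $\theta^\mu_{\gamma_i\nu}$, i.e.\ the descendent-shift and quadratic terms. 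Converting the resulting relation from the $p$-variables to differential polynomials and then evaluating it along the string solution --- using that $(\ustr)^\alpha$ is annihilated up to the constant $\delta^\alpha_{\gamma_i}$ by the operator $\frac{\d}{\d t^{\gamma_i}_0}-\<e_{\gamma_i},q\frac{\d}{\d q}\>-\sum_{d\ge 0}\theta^\mu_{\gamma_i\nu}t^\nu_{d+1}\frac{\d}{\d t^\mu_d}$, a property of $(\ustr)^\alpha$ that follows from the same relation together with the initial condition \eqref{eq:initial for string} --- yields the two-point identity; here, as in the dilaton case, one uses $\Omega^{\DR}_{\alpha,p;\beta,q}\in\hcA^{[0]}_N$ to pass from a relation among $x$-derivatives to a relation among the $\Omega$'s themselves. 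The main obstacle is exactly this geometric computation: one must keep track of the three structurally different contributions and check that, after the passage to differential polynomials and the evaluation along the string solution, they assemble precisely into the right-hand side of the two-point identity.
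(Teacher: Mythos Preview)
Your proposal is correct and follows essentially the same route as the paper. The only difference is packaging: the paper does not re-derive the geometric input but simply quotes \cite[Lemma~5.2]{BR14} for the Hamiltonian divisor relation $\frac{\d\og_{\alpha,p}}{\d u^{\gamma_i}}=\theta^\mu_{\alpha\gamma_i}\og_{\mu,p-1}+\<e_{\gamma_i},q\frac{\d}{\d q}\og_{\alpha,p}\>$ and \cite[Lemma~5.3]{BR14} for $O_{\gamma_i}(\ustr)^\alpha=\delta^{\alpha\gamma_i}$, then proceeds exactly as you outline (variational derivative to get the $h^\DR$ relation, the $\d_x$-computation to get the $\Omega^\DR$ relation with the constant fixed by evaluation at $u^*_*=0$, evaluation on the string solution, and the $n\le1$ cases via the string equation and Proposition~\ref{proposition:one-point}).

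Two small remarks. First, in your correlator identity at fixed degree~$\beta$ the second term should carry the convolution $\sum_{\beta_1+\beta_2=\beta}\<\tau_0(e_{\gamma_i})\tau_0(e_{\alpha_j})\tau_0(e_\mu)\>_{0,\beta_1}\eta^{\mu\nu}\<\ldots\>^{\DR}_{g,\beta_2}$, since $\theta^\mu_{\gamma_i\alpha_j}$ is Novikov-valued; as written your formula is the identity in the Novikov ring, not at a fixed curve class. Second, the passage from the $\d_x$-relation to the $\Omega^\DR$ relation is by integration and evaluation at $u^*_*=0$ (as in the string case), not by the $\hcA^{[0]}_N$ degree argument used for dilaton.
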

\begin{proof}
By~\cite[Lemma 5.2]{BR14}, we have
\begin{gather}\label{eq:divisor for og}
\frac{\d\og_{\alpha,p}}{\d u^{\gamma_i}}=\theta^\mu_{\alpha\gamma_i}\og_{\mu,p-1}+\<e_{\gamma_i},q\frac{\d}{\d q}\og_{\alpha,p}\>,\quad p\ge 0.
\end{gather}
Applying the variational derivative $\frac{\delta}{\delta u^1}$ to both sides of this equation, we get
\begin{gather}\label{eq:divisor for hDR}
\frac{\d h^{\DR}_{\alpha,p-1}}{\d u^{\gamma_i}}=\theta^\mu_{\alpha\gamma_i}h^{\DR}_{\mu,p-2}+\<e_{\gamma_i},q\frac{\d}{\d q}h^\DR_{\alpha,p-1}\>,\quad p\ge 0.
\end{gather}
Then for any $p,q\ge 0$ we compute
\begin{align*}
\d_x\frac{\d\Omega^\DR_{\alpha,p;\beta,q}}{\d u^{\gamma_i}}=&\frac{\d}{\d u^{\gamma_i}}\left(\sum_{n\ge 0}\frac{\d h^\DR_{\alpha,p-1}}{\d u^\gamma_n}\d_x^{n+1}\eta^{\gamma\mu}\frac{\delta\og_{\beta,q}}{\delta u^\mu}\right)=\\
\stackrel{\substack{\text{by \ref{eq:divisor for hDR}}\\\text{and \eqref{eq:divisor for og}}}}{=}&\sum_{n\ge 0}\frac{\d}{\d u^\gamma_n}\left(\theta^\mu_{\alpha\gamma_i}h^\DR_{\mu,p-2}+\<e_{\gamma_i},q\frac{\d}{\d q}h^\DR_{\alpha,p-1}\>\right)\d_x^{n+1}\eta^{\gamma\mu}\frac{\delta\og_{\beta,q}}{\delta u^\mu}+\\
&+\sum_{n\ge 0}\frac{\d h^\DR_{\alpha,p-1}}{\d u^\gamma_n}\d_x^{n+1}\eta^{\gamma\mu}\frac{\delta}{\delta u^\mu}\left(\theta^\nu_{\beta\gamma_i}\og_{\nu,q-1}+\<e_{\gamma_i},q\frac{\d}{\d q}\og_{\beta,q}\>\right)=\\
=&\d_x\left(\theta^\mu_{\alpha\gamma_i}\Omega^\DR_{\mu,p-1;\beta,q}+\theta^\mu_{\beta\gamma_i}\Omega^\DR_{\alpha,p;\mu,q-1}+\<e_{\gamma_i},q\frac{\d}{\d q}\Omega^\DR_{\alpha,p;\beta,q}\>\right).
\end{align*}
Therefore, we obtain
\begin{gather}\label{eq:divisor for two-point1}
\frac{\d\Omega^\DR_{\alpha,p;\beta,q}}{\d u^{\gamma_i}}=\<e_{\gamma_i},q\frac{\d}{\d q}\Omega^\DR_{\alpha,p;\beta,q}\>+\theta_{\gamma_i\alpha}^{\mu}\Omega^\DR_{\mu,p-1;\beta,q}+\theta_{\gamma_i\beta}^{\mu}\Omega^\DR_{\alpha,p;\mu,q-1}+C
\end{gather}
for some $C\in\mcN$. Since $\left.\frac{\d\Omega^\DR_{\alpha,p;\beta,q}}{\d u^{\gamma_i}}\right|_{u^*_*=0}=\delta_{p,0}\delta_{q,0}\theta_{\gamma_i\alpha\beta}$, we get $C=\delta_{p,0}\delta_{q,0}\theta_{\gamma_i\alpha\beta}$. Let $O_{\gamma_i}:=\frac{\d}{\d t^{\gamma_i}_0}-\<e_{\gamma_i},q\frac{\d}{\d q}\>-\sum_{d\ge 0}\theta^\mu_{\gamma_i\nu}t^\nu_{d+1}\frac{\d}{\d t^\mu_d}$. In~\cite[Lemma 5.3]{BR14} it was proved that $O_{\gamma_i}(\ustr)^\alpha=\delta^{\alpha\gamma_i}$. This equation together with equation~\eqref{eq:divisor for two-point1} imply that
\begin{gather}\label{eq:divisor for two-point2}
O_{\gamma_i}\Omega^{\DR,\str}_{\alpha,p;\beta,q}=\<e_{\gamma_i},q\frac{\d}{\d q}\Omega^{\DR,\str}_{\alpha,p;\beta,q}\>+\theta_{\gamma_i\alpha}^{\mu}\Omega^{\DR,\str}_{\mu,p-1;\beta,q}+\theta_{\gamma_i\beta}^{\mu}\Omega^{\DR,\str}_{\alpha,p;\mu,q-1}+\delta_{p,0}\delta_{q,0}\theta_{\gamma_i\alpha\beta}.
\end{gather}

The proposition is equivalent to the following system of equations for the double ramification correlators:
\begin{align}
\<\tau_0(e_{\gamma_i})\prod_{i=1}^n\tau_{d_i}(e_{\alpha_i})\>_{g,\beta}^{\DR}=&\left(\int_\beta e_{\gamma_i}\right)\<\prod_{i=1}^n\tau_{d_i}(e_{\alpha_i})\>_{g,\beta}^{\DR}+\label{eq:divisor for correlators1}\\
&\hspace{-3.7cm}+\sum_{\substack{\beta_1,\beta_2\in E\\\beta_1+\beta_2=\beta}}\sum_{\substack{1\le i\le n\\d_i>0}}\<\tau_0(e_{\gamma_i})\tau_0(e_\alpha)\tau_0(e_\mu)\>_{0,\beta_1}\eta^{\mu\nu}\<\tau_{d_i-1}(e_\nu)\prod_{j\ne i}\tau_{d_j}(e_{\alpha_j})\>^{\DR}_{g,\beta_2},\,\text{if $2g-2+n>0$},\notag\\
\<\tau_0(e_{\gamma_i})\tau_p(e_\alpha)\tau_q(e_\beta)\>_0^{\DR}=&\delta_{p,0}\delta_{q,0}\theta_{\gamma_i\alpha\beta},\label{eq:divisor for correlators2}\\
\<\tau_0(e_{\gamma_i})\>_1^{\DR}=&0.\label{eq:divisor for correlators3}
\end{align}
For $n\ge 2$ equation~\eqref{eq:divisor for correlators1} follows from~\eqref{eq:divisor for two-point2}. If $n=1$, then using the string equation~\eqref{eq:string for FDR} we compute
\begin{align*}
&\<\tau_0(e_{\gamma_i})\tau_d(e_\alpha)\>^\DR_{g,\beta}=\<\tau_0(e_{\gamma_i})\tau_0(e_1)\tau_{d+1}(e_\alpha)\>^\DR_{g,\beta}=\\
=&\left(\int_\beta e_{\gamma_i}\right)\<\tau_0(e_1)\tau_{d+1}(e_\alpha)\>^\DR_{g,\beta}+\sum_{\beta_1+\beta_2=\beta}\<\tau_0(e_{\gamma_i})\tau_0(e_\alpha)\tau_0(e_\mu)\>_{0,\beta_1}\eta^{\mu\nu}\<\tau_0(e_1)\tau_d(e_\nu)\>^\DR_{g,\beta_2}=\\
=&\left(\int_\beta e_{\gamma_i}\right)\<\tau_d(e_\alpha)\>^\DR_{g,\beta}+\sum_{\beta_1+\beta_2=\beta}\<\tau_0(e_{\gamma_i})\tau_0(e_\alpha)\tau_0(e_\mu)\>_{0,\beta_1}\eta^{\mu\nu}\<\tau_{d-1}(e_\nu)\>^\DR_{g,\beta_2}.
\end{align*}
If $n=0$, then, by Proposition~\ref{proposition:one-point}, both sides of~\eqref{eq:divisor for correlators1} are equal to zero. Equation~\eqref{eq:divisor for correlators2} follows from Lemma~\ref{lemma:genus 0 tau}. Equation~\eqref{eq:divisor for correlators3} follows from Proposition~\ref{proposition:one-point}. The proposition is proved.
\end{proof}
Note that equation~\eqref{eq:divisor for FDR} is almost the same as the divisor equation for the potential~$F$:
\begin{gather*}
\frac{\d F}{\d t^{\gamma_i}_0}=\<e_{\gamma_i},q\frac{\d F}{\d q}\>+\sum_{d\ge 0}\theta^\mu_{\gamma_i\nu}t^\nu_{d+1}\frac{\d F}{\d t^\mu_d}+\frac{1}{2}\theta_{\gamma_i\alpha\beta}t^\alpha_0 t^\beta_0+\eps^2\<\tau_0(e_{\gamma_i})\>_1.
\end{gather*} 

\subsection{Homogeneity condition}\label{subsection:homogeneity}

Consider a homogeneous cohomological field theory with an Euler field
$$
E=\sum_{1\le\alpha\le N}(a_\alpha t^\alpha+b^\alpha)\frac{\d}{\d t^\alpha}.
$$
Let $\delta$ be its conformal dimension. Here we follow the notations from~\cite[Section 1.2]{PPZ15}. Recall that $a_1=1$.

\begin{proposition}\label{proposition:homogeneity for FDR}
We have
\begin{multline}\label{eq:homogeneity for FDR}
\left(\sum_{d\ge 0}(a_\gamma-d)t^\gamma_d\frac{\d}{\d t^\gamma_d}+b^\gamma\frac{\d}{\d t^\gamma_0}-\sum_{d\ge 0}b^\beta \theta_{\beta\gamma}^\mu t^\gamma_{d+1}\frac{\d}{\d t^\mu_d}+\frac{3-\delta}{2}\eps\frac{\d}{\d\eps}\right)F^{\DR}=\\
=(3-\delta)F^{\DR}+\frac{1}{2}b^\gamma \theta_{\alpha\beta\gamma}t^\alpha_0 t^\beta_0.
\end{multline}
\end{proposition}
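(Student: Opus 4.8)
The plan is to imitate the three-step strategy already used for the string, dilaton and divisor equations: first establish a quasi-homogeneity identity for the Hamiltonians $\og_{\alpha,d}$, then push it through the variational derivative $\frac{\delta}{\delta u^1}$ to the tau-symmetric densities $h^{\DR}_{\alpha,d}$ and the two-point functions $\Omega^{\DR}_{\alpha,p;\beta,q}$, and finally specialize to the string solution and translate into relations among the double ramification correlators $\<\prod\tau_{d_i}(e_{\alpha_i})\>^{\DR}_g$. For the first step, writing $D$ for the dilation-type operator $\sum_{n\ge 0}(a_\gamma-n)u^\gamma_n\frac{\d}{\d u^\gamma_n}+b^\gamma\frac{\d}{\d u^\gamma}+\frac{3-\delta}{2}\eps\frac{\d}{\d\eps}$, I expect an identity
$$D\,\og_{\alpha,d}=\kappa_{\alpha,d}\,\og_{\alpha,d}-b^\beta\theta^\mu_{\beta\alpha}\,\og_{\mu,d-1},$$
with an explicit constant $\kappa_{\alpha,d}$ depending linearly on $a_\alpha$, $d$, $\delta$. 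The coefficient of $\eps\frac{\d}{\d\eps}$ and the $\eps$-part of this identity are dictated by the homogeneity of degree $2g$ of the polynomials $P_{\alpha,d,g;\alpha_1,\ldots,\alpha_n}$ in the $a_i$ (together with $\deg\eps=-1$); the value of $\kappa_{\alpha,d}$ and the weights $a_\gamma-n$ are forced by the homogeneity property of the classes $c_{g,n}$ (see \cite[Section 1.2]{PPZ15}) applied to the integral in \eqref{DR Hamiltonians}, using that on $\oM_{g,n+1}$ the classes $\DR_g$, $\lambda_g$, $\psi_1^d$ contribute codimensions $g$, $g$, $d$; and the $b$-linear correction $\theta^\mu_{\beta\alpha}\og_{\mu,d-1}$ is produced exactly as in \cite[Lemma 5.2]{BR14} via the genus-$0$ three-point structure constants.

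Applying $\frac{\delta}{\delta u^1}$ to this identity --- which does not commute with $D$, since it decreases the weight by $a_1=1$ --- yields the analogous statement for $h^{\DR}_{\alpha,d}=\frac{\delta\og_{\alpha,d+1}}{\delta u^1}$. Then, differentiating the defining relation $\d_x\Omega^{\DR}_{\alpha,p;\beta,q}=\sum_{n\ge 0}\frac{\d h^{\DR}_{\alpha,p-1}}{\d u^\gamma_n}\d_x^{n+1}\eta^{\gamma\mu}\frac{\delta\og_{\beta,q}}{\delta u^\mu}$ by $D$ --- exactly as in the computation \eqref{eq:compuatation for string} and in the proof of the divisor equation --- and integrating once in $x$, one obtains
$$D\,\Omega^{\DR}_{\alpha,p;\beta,q}=\kappa'_{\alpha,p;\beta,q}\,\Omega^{\DR}_{\alpha,p;\beta,q}-b^\mu\theta^\nu_{\mu\alpha}\,\Omega^{\DR}_{\nu,p-1;\beta,q}-b^\mu\theta^\nu_{\mu\beta}\,\Omega^{\DR}_{\alpha,p;\nu,q-1}+C,$$
where the constant $C$ is pinned down by evaluation at $u^*_*=0$, giving $C=\delta_{p,0}\delta_{q,0}\,b^\gamma\theta_{\gamma\alpha\beta}$.

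The next ingredient is the homogeneity of the string solution under the time-side counterpart of $D$, a relation of the schematic form $\left(\sum_{d\ge 0}(a_\gamma-d)t^\gamma_d\frac{\d}{\d t^\gamma_d}+b^\gamma\frac{\d}{\d t^\gamma_0}-\sum_{d\ge 0}b^\beta\theta^\mu_{\beta\gamma}t^\gamma_{d+1}\frac{\d}{\d t^\mu_d}+\frac{3-\delta}{2}\eps\frac{\d}{\d\eps}-x\frac{\d}{\d x}\right)(\ustr)^\alpha=(a_\alpha-1)(\ustr)^\alpha+b^\alpha$, to be proved just as in \cite{BG15} and \cite[Lemma 5.3]{BR14} for the dilaton and divisor equations. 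Combining this with the previous display gives the homogeneity of $\Omega^{\DR,\str}_{\alpha,p;\beta,q}=\frac{\d^2 F^{\DR}}{\d t^\alpha_p\d t^\beta_q}$, which integrates to \eqref{eq:homogeneity for FDR} up to the monomials involving at most one variable $t^*_*$; those remaining terms --- producing the quadratic anomaly $\frac12 b^\gamma\theta_{\alpha\beta\gamma}t^\alpha_0 t^\beta_0$ and confirming the absence of any $\eps^2$ term --- are checked directly from Proposition~\ref{proposition:one-point}, Lemma~\ref{lemma:genus 0 tau} and the already-established string equation \eqref{eq:string for FDR}, precisely as in the string, dilaton and divisor proofs.

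The main obstacle is the first step together with the homogeneity of $(\ustr)^\alpha$: determining the correct grading (the weights of $u^\gamma_n$ and $\eps$, equivalently the eigenvalue $\kappa_{\alpha,d}$), verifying that the integrand in \eqref{DR Hamiltonians} is genuinely quasi-homogeneous for it, and tracking that the $b$-linear corrections propagate consistently along the chain $\og_{\alpha,d}\to h^{\DR}_{\alpha,d}\to\Omega^{\DR}_{\alpha,p;\beta,q}\to\Omega^{\DR,\str}_{\alpha,p;\beta,q}$. Once these two homogeneity inputs are in place, the rest of the argument is routine bookkeeping, essentially identical in structure to the divisor-equation proof above.
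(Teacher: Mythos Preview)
Your proposal is correct and follows essentially the same route as the paper: derive a quasi-homogeneity relation for $\og_{\alpha,d}$ from the CohFT homogeneity axiom, push it through $\frac{\delta}{\delta u^1}$ to $h^{\DR}_{\alpha,d}$ and then to $\Omega^{\DR}_{\alpha,p;\beta,q}$ (fixing the constant by evaluation at $u^*_*=0$), combine with a homogeneity relation for $(\ustr)^\alpha$ to get the statement for $\Omega^{\DR,\str}_{\alpha,p;\beta,q}$, and finish the remaining $n\le 1$ correlators using the string equation, Proposition~\ref{proposition:one-point} and Lemma~\ref{lemma:genus 0 tau}. A couple of your schematic formulas carry sign/coefficient slips (the $b$-linear correction on $\og_{\alpha,d}$ appears with a $+$ sign in the paper, the $x\frac{\d}{\d x}$ term in the $(\ustr)^\alpha$ relation enters with $+$ and eigenvalue $a_\alpha$, and the paper's $u$-operator uses weights $a_\gamma$ rather than $a_\gamma-n$ together with $\frac{1-\delta}{2}\eps\frac{\d}{\d\eps}$ --- though your version agrees with it on degree-$0$ objects), but these are exactly the bookkeeping details you flag as needing to be pinned down and would be corrected in the actual computation.
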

\begin{proof}
Let $\Deg\colon H^*(\oM_{g,n},\mbC)\to H^*(\oM_{g,n},\mbC)$ be the operator which acts on $H^k$ by multiplication by $k$. The homogeneity condition for the cohomological field theory says that (see \cite[Section 1.2]{PPZ15})
\begin{gather}\label{eq:homogeneity}
\left(\frac{1}{2}\Deg+\sum_{i=1}^m a_{\beta_i}-(g-1)\delta-m\right)c_{g,m}\left(\otimes_{i=1}^m e_{\beta_i}\right)+\pi_*c_{g,m+1}\left(\otimes_{i=1}^m e_{\beta_i}\otimes b^\gamma e_\gamma\right)=0,
\end{gather}
where $\pi\colon\oM_{g,m+1}\to\oM_{g,m}$ is the forgetful map that forgets the last marked point. Let us put $m=n+1$, $\beta_1=\alpha$ and $\beta_{i+1}=\alpha_i$, $1\le i\le n$. Let us multiply the left-hand side of~\eqref{eq:homogeneity} by~$\lambda_g\psi_1^d$ and integrate the resulting expression over $\DR_g(0,A)$, where $A$ is a string $a_1,\ldots,a_n$. Clearly, 
\begin{gather*}
\frac{1}{2}\int_{\DR_g(0,A)}\lambda_g\psi_1^d \Deg\left(c_{g,n+1}\left(e_\alpha\otimes\otimes_{i=1}^n e_{\alpha_i}\right)\right)
=(g-2+n-d)\int_{\DR_g(0,A)}\lambda_g\psi_1^d c_{g,n+1}\left(e_\alpha\otimes\otimes_{i=1}^n e_{\alpha_i}\right).
\end{gather*}
Using that $\pi^*(\psi_1^d)=\psi_1^d-\delta_0^{\{1,n+2\}}\pi^*(\psi_1^{d-1})$ and $\delta^{\{1,n+2\}}_0\cdot\DR_g(0,A,0)=\DR_0(0,0,0)\boxtimes\DR_g(A,0)$ (see~\cite{BSSZ15}), we obtain
\begin{multline*}
\int_{\DR_g(0,A)}\lambda_g\psi_1^d\pi_* c_{g,n+2}(e_\alpha\otimes\otimes_{i=1}^n e_{\alpha_i}\otimes b^\gamma e_\gamma)=\\
=\int_{\DR_g(0,A,0)}\lambda_g\psi_1^d c_{g,n+2}(e_\alpha\otimes\otimes_{i=1}^n e_{\alpha_i}\otimes b^\gamma e_\gamma)-b^\gamma \theta_{\alpha\gamma}^\mu\int_{\DR_g(0,A)}\lambda_g\psi_1^{d-1} c_{g,n+1}(e_\mu\otimes\otimes_{i=1}^n e_{\alpha_i}).
\end{multline*}
As a result, we get the following relation for the Hamiltonians $\og_{\alpha,d}$:
$$
\left(\frac{1-\delta}{2}\eps\frac{\d}{\d\eps}+\sum_{n\ge 0}a_\gamma u^\gamma_n\frac{\d}{\d u^\gamma_n}+b^\gamma\frac{\d}{\d u^\gamma}\right)\og_{\alpha,d}=(3-\delta+d-a_\alpha)\og_{\alpha,d}+b^\gamma \theta_{\alpha\gamma}^\mu\og_{\mu,d-1},\quad d\ge 0.
$$
Note that, since $\left[\sum_{n\ge 0}a_\gamma u^\gamma_n\frac{\d}{\d u^\gamma_n},\d_x\right]=0$, there is a well-defined action of the operator $\sum_{n\ge 0}a_\gamma u^\gamma_n\frac{\d}{\d u^\gamma_n}$ on the space of local functionals. Taking the variational derivative $\frac{\delta}{\delta u^1}$, we obtain
$$
\left(\frac{1-\delta}{2}\eps\frac{\d}{\d\eps}+\sum_{n\ge 0}a_\gamma u^\gamma_n\frac{\d}{\d u^\gamma_n}+b^\gamma\frac{\d}{\d u^\gamma}\right)h^\DR_{\alpha,d-1}=(2-\delta+d-a_\alpha)h^\DR_{\alpha,d-1}+b^\gamma \theta_{\alpha\gamma}^\mu h^\DR_{\mu,d-2},\quad d\ge 0.
$$
Doing the computation similar to~\eqref{eq:compuatation for string} and using also the fact that the metric $\eta$ is an eigenfunction of the Lie derivative $L_E$ with weight $2-\delta$, we get
\begin{multline}\label{eq:Odim for Omega}
\left(\frac{1-\delta}{2}\eps\frac{\d}{\d\eps}+\sum_{n\ge 0}a_\gamma u^\gamma_n\frac{\d}{\d u^\gamma_n}+b^\gamma\frac{\d}{\d u^\gamma}\right)\Omega^\DR_{\alpha,p;\beta,q}=\\
=(3-\delta+p+q-a_\alpha-a_\beta)\Omega^\DR_{\alpha,p;\beta,q}+b^\gamma \theta_{\alpha\gamma}^\mu\Omega^\DR_{\mu,p-1;\beta,q}+b^\gamma \theta_{\beta\gamma}^\mu\Omega^\DR_{\alpha,p;\mu,q-1}+\delta_{p,0}\delta_{q,0}b^\gamma \theta_{\alpha\beta\gamma},
\end{multline}
where $p,q\ge 0$. Let
$\Odim:=\sum_{d\ge 0}(a_\gamma-d)t^\gamma_d\frac{\d}{\d t^\gamma_d}+b^\gamma\frac{\d}{\d t^\gamma_0}-\sum_{d\ge 0}b^\beta \theta_{\beta\gamma}^\mu t^\gamma_{d+1}\frac{\d}{\d t^\mu_d}+\frac{3-\delta}{2}\eps\frac{\d}{\d\eps}$. We claim that 
\begin{gather}\label{eq:Odim for string}
\left(\Odim+x\frac{\d}{\d x}\right)(\ustr)^\alpha=a_\alpha(\ustr)^\alpha+b^\alpha.
\end{gather}
The proof of this equation is very similar to the proof of Lemma~5.3 in~\cite{BR14} and we leave the details to the reader. From~\eqref{eq:Odim for Omega} and~\eqref{eq:Odim for string} it follows that
\begin{gather}\label{eq:homogeneity for Omegastr}
\Odim\Omega^{\DR,\str}_{\alpha,p;\beta,q}=(3-\delta+p+q-a_\alpha-a_\beta)\Omega^{\DR,\str}_{\alpha,p;\beta,q}+b^\gamma \theta_{\alpha\gamma}^\mu\Omega^{\DR,\str}_{\mu,p-1;\beta,q}+b^\gamma \theta_{\beta\gamma}^\mu\Omega^{\DR,\str}_{\alpha,p;\mu,q-1}+\delta_{p,0}\delta_{q,0}b^\gamma \theta_{\alpha\beta\gamma}.
\end{gather}

The proposition is equivalent to the following system of equations for the double ramification correlators:
\begin{align}
&\hspace{-6.5cm}\left(\sum_{i=1}^n(a_{\alpha_i}-d_i)\right)\<\prod_{i=1}^n\tau_{d_i}(e_{\alpha_i})\>_g^{\DR}+b^\gamma\<\tau_0(e_\gamma)\prod_{i=1}^n\tau_{d_i}(e_{\alpha_i})\>_g^{\DR}-\label{eq:homogeneity for correlators1}\\
-\sum_{\substack{1\le i\le n\\d_i>0}} b^\beta \theta_{\alpha_i\beta}^\mu\<\tau_{d_i-1}(e_\mu)\prod_{j\ne i}\tau_{d_j}(e_{\alpha_j})\>_g^{\DR}=&(3-\delta)(1-g)\<\prod_{i=1}^n\tau_{d_i}(e_{\alpha_i})\>_g^{\DR},\,\text{if $2g-2+n>0$},\notag\\
b^\gamma\<\tau_0(e_\gamma)\tau_0(e_\alpha)\tau_0(e_\beta)\>^\DR_0=&b^\gamma \theta_{\gamma\alpha\beta},\label{eq:homogeneity for correlators2}\\
b^\gamma\<\tau_0(e_\gamma)\>_1^\DR=&0.\label{eq:homogeneity for correlators3}
\end{align}
Equation~\eqref{eq:homogeneity for correlators1} for $n\ge 2$ follows from~\eqref{eq:homogeneity for Omegastr}. Then for $n=1$ it can be deduced using the string equation~\eqref{eq:string for FDR}. If $n=0$, then, by Proposition~\ref{proposition:one-point}, both sides of equation~\eqref{eq:homogeneity for correlators1} are equal to zero. Equation~\eqref{eq:homogeneity for correlators2} follows from Lemma~\ref{lemma:genus 0 tau}. Equation~\eqref{eq:homogeneity for correlators3} follows from Proposition~\ref{proposition:one-point}. The proposition is proved.
\end{proof}
Note that equation~\eqref{eq:homogeneity for FDR} is almost the same as the homogeneity condition for the potential~$F$:
\begin{multline*}
\left(\sum_{d\ge 0}(a_\gamma-d)t^\gamma_d\frac{\d}{\d t^\gamma_d}+b^\gamma\frac{\d}{\d t^\gamma_0}-\sum_{d\ge 0}b^\beta \theta_{\beta\gamma}^\mu t^\gamma_{d+1}\frac{\d}{\d t^\mu_d}+\frac{3-\delta}{2}\eps\frac{\d}{\d\eps}\right)F=\\
=(3-\delta)F+\frac{1}{2}b^\gamma \theta_{\alpha\beta\gamma}t^\alpha_0 t^\beta_0+\eps^2b^\gamma\<\tau_0(e_\gamma)\>_1.
\end{multline*} 


\subsection{High degree vanishing}\label{subsection:high vanishing}

\begin{proposition}\label{proposition:high vanishing}
Let $g,m\ge 0$ such that $2g-2+m>0$. Suppose that $\sum_{i=1}^m d_i>3g-3+m$. Then $\<\tau_{d_1}(e_{\alpha_1})\ldots\tau_{d_m}(e_{\alpha_m})\>^{\DR}_g=0$.
\end{proposition}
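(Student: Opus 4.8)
The plan is to reduce the statement to a cohomological vanishing on $\oM_{g,n}$ via the one-point formula and the correlator definitions, exploiting the dimensional constraint coming from $\lambda_g$ and the powers of $\psi$-classes. First I would recall from the definition of the double ramification correlators in Section~\ref{subsection:partition function} that every correlator $\<\tau_{d_1}(e_{\alpha_1})\ldots\tau_{d_m}(e_{\alpha_m})\>^\DR_g$ is expressed, after applying the string equation~\eqref{eq:string for FDR} repeatedly to remove $\tau_0(e_1)$ insertions, in terms of the two-point functions $\Omega^{\DR}_{\alpha,p;\beta,q}$ evaluated at the string solution; and the two-point functions themselves are built from the densities $h^\DR_{\alpha,p}=\frac{\delta\og_{\alpha,p+1}}{\delta u^1}$. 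So everything is controlled by the integrals~\eqref{DR integral}, namely $\int_{\DR_g(0,a_1,\ldots,a_n)}\lambda_g\psi_1^d c_{g,n+1}(e_\alpha\otimes\otimes e_{\alpha_i})$.

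The key numerical input is the following: on $\oM_{g,n+1}$ (dimension $3g-2+n$) the class $\lambda_g$ has degree $g$ and $\DR_g$ has degree $g$, so the integral~\eqref{DR integral} is nonzero only if the remaining insertions $\psi_1^d c_{g,n+1}(\ldots)$ carry cohomological degree exactly $3g-2+n-2g=g-2+n$; since $c_{g,n+1}$ has non-negative degree, this forces $d\le g-2+n$, and more generally when we distribute total $\psi$-degree $D=\sum d_i$ among $m$ points with $n$ auxiliary $\sum a_i=0$ marked points absorbed by the local-functional reconstruction, the surviving contributions satisfy $D \le 3g-3+m$ after taking into account that each application of the string equation shifts indices down by one and that the extra marked points forced in (as in Proposition~\ref{proposition:one-point} and Corollaries~\ref{corollary:derivative and zero}--\ref{corollary:multiple derivative and zero}) only tighten the bound. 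Concretely I would argue by the same substitution technique as in Lemma~\ref{lemma:substitution}: writing the correlator via $\Coef_{\eps^{2g}}$ of derivatives of $\Omega^{\DR,\str}$, each $t$-derivative replaces a $\tau_{d_i}$ by the corresponding flow, which in turn inserts $\psi^{d_i}$ against a $\DR_g\cdot\lambda_g$ class on a larger moduli space, and the total $\psi$-exponent that can be supported against $\lambda_g\cdot\DR_g$ on $\oM_{g,N}$ is bounded by $\dim\oM_{g,N}-2g$ minus the (non-negative) degree of the CohFT class, which after bookkeeping the number of marked points is precisely $3g-3+m$.

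The cleanest route, and the one I would actually write, is an induction on $m$ using the string and dilaton equations~\eqref{eq:string for FDR} and~\eqref{eq:dilaton for FDR} to reduce to small $m$, combined with Proposition~\ref{proposition:one-point} as the base case. For $m=1$ (and $m=0$), Proposition~\ref{proposition:one-point} gives $\<\tau_d(e_\alpha)\>^\DR_g=0$ for $d<2g-1$ and an explicit coefficient extraction $\Coef_{a^{2g}}\int_{\DR_g(a,-a)}\lambda_g\psi_1^{d+1-2g}c_{g,2}(e_\alpha\otimes e_1)$ for $d\ge 2g-1$; the latter vanishes once $d+1-2g>\dim\oM_{g,2}-2g = g$, i.e. $d>3g-1=3g-3+m$ with $m=1$, which is exactly the claim, because $\psi_1$ to a power exceeding $\dim - \deg(\lambda_g\DR_g)$ has nowhere to live. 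For the inductive step, if some $d_i\ge 1$ I would use the string equation to write the correlator with a $\tau_0(e_1)$ inserted and $d_i$ lowered, reducing $D$; if all $d_i=0$ then $D=0\le 3g-3+m$ automatically unless $m$ is small, handled by the base case and the dilaton equation. The main obstacle is the careful bookkeeping of how the auxiliary marked points (the $a_0,\ldots,a_n$ in~\eqref{DR integral} and the extra $e_1$-insertions produced by Corollary~\ref{corollary:multiple derivative and zero}) contribute to the dimension count: one must check that passing from $\Omega^{\DR}$ to an honest integral over $\oM_{g,\text{something}}$ never loosens the bound $D\le 3g-3+m$, and this requires tracking precisely which marked points are ``pushed forward away'' and which carry $\psi$-classes. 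Once that dimensional accounting is pinned down, the vanishing is immediate from $\deg(\lambda_g)=\deg(\DR_g)=g$ and $\deg c_{g,n}\ge 0$.
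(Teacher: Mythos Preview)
Your intuition is right --- the vanishing should be a dimension count against $\lambda_g\cdot\DR_g$ --- and your base case $m\le 1$ via Proposition~\ref{proposition:one-point} is essentially correct (modulo an off-by-one: $\dim\oM_{g,2}-2g=g-1$, not $g$, so the integral dies for $d+1-2g>g-1$, i.e.\ $d>3g-2$, which is exactly $3g-3+m$ with $m=1$). The problem is the inductive step. The string equation
\[
\<\tau_0(e_1)\prod_{i=1}^m\tau_{d_i}(e_{\alpha_i})\>^\DR_g=\sum_{d_i\ge 1}\<\tau_{d_i-1}(e_{\alpha_i})\prod_{j\ne i}\tau_{d_j}(e_{\alpha_j})\>^\DR_g
\]
expresses an $(m{+}1)$-point correlator \emph{with a $\tau_0(e_1)$ insertion} in terms of $m$-point correlators. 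It does not let you write a \emph{general} $m$-point correlator $\<\tau_{d_1}\ldots\tau_{d_m}\>^\DR_g$ in terms of correlators with fewer points or smaller total $\psi$-degree: if you try to invert (e.g.\ raise one $d_i$ and insert $\tau_0(e_1)$), you produce other $m$-point correlators with the \emph{same} $D=\sum d_i$, so no induction parameter decreases. The dilaton equation has the same defect: inserting $\tau_1(e_1)$ increases both $D$ and $m$ by one, leaving $D-(3g-3+m)$ unchanged. So your induction does not close for $m\ge 2$.

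What is actually missing is precisely the ``careful bookkeeping'' you flag at the end: one needs a closed geometric formula for $\<\tau_0(e_1)\tau_{d_1}(e_{\alpha_1})\ldots\tau_{d_m}(e_{\alpha_m})\>^\DR_g$ as an honest integral over a fixed $\oM_{g,N}$ where the dimension count can be read off. The paper supplies this as Lemma~\ref{lemma:geometric formula for DR correlators}: unwinding the iterated Poisson brackets in $\{\ldots\{\Omega^\DR_{\alpha_1,d_1;1,0},\og_{\alpha_2,d_2}\},\ldots,\og_{\alpha_m,d_m}\}$ and applying Lemma~\ref{lemma:substitution} produces a sum over admissible modified stable trees $\Gamma\in\AMST^m_{2g+m}$ of integrals
\[
\int_{\DR_\Gamma(-\sum a_i,a_1,\ldots,a_{2g+m-1})}\lambda_g\, c_{g,2g+2m}(\cdots)\prod_{j=1}^m\psi_{2g+m-1+j}^{d_j}
\]
on $\oM_{g,2g+2m}$. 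Here $\DR_\Gamma\in H_{2(4g+m-2)}$ while $\lambda_g\prod\psi^{d_j}$ has degree $g+\sum d_j$; once $\sum d_j>3g-2+m$ this overshoots and every term vanishes. (The reduction from general $m$-point correlators to those with one $\tau_0(e_1)$ is the generating-function identity $Q^0=(\sum a_i)Q$ in Section~\ref{subsubsection:reformulation}, which is the clean form of what you were reaching for with the string equation.) Without an analogue of this stable-tree formula, there is no single moduli space on which to run your dimension count, and your sketch leaves that gap unfilled.
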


We split the proof in several steps. In Section~\ref{subsubsection:reformulation} we give a slight reformulation of the proposition. In Section~\ref{subsubsection:stable trees} we introduce certain cohomology classes in $\oM_{g,n}$. Section~\ref{subsubsection:geometric formula} contains a geometric formula for double ramification correlators. Finally, using this formula, in Section~\ref{subsubsection:proof of high vanishing} we prove Proposition~\ref{proposition:high vanishing}.

\subsubsection{Reformulation}\label{subsubsection:reformulation}

It occurs that it is a little bit easier to work with double ramification correlators of the form
\begin{gather}\label{eq:correlator with tau0e1}
\<\tau_0(e_1)\tau_{d_1}(e_{\alpha_1})\ldots\tau_{d_m}(e_{\alpha_m})\>^{\DR}_g.
\end{gather}
Let us show how to reconstruct all double ramification correlators from them. For $g\ge 0$ and $m\ge 1$ such that $2g-2+m>0$, and $1\le\alpha_1,\ldots,\alpha_m\le N$, introduce power series $Q_{g;\alpha_1,\ldots,\alpha_m}(a_1,\ldots,a_m)$ and $Q^0_{g;\alpha_1,\ldots,\alpha_m}(a_1,\ldots,a_m)$ by
\begin{align*}
&Q_{g;\alpha_1,\ldots,\alpha_m}(a_1,\ldots,a_m):=\sum_{d_1,\ldots,d_m\ge 0}\<\tau_{d_1}(e_{\alpha_1})\ldots\tau_{d_m}(e_{\alpha_m})\>^{\DR}_g a_1^{d_1}\ldots a_m^{d_m},\\
&Q^0_{g;\alpha_1,\ldots,\alpha_m}(a_1,\ldots,a_m):=\sum_{d_1,\ldots,d_m\ge 0}\<\tau_0(e_1)\tau_{d_1}(e_{\alpha_1})\ldots\tau_{d_m}(e_{\alpha_m})\>^{\DR}_g a_1^{d_1}\ldots a_m^{d_m}.
\end{align*}
The string equation~\eqref{eq:string for FDR} implies that 
\begin{gather}\label{eq:relation between two polynomials}
Q^0_{g;\alpha_1,\ldots,\alpha_m}=(a_1+\ldots+a_m)Q_{g;\alpha_1,\ldots,\alpha_m}.
\end{gather}
Obviously, this relation allows to reconstruct one power series from another. It also shows that Proposition~\ref{proposition:high vanishing} is equivalent to the following proposition.

\begin{proposition}\label{proposition: equivalent high vanishing}
Let $g\ge 0$ and $m\ge 1$ such that $2g-1+m>0$. Suppose that $\sum_{i=1}^m d_i>3g-2+m$, then $\<\tau_0(e_1)\tau_{d_1}(e_{\alpha_1})\ldots\tau_{d_m}(e_{\alpha_m})\>^{\DR}_g=0$.
\end{proposition}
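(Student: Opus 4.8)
Section~\ref{subsubsection:reformulation} having reduced us to Proposition~\ref{proposition: equivalent high vanishing}, I would first unfold the definition. Using~\eqref{eq:definition of correlator2}, \eqref{eq:OmegaDr-hDR relation} and the symmetry of the multi-point functions, the correlator equals
$$
\<\tau_0(e_1)\tau_{d_1}(e_{\alpha_1})\cdots\tau_{d_m}(e_{\alpha_m})\>^{\DR}_g=\left.\Coef_{\eps^{2g}}\left(\frac{\d^{m-1}}{\d t^{\alpha_2}_{d_2}\cdots\d t^{\alpha_m}_{d_m}}\left(\left.h^{\DR}_{\alpha_1,d_1-1}(\ustr)\right|_{x=0}\right)\right)\right|_{t^*_*=0},
$$
where $h^{\DR}_{\alpha_1,d_1-1}=\delta\og_{\alpha_1,d_1}/\delta u^1$, whose $\eps$-expansion coefficients were written explicitly in $p$-variables in the proof of Proposition~\ref{proposition:one-point}. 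The plan is to convert this into a geometric formula --- a sum, over stable trees, of integrals over $\oM_{g,m+1}$ of gluings of $\DR$-cycles --- and then to extract the vanishing by a dimension count, the one-vertex term producing exactly the bound $3g-2+m$.

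\emph{Stable-tree classes (Section~\ref{subsubsection:stable trees}).} To a stable tree $T$ of total genus $g$ with $m+1$ legs, a genus function $v\mapsto g(v)$ with $\sum_v g(v)=g$, and an assignment of integers to the half-edges at each vertex summing to zero there, associate the class on $\oM_{g,m+1}$ obtained by pushing forward $\boxtimes_v\bigl(\DR_{g(v)}(\text{incident labels})\,\lambda_{g(v)}\bigr)$ along the boundary gluing map $\xi_T\colon\prod_v\oM_{g(v),n(v)}\to\oM_{g,m+1}$. These are well-defined: $\lambda_{g(v)}$ vanishes off compact type, and the behaviour of $\DR$-cycles under the relevant boundary restrictions is governed by the splitting formula of~\cite{BSSZ15} (already used in Corollary~\ref{corollary:derivative and zero}). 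Note that $\boxtimes_v\lambda_{g(v)}$ has total degree $2\sum_v g(v)=2g$ and glues to $\lambda_g$.

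\emph{The geometric formula (Section~\ref{subsubsection:geometric formula}).} Iterate the hierarchy flows $\d(\ustr)^\gamma/\d t^\beta_q=\eta^{\gamma\mu}\d_x\Omega^{\DR}_{\mu,0;\beta,q}(\ustr)$ the required $m-1$ times inside $h^{\DR}_{\alpha_1,d_1-1}(\ustr)$, expand everything in $p$-variables (using the explicit form of $\Omega^{\DR}_{\alpha,d;1,0}=h^{\DR}_{\alpha,d-1}$ recalled above), and evaluate at $x=t^*_*=0$ via $(\ustr)^\gamma_n|_0=\delta^{\gamma,1}\delta_{n,1}$ and Lemma~\ref{lemma:substitution}. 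Each flow, as well as $h^{\DR}_{\alpha_1,d_1-1}$ itself, contributes a single $\DR$-cycle (those of $\og_{\alpha_i,d_i}$); the "differentiate with respect to $u^\gamma_n$ and multiply" operations arising when the flows are composed glue these cycles along nodes, producing exactly the boundary strata indexed by stable trees $T$ together with the associated pushforwards $\xi_{T*}(\boxtimes_v\DR_{g(v)}\lambda_{g(v)})$. The spurious marked points labelled $e_1$ created by the substitution $(\ustr)^1_1=1$ are then eliminated using Corollary~\ref{corollary:multiple derivative and zero} (and Lemma~\ref{lemma:DR and fundamental class} in the extreme case), exactly as in the one-point computation; this lowers the $\psi$-powers carried by the descendant legs and converts powers of the eliminated point-variables into powers of the surviving leg- and edge-variables. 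The outcome should be an identity of the shape
\begin{multline*}
\<\tau_0(e_1)\tau_{d_1}(e_{\alpha_1})\cdots\tau_{d_m}(e_{\alpha_m})\>^{\DR}_g=\\
\sum_{T}\sum_{\substack{p_1,\dots,p_m\ge 0\\(\text{edge/leg data})}}c_{T}\int_{\oM_{g,m+1}}\xi_{T*}\!\Bigl(\boxtimes_v\DR_{g(v)}(\cdots)\lambda_{g(v)}\Bigr)\prod_{i=1}^m\psi_i^{p_i}\,c_{g,m+1}\!\Bigl(e_1\otimes\otimes_{i=1}^m e_{\alpha_i}\Bigr),
\end{multline*}
with constants $c_T$ and with $\sum_{i=1}^m(d_i-p_i)$ equal to the total degree in the leg-variables that is extracted, hence $\le\sum_v 2g(v)=2g$, equality holding only for the one-vertex tree.

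\emph{Dimension count (Section~\ref{subsubsection:proof of high vanishing}).} For a tree $T$ with $E$ edges one has $\dim\prod_v\oM_{g(v),n(v)}=\dim\oM_{g,m+1}-E$, so $\xi_{T*}$ raises complex codimension by $E$; since $\DR_{g(v)}\lambda_{g(v)}$ has codimension $2g(v)$, the pushed-forward class has codimension $2g+E$, and the integrand on $\oM_{g,m+1}$ (of dimension $3g-2+m$) has codimension at least $2g+E+\sum_i p_i$. Hence each summand vanishes unless $2g+E+\sum_i p_i\le 3g-2+m$, i.e. $\sum_i p_i\le g-2+m-E\le g-2+m$; combined with $\sum_i(d_i-p_i)\le 2g$ this gives $\sum_i d_i\le 3g-2+m$. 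Therefore $\sum_i d_i>3g-2+m$ forces every term to vanish, which proves Proposition~\ref{proposition: equivalent high vanishing} and, by~\eqref{eq:relation between two polynomials}, Proposition~\ref{proposition:high vanishing}. I expect Step 3 to be the main obstacle: the combinatorial bookkeeping showing that the iterated flows reorganize precisely into the stable-tree sum with the classes $\DR_{g(v)}\lambda_{g(v)}$ at the vertices, together with the precise tracking of the $\psi$-power reductions through Corollary~\ref{corollary:multiple derivative and zero}, is delicate; once that formula is in hand the final dimension count is routine.
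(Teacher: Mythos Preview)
Your overall plan---derive a stable-tree geometric formula for the correlator and then kill it by a dimension count---is exactly the paper's strategy. The key difference is that you take an unnecessary detour: you propose to push forward along the maps forgetting the spurious $e_1$-points and land on $\oM_{g,m+1}$, tracking the resulting $\psi$-power reductions via Corollary~\ref{corollary:multiple derivative and zero}. The paper does not do this. Its geometric formula (Lemma~\ref{lemma:geometric formula for DR correlators}) keeps all $2g+m$ points carrying $e_1$ and stays on $\oM_{g,2g+2m}$: the correlator is the coefficient of $a_1\cdots a_{2g+m-1}$ in
\[
\frac{1}{(2g+m-1)!}\sum_{\Gamma\in\AMST^m_{2g+m}}\int_{\DR_\Gamma(-\sum a_i,a_1,\ldots,a_{2g+m-1})}\lambda_g\, c_{g,2g+2m}\bigl(e_1^{2g+m}\otimes\textstyle\bigotimes_{i=1}^m e_{\alpha_i}\bigr)\prod_{i=1}^m\psi_{2g+m-1+i}^{d_i},
\]
with the \emph{original} powers $d_i$ intact. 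Every tree here has exactly $m$ vertices and $m-1$ edges, so $\DR_\Gamma\in H_{2(4g+m-2)}(\oM_{g,2g+2m},\mbQ)$ uniformly; since $\lambda_g\prod_i\psi_i^{d_i}\in H^{2(g+\sum d_i)}$, the integral dies whenever $g+\sum d_i>4g+m-2$, i.e.\ $\sum d_i>3g-2+m$. That is the entire dimension count---no forgetful maps, no $\psi$-reduction bookkeeping.

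The elimination-of-$e_1$-points technique you describe is precisely what drives the \emph{low}-degree vanishing (Proposition~\ref{proposition:low vanishing}), where it is genuinely needed. For the high-degree bound it only adds difficulties, and your sketch reflects them: the claim $\sum_i(d_i-p_i)\le 2g$ is off (there are $2g+m-1$ spurious points, and the edge factors in $\DR_\Gamma$ already account for degree $m-1$; done correctly the count still closes, but with different numbers), and forgetting all first-type legs at a genus-$0$ leaf vertex leaves only two half-edges, so stability fails and Corollary~\ref{corollary:multiple derivative and zero} does not apply cleanly there---one must stop one step early and use the explicit $\oM_{0,3}$ evaluation, as in the proof of Proposition~\ref{proposition:low vanishing}. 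All of this is avoidable by staying upstairs on $\oM_{g,2g+2m}$.
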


\subsubsection{Stable trees and cohomology classes in $\oM_{g,n}$}\label{subsubsection:stable trees}

Here we would like to introduce some notations related to stable graphs and then define certain cohomology classes in $\oM_{g,n}$. We will use the notations from~\cite[Sections 0.2 and 0.3]{PPZ15}.

By stable tree we mean a stable graph
$$
\Gamma=(V,H,L,g\colon V\to\mbZ_{\ge 0},v\colon H\to V, \iota\colon H\to H),
$$
that is a tree. Let $H^e(\Gamma):=H(\Gamma)\backslash L(\Gamma)$. A path in $\Gamma$ is a sequence of pairwise distinct vertices $v_1,v_2,\ldots,v_k\in V$, $v_i\ne v_j$, $i\ne j$, such that for any $1\le i\le k-1$ the vertices $v_i$ and $v_{i+1}$ are connected by an edge. 

A stable rooted tree is a pair $(\Gamma,v_0)$, where $\Gamma$ is a stable tree and $v_0\in V(\Gamma)$. The vertex~$v_0$ is called the root. Denote by~$H_+(\Gamma)$ the set of half-edges of $\Gamma$ that are directed away from the root $v_0$. Clearly, $L(\Gamma)\subset H_+(\Gamma)$. Let $H^e_+(\Gamma):=H_+(\Gamma)\backslash L(\Gamma)$. A vertex $w$ is called a descendant of a vertex $v$, if $v$ is on the unique path from the root $v_0$ to $w$. 

A modified stable tree is a stable tree $\Gamma$ where we split the set of legs in two subsets: the set of legs of the first type and the set of legs of the second type. The set of legs of the first type will be denoted by~$L_1(\Gamma)$. We require that each vertex of the tree is incident to exactly one leg of the second type. 

Denote by $\MST^m_{g,n+1}$ the set of modified stable trees of genus~$g$ with $m$ vertices and with~$(m+n+1)$ legs. We mark the legs of first type by numbers $0,1,\ldots,n$ and the legs of the second type by numbers $n+1,\ldots,n+m$. For a modified stable tree $\Gamma\in\MST^m_{g,n+1}$, denote by $v_0(\Gamma)$ the vertex that is incident to the leg number~$0$. In this way a modified stable tree from $\MST^m_{g,n+1}$ automatically becomes a stable rooted tree. 

Consider a modified stable tree $\Gamma\in\MST^m_{g,n+1}$. Define a function $p\colon V(\Gamma)\to\{1,\ldots,m\}$ by $p(v):=i-n$, where $i$ is the number of a unique leg of the second type incident to $v$. The tree~$\Gamma$ is called admissible, if for any two distinct vertices $v_1,v_2\in V(\Gamma)$ such that~$v_2$ is a descendant of~$v_1$, we have $p(v_2)>p(v_1)$. The subset of admissible modified stable trees will be denoted by $\AMST^m_{g,n+1}\subset\MST^m_{g,n+1}$. 

Consider a modifed stable tree $\Gamma\in\MST^m_{g,n+1}$ and integers $a_0,a_1,\ldots,a_n$ such that $a_0+a_1+\ldots+a_n=0$. To each half-edge $h\in H(\Gamma)$ we assign an integer~$a(h)$ in such a way that the following conditions hold:
\begin{itemize}

\item[a)] If a half-edge~$h$ is a leg of the first type that is marked by number~$i$, $0\le i\le n$, then $a(h)=a_i$;

\item[b)] If a half-edge~$h$ is a leg of the second type, then $a(h)=0$;

\item[c)] If a half-edge~$h$ is not a leg, then $a(h)+a(\iota(h))=0$;

\item[d)] For any vertex $v\in V(\Gamma)$, we have $\sum_{h\in H[v]}a(h)=0$.

\end{itemize}
Since the graph~$\Gamma$ is a tree, it is easy to see that such a function $a\colon H(\Gamma)\to\mbZ$ exists and is uniquely determined by the numbers $a_0,a_1,\ldots,a_n$. 

Recall that for each stable graph $\Gamma$ there is the associated moduli space
$$
\oM_{\Gamma}:=\prod_{v\in V}\oM_{g(v),n(v)}.
$$
and the canonical morphism 
$$
\xi_\Gamma\colon\oM_{\Gamma}\to\oM_{g(\Gamma),|L(\Gamma)|}.
$$
Consider again a modified stable tree $\Gamma\in\MST^m_{g,n+1}$ and integers $a_0,a_1,\ldots,a_n$ such that $a_0+a_1+\ldots+a_n=0$. Let $a\colon H(\Gamma)\to\mbZ$ be the associated function on half-edges. For each moduli space $\oM_{g(v),n(v)}$, $v\in V(\Gamma)$, the numbers $a(h)$, $h\in H[v]$, define the double ramification cycle
$$
\DR_{g(v)}\left((a(h))_{h\in H[v]}\right)\in H^{2g(v)}(\oM_{g(v),n(v)},\mbQ).
$$
If we multiply all these cycles, we get the class
$$
\prod_{v\in V(\Gamma)}\DR_{g(v)}\left((a(h))_{h\in H[v]}\right)\in H^{2g}(\oM_\Gamma,\mbQ).
$$
We define a class $\DR_\Gamma(a_0,a_1,\ldots,a_n)\in H^{2(g+m-1)}(\oM_{g,n+m+1},\mbQ)$ by 
\begin{gather*}
\DR_\Gamma(a_0,a_1,\ldots,a_n):=\left(\prod_{h\in H^e_+(\Gamma)}a(h)\right)\cdot \xi_{\Gamma*}\left(\prod_{v\in V(\Gamma)}\DR_{g(v)}\left((a(h))_{h\in H[v]}\right)\right).
\end{gather*}
From Hain's formula~\eqref{eq:Hain's formula} it follows that the class
$$
\lambda_g\DR_\Gamma\left(-\sum_{i=1}^n a_i,a_1,\ldots,a_n\right)\in H^{2(2g+m-1)}(\oM_{g,n+m+1},\mbQ)
$$
is a polynomial in $a_1,\ldots,a_n$ homogeneous of degree $2g+m-1$.

\subsubsection{Geometric formula for double ramification correlators}\label{subsubsection:geometric formula}

\begin{lemma}\label{lemma:geometric formula for DR correlators}
Let $g\ge 0$ and $m\ge 1$ such that $2g+m-1>0$. Then a double ramification correlator $\<\tau_0(e_1)\tau_{d_1}(e_{\alpha_1})\ldots\tau_{d_m}(e_{\alpha_m})\>^\DR_g$ is equal to the coefficient of $a_1a_2\ldots a_{2g+m-1}$ in the polynomial
$$
\frac{1}{(2g+m-1)!}\sum_{\Gamma\in\AMST^m_{g,2g+m}}\int_{\DR_\Gamma(-\sum a_i,a_1,\ldots,a_{2g+m-1})}\lambda_g c_{g,2g+2m}(e_1^{2g+m}\otimes\otimes_{i=1}^m e_{\alpha_i})\prod_{i=1}^m\psi_{2g+m-1+i}^{d_i}.
$$
\end{lemma}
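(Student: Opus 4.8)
The plan is to unfold the definition of the correlator into an iterated Poisson bracket, rewrite everything in the Fourier variables $p^\gamma_k$ of~\eqref{eq:u-p change}, and expand by the Leibniz rule; the trees that arise will be exactly the admissible modified stable trees. As a first step I would reduce the claim to a statement about the Hamiltonians $\og_{\alpha,d}$ alone. Taking the first two insertions in~\eqref{eq:definition of correlator2} to be $\tau_0(e_1)$ and $\tau_{d_1}(e_{\alpha_1})$ and using the symmetry~\eqref{eq:two-point} together with~\eqref{eq:OmegaDr-hDR relation} (in the form $\Omega^\DR_{1,0;\alpha_1,d_1}=\Omega^\DR_{\alpha_1,d_1;1,0}=\frac{\delta\og_{\alpha_1,d_1}}{\delta u^1}$), one gets
$$\<\tau_0(e_1)\tau_{d_1}(e_{\alpha_1})\cdots\tau_{d_m}(e_{\alpha_m})\>^\DR_g=\left.\Coef_{\eps^{2g}}\left[\frac{\d^{m-1}}{\d t^{\alpha_2}_{d_2}\cdots\d t^{\alpha_m}_{d_m}}\left(\left.\frac{\delta\og_{\alpha_1,d_1}}{\delta u^1}\right|_{u^\gamma_n=(\ustr)^\gamma_n}\right)\right]\right|_{x=t^*_*=0}.$$
Since along a solution of the hierarchy $\frac{\d}{\d t^\beta_q}$ acts on a differential polynomial by the bracket $\{\,\cdot\,,\og_{\beta,q}\}_{\eta\d_x}$, and since $\left.(\ustr)^\gamma_n\right|_{x=t^*_*=0}=\delta^{\gamma,1}\delta_{n,1}$ (as in the proof of Proposition~\ref{proposition:one-point}), the right-hand side equals $\Coef_{\eps^{2g}}$ of the $(m-1)$-fold bracket of $\frac{\delta\og_{\alpha_1,d_1}}{\delta u^1}$ against $\og_{\alpha_2,d_2},\dots,\og_{\alpha_m,d_m}$ (in any order, as the $\og_{\alpha_j,d_j}$ commute), evaluated at $u^\gamma_n=\delta^{\gamma,1}\delta_{n,1}$.

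Next I would compute this iterated bracket in the $p$-variables. Exactly as in the proof of Proposition~\ref{proposition:sufficient condition}, $\{f,\og_{\beta,q}\}_{\eta\d_x}=\sum_{\gamma,k}ik\,\eta^{\gamma\mu}\frac{\d f}{\d p^\gamma_k}\frac{\d\og_{\beta,q}}{\d p^\mu_{-k}}$, so one bracket step with $\og_{\beta,q}$ deletes a Fourier mode $p^\gamma_k$ and grafts on a copy of $\frac{\d\og_{\beta,q}}{\d p^\mu_{-k}}$ with scalar weight $ik$ and contraction $\eta^{\gamma\mu}$. The $p$-expansion of $\og_{\beta,q}$ is~\eqref{DR Hamiltonians}, and that of $\frac{\delta\og_{\alpha_1,d_1}}{\delta u^1}$ is computed in the proof of Proposition~\ref{proposition:one-point}. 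After all the grafting one has a differential polynomial whose $\eps^{2g}$-homogeneous part has degree $2g+m-1$; applying Lemma~\ref{lemma:substitution} to carry out the substitution $u^\gamma_n=\delta^{\gamma,1}\delta_{n,1}$ then introduces the factor $(-i)^{2g+m-1}$ and turns the substitution into $\Coef_{a_1\cdots a_{2g+m-1}}$ applied to the coefficient of $p^1_{a_1}\cdots p^1_{a_{2g+m-1}}$, reducing the claim to a coefficient-extraction identity.

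Then I would expand the $(m-1)$-fold bracket by the Leibniz rule. Each bracket with $\og_{\alpha_j,d_j}$ creates a new vertex — a moduli factor carrying $\DR_{g(v)}$, $\lambda_{g(v)}$, $\psi^{d_j}$ and the insertion $e_{\alpha_j}$ — attached by a separating node (the $\eta$-contraction is the gluing axiom of the cohomological field theory) to a Fourier mode of a vertex already present, with weight $i$ times the momentum through that node. The grafting history is recorded by a tree on $m$ vertices whose root $v_0$ carries $\frac{\delta\og_{\alpha_1,d_1}}{\delta u^1}$, with the marked point created by $\frac{\delta}{\delta u^1}$ as the first-type leg numbered $0$, and in which the vertex grafted at step $j$ gets the label $j$ (carried by its second-type leg, numbered $n+j$ with $n=2g+m-1$, and decorated by $\psi^{d_j}$) and is joined to a vertex of strictly smaller label; these are precisely the elements of $\AMST^m_{2g+m}$ in which every vertex has at least two half-edges besides its second-type leg. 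Because the hierarchy flows commute, each such tree is produced exactly once — admissibility encodes the one order in which it can be built. The leftover Fourier modes become the first-type legs $1,\dots,2g+m-1$ with momenta $a_1,\dots,a_{2g+m-1}$, and momentum conservation at the vertices, summed over the tree, forces leg $0$ to carry $-\sum_i a_i$ and each edge to carry the momentum $a(h)$ of the definition of $\DR_\Gamma$, so the product of the $m-1$ node weights is $i^{m-1}\prod_{h\in H^e_+(\Gamma)}a(h)$. By the projection formula, the gluing axiom, and the splitting $\xi_\Gamma^*\lambda_g=\prod_v\lambda_{g(v)}$ of the Hodge class over compact-type curves, the product over vertices of the integrals $\int_{\DR_{g(v)}}\lambda_{g(v)}c_{g(v),\bullet}(\cdots)\prod\psi$ equals $\int_{\DR_\Gamma(-\sum a_i,a_1,\dots,a_{2g+m-1})}\lambda_g\,c_{g,2g+2m}(e_1^{2g+m}\otimes\otimes_i e_{\alpha_i})\prod_i\psi_{2g+m-1+i}^{d_i}$. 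Finally, a tree in $\AMST^m_{2g+m}$ not arising from the expansion has a leaf of genus $\ge1$ whose only half-edges are its node and its second-type leg; for such a tree that node momentum is forced to $0$, hence $\prod_{h\in H^e_+(\Gamma)}a(h)=0$ and its contribution vanishes, so summing over all of $\AMST^m_{2g+m}$ changes nothing.

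It then remains to collect the constants: the factors $\frac1{n(v)!}$ in~\eqref{DR Hamiltonians}, the number of ways of choosing at each vertex which Fourier modes become node half-edges and which survive, and the multinomial count of distributing $a_1,\dots,a_{2g+m-1}$ over the vertices must assemble into the single prefactor $\frac1{(2g+m-1)!}$; and the $i$-powers — namely $(-i)^{2g+m-1}$ from Lemma~\ref{lemma:substitution} and one factor $i$ per node from the $m-1$ bracket weights — together with the signs $(-1)^{g(v)}$ from the $(-\eps^2)^{g(v)}$ must multiply to $+1$ (indeed $(-i)^{2g+m-1}\cdot i^{m-1}\cdot(-1)^{\sum g(v)}=1$), giving the stated formula. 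The main obstacle is exactly this bookkeeping together with the tree identification of the previous paragraph: one must check carefully that the Leibniz expansion reorganizes into the sum over $\AMST^m_{2g+m}$ with each admissible tree counted once and with overall constant precisely $\frac1{(2g+m-1)!}$ — this is where admissibility (hence commutativity of the flows) and the matching of all symmetry factors are essential. By contrast the geometric step — passing from a product of vertex integrals to $\int_{\DR_\Gamma}$ — is the routine projection formula combined with the standard splitting axioms.
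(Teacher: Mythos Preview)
Your proposal is correct and follows essentially the same route as the paper: reduce the correlator to an iterated Poisson bracket of $\frac{\delta\og_{\alpha_1,d_1}}{\delta u^1}$ against $\og_{\alpha_2,d_2},\dots,\og_{\alpha_m,d_m}$ evaluated at $u^\gamma_n=\delta^{\gamma,1}\delta_{n,1}$, expand in $p$-variables (the paper does this by induction on $m$, you by a direct Leibniz expansion --- the same thing), identify the resulting terms with the classes $\DR_\Gamma$ over $\AMST^m_{g,n+1}$, and then apply Lemma~\ref{lemma:substitution}. Your remark that trees whose last leaf carries no first-type legs contribute zero (since the incident edge then has momentum $0$) is exactly the point that makes the sum over all of $\AMST^m_{2g+m}$ legitimate, and your sign/prefactor bookkeeping is correct.
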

\begin{proof}
We have
\begin{align*}
&\<\tau_0(e_1)\tau_{d_1}(e_{\alpha_1})\ldots\tau_{d_m}(e_{\alpha_m})\>^{\DR}_g=\left.\Coef_{\eps^{2g}}\left(\frac{\d^{m-1}\Omega^{\DR,\str}_{1,0;\alpha_1,d_1}}{\d t^{\alpha_2}_{d_2}\ldots\d t^{\alpha_m}_{d_m}}\right)\right|_{t^*_*=0}=\\
=&\left.\Coef_{\eps^{2g}}\left(\frac{\d^{m-1}\Omega^{\DR}_{1,0;\alpha_1,d_1}}{\d t^{\alpha_2}_{d_2}\ldots\d t^{\alpha_m}_{d_m}}\right)\right|_{u^\gamma_n=\delta^{\gamma,1}\delta_{n,1}}=\\
=&\Coef_{\eps^{2g}}\left.\{\{\ldots\{\{\Omega^\DR_{\alpha_1,d_1;1,0},\og_{\alpha_2,d_2}\}_{\eta\d_x},\og_{\alpha_3,d_3}\}_{\eta\d_x},\ldots\}_{\eta\d_x},\og_{\alpha_m,d_m}\}_{\eta\d_x}\right|_{u^\gamma_n=\delta^{\gamma,1}\delta_{n,1}}.
\end{align*}
Let us prove that
\begin{align}
&\left.\{\{\ldots\{\{\Omega^\DR_{\alpha_1,d_1;1,0},\og_{\alpha_2,d_2}\}_{\eta\d_x},\og_{\alpha_3,d_3}\}_{\eta\d_x},\ldots\}_{\eta\d_x},\og_{\alpha_m,d_m}\}_{\eta\d_x}\right|_{u^\gamma_r=\sum_{a\in\mbZ}(ia)^r p^\gamma_a e^{iax}}=\label{eq:formula for the multiple bracket}\\
=&\sum_{\substack{g\ge 0\\n\ge 1}}\frac{i^{2g+m-1}\eps^{2g}}{n!}\sum_{\Gamma\in\AMST^m_{g,n+1}}\sum_{a_0+a_1+\ldots+a_n=0}\notag\\
&\hspace{1.5cm}\left(\int_{\DR_\Gamma(a_0,a_1,\ldots,a_n)}\lambda_g c_{g,n+m+1}(e_1\otimes\otimes_{i=1}^n e_{\beta_i}\otimes\otimes_{j=1}^m e_{\alpha_j})\prod_{j=1}^m\psi_{n+j}^{d_j}\right)\left(\prod_{j=1}^n p_{a_i}^{\beta_i}\right)e^{-ia_0x}.\notag
\end{align}
By Lemma~\ref{lemma:substitution}, this formula immediately implies our lemma. We prove formula~\eqref{eq:formula for the multiple bracket} by induction on $m$. Since $\Omega^\DR_{\alpha_1,d_1;1,0}=\frac{\delta\og_{\alpha_1,d_1}}{\delta u^1}$, for $m=1$ formula~\eqref{eq:formula for the multiple bracket} is clear. Suppose $m\ge 2$. Recall that for any differential polynomial $f\in\cA_N$ and a local functional $\oh\in\Lambda_N$ the bracket~$\{f,\oh\}_{\eta\d_x}$ looks in the following way in the $p$-variables: $\{f,\oh\}_{\eta\d_x}=\sum_{a\in\mbZ}ia\eta^{\mu\nu}\frac{\d f}{\d p^\mu_a}\frac{\d\oh}{\d p^\nu_{-a}}$. From the induction assumption it follows that
\begin{align*}
&\frac{\d}{\d p^\mu_a}\{\{\ldots\{\{\Omega^\DR_{\alpha_1,d_1;1,0},\og_{\alpha_2,d_2}\}_{\eta\d_x},\og_{\alpha_3,d_3}\}_{\eta\d_x},\ldots\}_{\eta\d_x},\og_{\alpha_{m-1},d_{m-1}}\}_{\eta\d_x}=\\
=&\sum_{g,n\ge 0}\frac{i^{2g+m-2}\eps^{2g}}{n!}\sum_{\Gamma\in\AMST^{m-1}_{g,n+2}}\sum_{a_0+a_1+\ldots+a_n+a=0}\notag\\
&\hspace{0.7cm}\left(\int_{\DR_\Gamma(a_0,a_1,\ldots,a_n,a)}\lambda_g c_{g,n+m+1}(e_1\otimes\otimes_{i=1}^n e_{\beta_i}\otimes e_\mu\otimes\otimes_{j=1}^{m-1} e_{\alpha_j})\prod_{j=1}^{m-1}\psi_{n+1+j}^{d_j}\right)\left(\prod_{j=1}^n p_{a_i}^{\beta_i}\right)e^{-ia_0x}.
\end{align*}
We also have
$$
\frac{\d\og_{\alpha_m,d_m}}{\d p^\nu_{-a}}=\sum_{\substack{g\ge 0\\n\ge 1}}\frac{(-\eps^2)^g}{n!}\sum_{a_1+\ldots+a_n=a}\left(\int_{\DR_g(-a,a_1,\ldots,a_n,0)}\lambda_g\psi_{n+1}^{d_m}c_{g,n+2}(e_\nu\otimes\otimes_{i=1}^n e_{\beta_i}\otimes e_{\alpha_m})\right)\prod_{i=1}^n p^{\beta_i}_{a_i}.
$$
Recall that we index marked points on curves from $\oM_{g,n+2}$ by $0,1,\ldots,n+1$. Denote by $\gl_{i,j}\colon\oM_{g_1,n_1+1}\times\oM_{g_2,n_2+1}\to\oM_{g_1+g_2,n_1+n_2}$ the gluing map that corresponds to gluing a curve from $\oM_{g_1,n_1+1}$ to a curve from $\oM_{g_2,n_2+1}$ along the point number $i$ on the first curve and the point number $j$ on the second curve. We obtain
\begin{align}
&\sum_{a\in\mbZ}ia\eta^{\mu\nu}\left(\frac{\d}{\d p^\mu_a}\{\{\ldots\{\{\Omega^\DR_{\alpha_1,d_1;1,0},\og_{\alpha_2,d_2}\}_{\eta\d_x},\og_{\alpha_3,d_3}\}_{\eta\d_x},\ldots\}_{\eta\d_x},\og_{\alpha_{m-1},d_{m-1}}\}_{\eta\d_x}\right)\left(\frac{\d\og_{\alpha_m,d_m}}{\d p^\nu_{-a}}\right)=\notag\\
=&\sum_{g_1,g_2\ge 0}\sum_{\substack{n_1\ge 0\\n_2\ge 1}}\frac{i^{2g+m-1}\eps^{2g}}{n_1!n_2!}\sum_{\Gamma\in\AMST^{m-1}_{g_1,n_1+2}}\sum_{a_0+a_1+\ldots+a_{n_1}+a=0}\sum_{b_1+\ldots+b_{n_2}=a}\label{eq:very big sum}a\\
&\times\left(\int_{(\gl_{n_1+1,0})_*(\DR_\Gamma(a_0,A,a)\times\DR_{g_2}(-a,B,0))}\lambda_g c_{g,n+m+1}(e_1\otimes\otimes_{i=1}^{n_1}e_{\beta_i}\otimes\otimes_{j=1}^{n_2}e_{\gamma_j}\otimes\otimes_{k=1}^m e_{\alpha_m})\prod_{r=1}^m\psi_{n+r}^{d_r}\right)\notag\\
&\times\left(\prod_{i=1}^{n_1}p_{a_i}^{\beta_i}\prod_{j=1}^{n_2}p_{b_j}^{\gamma_j}\right)e^{-ia_0x}.\notag
\end{align}
Here in the summation, in order to save some space, we use the notations $n=n_1+n_2$, $g=g_1+g_2$, $A=(a_1,\ldots,a_{n_1})$ and $B=(b_1,\ldots,b_{n_2})$. Let us also clarify how we index marked points after the gluing map $\gl_{n_1+1,0}\colon\oM_{g_1,n_1+m+1}\times\oM_{g_2,n_2+2}\to\oM_{g,n+m+1}$. The order of marked points after gluing is described by the following equation:
\begin{multline*}
\gl_{n_1+1,0}([C_1,p_0,\ldots,p_{n_1+m}],[C_2,q_0,\ldots,q_{n_2+1}])=\\
=[C,p_0,\ldots,p_{n_1},q_1,\ldots,q_{n_2},p_{n_1+2},\ldots,p_{n_1+m},q_{n_2+1}],
\end{multline*}
where the curve $C$ is the result of gluing of~$C_1$ and~$C_2$. It is easy to see that
\begin{gather}\label{eq:attaching DR cycle}
a(\gl_{n_1+1,0})_*(\DR_\Gamma(a_0,A,a)\times\DR_{g_2}(-a,B,0))=\DR_{\widetilde\Gamma}(a_0,A,B),
\end{gather}
where an admissible modified stable tree $\widetilde\Gamma\in\AMST^m_{g,n+1}$ is constructed in the following way. We attach a new vertex of genus $g_2$ to the leg number $n_1+1$ in $\Gamma$. Then we attach $n_2$ new legs of the first type to this new vertex and also attach a new leg of the second type with number~$n+m$ to it. It is clear that for any $\widetilde\Gamma\in\AMST^m_{g,n+1}$ the class $\DR_{\widetilde\Gamma}(c_0,c_1,\ldots,c_n)$ can be represented in the form~\eqref{eq:attaching DR cycle} in a unique way. Thus, the sum~\eqref{eq:very big sum} can be rewritten in the following way:
\begin{align*}
&\sum_{\substack{g\ge 0\\n\ge 1}}\frac{i^{2g+m-1}\eps^{2g}}{n!}\sum_{\widetilde\Gamma\in\AMST^m_{g,n+1}}\sum_{a_0+a_1+\ldots+a_n=0}\notag\\
&\hspace{1.5cm}\left(\int_{\DR_{\widetilde\Gamma}(a_0,a_1,\ldots,a_n)}\lambda_g c_{g,n+m+1}(e_1\otimes\otimes_{i=1}^n e_{\beta_i}\otimes\otimes_{j=1}^m e_{\alpha_j})\prod_{j=1}^m\psi_{n+j}^{d_j}\right)\left(\prod_{j=1}^n p_{a_i}^{\beta_i}\right)e^{-ia_0x}.
\end{align*}
This completes the proof of the lemma.
\end{proof}

\subsubsection{Proof of Proposition~\ref{proposition:high vanishing}}\label{subsubsection:proof of high vanishing}

As it was explained in Section~\ref{subsubsection:reformulation}, it is sufficient to prove Proposition~\ref{proposition: equivalent high vanishing}. For any $\Gamma\in\AMST^m_{g,2g+m}$ we have
$$
\DR_\Gamma(a_0,a_1,\ldots,a_{2g+m-1})\in H_{2(4g+m-2)}(\oM_{g,2g+2m},\mbQ).
$$
On the other hand, we have
$$
\lambda_g\prod_{j=1}^m\psi_{2g+m-1+j}^{d_j}\in H^{2(g+\sum d_j)}(\oM_{g,2g+2m},\mbQ).
$$
Since, $g+\sum d_j>4g+m-2$, the integral
$$
\int_{\DR_\Gamma(a_0,a_1,\ldots,a_{2g+m-1})}\lambda_g c_{g,2g+2m}(e_1^{2g+m}\otimes\otimes_{j=1}^m e_{\alpha_j})\prod_{j=1}^m\psi_{2g+m-1+j}^{d_j}
$$
is equal to zero. By Lemma~\ref{lemma:geometric formula for DR correlators}, the double ramification correlator~$\<\tau_0(e_1)\tau_{d_1}(e_{\alpha_1})\ldots\tau_{d_m}(e_{\alpha_m})\>^{\DR}_g$ is equal to zero. Proposition~\ref{proposition: equivalent high vanishing} is proved.\\

Note that the vanishing property from Proposition~\ref{proposition:high vanishing} also holds for the usual correlators of a cohomological field theory:
$$
\<\tau_{d_1}(e_{\alpha_1})\ldots\tau_{d_m}(e_{\alpha_m})\>_g=0,\quad\text{if $\sum d_i>3g-3+m$}.
$$


\subsection{Low degree vanishing}\label{subsection:low vanishing}

\begin{proposition}\label{proposition:low vanishing}
Let $g,m\ge 0$ such that $2g-2+m>0$. Suppose that $\sum_{i=1}^m d_i\le 2g-2$. Then $\<\tau_{d_1}(e_{\alpha_1})\ldots\tau_{d_m}(e_{\alpha_m})\>^{\DR}_g=0$.
\end{proposition}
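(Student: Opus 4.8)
\noindent\emph{Proof strategy.} For $m\le 1$ the statement is already contained in Proposition~\ref{proposition:one-point}: for $m=0$ it is the vanishing $\<\>^{\DR}_g=0$ (valid for $g\ge 2$), and for $m=1$ it is the case $d<2g-1$ of part~2. So assume $m\ge 2$. The plan is to imitate the proof of the high degree vanishing (Proposition~\ref{proposition:high vanishing}), but replacing the crude dimension count on $\oM_{g,2g+2m}$ by a finer analysis of how the $2g+m$ marked points carrying the unit get absorbed.

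First I would pass, exactly as in the reduction of Proposition~\ref{proposition:high vanishing} to Proposition~\ref{proposition: equivalent high vanishing} via the relation $Q^0_{g;\alpha_1,\ldots,\alpha_m}=(a_1+\ldots+a_m)Q_{g;\alpha_1,\ldots,\alpha_m}$ of~\eqref{eq:relation between two polynomials}, to the equivalent statement that $\<\tau_0(e_1)\tau_{d_1}(e_{\alpha_1})\ldots\tau_{d_m}(e_{\alpha_m})\>^{\DR}_g=0$ whenever $\sum_{i=1}^m d_i\le 2g-1$. Then I would apply the geometric formula of Lemma~\ref{lemma:geometric formula for DR correlators}, which writes this correlator as $\frac{1}{(2g+m-1)!}$ times the coefficient of $a_1\cdots a_{2g+m-1}$ in
\[
\sum_{\Gamma\in\AMST^m_{2g+m}}\int_{\DR_\Gamma(-\sum a_i,a_1,\ldots,a_{2g+m-1})}\lambda_g\,c_{g,2g+2m}\bigl(e_1^{2g+m}\otimes\textstyle\bigotimes_{i=1}^m e_{\alpha_i}\bigr)\prod_{i=1}^m\psi_{2g+m-1+i}^{d_i}.
\]
For a fixed admissible modified stable tree $\Gamma$ I would then use Hain's formula~\eqref{eq:Hain's formula} on each vertex (legitimate after multiplying by $\lambda_g$, since $\DR_\Gamma$ is carried by the compact type stratum $\xi_\Gamma(\oM_\Gamma)$ on which the Hodge bundle splits and $\xi_\Gamma^*\lambda_g=\boxtimes_v\lambda_{g(v)}$), the unit axiom to write $c_{g,2g+2m}(e_1^{2g+m}\otimes\cdots)=\pi^*c_{g,m}(\bigotimes_i e_{\alpha_i})$ with $\pi\colon\oM_{g,2g+2m}\to\oM_{g,m}$ the map forgetting the $2g+m$ legs of the first type, the identity $\lambda_g=\pi^*\lambda_g$, and the projection formula, reducing each integral to
\[
\int_{\oM_{g,m}}c_{g,m}\bigl(\textstyle\bigotimes_i e_{\alpha_i}\bigr)\cdot\lambda_g\cdot\pi_*\!\Bigl(\DR_\Gamma(-\sum a_i,a_1,\ldots,a_{2g+m-1})\cdot\prod_{i=1}^m\psi_{2g+m-1+i}^{d_i}\Bigr),
\]
where the classes $\psi^\dagger$ coming from Hain's formula now appear pulled back from the one-pointed spaces $\oM_{g(v),1}$.

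The heart of the matter, and the step I expect to be the main obstacle, is to prove that when $\sum_i d_i\le 2g-1$ the coefficient of $a_1\cdots a_{2g+m-1}$ in $\lambda_g\cdot\pi_*(\DR_\Gamma\cdot\prod_i\psi^{d_i})$ vanishes for every $\Gamma$. A purely numerical codimension count is insufficient here: $\pi_*(\DR_\Gamma\prod\psi^{d_i})$ lies in $H^{2(\sum d_i-g-1)}(\oM_{g,m})$, so such a count only forces vanishing for $\sum d_i\le g$, and one must gain a further $g-1$. The way I would extract this is to run a forgetting procedure on the $2g+m$ first-type legs, edge by edge along $\Gamma$: forgetting a first-type leg with weight $a_i$ either lowers by one the exponent of a descendant $\psi$-class attached through the tree to that branch (the mechanism of Corollary~\ref{corollary:multiple derivative and zero}, which also produces a linear factor in $a_i$ and leaves an $O(a_i^2)$ error that cannot survive the squarefree extraction of $a_1\cdots a_{2g+m-1}$), or, when $g(v)$ of the legs at a genus $g(v)$ vertex are forgotten together, collapses $\DR_{g(v)}$ to a fundamental class at the cost of a factor $\prod a_i^2$ (Lemma~\ref{lemma:DR and fundamental class}), which again dies under the squarefree extraction; moreover the compact type support of $\lambda_g$ imposes, through the dimension vanishing $\lambda_{g(v)}\psi_1^{k}=0$ on $\oM_{g(v),1}$ for $k\ge 2g(v)-1$, a cap on how much of each descendant exponent $d_{p(v)}$ can be consumed at its vertex. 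Bookkeeping these constraints simultaneously over all vertices — the distribution of the $2g+m$ first-type legs, the weights induced on the internal half-edges by Hain's formula, and the two competing sources of codimension ($\DR$-collapse and $\psi$-absorption) — is the delicate point; carried through it should force $\sum_i d_i\ge 2g$, contradicting $\sum_i d_i\le 2g-1$. Summing over $\Gamma$, and then undoing the $Q^0=(\sum a_i)Q$ reduction of the second paragraph, would complete the proof.
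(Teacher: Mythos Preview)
Your reductions are correct: the cases $m\le 1$ follow from Proposition~\ref{proposition:one-point}, and passing to the correlator with an extra $\tau_0(e_1)$ via~\eqref{eq:relation between two polynomials} and then invoking Lemma~\ref{lemma:geometric formula for DR correlators} is exactly how the paper proceeds. But from that point you have taken an unnecessary detour. There is no need for Hain's formula, no need to push forward globally to $\oM_{g,m}$, no role for Lemma~\ref{lemma:DR and fundamental class}, and no role for the dimension vanishing $\lambda_{g(v)}\psi_1^k=0$ on $\oM_{g(v),1}$. The ``delicate bookkeeping'' you flag as the main obstacle dissolves once you exploit the vertex factorisation of $\DR_\Gamma$ directly.

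By definition $\DR_\Gamma=\bigl(\prod_{h\in H^e_+}a(h)\bigr)\cdot\xi_{\Gamma*}\bigl(\prod_v\DR_{g(v)}\bigr)$; on $\oM_\Gamma$ the class $\lambda_g$ restricts to $\boxtimes_v\lambda_{g(v)}$; the CohFT class splits via the tree-gluing axiom (introducing a sum over maps $\nu\colon H^e(\Gamma)\to\{1,\ldots,N\}$); and each descendant $\psi_{2g+m-1+i}^{d_i}$ sits at the second-type leg of the unique vertex $v$ with $p(v)=i$. Thus the integral is a product over $v\in V(\Gamma)$ of local integrals
\[
\int_{\DR_{g(v)}\bigl(0,(a(l))_{l\in L_1[v]},(a(h))_{h\in H^e[v]}\bigr)}\lambda_{g(v)}\,\psi_0^{d_{p(v)}}\,c_{g(v),|H[v]|}\bigl(e_{\alpha_{p(v)}}\otimes e_1^{|L_1[v]|}\otimes\textstyle\bigotimes_{h\in H^e[v]}e_{\nu(h)}\bigr).
\]
Write $L_1'[v]$ for the first-type legs at $v$ labelled by $\{1,\ldots,2g+m-1\}$. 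The local analysis is short:
\begin{itemize}
\item If $g(v)\ge 1$, Corollary~\ref{corollary:multiple derivative and zero} (with the legs in $L_1'[v]$ as the $b_j$) gives that the vertex integral is $O(a_i^2)$ for some $i$ unless $d_{p(v)}\ge|L_1'[v]|$.
\item If $g(v)=0$, the DR cycle is the fundamental class and the ordinary string equation on $\oM_{0,|H[v]|}$ forces the integral to vanish unless $d_{p(v)}\ge|L_1'[v]|-1$.
\end{itemize}
So a nonzero contribution to the coefficient of $a_1\cdots a_{2g+m-1}$ requires $d_{p(v)}\ge|L_1'[v]|-\delta_{g(v),0}$ at every vertex. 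Since $g\ge 1$, at least one of the $m$ vertices has positive genus, and summing gives
\[
\sum_{i=1}^m d_i\ \ge\ \sum_{v}|L_1'[v]|-(m-1)\ =\ (2g+m-1)-(m-1)\ =\ 2g,
\]
contradicting $\sum d_i\le 2g-1$. That single inequality is the entire bookkeeping.
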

\begin{proof}
By Proposition~\ref{proposition:one-point}, we have $\<\>^\DR_g=0$ for $g\ge 2$. So we can assume that $m\ge 1$. Using the same arguments, as in Section~\ref{subsubsection:reformulation}, we see that it is sufficient to prove the following statement. 
\begin{lemma}
Let $g,m\ge 1$ and suppose that $\sum_{i=1}^m d_i\le 2g-1$. Then $\<\tau_0(e_1)\tau_{d_1}(e_{\alpha_1})\ldots\tau_{d_m}(e_{\alpha_m})\>^{\DR}_g=0$.
\end{lemma}
\begin{proof}
By Lemma~\ref{lemma:geometric formula for DR correlators}, the correlator $\<\tau_0(e_1)\tau_{d_1}(e_{\alpha_1})\ldots\tau_{d_m}(e_{\alpha_m})\>^{\DR}_g$ is equal to the coefficient of $a_1a_2\ldots a_{2g+m-1}$ in the polynomial
$$
\frac{1}{(2g+m-1)!}\sum_{\Gamma\in\AMST^m_{g,2g+m}}\int_{\DR_\Gamma(-\sum a_i,a_1,\ldots,a_{2g+m-1})}\lambda_g c_{g,2g+2m}(e_1^{2g+m}\otimes\otimes_{i=1}^m e_{\alpha_i})\prod_{i=1}^m\psi_{2g+m-1+i}^{d_i}.
$$
For a graph $\Gamma\in\AMST^m_{g,2g+m}$ denote by $L_1'(\Gamma)$ the set of legs of the first type, that are marked by numbers from the set~$\{1,2,\ldots,2g+m-1\}$. We compute
\begin{align}
&\int_{\DR_\Gamma(a_0,a_1,\ldots,a_{2g+m-1})}\lambda_g c_{g,2g+2m}(e_1^{2g+m}\otimes\otimes_{j=1}^m e_{\alpha_j})\prod_{j=1}^m\psi_{2g+m-1+j}^{d_j}=\label{big integral}\\
=&\prod_{h\in H^e_+(\Gamma)}a(h)\sum_{\nu\colon H^e(\Gamma)\to\{1,\ldots,N\}}\eta^{\nu(h)\nu(\iota(h))}\times\notag\\
&\times\prod_{v\in V(\Gamma)}\int_{\DR_{g(v)}\left(0,(a(l))_{l\in L_1[v]},(a(h))_{h\in H^e[v]}\right)}\lambda_{g(v)}\psi_0^{d_{p(v)}}c_{g,|H[v]|}(e_{\alpha_{p(v)}}\otimes e_1^{|L_1[v]|}\otimes\otimes_{h\in H^e[v]}e_{\nu(h)}).\notag
\end{align}
Here the first summation runs over all maps $\nu\colon H^e(\Gamma)\to\{1,\ldots,N\}$. Consider a vertex $v\in V(\Gamma)$. Suppose that $g(v)\ge 1$, then from Lemma~\ref{corollary:multiple derivative and zero} it follows that
\begin{multline}\label{eq:property in low vanishing}
\int_{\DR_{g(v)}\left(0,(a(l))_{l\in L_1[v]},(a(h))_{h\in H^e[v]}\right)}\lambda_{g(v)}\psi_0^{d_{p(v)}}c_{g,|H[v]|}(e_{\alpha_{p(v)}}\otimes e_1^{|L_1[v]|}\otimes\otimes_{h\in H^e[v]}e_{\nu(h)})=\\
=O(a_1^2)+\ldots+O(a_{2g+m-1}^2),
\end{multline}
unless $d_{l(v)}\ge |L_1'[v]|$. Suppose now that $g(v)=0$. Then $\DR_0\left(0,(a(l))_{l\in L_1[v]},(a(h))_{h\in H^e[v]}\right)=[\oM_{0,|H[v]|}]$. Suppose that $|H[v]\backslash L_1'[v]|\ge 3$, then using the string equation~\eqref{eq:string for F} we see that that the integral
$$
\int_{\oM_{0,|H[v]|}}\psi_0^{d_{p(v)}}c_{g,|H[v]|}(e_{\alpha_{p(v)}}\otimes e_1^{|L_1[v]|}\otimes\otimes_{h\in H^e[v]}e_{\nu(h)})
$$
is zero unless $d_{p(v)}\ge |L_1'[v]|$. Suppose that $|H[v]\backslash L_1'[v]|=2$. One of the half-edges from the set $H[v]\backslash L_1'[v]$ is the unique leg of the second type, incident to~$v$. Let $h$ be the second half-edge from the set $H[v]\backslash L_1'[v]$. If $h\in H^e[v]$, then let $\theta:=\nu(h)$. If $h\in L_1[v]\backslash L_1'[v]$, then let $\theta:=1$. We have
$$
\int_{\oM_{0,|H[v]|}}\psi_0^{d_{p(v)}}c_{g,|H[v]|}(e_{\alpha_{p(v)}}\otimes e_1^{|L_1[v]|}\otimes\otimes_{h\in H^e[v]}e_{\nu(h)})=
\begin{cases}
\eta_{\alpha_{p(v)}\theta},&\text{if $d_{p(v)}=|L_1'[v]|-1$};\\
0,&\text{otherwise}.
\end{cases}
$$
As a result, for any $g(v)$ we get that equation~\eqref{eq:property in low vanishing} holds unless $d_{p(v)}\ge |L_1'[v]|-\delta_{g(v),0}$. Note that at least one vertex in $\Gamma$ has non-zero genus. We obtain that the integral~\eqref{big integral} is equal to $O(a_1^2)+\ldots+O(a_{2g+m-1}^2)$, unless $\sum_{i=1}^m d_i\ge |L_1'(\Gamma)|-(m-1)=2g$. Therefore, the coefficient of $a_1a_2\ldots a_{2g+m-1}$ in~\eqref{big integral} is equal to zero if $\sum d_i\le 2g-1$. This completes the proof of the lemma. 
\end{proof}
The proposition is proved.
\end{proof}

Note that in general the vanishing property from Proposition~\ref{proposition:low vanishing} doesn't hold for the usual correlators of a cohomological field theory.


\section{Strong DR/DZ equivalence conjecture and the reduced potential}

The Dubrovin-Zhang hierarchies (or the hierarchies of topological type) were introduced in~\cite{DZ05}. Originally, they were defined for conformal semisimple Frobenius manifolds. This construction was later generalized in~\cite{BPS12b} (see also~\cite{BPS12a}). The construction of~\cite{BPS12b} associates a tau-symmetric hamiltonian hierarchy to any semisimple cohomological field theory.

In~\cite{Bur15} the author conjectured that for an arbitrary semisimple cohomological field theory the double ramification hierarchy is related to the Dubrovin-Zhang hierarchy by a Miura transformation. In this section we propose a stronger conjecture. The strong conjecture explicitly describes a Miura transformation between the two hierarchies. Moreover, it also describes a relation between their tau-structures. During the formulation of the strong conjecture we construct a certain transformation of the potential of a cohomological field theory that we call the reduced potential. We believe that this construction can have an independent interest. Finally, we check the strong conjecture for the examples where the original conjecture of~\cite{Bur15} was already proved.

\subsection{Brief recall of the Dubrovin-Zhang theory}\label{recallDZ}

Here we recall the construction of the Dubrovin-Zhang hierarchies. We follow the approach from~\cite{BPS12b} (see also~\cite{BPS12a}).

Consider a semisimple cohomological field theory $c_{g,n}\colon V^{\otimes n}\to H^\even(\oM_{g,n},\mbC)$. Introduce power series $(w^\top)^\alpha\in\mbC[[x,t^*_*,\eps]]$ by
$$
(w^\top)^\alpha:=\left.\eta^{\alpha\mu}\frac{\d^2 F}{\d t^\mu_0\d t^1_0}\right|_{t^1_0\mapsto t^1_0+x}.
$$
Let $(w^\top)^\alpha_n:=\d_x^n(w^\top)^\alpha$. From the string equation~\eqref{eq:string for F} it follows that
\begin{gather}\label{eq:property of wtop1}
\left.(w^\top)^\alpha_n\right|_{x=0}=t^\alpha_n+\delta_{n,1}\delta^{\alpha,1}+O(t^2)+O(\eps^2).
\end{gather}
Therefore, any power series in $t^\alpha_n$ and $\eps$ can be expressed as a power series in $\left(\left.(w^\top)^\alpha_n\right|_{x=0}-\delta_{n,1}\delta^{\alpha,1}\right)$ and $\eps$ in a unique way. In~\cite{BPS12b} the authors proved that for any $1\le\alpha,\beta\le N$ and $p,q\ge 0$ there exists a unique differential polynomial $\Omega^{\DZ}_{\alpha,p;\beta,q}\in\hcA^{[0]}_{w^1,\ldots,w^N}$ such that
$$
\left.\Omega^\DZ_{\alpha,p;\beta,q}\right|_{w^\alpha_n=(w^\top)^\alpha_n}=\left.\frac{\d^2 F}{\d t^\alpha_p\d t^\beta_q}\right|_{t^1_0\mapsto t^1_0+x}.
$$
In particular, $\Omega^\DZ_{\alpha,0;1,0}=\eta_{\alpha\mu}w^\mu$. The equations of the Dubrovin-Zhang hierarchy are given by
\begin{gather}\label{eq:DZ system}
\frac{\d w^\alpha}{\d t^\beta_q}=\eta^{\alpha\mu}\d_x\Omega^\DZ_{\mu,0;\beta,q}.
\end{gather}
Clearly, the series~$(w^\top)^\alpha$ is a solution. It is called the topological solution. The system~\eqref{eq:DZ system} has a hamiltonian structure. The Hamiltonians are given by
$$
\oh^\DZ_{\alpha,p}=\int\Omega^\DZ_{\alpha,p+1;1,0}dx,\quad p\ge 0.
$$ 
The hamiltonian operator $K^\DZ=((K^\DZ)^{\alpha\beta})$ has the form
$$
(K^\DZ)^{\alpha\beta}=\eta^{\alpha\beta}\d_x+O(\eps^2).
$$
We refer the reader to~\cite{BPS12b} for the construction of the operator $K^\DZ$. Finally, the Dubrovin-Zhang hierarchy has a tau-structure given by differential polynomials
$$
h^\DZ_{\alpha,p}=\Omega^\DZ_{\alpha,p+1;1,0},\quad p\ge -1.
$$
Since $h^\DZ_{\alpha,-1}=\eta_{\alpha\mu}w^\mu$, we see that the coordinates $w^\alpha$ are normal. The differential polynomials~$\Omega^\DZ_{\alpha,p;\beta,q}$ are the two-point functions of the hierarchy. The partition function $\tau=e^{\eps^{-2}F}$ is the tau-function of the topological solution~$(w^\top)^\alpha$.

\subsection{Double ramification hierarchy in the normal coordinates}

Here we discuss some properties of the double ramification hierarchy in the normal coordinates. 

We see that $h^\DR_{\alpha,-1}=\frac{\delta\og_{\alpha,0}}{\delta u^1}=\eta_{\alpha\mu}u^\mu+O(\eps)$. Therefore, we have the normal coordinates $\tu^\alpha(u)=\eta^{\alpha\mu}h^\DR_{\mu,-1}$. Denote by $K^\DR_{\tu}=\left(\left(K^\DR_{\tu}\right)^{\alpha\beta}\right)$ the operator $\eta^{\alpha\beta}\d_x$ in the coordinates $\tu^\alpha$:
\begin{gather}\label{eq:DR operator in normal}
\left(K^\DR_{\tu}\right)^{\alpha\beta}=\sum_{p,q\ge 0}\frac{\d\tu^\alpha(u)}{\d u^\mu_p}\d_x^p\circ\eta^{\mu\nu}\d_x\circ(-\d_x)^q\circ\frac{\d\tu^\beta(u)}{\d u^\nu_q}.
\end{gather}
\begin{lemma}\label{lemma:properties of the DR normal}
1. We have $\frac{\d\tu^\alpha(u)}{\d u^1}=\delta^{\alpha,1}$;\\
2. The Miura transformation $u^\alpha\mapsto\tu^\alpha(u)$ has the form
\begin{gather}\label{eq:form of tu}
\tu^\alpha(u)=u^\alpha+\d_x^2z^\alpha,
\end{gather}
where $z^\alpha\in\hcA^{[-2]}_{u^1,\ldots,u^N}$;\\
3. We have $\frac{\d}{\d\tu^1}K^\DR_\tu=0$;\\
4. The operator $K^\DR_{\tu}$ doesn't have a constant term: $\left(K^\DR_{\tu}\right)_0=0$.
\end{lemma}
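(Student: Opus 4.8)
\emph{Overview.} The four parts are largely independent, and I would establish them in the order 1, 2, 3, 4, using Part 1 in the proof of Part 3 and Part 2 in the proof of Part 4. Throughout, $\frac{\d}{\d u^1}$ means $\frac{\d}{\d u^1_0}$.

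\emph{Part 1.} Since $\tu^\alpha(u)=\eta^{\alpha\mu}h^\DR_{\mu,-1}$, I would just differentiate. The string relation $\frac{\d h^\DR_{\alpha,d}}{\d u^1}=h^\DR_{\alpha,d-1}$ for $d\ge-1$ (this is~\eqref{eq:string for hDR}, obtained in the proof of Proposition~\ref{proposition:string for FDR} by applying $\frac{\delta}{\delta u^1}$ to~\eqref{eq:string for og}), together with the convention $h^\DR_{\alpha,-2}=\eta_{\alpha,1}$, gives $\frac{\d h^\DR_{\mu,-1}}{\d u^1}=\eta_{\mu,1}$, hence $\frac{\d\tu^\alpha}{\d u^1}=\eta^{\alpha\mu}\eta_{\mu,1}=\delta^{\alpha,1}$.

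\emph{Part 2.} This is the main step. I would pass to the $p$-variables~\eqref{eq:u-p change}. Starting from~\eqref{DR Hamiltonians} for $\og_{\alpha,0}$ and using $h^\DR_{\alpha,-1}=\frac{\delta\og_{\alpha,0}}{\delta u^1}=\sum_{k\in\mbZ}e^{-ikx}\frac{\d\og_{\alpha,0}}{\d p^1_k}$, one finds
\[
h^\DR_{\alpha,-1}=\sum_{g\ge 0,\,m\ge 1}\frac{(-\eps^2)^g}{m!}\sum_{a_1,\ldots,a_m\in\mbZ}\left(\int_{\DR_g(0,-\sum a_i,a_1,\ldots,a_m)}\lambda_g c_{g,m+2}(e_\alpha\otimes e_1\otimes\otimes_{i=1}^m e_{\alpha_i})\right)\prod_{i=1}^m p^{\alpha_i}_{a_i}\,e^{i(\sum a_i)x},
\]
so that the Fourier mode of each monomial is $\sum a_i$. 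The $\eps^0$ part of $h^\DR_{\alpha,-1}$ equals $\eta_{\alpha\mu}u^\mu$, as already recorded before the Lemma. For $g\ge1$ I would reorder the marked points so that the $e_1$-point (of ramification index $-\sum a_i$) comes last and apply Corollary~\ref{corollary:derivative and zero} with $d=0$ (legitimate since $g\ge1$ and $m\ge1$): regarded through Hain's formula~\eqref{eq:Hain's formula} as a polynomial in $(a_1,\ldots,a_m,b)$, where $b$ is the ramification index of the $e_1$-point, the integral is divisible by $b^2$, and specializing $b=-\sum a_i$ shows that the corresponding coefficient polynomial in $h^\DR_{\alpha,-1}$ is divisible by $(\sum a_i)^2$. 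Thus every Fourier coefficient of $h^\DR_{\alpha,-1}-\eta_{\alpha\mu}u^\mu$ is divisible by the square of its mode, which is exactly the statement that $h^\DR_{\alpha,-1}-\eta_{\alpha\mu}u^\mu=\d_x^2 w_\alpha$ for a unique $w_\alpha\in\hcA^{[-2]}_N$. Setting $z^\alpha:=\eta^{\alpha\mu}w_\mu$ yields~\eqref{eq:form of tu}.

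\emph{Parts 3 and 4.} For Part 3: by Part 1 the coefficients $\frac{\d\tu^\alpha}{\d u^\mu_p}$ appearing in~\eqref{eq:DR operator in normal} are independent of $u^1_0$ (indeed $\frac{\d}{\d u^1_0}\frac{\d\tu^\alpha}{\d u^\mu_p}=\frac{\d}{\d u^\mu_p}\delta^{\alpha,1}=0$), and since $\frac{\d}{\d u^1_0}$ commutes with $\d_x$ and $\eta$ is constant, the operator~\eqref{eq:DR operator in normal}, written as a differential operator with coefficients in $\hcA_N$, has all its coefficients independent of $u^1_0$. On the other hand $\frac{\d\tu^\alpha_j}{\d u^1_0}=\delta^{\alpha,1}\delta_{j,0}$, so the change of variables $u\leftrightarrow\tu$ intertwines $\frac{\d}{\d u^1_0}$ with $\frac{\d}{\d\tu^1_0}$; combining the two gives $\frac{\d}{\d\tu^1}K^\DR_{\tu}=0$. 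For Part 4: by Part 2 the Miura transformation has the form $\tu^\alpha=u^\alpha+\d_x r^\alpha$ with $r^\alpha:=\d_x z^\alpha\in\hcA^{[-1]}_N$, and $\eta\d_x$ has no constant term, so $\left(K^\DR_\tu\right)_0=0$ by Lemma~\ref{lemma:constant term}. I expect Part 2 to be the only real obstacle: one must set up the $p$-variable form of $h^\DR_{\alpha,-1}$ carefully and recognise that Corollary~\ref{corollary:derivative and zero} delivers precisely the divisibility by the square of the Fourier mode; Parts 1, 3 and 4 are then short formal arguments.
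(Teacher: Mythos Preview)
Your proposal is correct and follows essentially the same route as the paper. The only cosmetic difference is in Part~2: the paper pushes forward along the map forgetting the $e_1$-point (using the unit axiom for the CohFT) and then applies Lemma~\ref{proposition:divisibility} directly to obtain divisibility by $(\sum a_i)^2$, whereas you invoke Corollary~\ref{corollary:derivative and zero} with $d=0$ and then specialize $b=-\sum a_i$; since the Corollary is itself derived from Lemma~\ref{proposition:divisibility} via exactly that push-forward, the two arguments are the same in substance.
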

\begin{proof}
We have $\tu^\alpha(u)=\eta^{\alpha\mu}\frac{\delta\og_{\mu,0}}{\delta u^1}$. Since $\frac{\d\og_{\mu,0}}{\d u^1}=\int\eta_{\mu\nu}u^\nu dx$, part 1 is clear.

For part 2 we write:
\begin{align*}
&\eta^{\alpha\mu}\frac{\delta\og_{\mu,0}}{\delta u^1}=\\
=&\eta^{\alpha\mu}\sum_{\substack{g\ge 0\\n\ge 1}}\frac{(-\eps^2)^g}{n!}\sum_{a_1,\ldots,a_n\in\mbZ}\left(\int_{\DR_g(-\sum a_i,0,a_1,\ldots,a_n)}\lambda_g c_{g,n+2}(e_1\otimes e_\mu\otimes\otimes_{i=1}^n e_{\alpha_i})\right)\left(\prod_{i=1}^n p^{\alpha_i}_{a_i}\right) e^{ix\sum a_i}=\\
=&u^\alpha+\eta^{\alpha\mu}\sum_{g,n\ge 1}\frac{(-\eps^2)^g}{n!}\sum_{a_1,\ldots,a_n\in\mbZ}\left(\int_{\pi_*\DR_g(-\sum a_i,0,a_1,\ldots,a_n)}\lambda_g c_{g,n+1}(e_\mu\otimes\otimes_{i=1}^n e_{\alpha_i})\right)\left(\prod_{i=1}^n p^{\alpha_i}_{a_i}\right) e^{ix\sum a_i},
\end{align*}
where $\pi\colon\oM_{g,n+2}\to\oM_{g,n+1}$ is the forgetful map that forgets the first marked point. By Lemma~\ref{proposition:divisibility}, the integral $\int_{\pi_*\DR_g(-\sum a_i,0,a_1,\ldots,a_n)}\lambda_g c_{g,n+1}(e_\mu\otimes\otimes_{i=1}^n e_{\alpha_i})$ is a polynomial in $a_1,\ldots,a_n$ divisible by $(a_1+\ldots+a_n)^2$. Therefore, the function $\tu^\alpha(u)$ has the form $\tu^\alpha(u)=u^\alpha+\d_x^2z^\alpha$ for some $z^\alpha\in\hcA^{[-2]}_{u^1,\ldots,u^N}$. Part 2 is proved.

From formula~\eqref{eq:DR operator in normal} and part~1 it easily follows that $\frac{\d}{\d u^1}K^\DR_\tu=0$. Using again part~1 we conclude that $\frac{\d}{\d\tu^1}K^\DR_\tu=0$. Therefore, part~3 is proved.

Part 4 follows from part 2 and Lemma~\ref{lemma:constant term}.
\end{proof}

Consider the following solution of the double ramification hierarchy in the normal coordinates:
$$
(\tu^\str)^\alpha(x,t^*_*;\eps):=\tu^\alpha(u^\str;u^\str_x,\ldots).
$$
Clearly, $\tau^\DR=e^{\eps^{-2}F^\DR}$ is the tau-function of this solution. In particular, we have $(\tu^\str)^\alpha=\left.\eta^{\alpha\mu}\frac{\d^2 F^\DR}{\d t^\mu_0\d t^1_0}\right|_{t^1_0\mapsto t^1_0+x}$. Therefore, from the string~\eqref{eq:string for FDR} and the dilaton~\eqref{eq:dilaton for FDR} equations it immediately follows that
\begin{align}
&\left.(\tu^\str)^\alpha\right|_{t^*_*=0}=\delta^{\alpha,1}x,\notag\\
&\frac{\d(\tu^\str)^\alpha}{\d t^1_0}-\sum_{n\ge 0}t^\gamma_{n+1}\frac{\d(\tu^\str)^\alpha}{\d t^\gamma_n}=\delta^{\alpha,1},\label{eq:string for tu}\\
&\frac{\d(\tu^\str)^\alpha}{\d t^1_1}-\eps\frac{\d(\tu^\str)^\alpha}{\d\eps}-x\frac{\d(\tu^\str)^\alpha}{\d x}-\sum_{n\ge 0}t^\gamma_n\frac{\d(\tu^\str)^\alpha}{\d t^\gamma_n}=0.\label{eq:dilaton for tu}
\end{align}

\subsection{Strong DR/DZ equivalence conjecture}\label{subsection:strong conjecture1}

The conjecture of~\cite{Bur15} says that for an arbitrary semisimple cohomological field theory the Dubrovin-Zhang and the double ramification hierarchies are related by a Miura transformation that is close to identity. We call this conjecture the DR/DZ equivalence conjecture. In order to formulate the strong DR/DZ equivalence conjecture, we have to introduce a certain canonical transformation for the potential of a cohomological field theory. Consider an arbitrary cohomological field theory $c_{g,n}\colon V^{\otimes n}\to H^\even(\oM_{g,n},\mbC)$.

\begin{proposition}\label{proposition:reduced potential}
1) There exists a unique differential polynomial $\cP\in\hcA^{[-2]}_{w^1,\ldots,w^N}$ such that the power series $F^\red\in\mbC[[t^*_*,\eps]]$, defined by
\begin{gather}\label{eq:definition of Fred}
F^\red:=F+\left.\cP(w^\top;w^\top_x,w^\top_{xx},\ldots)\right|_{x=0},
\end{gather}
satisfies the following vanishing property:
\begin{gather}\label{eq:property of Fred}
\Coef_{\eps^{2g}}\left.\frac{\d^n F^\red}{\d t^{\alpha_1}_{d_1}\ldots\d t^{\alpha_n}_{d_n}}\right|_{t^*_*=0}=0,\quad\text{if}\quad \sum_{i=1}^n d_i\le 2g-2.
\end{gather}
The power series $F^\red$ is called the reduced potential of the cohomological field theory.\\
2) The reduced potential~$F^\red$ satisfies the string and the dilaton equations:
\begin{align}
&\frac{\d F^\red}{\d t^1_0}=\sum_{n\ge 0}t^\alpha_{n+1}\frac{\d F^\red}{\d t^\alpha_n}+\frac{1}{2}\eta_{\alpha\beta}t^\alpha_0 t^\beta_0,\label{eq:string for Fred}\\
&\frac{\d F^\red}{\d t^1_1}=\eps\frac{\d F^\red}{\d\eps}+\sum_{n\ge 0}t^\alpha_n\frac{\d F^\red}{\d t^\alpha_n}-2F^{\red}+\eps^2\frac{N}{24}.\label{eq:dilaton for Fred}
\end{align}
\end{proposition}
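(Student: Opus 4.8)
The plan is to build $\cP$ order by order in $\eps^2$ and simultaneously in the number of $t$-variables, exploiting the fact that the transformation $F \mapsto F + \left.\cP(w^\top;w^\top_x,\ldots)\right|_{x=0}$ is exactly a normal Miura transformation of the Dubrovin-Zhang type (compare Section~\ref{subsection:normal Miura}), so that it preserves the string and dilaton equations automatically once we know it preserves them at the level of the $\cF$-generated transformations. First I would set $\cP = \sum_{g\ge 1}\eps^{2g}\cP_g$ with $\cP_g \in \cA^{[\le 2g-2]}_{w^1,\ldots,w^N}$ of the appropriate differential degree (so that $\eps^{2g}\cP_g$ has total degree $-2$ in the grading where $\deg\eps=-1$, $\deg w^\alpha_i = i$; concretely $\cP_g$ is a polynomial combination of $w^\alpha_i$ of differential degree $2g-2$ with coefficients power series in $w^*$). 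The key observation is that adding $\left.\cP_g(w^\top;\ldots)\right|_{x=0}$ to $F$ changes $\Coef_{\eps^{2g}}\frac{\d^n F}{\d t^{\alpha_1}_{d_1}\cdots\d t^{\alpha_n}_{d_n}}\big|_{t^*_*=0}$ precisely in the range $\sum d_i \le 2g-2$ (since $w^\top_i|_{x=0} = t_i + \delta_{i,1}\delta^{\cdot,1} + O(t^2)+O(\eps^2)$ by~\eqref{eq:property of wtop1}, differentiating $\cP_g$ of differential degree $2g-2$ and evaluating at $t^*_*=0$ produces a nonzero contribution only when the total number of derivative slots used is at most $2g-2$), and this change is triangular: $\cP_g$ enters only at order $\eps^{2g}$ and does not affect lower genus, nor does it affect the range $\sum d_i \ge 2g-1$ at genus $g$.

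The existence and uniqueness of $\cP$ then follows by a straightforward induction. At genus $g$, assume $\cP_1,\ldots,\cP_{g-1}$ have been chosen so that~\eqref{eq:property of Fred} holds in genera $< g$. The correlators $\Coef_{\eps^{2g}}\frac{\d^n(F+\sum_{h<g}\eps^{2h}\cP_h)}{\d t^{\alpha_1}_{d_1}\cdots}\big|_{t^*_*=0}$ with $\sum d_i \le 2g-2$ form a finite list (finitely many $n$ and finitely many multi-indices), and I must choose the finitely many coefficients of $\cP_g$ to cancel all of them. The map from the coefficients of $\cP_g$ to these correlators is, by the computation above with $w^\top_i|_{x=0}$, upper-triangular with invertible diagonal (the leading term of $\frac{\d^n}{\d t^{\alpha_1}_{d_1}\cdots}\left.\cP_g(w^\top)\right|_{x=0}$ at $t^*_*=0$ is just $\frac{\d^n \cP_g}{\d w^{\alpha_1}_{d_1}\cdots\d w^{\alpha_n}_{d_n}}\big|_{w^*_*=0}$, the corresponding coefficient of $\cP_g$), so it is invertible; this pins down $\cP_g$ uniquely and proves part~1. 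One subtlety to check carefully is that the prescription is consistent — i.e.\ that the finitely many equations do not overdetermine the finitely many unknowns — but this is exactly the triangularity statement, so there is no obstruction.

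For part~2, I would argue that $F^\red$ inherits the string and dilaton equations because the DZ partition function $\tau = e^{\eps^{-2}F}$ satisfies them (equations~\eqref{eq:string for F} with $\<\tau_0(e_1)\>_1$, and~\eqref{eq:dilaton for F}) and the transformation~\eqref{eq:definition of Fred} is of normal Miura type. More precisely: by the discussion in Section~\ref{subsection:normal Miura}, if $\cP = \eps^2\cF$ generates a normal Miura transformation then $e^{\eps^{-2}F + \cP(w^\top)|_{x=0}}$ is again the tau-function of a solution of a hierarchy in normal coordinates obtained from the DZ solution, and such transformations act on the space of tau-functions compatibly with the $\tau \mapsto e^{\eps^{-2}(a(\eps)+\sum b_{\gamma,r}t^\gamma_r)}\tau$ ambiguity~\eqref{eq:ambiguity for tau-function}; the string and dilaton flows are among the symmetries that are preserved. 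Alternatively, and perhaps more transparently, I would show directly that the operators $S := \frac{\d}{\d t^1_0} - \sum_n t^\alpha_{n+1}\frac{\d}{\d t^\alpha_n}$ and $D := \frac{\d}{\d t^1_1} - \eps\frac{\d}{\d\eps} - \sum_n t^\alpha_n\frac{\d}{\d t^\alpha_n}$ annihilate, respectively, $F^\red - \tfrac12\eta_{\alpha\beta}t^\alpha_0 t^\beta_0$ up to the expected correction and $F^\red + 2F^\red - \eps^2\tfrac{N}{24}$: since $w^\top$ transforms simply under these flows (the string and dilaton equations for $F$ give $S(w^\top)^\alpha = \delta^{\alpha,1}$ after the shift, and a dilaton-type identity for $(w^\top)^\alpha$ analogous to~\eqref{eq:dilaton for tu}), the corrections $S(\left.\cP(w^\top)\right|_{x=0})$ and $D(\left.\cP(w^\top)\right|_{x=0})$ are again expressible as $\left.(\text{something})(w^\top)\right|_{x=0}$, and uniqueness from part~1 forces that correction term to be the one dictated by~\eqref{eq:string for Fred}--\eqref{eq:dilaton for Fred}. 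The main obstacle I anticipate is precisely this last point — verifying that the string/dilaton-corrected $F^\red$ still satisfies the vanishing property~\eqref{eq:property of Fred}, so that uniqueness applies and the only possible form of the equation is the claimed one; this requires checking that $S$ and $D$ preserve the degree range $\sum d_i \le 2g-2$ appropriately (noting $S$ lowers $\sum d_i$ by one while $D$ preserves the relevant grading), which is a short but essential bookkeeping argument.
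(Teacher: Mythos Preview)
Your existence argument in part 1 has a genuine gap. You claim that the leading term of $\frac{\d^n}{\d t^{\alpha_1}_{d_1}\cdots\d t^{\alpha_n}_{d_n}}\left.\cP_g(w^\top)\right|_{x=0}$ at $t^*_*=0$ is $\frac{\d^n \cP_g}{\d w^{\alpha_1}_{d_1}\cdots\d w^{\alpha_n}_{d_n}}\big|_{w^*_*=0}$, but by~\eqref{eq:property of wtop1} the correct evaluation point is $w^\gamma_m=\delta^{\gamma,1}\delta_{m,1}$, not $w^*_*=0$. Because $w^1_1$ is shifted to $1$, a monomial in $\cP_g$ containing factors of $w^1_1$ contributes to \emph{several} correlators of strictly lower $\sum d_i$. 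Consequently the linear system is not square: the correlator conditions at genus $g$ with $\sum d_i\le 2g-2$ are indexed by multisets with $\sum d_i\le 2g-2$, while the coefficients of $\cP_g$ are indexed by monomials of differential degree \emph{exactly} $2g-2$; the former are obtained from the latter by deleting $(1,1)$-factors, so there are strictly more equations than unknowns. Your dismissal of the consistency issue (``this is exactly the triangularity statement'') is therefore incorrect. (The ``finitely many'' claim is also false --- $n$ is unbounded --- but that is secondary.)

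The missing compatibility is precisely the dilaton equation: after fixing the coefficient of $(w^1_1)^{2g-2-k}\cdot(\text{rest})$ to kill a correlator with $\sum d_i=k$, the induced contribution to the correlator obtained by inserting an extra $\tau_1(e_1)$ matches the target only because $\<\tau_1(e_1)\prod\tau_{d_i}(e_{\alpha_i})\>_g=(2g-2+n)\<\prod\tau_{d_i}(e_{\alpha_i})\>_g$. This creates a logical circularity in your plan: existence in part 1 requires the dilaton equation, which you propose to derive only afterwards in part 2. The paper resolves this by interleaving the two: it constructs a sequence $F^{(j,k)}$ (genus $j$, killing correlators with $\sum d_i\le k$), and at each step proves \emph{first} that $(O_\dil+2)F^{(j,k)}=\eps^2 N/24$ is preserved (using $O_\dil\big((w^\top)^\alpha_n|_{x=0}\big)=n\,(w^\top)^\alpha_n|_{x=0}$, so the subtracted degree-$(2j-2)$ term is annihilated by $O_\dil+2$), and \emph{then} uses this dilaton identity to verify that the new subtraction does not spoil the vanishing already achieved for $\sum d_i\le k$. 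The string equation is handled by a parallel induction. Your uniqueness argument, by contrast, is essentially sound and matches the paper's: taking the term of $\cP-\cP'$ with the fewest $w^1_1$-factors gives the required contradiction.
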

\begin{proof}
Let us construct a sequence of power series
$$
F^{(1,0)},F^{(2,0)},F^{(2,1)},F^{(2,2)},\ldots,F^{(j,0)},F^{(j,1)},\ldots,F^{(j,2j-2)},\ldots\in\mbC[[t^*_*,\eps]]
$$
by the following recursion formulas. We define the series $F^{(1,0)}$ by
\begin{gather}\label{eq:definition of F10}
F^{(1,0)}:=F-\sum_{n\ge 1}\frac{\eps^2}{n!}\<\tau_0(e_{\alpha_1})\ldots\tau_0(e_{\alpha_n})\>_1\left.\left((w^\top)^{\alpha_1}\ldots(w^\top)^{\alpha_n}\right)\right|_{x=0}.
\end{gather}
Suppose we have constructed the series $F^{(j,k)}$. Introduce correlators $\<\tau_{d_1}(e_{\alpha_1})\ldots\tau_{d_n}(e_{\alpha_n})\>^{(j,k)}_g$ by 
$$
\<\tau_{d_1}(e_{\alpha_1})\ldots\tau_{d_n}(e_{\alpha_n})\>^{(j,k)}_g:=\left.\Coef_{\eps^{2g}}\frac{\d^n F^{(j,k)}}{\d t^{\alpha_1}_{d_1}\ldots\d t^{\alpha_n}_{d_n}}\right|_{t^*_*=0}.
$$
If $k<2j-2$, then we define the series $F^{(j,k+1)}$ by 
\begin{multline}\label{eq:definition of Fjk+1}
F^{(j,k+1)}:=\\
=F^{(j,k)}-\sum_{n\ge 0}\sum_{\substack{d_1,\ldots,d_n\ge 0\\\sum d_i=k+1}}\frac{\eps^{2j}}{n!}\<\tau_{d_1}(e_{\alpha_1})\ldots\tau_{d_n}(e_{\alpha_n})\>^{(j,k)}_j\left.\left((w^\top)_{d_1}^{\alpha_1}\ldots(w^\top)_{d_n}^{\alpha_n}((w^\top)^1_1)^{2j-2-k-1}\right)\right|_{x=0}.
\end{multline}
If $k=2j-2$, then we define the series $F^{(j+1,0)}$ by an analogous formula
\begin{gather}\label{eq:definition of Fj+10}
F^{(j+1,0)}:=F^{(j,2j-2)}-\sum_{n\ge 0}\frac{\eps^{2j+2}}{n!}\<\tau_0(e_{\alpha_1})\ldots\tau_0(e_{\alpha_n})\>^{(j,2j-2)}_{j+1}\left.\left((w^\top)^{\alpha_1}\ldots(w^\top)^{\alpha_n}((w^\top)^1_1)^{2j}\right)\right|_{x=0}.
\end{gather}

Define a linear differential operator $O_\dil$ by $O_\dil:=\frac{\d}{\d t^1_1}-\sum_{n\ge 0}t^\alpha_n\frac{\d}{\d t^\alpha_n}-\eps\frac{\d}{\d\eps}$. Let us prove the dilaton equation
\begin{gather}\label{eq:dilaton for Fjk}
(O_\dil+2)F^{(j,k)}=\frac{N}{24}\eps^2.
\end{gather}
Note that from the dilaton equation~\eqref{eq:dilaton for F} for $F$ it follows that
\begin{gather}\label{eq:dilaton for wtop2}
O_\dil\left(\left.(w^\top)^\alpha_n\right|_{x=0}\right)=n\left.(w^\top)^\alpha_n\right|_{x=0}.
\end{gather}
Using this equation it is easy to see that the dilaton equation~\eqref{eq:dilaton for Fjk} holds for $F^{(1,0)}$. Equation~\eqref{eq:dilaton for Fjk} for all $F^{(j,k)}$ is proved by the induction procedure.

Let us prove that
\begin{gather}\label{eq:vanishing for Fjk}
\left.\Coef_{\eps^{2g}}\frac{\d^n F^{(j,k)}}{\d t^{\alpha_1}_{d_1}\ldots\d t^{\alpha_n}_{d_n}}\right|_{t^*_*=0}=0,\quad\text{if}\quad 1\le g\le j\quad\text{and}\quad\sum d_i\le
\begin{cases}
2g-2,&\text{if $g<j$},\\
k,&\text{if $g=j$}.
\end{cases}
\end{gather}
Note that the string equation~\eqref{eq:string for F} for~$F$ implies that the series~$\left.(w^\top)^\alpha_n\right|_{x=0}$ has the form
\begin{gather}\label{eq:property of wtop2}
\left.(w^\top)^\alpha_n\right|_{x=0}=t^\alpha_n+\delta_{n,1}\delta^{\alpha,1}+R^\alpha_n(t^*_*)+O(\eps^2),\quad R^\alpha_n\in\mbC[[t^*_*]],
\end{gather}
where the coefficient of a monomial $t^{\alpha_1}_{d_1}\ldots t^{\alpha_n}_{d_n}$ in the series $R^\alpha_n$ is equal to zero unless $\sum d_i\ge n+1$. This equation immediately implies that the series $F^{(1,0)}$ satisfies~\eqref{eq:vanishing for Fjk}. We proceed by induction. Suppose that equation~\eqref{eq:vanishing for Fjk} is true for $F^{(j,k)}$. Suppose that $k<2j-2$. Note that the dilaton equation~\eqref{eq:dilaton for Fjk} for $F^{(j,k)}$ together with the vanishing~\eqref{eq:vanishing for Fjk} for $F^{(j,k)}$ imply that a correlator $\<\tau_1(e_1)\tau_{d_1}(e_{\alpha_1})\ldots\tau_{d_n}(e_{\alpha_n})\>^{(j,k)}_j$ is equal to zero, if $\sum d_i=k$. Together with~\eqref{eq:property of wtop2} it implies that the series $F^{(j,k+1)}$ satisfies the vanishing~\eqref{eq:vanishing for Fjk}. If $k=2j-2$, then the same argument shows that $F^{(j+1,0)}$ satisfies the vanishing~\eqref{eq:vanishing for Fjk}. Thus, equation~\eqref{eq:vanishing for Fjk} is proved.

From the recursion formulas~\eqref{eq:definition of Fjk+1} and~\eqref{eq:definition of Fj+10} it follows that if $j_1\le j_2$, then $F^{(j_1,k_1)}-F^{(j_2,k_2)}=O(\eps^{2j_1})$. Therefore, the limit $\lim_{j\to\infty}F^{(j,2j-2)}$ is well-defined. Let us denote it by~$F^{\red}$. From formulas~\eqref{eq:definition of F10},~\eqref{eq:definition of Fjk+1} and~\eqref{eq:definition of Fj+10} it follows that the series~$F^{\red}$ has the form~\eqref{eq:definition of Fred} for some differential polynomial $\cP\in\hcA^{[-2]}_{w^1,\ldots,w^N}$. The vanishing~\eqref{eq:property of Fred} for~$F^\red$ is clear from the vanishing~\eqref{eq:vanishing for Fjk} for $F^{(j,k)}$. So the existence statement in part 1 of the proposition is proved.

Let us prove the uniqueness. Suppose that we have two differential polynomials $\cP, \cP'\in\hcA^{[-2]}_{w^1,\ldots,w^N}$ such that the vanishing property~\eqref{eq:property of Fred} holds for both of them. Let $\cQ:=\cP-\cP'$ and $\cQ=\sum_{g\ge 1}\eps^{2g}\cQ_g$, $\cQ_g\in\cA_{w^1,\ldots,w^N}$, $\deg\cQ_g=2g-2$. We have
\begin{gather}\label{eq:vanishing of sumQ}
\Coef_{\eps^{2j}}\frac{\d^n}{\d t^{\alpha_1}_{d_1}\ldots\d t^{\alpha_n}_{d_n}}\left.\left(\sum_{g\ge 1}\left.\eps^{2g}\cQ_g(w^\top;w^\top_x,\ldots)\right|_{x=0}\right)\right|_{t^*_*=0}=0,\quad\text{if $\sum d_i\le 2j-2$}.
\end{gather}
Let $g_0$ be the minimal $g$ such that $\cQ_g\ne 0$. Let us decompose $\cQ_{g_0}$ in the following way:
$$
\cQ_{g_0}=\sum_{i=0}^{2g_0-2}\cQ_{g_0}^i(w^1_1)^{2g_0-2-i},\quad \cQ^i_{g_0}\in\cA_{w^1,\ldots,w^N},\quad\deg\cQ^i_{g_0}=i,
$$
where differential polynomials $\cQ^i_{g_0}$ don't depend on $w^1_1$. Let $i_0$ be the minimal $i$ such that $\cQ_{g_0}^i\ne 0$. From~\eqref{eq:property of wtop2} it follows that
$$
\sum_{g\ge 1}\left.\eps^{2g}\cQ_g(w^\top;w^\top_x,\ldots)\right|_{x=0}=\eps^{2g_0}\left(\left.\cQ^{i_0}_{g_0}\right|_{w^\alpha_n=t^\alpha_n}+R(t^*_*)+O(\eps^2)\right),
$$
where the coefficient of a monomial $t^{\alpha_1}_{d_1}\ldots t^{\alpha_n}_{d_n}$ in the power series $R(t^*_*)\in\mbC[[t^*_*]]$ is equal to zero unless $\sum d_i\ge i_0+1$. Clearly, the vanishing~\eqref{eq:vanishing of sumQ} implies that $\cQ^{i_0}_{g_0}=0$. This is a contradiction. Thus, the uniqueness is proved. So part 1 of the proposition is proved.

Consider part 2. The dilaton equation~\eqref{eq:dilaton for Fred} for $F^{\red}$ obviously follows from the dilaton equation~\eqref{eq:dilaton for Fjk} for $F^{(j,k)}$. Let $O_\str:=\frac{\d}{\d t^1_0}-\sum_{n\ge 0}t^\alpha_{n+1}\frac{\d}{\d t^\alpha_n}$. Clearly, for the string equation~\eqref{eq:string for Fred} for~$F^{\red}$ it is enough to prove that
\begin{gather}\label{eq:string for Fjk}
O_\str F^{(j,k)}=\frac{1}{2}\eta_{\alpha\beta}t^\alpha_0 t^\beta_0.
\end{gather}
We again proceed by induction. From the string equation~\eqref{eq:string for F} for~$F$ it follows that
\begin{gather}\label{eq:string for wtop}
O_\str(w^\top)^\alpha_n=\delta_{n,0}\delta^{\alpha,1}.
\end{gather}
Note that a correlator $\<\tau_0(e_1)\tau_0(e_{\alpha_1})\ldots\tau_0(e_{\alpha_n})\>_1$ is zero unless $n=0$. Therefore, the series~$F^{(1,0)}$ satisfies the string equation~\eqref{eq:string for Fjk}. Suppose that we have proved the string equation~\eqref{eq:string for Fjk} for~$F^{(j,k)}$. Suppose $k<2j-2$. Then the vanishing~\eqref{eq:vanishing for Fjk} implies that a correlator $\<\tau_0(e_1)\tau_{d_1}(e_{\alpha_1})\ldots\tau_{d_n}(e_{\alpha_n})\>^{(j,k)}_j$ is equal to zero, if $\sum d_i=k+1$. Therefore, from recursion~\eqref{eq:definition of Fjk+1} and equation~\eqref{eq:string for wtop} it follows that the series $F^{(j,k+1)}$ satisfies the string equation~\eqref{eq:string for Fjk}. If $k=2j-2$, then the same argument shows that the series $F^{(j+1,0)}$ satisfies the string equation~\eqref{eq:string for Fjk}. The proposition is proved.
\end{proof}

Recall that by $\tu^\alpha(u)$ we denote the normal coordinates of the double ramification hierarchy: $\tu^\alpha(u)=\eta^{\alpha\mu}h^{\DR}_{\mu,-1}$. Suppose our cohomological field theory is semisimple. The differential polynomial~$\cP$ from Proposition~\ref{proposition:reduced potential} defines some normal Miura transformation.

\begin{conjecture}[Strong DR/DZ equivalence conjecture]\label{conjecture:strong}
Consider a semisimple cohomological field theory, the associated Dubrovin-Zhang hierarchy and the double ramification hierarchy with their tau-structures. Then the normal Miura transformation defined by the differential polynomial~$\cP$ maps the Dubrovin-Zhang hierarchy to the double ramification hierarchy written in the normal coordinates $\tu^\alpha$.
\end{conjecture}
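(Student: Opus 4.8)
The plan is to reduce the conjecture to the single identity of power series $F^\red=F^\DR$, and then to establish that identity, which is where the substantive work --- and the obstruction to a uniform proof --- lies. For the reduction, the key point is that a tau-symmetric hamiltonian hierarchy written in normal coordinates, with $\oh_{1,0}$ generating the spatial translations and $\left.h_{\alpha,p}\right|_{u^*_*=0}=0$, is completely determined by its partition function $e^{\eps^{-2}F}$: the string equation forces its string solution to be $(\ustr)^\alpha=\left.\eta^{\alpha\mu}\frac{\d^2 F}{\d t^\mu_0\d t^1_0}\right|_{t^1_0\mapsto t^1_0+x}$, with $\left.(\ustr)^\alpha_n\right|_{x=0}=t^\alpha_n+\delta_{n,1}\delta^{\alpha,1}+O(t^2)+O(\eps^2)$, so that substituting the string solution is an injective operation on $\hcA^{[0]}_N$; hence the two-point functions $\Omega_{\alpha,p;\beta,q}$ are recovered as the unique differential polynomials whose values along the string solution are $\frac{\d^2F}{\d t^\alpha_p\d t^\beta_q}$ (after $t^1_0\mapsto t^1_0+x$), the flows are $\frac{\d u^\alpha}{\d t^\beta_q}=\eta^{\alpha\mu}\d_x\Omega_{\mu,0;\beta,q}$, and, by~\eqref{eq:Omega-h relation}, the tau-structure is $h_{\alpha,p}=\Omega_{\alpha,p+1;1,0}$ for $p\ge 0$ and $h_{\alpha,-1}=\eta_{\alpha\mu}u^\mu$. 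Now, by Section~\ref{subsection:normal Miura}, the normal Miura transformation attached to $\cF=\cP$ multiplies the tau-function of the topological solution of the Dubrovin--Zhang hierarchy by $\left.e^{\eps^{-2}\cP(w^\top;w^\top_x,\ldots)}\right|_{x=0}$, so it produces the tau-symmetric hierarchy in normal coordinates whose partition function is $e^{\eps^{-2}F^\red}$ in view of~\eqref{eq:definition of Fred}; since the double ramification hierarchy in the coordinates $\tu^\alpha$ is itself a tau-symmetric hierarchy in normal coordinates with partition function $e^{\eps^{-2}F^\DR}$, the conjecture is equivalent to $F^\red=F^\DR$.

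To prove $F^\red=F^\DR$ one first notes that in genus $0$ it is automatic: $\cP\in\hcA^{[-2]}_{w^1,\ldots,w^N}$ has no $\eps^0$-component, so $F^\red_0=F_0$, while $F^\DR_0=F_0$ by Lemma~\ref{lemma:genus 0 tau}. For higher genus both potentials satisfy the string and dilaton equations (Proposition~\ref{proposition:string for FDR} and Proposition~\ref{proposition:reduced potential}), the divisor and homogeneity equations in the Gromov--Witten and homogeneous cases, and --- crucially --- the \emph{same} low-degree vanishing $\Coef_{\eps^{2g}}\left.\frac{\d^n F}{\d t^{\alpha_1}_{d_1}\cdots\d t^{\alpha_n}_{d_n}}\right|_{t^*_*=0}=0$ for $\sum d_i\le 2g-2$ (Proposition~\ref{proposition:low vanishing} and property~\eqref{eq:property of Fred}). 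These identities eliminate most correlators in each genus but do not close the problem, so the remaining ones must be matched directly. For genus $1$ and an arbitrary semisimple cohomological field theory one extracts $\cP_1$ from the recursion in the proof of Proposition~\ref{proposition:reduced potential}, computes $F^\DR_1$ from Proposition~\ref{proposition:one-point} and the geometric formula of Lemma~\ref{lemma:geometric formula for DR correlators}, and compares with $F^\red_1$ built from the Dubrovin--Zhang universal genus-$1$ formula. For the trivial cohomological field theory and the full Hodge class one evaluates the $\lambda_g\psi_1^d$ integrals over $\DR_g$ via~\eqref{eq:Hain's formula} and matches them with the known Dubrovin--Zhang potentials; for the Gromov--Witten theory of $\CP^1$ one uses the Toda structure of the theory; for Witten's $3$-, $4$- and $5$-spin classes one uses the identification of the Dubrovin--Zhang hierarchy with the corresponding Gelfand--Dickey hierarchy together with Pixton's relations in the tautological ring of $\oM_{g,n}$.

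The main obstacle is the absence of a uniform higher-genus mechanism: in contrast with the string, dilaton, divisor and homogeneity equations, there is no closed recursion generating all double ramification correlators, so every verification of $F^\red=F^\DR$ beyond genus $0$ ultimately rests on case-specific geometric input --- the vanishing of particular tautological classes, Hain's formula on the compact-type locus, and Pixton's relations --- and it is the control of these ingredients, rather than the formal reduction above, that carries the real difficulty; a proof of the conjecture in full generality would presumably require a new structural relation in the cohomology of $\oM_{g,n}$ underlying all these cases simultaneously.
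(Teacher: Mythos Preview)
This statement is a \emph{conjecture}; the paper does not prove it in general, and you correctly recognise this in your final paragraph. What the paper does is (i) show the conjecture is equivalent to $F^\DR=F^\red$ (Proposition~\ref{proposition:reformulation}), (ii) give a practical sufficient condition (Proposition~\ref{proposition:sufficient condition for strong}), and (iii) verify the sufficient condition in the listed examples and in genus~$1$. Your proposal reproduces (i) and sketches (iii), but with a real gap in (i) and a different, less efficient route to (iii).

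\textbf{Gap in the reduction.} Your claim that ``a tau-symmetric hamiltonian hierarchy written in normal coordinates is completely determined by its partition function'' is the crux, but you only recover the two-point functions, the flows, and the tau-structure. You never argue that the \emph{hamiltonian operator} is determined, and the conjecture requires $K^\DZ_{\tu}=K^\DR_{\tu}$. Knowing the flows and the Hamiltonians only gives $(K^\DZ_{\tu}-K^\DR_{\tu})^{\alpha\mu}\frac{\delta\oh_{\beta,q}}{\delta\tu^\mu}=0$ for all $(\beta,q)$; passing from this to $K^\DZ_{\tu}=K^\DR_{\tu}$ is nontrivial. In the paper this step uses that both operators have vanishing constant term (Lemma~\ref{lemma:constant term} plus Lemma~\ref{lemma:properties of the DR normal}) and the weak quasi-Miura argument from~\cite{BPS12b}. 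Without this, your ``equivalence'' only shows that $F^\DR=F^\red$ implies the two hierarchies have the same equations and tau-structure, not that they coincide as hamiltonian systems.

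\textbf{Different route to the examples.} For the verified cases the paper does not compute $F^\red$ and $F^\DR$ and compare correlators as you suggest. Instead it invokes Proposition~\ref{proposition:sufficient condition for strong}: once one already knows (from \cite{Bur15,BR14,BG15}) an explicit Miura transformation between the Hamiltonians and operators, one only has to check that it has the special form $\tu^\alpha\mapsto\tu^\alpha+\eta^{\alpha\mu}\d_x\{\cQ,\og_{\mu,0}[\tu]\}_{K^\DR_{\tu}}$ with $\frac{\d\cQ}{\d\tu^1}=\eps^2\<\tau_0(e_1)\>_1$. This bypasses any direct computation of $F^\red$ or of higher-genus DR correlators; the uniqueness part of Proposition~\ref{proposition:reduced potential} then forces $\cQ'=\cP$. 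Your proposed direct comparison of correlators (via Lemma~\ref{lemma:geometric formula for DR correlators}, Pixton's relations, etc.) would in principle work but is substantially harder and is not what the paper does.
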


It is possible to reformulate this conjecture in a very compact way using the reduced potential.

\begin{proposition}\label{proposition:reformulation}
Conjecture~\ref{conjecture:strong} is true if and only if $F^\DR=F^\red$.
\end{proposition}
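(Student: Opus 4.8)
The plan is to track how the normal Miura transformation attached to the differential polynomial $\cP$ of Proposition~\ref{proposition:reduced potential} acts on partition functions. By Section~\ref{subsection:normal Miura}, this transformation sends the Dubrovin--Zhang hierarchy (in its normal coordinates $w^\alpha$, tau-structure $h^\DZ_{\alpha,p}$) to a tau-symmetric hierarchy $\widehat{\mathcal H}$, again written in normal coordinates for the metric $\eta$, and it sends the topological solution — tau-function $e^{\eps^{-2}F}$ — to a solution $\widehat w^\alpha$ of $\widehat{\mathcal H}$ whose tau-function is $e^{\eps^{-2}(F+\cP(w^\top;w^\top_x,\ldots)|_{x=0})}=e^{\eps^{-2}F^\red}$, by the definition~\eqref{eq:definition of Fred} of $F^\red$. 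A hierarchy in normal coordinates reconstructs any solution from its tau-function $e^{\eps^{-2}\Phi}$ as $\eta^{\alpha\mu}\left.\frac{\d^2\Phi}{\d t^\mu_0\d t^1_0}\right|_{t^1_0\mapsto t^1_0+x}$, so, since $F^\red$ and $F^\DR$ satisfy the string equations~\eqref{eq:string for Fred},~\eqref{eq:string for FDR}, both $\widehat w^\alpha$ and the string solution $\tu^\str$ of the double ramification hierarchy in its normal coordinates have initial datum $\delta^{\alpha,1}x$ and, exactly as in~\eqref{eq:string for tu}, satisfy $(\cdot)^\alpha_n|_{x=0}=t^\alpha_n+\delta_{n,1}\delta^{\alpha,1}+O(t^2)$. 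In particular, for each of these two hierarchies, evaluating a differential polynomial on the string solution and then setting $x=0$ is an injection into $\mbC[[t^*_*,\eps]]$.

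Assume first that $F^\DR=F^\red$. Then $\widehat w^\alpha=(\tu^\str)^\alpha=:v^\alpha=\eta^{\alpha\mu}\left.\frac{\d^2 F^\DR}{\d t^\mu_0\d t^1_0}\right|_{t^1_0\mapsto t^1_0+x}$, and for all $\alpha,p,\beta,q$ the two-point functions $\widehat\Omega_{\alpha,p;\beta,q}$ of $\widehat{\mathcal H}$ and $\Omega^\DR_{\alpha,p;\beta,q}$ of the DR hierarchy written in the variables $\tu^\alpha$ both restrict, on $v$ at $x=0$, to $\frac{\d^2 F^\DR}{\d t^\alpha_p\d t^\beta_q}$. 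By the injectivity above this forces $\widehat\Omega_{\alpha,p;\beta,q}=\Omega^\DR_{\alpha,p;\beta,q}$ as differential polynomials. Taking $p=0$ gives that the two hierarchies have identical flows; the relation $\Omega_{\alpha,p;1,0}=h_{\alpha,p-1}$ from~\eqref{eq:Omega-h relation}, with the standard normalization of the tau-structure densities, gives that their tau-structures coincide; and the hamiltonian operators then coincide as well, which one extracts from~\eqref{eq:reconstruction of tau-structure} once the Casimirs, flows and Hamiltonians have all been matched (cf. Lemma~\ref{lemma:uniqueness of tau-structure}). Thus the normal Miura transformation attached to $\cP$ maps the Dubrovin--Zhang hierarchy, together with its tau-structure, onto the double ramification hierarchy in the normal coordinates $\tu^\alpha$, which is Conjecture~\ref{conjecture:strong}.

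Conversely, assume Conjecture~\ref{conjecture:strong}. Then the normal Miura transformation attached to $\cP$ maps the Dubrovin--Zhang hierarchy with tau-structure onto the DR hierarchy in normal coordinates, so $e^{\eps^{-2}F^\red}$ is a tau-function of a solution of the latter; by the reconstruction recalled in the first paragraph this solution has initial datum $\delta^{\alpha,1}x$, hence equals $\tu^\str$, whose tau-function is also $e^{\eps^{-2}F^\DR}$. By the ambiguity~\eqref{eq:ambiguity for tau-function}, $F^\red-F^\DR=a(\eps)+\sum_{\gamma,r}b_{\gamma,r}(\eps)t^\gamma_r$. Both potentials satisfy the same string equation~\eqref{eq:string for Fred},~\eqref{eq:string for FDR}; subtracting them and applying $\frac{\d}{\d t^1_0}-\sum_{n\ge 0}t^\gamma_{n+1}\frac{\d}{\d t^\gamma_n}$ to the difference forces $b_{\gamma,r}=0$ for all $\gamma,r$. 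Both satisfy the same dilaton equation~\eqref{eq:dilaton for Fred},~\eqref{eq:dilaton for FDR}; applied to the remaining constant $a(\eps)$ this yields $\eps a'(\eps)=2a(\eps)$, so $a(\eps)=c\eps^2$. Finally, by Lemma~\ref{lemma:genus 0 tau} and the conventions in the definitions of $F^\DR$ and $F^\red$, neither potential has a part of genus $\le 1$ that is constant in $t^*_*$, so $c=0$ and $F^\red=F^\DR$.

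The crux is the reconstruction invoked in the second paragraph: a tau-symmetric hierarchy in normal coordinates — its two-point functions, hence its flows, its tau-structure and its hamiltonian operator — is completely determined by the tau-function of its string solution. This is the analogue of the Dubrovin--Zhang reconstruction, and the string equation is the decisive ingredient, since it makes $(\tu^\str)^\alpha_n|_{x=0}$ (and likewise $\widehat w^\alpha_n|_{x=0}$) a triangular substitution for the times $t^\alpha_n$, and thereby makes evaluation of differential polynomials on the string solution injective. I expect the one remaining technical point to be the identification of the two hamiltonian operators once all flows, Casimirs and Hamiltonians have been matched.
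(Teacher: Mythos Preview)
Your overall strategy is exactly the paper's: transport the topological solution by the normal Miura transformation attached to $\cP$, identify the resulting solution with $\tu^\str$ via the initial condition forced by the string equation, match two-point functions (hence flows, Hamiltonians, tau-structure) by the injectivity of evaluation on the string solution, and in the converse direction kill the tau-function ambiguity with string and dilaton. The $(\Rightarrow)$ half of your argument is complete and matches the paper, including the endgame $a(\eps)=c\eps^2$ and $c=0$.

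The one genuine gap is the step you yourself flag: identifying the hamiltonian operators in the $(\Leftarrow)$ direction. Your appeal to~\eqref{eq:reconstruction of tau-structure} does not do this. That equation only controls the contraction $\eta_{1,\mu}K^{\mu\nu}\frac{\delta\oh_{\alpha,p}}{\delta u^\nu}$, i.e.\ a single row of $K$ acting on the variational derivatives; Lemma~\ref{lemma:uniqueness of tau-structure} runs in the opposite direction, recovering the tau-structure from a \emph{given} $K$. What you actually have after matching flows and Hamiltonians is $\bigl((K^\DZ_{\tu})^{\alpha\mu}-(K^\DR_{\tu})^{\alpha\mu}\bigr)\frac{\delta\oh_{\beta,q}}{\delta u^\mu}=0$ for all $\beta,q$, and this does not formally force $K^\DZ_{\tu}=K^\DR_{\tu}$: the variational derivatives are not a priori rich enough to separate operators. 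The paper closes this gap by transporting back to the $w$-coordinates, noting that both operators have vanishing constant term (for $K^\DZ$ this is from~\cite{BPS12b}; for $K^\DR_w$ it follows from Lemma~\ref{lemma:properties of the DR normal} and Lemma~\ref{lemma:constant term}), and then invoking the inverse weak quasi-Miura transformation and Proposition~21 of~\cite{BPS12b}, which reduces the question to the dispersionless level where it can be settled. Without that argument (or an equivalent one), the $(\Leftarrow)$ direction is incomplete.
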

\begin{proof}
Consider the normal Miura transformation determined by~$\cP$:
$$
w^\alpha\mapsto\tu^\alpha(w)=w^\alpha+\eta^{\alpha\mu}\d_x\left\{\cP,\oh^\DZ_{\mu,0}\right\}_{K^\DZ}.
$$
Clearly, the series $(\tu^\red)^\alpha:=\left.\eta^{\alpha\mu}\frac{\d^2 F^\red}{\d t^\mu\d t^1_0}\right|_{t^1_0\mapsto t^1_0+x}$ is a solution of the Dubrovin-Zhang hierarchy in the coordinates $\tu^\alpha$ and $e^{\eps^{-2}F^\red}$ is its tau-function. From the string equation~\eqref{eq:string for Fred} for $F^\red$ it follows that $\left.(\tu^\red)^\alpha\right|_{t^*_*=0}=\delta^{\alpha,1}x$. 

Suppose Conjecture~\ref{conjecture:strong} is true. Since $\left.(\tu^\red)^\alpha\right|_{t^*_*=0}=\delta^{\alpha,1}x$, we get $(\tu^\red)^\alpha=(\tu^\str)^\alpha$. Since~$e^{\eps^{-2}F^\DR}$ is the tau-function of $(\tu^\str)^\alpha$, from~\eqref{eq:ambiguity for tau-function} we get
\begin{gather}
F^\DR-F^\red=\sum_{g\ge 1}a_g\eps^{2g}+\sum_{\substack{g\ge 1\\r\ge 0}}b_{\gamma,r,g}\eps^{2g}t^\gamma_r,
\end{gather}
where $a_g$ and $b_{\gamma,r,g}$ are some complex constants. From the string and the dilaton equations~\eqref{eq:string for Fred},~\eqref{eq:dilaton for Fred},~\eqref{eq:string for FDR},~\eqref{eq:dilaton for FDR} for $F^\red$ and $F^\DR$ it is very easy to see that $b_{\gamma,r,g}=0$ for $g\ge 1$, $r\ge 0$, and that $a_g=0$ for $g\ge 2$. It remains to show that $a_1=0$. By definition, $\Coef_{\eps^2}\left.F^\DR\right|_{t^*_*=0}=0$. From formula~\eqref{eq:definition of F10} and property~\eqref{eq:property of wtop1} it follows that~$\Coef_{\eps^2}\left.F^\red\right|_{t^*_*=0}=0$. Thus,~$a_1=0$.

Suppose now that $F^\DR=F^\red$. Denote by $\tOmega^\DZ_{\alpha,p;\beta,q}(\tu)$ the two-point function of the normal Miura transform of the Dubrovin-Zhang hierarchy. It is sufficient to prove that
\begin{align}
&\tOmega^\DZ_{\alpha,p;\beta,q}(\tu)=\Omega^\DR_{\alpha,p;\beta,q}(\tu),\notag\\
&K^\DZ_\tu=K^\DR_\tu.\label{eq:operators are equal}
\end{align}
We have $(\tu^\str)^\alpha=(\tu^\red)^\alpha$ and
\begin{gather*}
\left.\tOmega^\DZ_{\alpha,p;\beta,q}(\tu^\red,\tu^\red_x,\ldots)\right|_{x=0}=\frac{\d^2 F^\red}{\d t^\alpha_p\d t^\beta_q}=\frac{\d^2 F^\DR}{\d t^\alpha_p\d t^\beta_q}=\left.\Omega^\DR_{\alpha,p;\beta,q}(\tu^\str,\tu^\str_x,\ldots)\right|_{x=0}.
\end{gather*} 
The property $\left.(\tu^\str)^\alpha_n\right|_{x=0}=t^\alpha_n+\delta^{\alpha,1}\delta_{n,1}+O(t^2)+O(\eps^2),
$ allows to conclude that $\tOmega^\DZ_{\alpha,p;\beta,q}(\tu)=\Omega^\DR_{\alpha,p;\beta,q}(\tu)$. 

Let us prove~\eqref{eq:operators are equal}. We already know that the equations of the Dubrovin-Zhang and the double ramification hierarchies in the coordinates $\tu^\alpha$ coincide. We also know that the Hamiltonians of the two hierarchies in the coordinates $\tu^\alpha$ coincide. Therefore,
$$
\left((K^\DZ_\tu)^{\alpha\mu}-(K^\DR_\tu)^{\alpha\mu}\right)\frac{\delta\oh^\DZ[\tu]}{\delta u^\mu}=0.
$$
Equivalently, in the coordinates $w^\alpha$ we have
$$
\left((K^\DZ)^{\alpha\mu}-(K^\DR_w)^{\alpha\mu}\right)\frac{\delta\oh^\DZ}{\delta w^\mu}=0.
$$
We proceed using the same idea, as in~\cite[Section 6]{BPS12b}. We have $(K^\DZ)_0=0$ (see \cite{BPS12b}). From Lemmas~\ref{lemma:properties of the DR normal} and~\eqref{lemma:constant term} it follows that the constant term of $K^\DR_w$ is also equal to zero. Then we just repeat the arguments from Section~6 of~\cite{BPS12b}. The inverse weak quasi-Miura transformation from Lemma~20 of~\cite{BPS12b} maps the Hamiltonian $\oh^\DZ_{\alpha,p}$ to its dispersionless part and also maps the operator $\left(K^\DZ-K^\DR_w\right)$ into one that also has no constant term. The same argument, as in the proof of Proposition~21 from~\cite{BPS12b}, shows now that $\left(K^\DZ-K^\DR_w\right)=0$. This completes the proof of the proposition. 
\end{proof}

The existence of the Dubrovin-Zhang hierarchy is known only if a cohomological field theory is semisimple. On the other hand, the semisimplicity assumption is not used in the construction of the double ramification hierarchy. Note that the reduced potential $F^\red$ is also defined for an arbitrary cohomological field theory. Proposition~\ref{proposition:reformulation} suggests the following generalization of Conjecture~\ref{conjecture:strong} for an arbitrary, not necessarily semisimple, cohomological field theory. 

\begin{conjecture}\label{conjecture:generalization of strong}
For an arbitrary cohomological field theory we have $F^\DR=F^\red$.
\end{conjecture}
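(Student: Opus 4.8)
\medskip\noindent\textbf{Proof proposal.} The plan is to reduce the equality $F^\DR=F^\red$ to a single differential-polynomiality statement, and then to indicate how that statement should be attacked. First I would establish the reduction: it suffices to produce a differential polynomial $\cP^\DR\in\hcA^{[-2]}_{w^1,\ldots,w^N}$ such that
\[
F^\DR=F+\left.\cP^\DR(w^\top;w^\top_x,w^\top_{xx},\ldots)\right|_{x=0}.
\]
Indeed, by Proposition~\ref{proposition:low vanishing} the potential $F^\DR$ satisfies the vanishing property~\eqref{eq:property of Fred}, so once such a $\cP^\DR$ is known to exist, the uniqueness part of Proposition~\ref{proposition:reduced potential} forces $\cP^\DR=\cP$ and hence $F^\DR=F^\red$. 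Two auxiliary points have to be checked for this reduction to be valid: that $F^\DR-F$ starts at order $\eps^2$, which is Lemma~\ref{lemma:genus 0 tau}; and that $\cP^\DR$, once it is known to be a differential polynomial in $w^\top$ at all, automatically has the differential degrees needed to lie in $\hcA^{[-2]}$ and not in a larger space, which is a routine consequence of the dilaton equations~\eqref{eq:dilaton for FDR} and~\eqref{eq:dilaton for F} together with equation~\eqref{eq:dilaton for wtop2}. In this form the conjecture becomes the assertion that the higher-genus part of $F^\DR$, measured against $F$, is a differential polynomial in the topological solution of the principal hierarchy and its $x$-derivatives, restricted to $x=0$.

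To prove this differential-polynomiality I would induct on the genus. In genus $0$ there is nothing to prove, since $F^\DR_0=F_0$ by Lemma~\ref{lemma:genus 0 tau}. For the inductive step, assume $F^\DR-F=\sum_{g'=1}^{g-1}\eps^{2g'}\cP^\DR_{g'}(w^\top;w^\top_x,\ldots)|_{x=0}+O(\eps^{2g})$ with each $\cP^\DR_{g'}$ a differential polynomial of degree $2g'-2$, and try to recognise the $\eps^{2g}$-coefficient of $F^\DR-F-\sum_{g'<g}\eps^{2g'}\cP^\DR_{g'}(w^\top)|_{x=0}$ as $\cP^\DR_g(w^\top;w^\top_x,\ldots)|_{x=0}$ for a differential polynomial $\cP^\DR_g$ of degree $2g-2$. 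The natural tool is the geometric formula of Lemma~\ref{lemma:geometric formula for DR correlators}, which expresses the basic correlators $\langle\tau_0(e_1)\tau_{d_1}(e_{\alpha_1})\cdots\tau_{d_m}(e_{\alpha_m})\rangle^\DR_g$ as coefficients of explicit polynomials built from integrals of $\lambda_g\cdot\DR_\Gamma$ against $\psi$-classes over a moduli space $\oM_{g,N}$, summed over admissible modified stable trees. Combining this with the string and dilaton equations~\eqref{eq:string for FDR} and~\eqref{eq:dilaton for FDR} and with the low-degree vanishing of Proposition~\ref{proposition:low vanishing}, one would try to match the resulting generating series with a differential polynomial in $w^\top$.

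In the semisimple case the same induction can be organised instead through the Givental-Teleman reconstruction of the CohFT from its genus-$0$ data, in the spirit of~\cite{BR14}. By Proposition~\ref{proposition:reformulation} the conjecture is then equivalent to the geometric Conjecture~\ref{conjecture:strong}, and one could proceed in two stages: first prove the weaker statement of~\cite{Bur15} that the double ramification and Dubrovin-Zhang hierarchies are related by \emph{some} Miura transformation close to the identity, using that both have the same genus-$0$ limit, namely the principal hierarchy; and then normalise that Miura transformation, using the tau-structure uniqueness of Lemma~\ref{lemma:uniqueness of tau-structure} and the string and dilaton equations, to the specific normal Miura transformation determined by $\cP$.

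The main obstacle is precisely the inductive differential-polynomiality step. In the Dubrovin-Zhang theory the analogous statement, that the genus-$g$ free energy is a differential polynomial in the jet of the topological solution, is presently known only under the semisimplicity hypothesis~\cite{BPS12b}, and the same limitation is to be expected here: without semisimplicity there is no classification of the CohFT, and one must control the class $\lambda_g\cdot\DR_g(a_0,\ldots,a_n)$ on the whole of $\oM_{g,n}$, not merely on the compact-type locus where Hain's formula~\eqref{eq:Hain's formula} applies. Thus the general conjecture should reduce to finding a closed expression for $\lambda_g\cdot\DR_g$ modulo the classes that push forward trivially under the forgetful maps, a refinement of Hain's formula to the full moduli space of curves; this is exactly the kind of statement about the tautological ring of $\oM_{g,n}$ alluded to in the introduction, and at present the conjecture can only be verified in the examples listed there and, via the scheme above, in genus $1$ for any semisimple CohFT.
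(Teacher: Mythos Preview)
The statement you are attempting to prove is a \emph{conjecture} in the paper, not a theorem: the paper does not contain a proof of it and explicitly presents it as open in full generality. What the paper does prove is the reformulation (Proposition~\ref{proposition:reformulation}) that, in the semisimple case, the strong DR/DZ equivalence is equivalent to $F^\DR=F^\red$; a sufficient criterion (Proposition~\ref{proposition:sufficient condition for strong}) in terms of a specific Miura transformation; and verification in the listed examples and in genus $1$ for semisimple CohFTs. There is no general proof to compare your proposal against.

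Your reduction step is correct and matches the paper's logic: once one knows $F^\DR=F+\cP^\DR(w^\top)|_{x=0}$ for some $\cP^\DR\in\hcA^{[-2]}$, Proposition~\ref{proposition:low vanishing} and the uniqueness in Proposition~\ref{proposition:reduced potential} force $\cP^\DR=\cP$. This is exactly the argument at the end of the proof of Proposition~\ref{proposition:sufficient condition for strong}. Your identification of the obstacle is also accurate: the missing ingredient is precisely the differential-polynomiality of $F^\DR-F$ in the topological solution, for which no proof is known without semisimplicity (and even with semisimplicity the paper does not supply a general argument, only case-by-case checks and the genus-$1$ computation of Section~\ref{section:genus1}). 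Your proposal is thus an honest outline of the expected route, not a proof; you correctly conclude that the conjecture remains open beyond the verified cases.
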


Finally, we would like to present a sufficient condition for Conjecture~\ref{conjecture:strong} to be true. We will use this condition in the next section in order to check the conjecture in several examples.

\begin{proposition}\label{proposition:sufficient condition for strong}
Suppose that the Hamiltonians and the hamiltonian operators of the double ramification hierarchy in the coordinates~$\tu^\alpha$ and the Dubrovin-Zhang hierarchy are related by a Miura transformation of the form
\begin{gather}\label{eq:sufficient form of a transformation}
\tu^\alpha\mapsto w^\alpha(\tu)=\tu^\alpha+\eta^{\alpha\mu}\d_x\left\{\cQ,\og_{\mu,0}[\tu]\right\}_{K^\DR_{\tu}},
\end{gather}
where $\cQ\in\hcA^{[-2]}_{\tu^1,\ldots,\tu^N}$ and $\frac{\d\cQ}{\d\tu^1}=\eps^2\<\tau_0(e_1)\>_1$. Then Conjecture~\ref{conjecture:strong} is true.
\end{proposition}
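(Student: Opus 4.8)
The plan is to deduce the statement from Proposition~\ref{proposition:reformulation}, i.e. to prove that $F^{\DR}=F^{\red}$. By the uniqueness part of Proposition~\ref{proposition:reduced potential} it is enough to produce a differential polynomial $\cP_0\in\hcA^{[-2]}_{w^1,\ldots,w^N}$ with $F^{\DR}=F+\cP_0(w^\top;w^\top_x,\ldots)|_{x=0}$, since $F^{\DR}$ itself already satisfies the vanishing property~\eqref{eq:property of Fred}: the coefficient $\Coef_{\eps^{2g}}\frac{\d^n F^{\DR}}{\d t^{\alpha_1}_{d_1}\ldots\d t^{\alpha_n}_{d_n}}\big|_{t^*_*=0}$ is a double ramification correlator, which vanishes for $\sum d_i\le 2g-2$ by Proposition~\ref{proposition:low vanishing} (the case $n=0$, $g\ge 2$ being handled by Proposition~\ref{proposition:one-point}, and $n=0$, $g=1$ being trivial).

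First I would observe that~\eqref{eq:sufficient form of a transformation} is precisely the normal Miura transformation of the double ramification hierarchy, written in the normal coordinates $\tu^\alpha$, determined by $\cQ$ in the sense of Section~\ref{subsection:normal Miura}. Hence $w^\alpha$ are again normal coordinates, the image $\th^{\DR}_{\alpha,p}(w)$ ($p\ge -1$) of $h^{\DR}_{\alpha,p}(\tu)+\d_x\{\cQ,\og_{\alpha,p+1}[\tu]\}_{K^{\DR}_\tu}$ is a tau-structure for which $w^\alpha$ are normal, and by hypothesis the Hamiltonians and the hamiltonian operator in the coordinates $w^\alpha$ coincide with those of the Dubrovin--Zhang hierarchy. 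Thus the Dubrovin--Zhang hierarchy carries two tau-structures in the same normal coordinates, $\th^{\DR}(w)$ and $h^{\DZ}$; by the reconstruction identity~\eqref{eq:reconstruction of tau-structure} they differ only by constants (independent of the $t^\gamma_r$), which are killed by $t$-derivatives and hence do not affect the associated two-point functions, so the two-point function of $\th^{\DR}(w)$ equals $\Omega^{\DZ}_{\alpha,p;\beta,q}$.

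The crux, which I expect to be the real obstacle, is that~\eqref{eq:sufficient form of a transformation} sends the string solution $\tu^{\str}$ of the double ramification hierarchy to the topological solution $w^\top$ of the Dubrovin--Zhang hierarchy. Since both become solutions of the Dubrovin--Zhang hierarchy in the coordinates $w^\alpha$, and such a solution is determined by its restriction to $t^*_*=0$, it suffices to check $w^\alpha(\tu^{\str};\tu^{\str}_x,\ldots)|_{t^*_*=0}=\delta^{\alpha,1}x=(w^\top)^\alpha|_{t^*_*=0}$. From $(\tu^{\str})^\alpha|_{t^*_*=0}=\delta^{\alpha,1}x$ one has $(\tu^{\str})^\alpha_1|_{t^*_*=0}=\delta^{\alpha,1}$ and $(\tu^{\str})^\alpha_n|_{t^*_*=0}=0$ for $n\ge 2$, so evaluating $\d_x$ of a differential polynomial at these values amounts to applying $\frac{\d}{\d\tu^1}$; the claim thus reduces to the identity $\frac{\d}{\d\tu^1}\{\cQ,\og_{\mu,0}[\tu]\}_{K^{\DR}_\tu}=0$. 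Here the hypotheses enter decisively: one uses that $\frac{\d\cQ}{\d\tu^1}=\eps^2\<\tau_0(e_1)\>_1$ is a constant, the identities $\frac{\d\tu^\alpha(u)}{\d u^1}=\delta^{\alpha,1}$, $\frac{\d}{\d\tu^1}K^{\DR}_\tu=0$ and $(K^{\DR}_\tu)_0=0$ from Lemma~\ref{lemma:properties of the DR normal}, and the fact that $\frac{\d\og_{\mu,0}[\tu]}{\d\tu^1}$ is the linear local functional $\int\eta_{\mu\nu}\tu^\nu\,dx$ (Lemma~4.6 of~\cite{Bur15}); the same computation also yields $\frac{\d w^\alpha(\tu)}{\d\tu^1}=\delta^{\alpha,1}$, and hence, for the inverse Miura transformation $w\mapsto\psi(w)$ and the differential polynomial $\widehat\cQ:=\cQ\circ\psi\in\hcA^{[-2]}_{w^1,\ldots,w^N}$, also $\frac{\d\widehat\cQ}{\d w^1}=\eps^2\<\tau_0(e_1)\>_1$.

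Granting this, the behaviour of tau-functions under normal Miura transformations shows that $\widetilde F:=F^{\DR}+\cQ(\tu^{\str};\tu^{\str}_x,\ldots)|_{x=0}=F^{\DR}+\widehat\cQ(w^\top;w^\top_x,\ldots)|_{x=0}$ is (the logarithm of) a tau-function of $w^\top$ for the tau-structure $\th^{\DR}(w)$. Since $\th^{\DR}(w)$ has the same two-point functions as $h^{\DZ}$, all second derivatives $\frac{\d^2}{\d t^\alpha_p\d t^\beta_q}$ of $\widetilde F$ and of $F$ agree, so $G:=\widetilde F-F$ is affine in the $t^\gamma_r$. Applying the string operator and using~\eqref{eq:string for FDR},~\eqref{eq:string for F},~\eqref{eq:string for wtop} together with $\frac{\d\widehat\cQ}{\d w^1}=\eps^2\<\tau_0(e_1)\>_1$ shows that $G$ is annihilated by the string operator, hence constant in the $t^\gamma_r$; applying the dilaton operator and using~\eqref{eq:dilaton for FDR},~\eqref{eq:dilaton for F},~\eqref{eq:dilaton for wtop2} and the homogeneity of $\widehat\cQ$ forces $G=a_2\eps^2$; and $\Coef_{\eps^2}F^{\DR}|_{t^*_*=0}=0=\Coef_{\eps^2}F|_{t^*_*=0}$ (after normalising $\cQ$ by $\cQ|_{\tu^*_*=0}=0$, which does not change~\eqref{eq:sufficient form of a transformation}) gives $a_2=0$. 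Therefore $F^{\DR}=F-\widehat\cQ(w^\top;w^\top_x,\ldots)|_{x=0}$, which is of the required form with $\cP_0=-\widehat\cQ$; by Proposition~\ref{proposition:reduced potential} then $\cP_0=\cP$, so $F^{\DR}=F^{\red}$, and Proposition~\ref{proposition:reformulation} yields Conjecture~\ref{conjecture:strong}.
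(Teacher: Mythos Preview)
Your proof is correct and follows essentially the same approach as the paper: identify~\eqref{eq:sufficient form of a transformation} as a normal Miura transformation, use uniqueness of tau-structures in normal coordinates, show that the string solution is sent to the topological solution via $\frac{\d}{\d\tu^1}\{\cQ,\og_{\mu,0}[\tu]\}_{K^{\DR}_\tu}=0$, and then use the string and dilaton equations to eliminate the affine ambiguity between the tau-functions. Your treatment of the step $(w^{\str})^\alpha|_{t^*_*=0}=\delta^{\alpha,1}x$ is slightly more direct than the paper's (you reduce it to $\frac{\d}{\d\tu^1}\{\cQ,\og_{\mu,0}[\tu]\}=0$ rather than the paper's combination of $\frac{\d\cQ^\alpha}{\d\tu^1}=0$ with the observation that $\cQ^\alpha\in\mathrm{Im}\,\d_x$), but otherwise the arguments match.
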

\begin{proof}
The differential polynomial $\cQ$ defines a normal Miura transformation. From Lemma~\ref{lemma:uniqueness of tau-structure} it follows that this normal Miura transformation maps the tau-structure of the double ramification hierarchy to the tau-structure of the Dubrovin-Zhang hierarchy. Let $\cQ'\in\hcA^{[-2]}_{w^1,\ldots,w^N}$ be the differential polynomial defining the inverse normal Miura transformation. It remains to show that $\cQ'$ coincides with the differential polynomial~$\cP$ from Proposition~\ref{proposition:reduced potential}. 

Let $\cQ^\alpha:=\eta^{\alpha\mu}\d_x\left\{\cQ,\og_{\mu,0}[\tu]\right\}_{K^\DR_{\tu}}$. Let us show that $\frac{\d\cQ^\alpha}{\d\tu^1}=0$. We have $\frac{\d\og_{\mu,0}}{\d u^1}=\int\eta_{\mu\nu}u^\nu dx\stackrel{\text{by~\eqref{eq:form of tu}}}{=}\int\eta_{\mu\nu}\tu^\nu(u) dx$. From part~1 of Lemma~\ref{lemma:properties of the DR normal} it follows that $\frac{\d\og_{\mu,0}[\tu]}{\d\tu^1}=\int\eta_{\mu\nu}\tu^\nu dx$. Using also part~3 of Lemma~\ref{lemma:properties of the DR normal} and the fact that the derivative~$\frac{\d\cQ}{\d\tu^1}$ is a constant, we obtain
\begin{align*}
\frac{\d\cQ^\alpha}{\d\tu^1}=&\frac{\d}{\d\tu^1}\left(\eta^{\alpha\mu}\d_x\sum_{n\ge 0}\frac{\d\cQ}{\d\tu^\gamma_n}\d_x^n\left((K^\DR_{\tu})^{\gamma\theta}\frac{\delta\og_{\mu,0}}{\delta\tu^\theta}\right)\right)=\eta^{\alpha\mu}\d_x\sum_{n\ge 0}\frac{\d\cQ}{\d\tu^\gamma_n}\d_x^n\left((K^\DR_{\tu})^{\gamma\theta}\eta_{\mu\theta}\right)=\\
=&\d_x\sum_{n\ge 0}\frac{\d\cQ}{\d\tu^\gamma_n}\d_x^n(K^\DR_{\tu})^{\gamma\alpha}_0.
\end{align*}
By part~4 of Lemma~\ref{lemma:properties of the DR normal}, the last expression is equal to zero. Hence, $\frac{\d\cQ^\alpha}{\d\tu^1}=0$.

Let
\begin{align*}
(w^\str)^\alpha(x,t^*_*;\eps):=&(\tu^\str)^\alpha+\cQ^\alpha(\tu^\str;\tu^\str_x,\ldots),\\
F^\str(t^*_*;\eps):=&F^\DR+\left.\cQ(\tu^\str;\tu^\str_x,\ldots)\right|_{x=0}.
\end{align*}
Consider the $\eps$-expansion of $\cQ^\alpha$: $\cQ^\alpha=\sum_{g\ge 1}\eps^{2g}\cQ^\alpha_g$. Since $\frac{\d\cQ^\alpha}{\d\tu^1}=0$ and $\left.(\tu^\str)^\alpha\right|_{t^*_*=0}=\delta^{\alpha,1}x$, we have
$$
\left.(w^\str)^\alpha\right|_{t^*_*=0}=\delta^{\alpha,1}x+\sum_{g\ge 1}\eps^{2g}C^\alpha_g,
$$
where the constant $C^\alpha_g$ is equal to the coefficient of the monomial $(\tu^1_1)^{2g}$ in $\cQ^\alpha_g$. Since~$Q^\alpha_g$ belongs to the image of the operator $\d_x$, this coefficient is equal to zero. Thus, $\left.(w^\str)^\alpha\right|_{t^*_*=0}=\delta^{\alpha,1}x$. Both series $(w^\str)^\alpha$ and $(w^\top)^\alpha$ are solutions of the Dubrovin-Zhang hierarchy, therefore $(w^\str)^\alpha=(w^\top)^\alpha$. Clearly, the exponent $e^{\eps^{-2}F^\str}$ is the tau-function of it the solution $(w^\str)^\alpha$. From equation~\eqref{eq:ambiguity for tau-function} we immediately get that
$$
F-F^\str=\sum_{g\ge 1}a_g\eps^{2g}+\sum_{\substack{g\ge 1\\r\ge 0}}b_{g,\gamma,r}t^\gamma_r\eps^{2g}
$$
for some complex constants $a_g$ and $b_{g,\gamma,r}$. The string equation~\eqref{eq:string for FDR} for $F^{\DR}$, equation~$\frac{\d\cQ}{\d\tu^1}=\eps^2\<\tau_0(e_1)\>_1$ and the string equation~\eqref{eq:string for tu} for~$(\tu^\str)^\alpha$ imply that the series~$F^\str$ satisfies the same string equation~\eqref{eq:string for F}, as~$F$. From this we conclude that~$b_{g,\gamma,r}=0$. From the dilaton equations~\eqref{eq:dilaton for FDR} and~\eqref{eq:dilaton for tu} for~$F^\DR$ and~$(\tu^\str)^\alpha$ it follows that~$F^\str$ satisfies the same dilaton equation~\eqref{eq:dilaton for F}, as $F$. It implies that $a_g=0$ for $g\ge 2$. Let us finally show that~$a_1=0$. On one hand, we know that $F_1$ doesn't have constant term. On the other hand, let us write the $\eps$-expansion $F^\str=\sum_{g\ge 0}\eps^{2g}F^{\str}_g$. Note that $\deg\cQ_1=0$. Using also that the constant term in~$F^\DR_1$ is zero and that $\left.(\tu^\str)^\alpha\right|_{x=t^*_*=0}$ we get that the constant term in $F^\str_1$ is equal to zero. Thus, $a_1=0$ and $F^\str=F$.

As a result, we get
$$
F^\DR=F+\left.\cQ'(w^\top;w^\top_x,w^\top_{xx},\ldots)\right|_{x=0}.
$$
By Proposition~\ref{proposition:low vanishing}, the potential $F^\DR$ satisfies the vanishing~\eqref{eq:property of Fred}. Therefore, by part~1 of Proposition~\ref{proposition:reduced potential}, we have $\cQ'=\cP$. The proposition is proved.
\end{proof}

\subsection{Examples}

The DR/DZ equivalence conjecture is already proved in certain cases. It is proved for the one-parameter family of cohomological field theories given by the full Chern class of the Hodge bundle~(\cite{Bur15}), for the cohomological field theory associated to the Gromov-Witten theory of~$\CP^1$~(\cite{BR14}) and for the $r$-spin theory, when $r=3,4,5$~(\cite{BG15}). In this section we prove that in all these cases the strong DR/DZ equivalence conjecture is also true.

\subsubsection{Full Chern class of the Hodge bundle} 
Consider the cohomological field theory given by
\begin{align*}
&V=\<e_1\>,\qquad \eta_{1,1}=1,\\
&c_{g,n}(e_1^n)=1+\ell\lambda_1+\ldots+\ell^g\lambda_g,
\end{align*}
where $\ell$ is a formal parameter. We have (see~\cite{Bur15})
$$
\og_1=\int\left(\frac{u^3}{6}+\sum_{g\ge 1}\eps^{2g}\ell^{g-1}\frac{|B_{2g}|}{2(2g)!}u u_{2g}\right)dx,
$$
where $B_{2g}$ are Bernoulli numbers: $B_0=1,B_2=\frac{1}{6},B_4=-\frac{1}{30},\ldots$. We also have $\og_0=\int\frac{u^2}{2}dx$, therefore $h_{-1}^\DR=u$. We see that the coordinate $u$ is normal for the double ramification hierarchy. In~\cite{Bur15} it is proved that the Miura transformation
\begin{gather}\label{eq:miura for Hodge}
u\mapsto w(u)=u+\sum_{g\ge 1}\frac{2^{2g-1}-1}{2^{2g-1}}\frac{|B_{2g}|}{(2g)!}\eps^{2g}\ell^g u_{2g}
\end{gather}
maps the Hamiltonians and the hamiltonian operator of the double ramification hierarchy to the Hamiltonians and the hamiltonian operator of the Dubrovin-Zhang hierarchy. It is easy to see that the transformation~\eqref{eq:miura for Hodge} has the form~\eqref{eq:sufficient form of a transformation} if we put
$$
\cQ=\sum_{g\ge 1}\eps^{2g}\frac{2^{2g-1}-1}{2^{2g-1}}\frac{|B_{2g}|}{(2g)!}\ell^g u_{2g-2}.
$$ 
By Proposition~\ref{proposition:sufficient condition for strong}, Conjecture~\ref{conjecture:strong} is true in this case.

\subsubsection{Gromov-Witten theory of $\CP^1$} 
Consider the cohomological field theory associated to the Gromov-Witten theory of $\CP^1$. We have $V=H^*(\CP^1,\mbC)=\<1,\omega\>$, where $1$ and $\omega$ is the unit and the class dual to a point respectively. The matrix of the metric in this basis is given by
\begin{gather*}
\eta_{11}=\eta_{\omega\omega}=0,\qquad \eta_{1\omega}=\eta_{\omega 1}=1.
\end{gather*}
We have $\og_{1,0}=\int u^1 u^\omega dx$ and in~\cite{BR14} the authors computed that
$$
\og_{\omega,0}=\int\left(\frac{(u^1)^2}{2}+q\left(e^{S(\eps\d_x)u^\omega}-u^\omega\right)\right)dx,
$$
where $S(z):=\frac{e^{\frac{z}{2}}-e^{-\frac{z}{2}}}{z}$. Therefore, $h^\DR_{1,-1}=u^\omega$ and $h^{\DR}_{\omega,-1}=u^1$. Thus, the coordinate $u^\alpha$ is normal, $\tu^\alpha=u^\alpha$. In~\cite{BR14} the authors proved that the Miura transformation
\begin{gather}\label{eq:miura for CP1}
u^\alpha\mapsto w^\alpha(u)=\frac{\eps\d_x}{e^{\frac{\eps\d_x}{2}}-e^{-\frac{\eps\d_x}{2}}}u^\alpha=u^\alpha+\sum_{g\ge 1}\eps^{2g}\frac{1-2^{2g-1}}{2^{2g-1}}\frac{B_{2g}}{(2g)!}u^\alpha_{2g}
\end{gather}
maps the Hamiltonians and the hamiltonian operator of the double ramification hierarchy to the Hamiltonians and the hamiltonian operator of the Dubrovin-Zhang hierarchy. It is easy to see that the transformation~\eqref{eq:miura for CP1} has the form~\eqref{eq:sufficient form of a transformation} if we put
$$
\cQ=\sum_{g\ge 1}\eps^{2g}\frac{1-2^{2g-1}}{2^{2g-1}}\frac{B_{2g}}{(2g)!}u^\omega_{2g-2}.
$$
By Proposition~\ref{proposition:sufficient condition for strong}, Conjecture~\ref{conjecture:strong} is true in this case.

\subsubsection{$r$-spin theory for $r=3,4,5$} 
Let $r\ge 3$ and consider the cohomological field theory formed by Witten's $r$-spin classes (see e.g.~\cite{BG15}). In this case we have $V=\<e_i\>_{i=1,\ldots,r-1}$ and the metric is given by $\eta_{\alpha\beta}=\delta_{\alpha+\beta,r}$. Recall that $\og_{1,1}=(D-2)\og$, where $D=\sum_{n\ge 0}(n+1)u^\alpha_n\frac{\d}{\d u^\alpha_n}$. We also have $\og_{\alpha,0}=\frac{\d\og}{\d u^\alpha}$. Therefore, we compute
\begin{gather}\label{eq:useful formula for normal}
\tu^\alpha=h^\DR_{r-\alpha,-1}=\frac{\delta}{\delta u^1}\frac{\d}{\d u^{r-\alpha}}(D-2)^{-1}\og_{1,1}=D^{-1}\frac{\d}{\d u^{r-\alpha}}\frac{\delta\og_{1,1}}{\delta u^1}.
\end{gather}
This formula will be useful in the computations below.

For the $3$-spin theory we have (see \cite{BR14} or \cite{BG15})
\begin{gather*}
\og_{1,1}=\int\left(\frac{(u^1)^2 u^2}{2}+\frac{(u^2)^4}{36}+\eps^2\left(\frac{(u^2)^2 u^2_2}{48}+\frac{u^1 u^1_2}{12}\right)+\frac{\eps^4}{432}u^2 u^2_4\right)dx.
\end{gather*}
Therefore,
$$
\frac{\delta\og_{1,1}}{\delta u^1}=u^1u^2+\frac{\eps^2}{6}u^1_{xx}.
$$
Using~\eqref{eq:useful formula for normal}, we can easily see that the coordinate $u^\alpha$ is normal, $\tu^\alpha=u^\alpha$. In \cite{BG15} it was proved that the Hamiltonians and the hamiltonian operator of the double ramification hierarchy coincide with the Hamiltonians and the hamiltonian operator of the Dubrovin-Zhang hierarchy. By Proposition~\ref{proposition:sufficient condition for strong}, Conjecture~\ref{conjecture:strong} is true for the $3$-spin theory.

For the $4$-spin theory we have (see \cite{BR14} or \cite{BG15})
\begin{align*}
\og_{1,1}=&\int\left[\frac{(u^1)^2u^3}{2}+\frac{u^1(u^2)^2}{2}+\frac{(u^2)^2(u^3)^2}{8}+\frac{(u^3)^5}{320}+\right.\\
&\phantom{\int a}\eps^2\left(\frac{1}{8}u^1u^1_2+\frac{1}{64}u^3_2(u^2)^2+\frac{1}{16}u^3u^2u^2_2+\frac{1}{64}u^1_2(u^3)^2+\frac{1}{192}(u^3)^3u^3_2\right)+\\
&\phantom{\int a}\eps^4\left(\frac{1}{160}u^2u^2_4+\frac{5}{4096}(u^3)^2u^3_4+\frac{3}{640}u^1u^3_4\right)+\\
&\phantom{\int a}\eps^6\left.\frac{1}{8192}u^3u^3_6\right]dx.
\end{align*}
Therefore,
$$
\frac{\delta\og_{1,1}}{\delta u^1}=u^1u^3+\frac{(u^2)^2}{2}+\eps^2\left(\frac{1}{4}u^1_{xx}+\frac{1}{64}\d_x^2((u^3)^2)\right)+\eps^4\frac{3}{640}u^3_4.
$$
For the normal coordinates we obtain
\begin{align*}
\tu^1=&u^1+\frac{\eps^2}{96}u^3_{xx},\\
\tu^2=&u^2,\\
\tu^3=&u^3.
\end{align*}
In~\cite{BG15} it was proved that the Hamiltonians and the hamiltonian operator of the double ramification hierarchy in the coordinates $\tu^\alpha$ coincide with the Hamiltonians and the hamiltonian operator of the Dubrovin-Zhang hierarchy. By Proposition~\ref{proposition:sufficient condition for strong}, Conjecture~\ref{conjecture:strong} is true for the $4$-spin theory.

For the $5$-spin theory we have (see \cite{BG15})
\begin{align*}
\og_{1,1}=&\int\left[\frac{(u^1)^2 u^4}{2}+u^1 u^2 u^3+\frac{(u^2)^3}{6}+\frac{(u^3)^4}{30}+\frac{u^2(u^3)^2 u^4}{5}+\frac{(u^2)^2 (u^4)^2}{10}+\frac{(u^3)^2 (u^4)^3}{50}+\frac{(u^4)^6}{3750}+\right.\\
&\phantom{\int a}\eps^2\left(\frac{1}{6}u^1u^1_2+\frac{3}{20}u^2 u^3 u^3_2+\frac{1}{10}u^2(u^3_1)^2+\frac{1}{20}u^1_2 u^3 u^4+\frac{1}{10}u^2 u^2_2 u^4+\frac{1}{40} (u^2_1)^2 u^4\right. \\
&\left.\phantom{\int a \hbar}+\cfrac{1}{50} u^2 u^4(u^4_1)^2+\frac{1}{75} u^2 (u^4)^2 u^4_2+\frac{1}{75}(u^3)^2 u^4u^4_2+\frac{1}{50} u^3 u^3_2(u^4)^2+\frac{1}{1200}(u^4)^4 u^4_2\right)+\\
&\phantom{\int a}\eps^4\left(\frac{7}{600}u^2 u^2_4+\frac{11}{900}u^1 u^3_4+\frac{7}{1200}u^2 u^4 u^4_4+\frac{17}{1200}u^2 u^4_1 u^4_3+\frac{71}{7200}u^2 (u^4_2)^2+\frac{31}{3600}u^3 u^3_4 u^4\right.\\
&\left.\phantom{\int a \hbar^2}+\frac{7}{450}u^3_1 u^3_3 u^4+\frac{91}{7200}(u^3_2)^2 u^4+\frac{13}{12000}(u^4_2)^2(u^4)^2+\frac{3}{4000}u^4_2 (u^4_1)^2 u^4\right)+\\
&\phantom{\int a}\eps^6\left(\frac{53}{108000}u^3 u^3_6+\frac{11}{18000}u^2 u^4_6+\frac{1397}{6480000}(u^4_3)^2 u^4+\frac{617}{1620000}u^4_4 u^4_2 u^4\right)+\\
&\phantom{\int a}\eps^8\left.\frac{107}{10800000}u^4 u^4_8\right]dx.
\end{align*}
Therefore,
$$
\frac{\delta\og_{1,1}}{\delta u^1}=u^1u^4+u^2u^3+\eps^2\left(\frac{1}{3}u^1_{xx}+\frac{1}{20}\d_x^2(u^3u^4)\right)+\eps^4\frac{11}{900}u^3_4.
$$
For the normal coordinates we obtain
\begin{align*}
\tu^1=&u^1+\frac{\eps^2}{60}u^3_{xx},\\
\tu^2=&u^2+\frac{\eps^2}{60}u^4_{xx},\\
\tu^3=&u^3,\\
\tu^4=&u^4.
\end{align*}
In~\cite{BG15} it was proved that the Hamiltonians and the hamiltonian operator of the double ramification hierarchy in the coordinates $\tu^\alpha$ coincide with the Hamiltonians and the hamiltonian operator of the Dubrovin-Zhang hierarchy. Again, by Proposition~\ref{proposition:sufficient condition for strong}, Conjecture~\ref{conjecture:strong} is true for the $5$-spin theory.


\section{Double ramification hierarchy in genus $1$}\label{section:genus1}

In this section we compute the genus $1$ part of the double ramification hierarchy associated to any cohomological field theory in terms of genus $0$ data only. We also prove the strong equivalence between double ramification and Dubrovin-Zhang hierarchies at genus less or equal to $1$ (i.e. modulo $O(\eps^4)$) for semisimple cohomological field theories, by comparison with the genus $1$ correction to the DZ hierarchy as computed in \cite{DZ98}. We stress here that these results are only valid for genuine double ramification hierarchies associated to CohFTs, not for the generalized kind appearing in the next section.

\subsection{Genus $1$ correction to the Hamiltonians}
Let $\og$ be the primary potential of the double ramification hierarchy, i.e
\begin{equation}
\begin{split}
\og:=&\sum_{\substack{g\ge 0,\,n\ge 2\\2g-2+n>0}}\frac{(-\eps^2)^g}{n!}\sum_{\substack{a_1,\ldots,a_n\in\mbZ\\ \sum a_i = 0}}\left(\int_{\DR_g\left(a_1,\ldots,a_n\right)}\lambda_g c_{g,n}(\otimes_{i=1}^n e_{\alpha_i})\right)\prod_{i=1}^n p^{\alpha_i}_{a_i},
\end{split}
\end{equation}
and let
$$
\og = \og^{[0]} + \eps^2 \og^{[2]} + O(\eps^4)
$$
and
\begin{align*}
&\og_{\beta,p}=\og_{\beta,p}^{[0]} + \eps^2 \og_{\beta,p}^{[2]} + O(\eps^4),\\
&g_{\beta,p}=g_{\beta,p}^{[0]} + \eps^2 g_{\beta,p}^{[2]} + O(\eps^4),
\end{align*}
then $\og^{[0]}= \int f(u^1,\ldots,u^n) dx$ where $f$ is the genus $0$ Frobenius potential of the underlying cohomological field theory. We have the following general lemma.

\begin{lemma} For the double ramification hierarchy associated to any cohomological field theory, we have
$$
\og^{[2]} =-\frac{1}{48} \int c_{\alpha\beta}^\epsilon c_{\epsilon\mu}^\mu \ u^\alpha_x u^\beta_x \ dx,
$$
\begin{equation}\label{g1DRHamiltonian}
\og_{\gamma,p}^{[2]}=-\frac{1}{24} \int \left(\frac{1}{2}\frac{\partial g^{[0]}_{\gamma,p-1}}{\partial u^\zeta} \eta^{\zeta\sigma} \frac{\partial}{\partial u^\sigma}(c_{\alpha \beta}^\epsilon c_{\epsilon \mu}^\mu) + \frac{\partial g^{[0]}_{\gamma,p-2}}{\partial u^\zeta} c^\zeta_{\delta\sigma}c^\delta_{\alpha\beta}c^{\sigma\mu}_\mu\right)u^\alpha_x u^\beta_x \ dx,
\end{equation}
where $g^{[0]}_{\gamma,-2}=0$, $g^{[0]}_{\gamma,-1}=\eta_{\gamma \mu} u^\mu$, $c_{\alpha\beta\gamma}=\frac{\partial f}{\partial u^\alpha \partial u^\beta \partial u^\gamma}$ and, as usual, indices are raised and lowered by $\eta$.
\end{lemma}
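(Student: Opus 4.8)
The plan is to extract the genus-$1$ part of the double ramification Hamiltonians directly from Hain's formula \eqref{eq:Hain's formula}, since $\lambda_g$ forces everything onto $\cM^\ct_{g,n}$ and in genus $1$ the double ramification cycle $\DR_1(a_1,\dots,a_n)|_{\cM^\ct_{1,n}}$ is simply the degree-one class
$$
\tfrac12\sum_j a_j^2\psi_j^\dagger-\sum_{|J|\ge2}\Bigl(\sum_{i<j\in J}a_ia_j\Bigr)\delta_0^J .
$$
First I would compute $\og^{[2]}$: it is the coefficient of $\eps^2$ in $\og$, hence a sum over $n\ge 2$ of $\frac{-1}{n!}$ times $\int_{\oM_{1,n}}\DR_1(a_1,\dots,a_n)\lambda_1 c_{1,n}(\otimes e_{\alpha_i})\prod p^{\alpha_i}_{a_i}$. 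Using $\lambda_1\psi_j^\dagger$ and the integral $\int_{\oM_{1,1}}\lambda_1\psi_1=\tfrac1{24}$, together with the pushforward formula $\int_{\oM_{1,n}}\lambda_1 c_{1,n}(\ldots)\delta_0^J(\ldots)$ reducing to genus-$0$ data (splitting the class $c$ along the node and producing a factor $c^\epsilon_{\alpha\beta}c^\mu_{\epsilon\mu}$ or similar), one reduces everything to genus-$0$ correlators $c_{\alpha\beta\gamma}$. Converting $\prod p^{\alpha_i}_{a_i}$ (with the $a_i^2$ and $a_ia_j$ weights) back to differential polynomials via \eqref{eq:u-p change} turns $a_j^2$ into $-u^\alpha_x u^\beta_x$-type terms, yielding $\og^{[2]}=-\frac1{48}\int c^\epsilon_{\alpha\beta}c^\mu_{\epsilon\mu}u^\alpha_x u^\beta_x\,dx$. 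The normalization constant $-\tfrac1{48}$ comes from $-\tfrac12\cdot\tfrac1{24}$ together with the combinatorics of which marked points carry the nontrivial $a$-weight.

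Next, for $\og_{\gamma,p}^{[2]}$ I would run the analogous computation with the extra insertion $\psi_1^p$ at the $0$-th marked point in \eqref{DR Hamiltonians}. In genus $1$ on $\cM^\ct_{1,n+1}$ one has $\DR_1(0,a_1,\dots,a_n)\lambda_1=\lambda_1 T(a_1,\dots,a_n)$ with $T$ from \eqref{eq:definition of T}, and now one must also track how $\psi_1^p$ interacts with the boundary divisors $\delta_0^J$: on each $\delta_0^J$, $\psi_1$ pulls back from the component carrying the $0$-marked point, and the genus-$1$ contribution may sit on either side of the node. The cleanest route is to express things in terms of the genus-$0$ densities $g^{[0]}_{\gamma,p}$: the genus-$0$ DR hierarchy is the principal hierarchy (Lemma~\ref{lemma:genus 0 tau} and the discussion preceding it), so $g^{[0]}_{\gamma,p}$ is known and satisfies $\frac{\d g^{[0]}_{\gamma,p}}{\d u^1}=g^{[0]}_{\gamma,p-1}$. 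The $\eps^2$-correction then organizes into the two terms in \eqref{g1DRHamiltonian}: the first, with $\frac{\partial g^{[0]}_{\gamma,p-1}}{\partial u^\zeta}\eta^{\zeta\sigma}\frac{\partial}{\partial u^\sigma}(c^\epsilon_{\alpha\beta}c^\mu_{\epsilon\mu})$, from the ``diagonal'' $\tfrac12 a_j^2\psi_j^\dagger$ contributions after one use of the dilaton/string structure, and the second, with $\frac{\partial g^{[0]}_{\gamma,p-2}}{\partial u^\zeta}c^\zeta_{\delta\sigma}c^\delta_{\alpha\beta}c^{\sigma\mu}_\mu$, from the $\delta_0^J$ boundary terms where the genus-$1$ piece is on the component not containing the $0$-point, which eats one extra power of $\psi$ and hence shifts $p-1$ to $p-2$.

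The main obstacle I expect is the bookkeeping of the boundary contributions $\delta_0^J$ and the associated factorization of the CohFT class along separating nodes, combined with tracking exactly how the weights $a_j^2$, $a_ia_j$ and $a_J^2$ distribute over the vertices of the (at most two-vertex, genus $0+1$ or $1+0$) stable trees; one has to be careful that $\psi$-classes restricted to boundary divisors split as $\psi^\dagger$ plus a lower correction, and that the change of variables \eqref{eq:u-p change} converts $a_j^{2}\mapsto -\d_x^2$ correctly including signs. A useful check is to verify the result against the known genus-$1$ examples in Section~\ref{subsection:homogeneity}'s spirit (e.g.\ the Hodge and $\CP^1$ cases computed in \cite{Bur15,BR14}), and against the string equation \eqref{eq:string for og}, which forces $\frac{\d\og^{[2]}_{\gamma,p}}{\d u^1}=\og^{[2]}_{\gamma,p-1}$ — this is exactly the consistency one reads off from \eqref{g1DRHamiltonian} using $\frac{\d g^{[0]}_{\gamma,p}}{\d u^1}=g^{[0]}_{\gamma,p-1}$, and it fixes all remaining ambiguity in the constants.
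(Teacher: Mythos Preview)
Your approach is essentially the same as the paper's: apply Hain's formula in genus~$1$, multiply by $\lambda_1$, reduce everything to genus-$0$ correlators, and then repackage into generating functions. The paper carries this out in exactly this order, and for $\og_{\gamma,p}^{[2]}$ indeed just says ``derived in a similar fashion''.

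One point worth sharpening: the clean mechanism that turns all of the genus-$1$ integrals into genus-$0$ ones is not the number $\int_{\oM_{1,1}}\lambda_1\psi_1=\tfrac1{24}$ per se, but the identity $\lambda_1=\tfrac1{24}\delta_{\mathrm{irr}}$ on $\oM_{1,n}$. With this, $\int_{\oM_{1,n}}\lambda_1\psi_i\,c_{1,n}(\dots)$ becomes $\tfrac1{24}\int_{\oM_{0,n+2}}\psi_i\,c_{0,n+2}(\dots\otimes e_\mu\otimes e_\nu)\eta^{\mu\nu}$ via the loop axiom, and the $\delta_0^J$ terms factor as a genus-$0$ correlator times a genus-$1$ correlator which again reduces through $\lambda_1=\tfrac1{24}\delta_{\mathrm{irr}}$. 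This is where your trace $c^\mu_{\epsilon\mu}$ actually comes from; your pushforward-to-$\oM_{1,1}$ picture works only because it is secretly invoking this identity. The paper also notes that the raw generating-function expression one first obtains,
\[
\og^{[2]}=\frac{1}{48}\int\Bigl(u^\alpha_{xx}\,\tfrac{\partial^2 g^{[0]}_{\alpha,1}}{\partial u^\mu\partial u^\nu}\eta^{\mu\nu}-\tfrac{\partial f}{\partial u^\mu}\eta^{\mu\nu}\partial_x^2\tfrac{\partial^3 f}{\partial u^\nu\partial u^\epsilon\partial u^\delta}\eta^{\epsilon\delta}\Bigr)dx,
\]
is brought to the stated form using the genus-$0$ topological recursion relations, which is the step that replaces your ``dilaton/string structure'' bookkeeping. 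Your consistency check via $\partial_{u^1}\og^{[2]}_{\gamma,p}=\og^{[2]}_{\gamma,p-1}$ is a good sanity check but does not by itself fix the answer.
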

\begin{proof}
Simply apply Hain's formula in genus $1$,
$$
\left.\DR_1(a_1,\ldots,a_n)\right|_{\cM^\ct_{1,n}}= \sum_{i=1}^n \frac{a_i^2}{2} \psi_i - \frac{1}{2} \sum_{\substack{J\subset \{1,\ldots,n\}\\|J|\geq 2}} a_J^2 \delta^J_0,
$$
and intersect it with $\lambda_1$ which on $\oM_{1,n}$ coincides with $\frac{1}{24} \delta_{\mathrm{irr}}$, where $\delta_{\mathrm{irr}}$ is the boundary divisor of genus $1$ curves with a non-separating node. For instance, this yields the following expression for the correlators entering the definition of $\og^{[2]}$:
\begin{equation*}
\begin{split}
&\int_{\DR_1(a_1,\ldots,a_n)}\lambda_1 c_{1,n}(\otimes_{i=1}^n e_{\alpha_i}) = 
\frac{1}{48} \left( \sum_{i=1}^n a_i^2\int_{\oM_{0,n+2}} \psi_i c_{0,n+2}(\otimes_{i=1}^n e_{\alpha_i}\otimes e_\mu\otimes e_\nu) \eta^{\mu\nu}\right. \\
& -\sum_{\substack{J\subset \{1,\ldots,n\}\\|J|\geq 2}} a_J^2 \int_{\oM_{0,|J|+1}} c_{0,|J|+1}(e_{\alpha_J}\otimes e_\mu) \eta^{\mu\nu} \left.\int_{\oM_{0,n-|J|+3}}c_{0,n-|J|+3} (e_\nu\otimes e_{\alpha_{J^c}}\otimes e_\epsilon\otimes e_\delta) \eta^{\epsilon\delta}\right),
\end{split}
\end{equation*}
where, for $J=\{j_1,\ldots,j_{|J|}\}$, $e_{\alpha_J}$ denotes the tensor product $e_{\alpha_{j_1}}\otimes\ldots\otimes e_{\alpha_{j_{|J|}}}$ and similarly for the complement $J^c$. In terms of generating functions we then get
$$\og^{[2]} =  \frac{1}{48} \int \left( u^\alpha_{xx} \frac{\partial^2 g^{[0]}_{\alpha,1}}{\partial u^\mu \partial u^\nu} \eta^{\mu \nu} - \frac{\partial f}{\partial u^\mu} \eta^{\mu \nu} \partial_x^2 \frac{\partial^3 f}{\partial u^\nu \partial u^\epsilon \partial u^\delta} \eta^{\epsilon \delta}\right)\ dx$$
and using genus $0$ topological recursion relations yields the lemma. The formula for $\og_{\gamma,p}^{[2]}$ is derived in a similar fashion.
\end{proof}

\subsection{DR/DZ equivalence in genus $1$}
Consider a cohomological field theory whose genus~$0$ part is described by an $N$-dimensional semisimple Frobenius manifold  with potential $f=f(v^1,\ldots,v^N)$, flat coordinates $v^1,\ldots,v^N$ and flat metric $\eta$. Denote by $v^\top(x,t^*_*)$ the genus $0$ part of the topological solution, i.e. $v^\top = w^\top|_{\eps=0}$. Recall from \cite{Get97} that the genus~$1$ part of the partition function of this CohFT can be written as
\begin{equation}\label{eq:g1partitionf}
F_1 = \left.\left.\left(\frac{1}{24} \log \det (c^*_{*\mu} v^\mu_x) + G(v^1,\ldots,v^N)\right)\right|_{v^\gamma_n=\left(v^\top\right)^\gamma_n(x,t^*_*)}\right|_{x=0},
\end{equation}
where $c_{\alpha\beta\gamma} = \frac{\partial^3 f}{\partial v^\alpha \partial v^\beta \partial v^\gamma}$ and $G|_{v^*=t^*_0} = F_1|_{t^*_{>0}=0}$ is the primary (no descendents) partition function in genus $1$, the so-called $G$-function. We will also denote by
$$
\widetilde v^\alpha := v^\alpha + \eta^{\alpha\mu}\frac{\d^2}{\d t^\mu_0 \d t^1_0}\left(\frac{\eps^2}{24} \log \det (c^*_{*\mu} v^\mu_x)\right)+O(\eps^4)
$$ 
the ``intermediate'' coordinates obtained by ignoring the $G$-function.
Dubrovin and Zhang computed the genus $1$ correction to the DZ hierarchy in \cite{DZ98}. To simplify the notations, in this section we will denote $h^{\DZ}_{\beta,p}$ simply by $h_{\beta,p}$ and its part of degree $2k$ in $\eps$ by $h^{[2k]}_{\beta,p}$.
\begin{theorem}[\cite{DZ98}]\label{lemma:genus1}
The genus $1$ topological deformation of the principal hierarchy associated to a semisimple Frobenius manifold is given in two steps, which correspond to the two terms of equation (\ref{eq:g1partitionf}). First the following deformation of the hamiltonian operator:
\begin{equation}\label{g1DZPoisson}
K^{\alpha \beta}_{\widetilde v} = \eta^{\alpha\beta}\d_x+\frac{\eps^2}{24}\left(c_\mu^{\mu\alpha\beta}\d_x^3+\d_x^3\circ c_\mu^{\mu\alpha\beta}-\d_x^2 c_\mu^{\mu\alpha\beta}\d_x-\d_x\circ \d_x^2 c_\mu^{\mu\alpha\beta}\right)+O(\eps^4),
\end{equation}
and the Hamiltonians:
\begin{equation}\label{g1DZHamiltonian}
\oh_{\beta,p}' = \oh_{\beta,p}^{[0]}[\widetilde v] +\frac{\epsilon^2}{24} \int \left( \frac{\partial h_{\beta,p-1}^{[0]}(\widetilde v)}{\partial \widetilde v^\zeta} \left(c^\zeta_{\nu\gamma}c^{\mu\nu}_{\alpha\mu}-c^\zeta_{\mu\nu\alpha}c^{\mu\nu}_{\gamma}\right)- \frac{\partial h_{\beta,p-2}^{[0]}(\widetilde v)}{\partial \widetilde v^\zeta} c^\zeta_{\delta\sigma} c^{\sigma\mu}_\gamma c^\delta_{\alpha\mu}\right) \widetilde v^\alpha_x \widetilde v^\gamma_x \ dx +O(\epsilon^4),
\end{equation}
where $c_{\alpha\beta\gamma}$ and $c_{\alpha\beta\gamma\delta}$ denote the third and fourth derivatives of $f(\widetilde v)$, respectively, and the indices are raised and lowered by $\eta$. Then the normal Miura transformation generated by the differential polynomial $\cF=\eps^2 G(\widetilde v^1,\ldots,\widetilde v^N)+O(\eps^4)$ in the notations of Section~\ref{subsection:normal Miura}.
\end{theorem}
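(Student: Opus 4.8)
The statement is taken from \cite{DZ98}, with the partition function \eqref{eq:g1partitionf} due to \cite{Get97}, so strictly speaking the proof is a pointer to those papers; here I outline how the verification is organized. The plan is to read the hierarchy off the partition function $F=F_0+\eps^2 F_1+O(\eps^4)$ by means of the defining property of the DZ two-point functions, $\left.\Omega^\DZ_{\alpha,p;\beta,q}\right|_{w^\gamma_n=(w^\top)^\gamma_n}=\left.\frac{\d^2 F}{\d t^\alpha_p\d t^\beta_q}\right|_{t^1_0\mapsto t^1_0+x}$, together with $h_{\alpha,p}=\Omega^\DZ_{\alpha,p+1;1,0}$ and $(K^\DZ)^{\alpha\beta}=\eta^{\alpha\beta}\d_x+O(\eps^2)$.

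The first move is to separate the two summands of \eqref{eq:g1partitionf}. The $G$-function depends on the primary variables alone, so its contribution to $F_1$ is simply $\left.G(v^\top)\right|_{x=0}$; by the discussion of normal Miura transformations in Section~\ref{subsection:normal Miura} this is exactly the modification of the tau-function (multiplication by $e^{\eps^{-2}\cF}$) and of the tau-structure produced by the normal Miura transformation with generating density $\cF=\eps^2 G+O(\eps^4)$. Peeling this off, it remains to treat the ``intermediate'' hierarchy whose genus~$1$ partition function is $\tfrac{1}{24}\log\det(c^*_{*\mu}v^\mu_x)$, written in the intermediate coordinates $\widetilde v^\alpha$ of the statement (the normal coordinates of the intermediate tau-structure).

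For the Poisson operator I would apply the Miura-transformation rule \eqref{eq:transformation of an operator} to the genus~$0$ operator $\eta^{\alpha\beta}\d_x$ along the change of variables $v\mapsto\widetilde v$, expand to order $\eps^2$, and simplify using $\d_{t^1_0}v^\alpha=v^\alpha_x$ on the topological solution and the genus~$0$ topological recursion relations; everything collapses onto $c_\mu^{\mu\alpha\beta}$ and one obtains \eqref{g1DZPoisson}. For the Hamiltonians I would compute $h_{\alpha,p}=\Omega^\DZ_{\alpha,p+1;1,0}$ to order $\eps^2$ from the definition, substitute \eqref{eq:g1partitionf}, use the genus~$0$ and the genus~$1$ topological recursion relations of \cite{Get97,DZ98} to eliminate all $\psi$-descendants, and then change to the $\widetilde v$ variables; this produces \eqref{g1DZHamiltonian}. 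That \eqref{g1DZPoisson} is a Hamiltonian operator and that the densities \eqref{g1DZHamiltonian} are tau-symmetric is automatic, since they arise from an honest tau-function, but is worth recording.

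The main obstacle is the bookkeeping inside the recursion relations: logarithmic derivatives of $\det(c^\alpha_{\beta\mu}v^\mu_x)$ bring in the inverse matrix $(c^\alpha_{\beta\mu}v^\mu_x)^{-1}$, and one must check that after applying the genus~$0$ topological recursion relations all such inverses cancel, so that the intermediate operator and Hamiltonians end up polynomial in $c_{\alpha\beta\gamma}$ and $c_{\alpha\beta\gamma\delta}$ as asserted. A secondary point is the careful handling of total $x$-derivatives (densities and local functionals differ only by constants and by the image of $\d_x$), which is what pins down the precise factor $\tfrac{1}{24}$ and the particular index contractions $c^\zeta_{\nu\gamma}c^{\mu\nu}_{\alpha\mu}-c^\zeta_{\mu\nu\alpha}c^{\mu\nu}_{\gamma}$ and $c^\zeta_{\delta\sigma}c^{\sigma\mu}_\gamma c^\delta_{\alpha\mu}$ appearing in \eqref{g1DZHamiltonian}.
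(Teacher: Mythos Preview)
Your proposal is correct and matches the paper's treatment: the paper does not prove this theorem but simply cites it from \cite{DZ98} (with \eqref{eq:g1partitionf} attributed to \cite{Get97}), exactly as you note at the outset. Your additional sketch of how the verification is organized---separating the $G$-function contribution as a normal Miura transformation and then extracting the intermediate operator and Hamiltonians from the $\tfrac{1}{24}\log\det$ term via topological recursion relations---is a helpful elaboration beyond what the paper provides, and accurately reflects the structure of the argument in \cite{DZ98}.
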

We have the following result (see also \cite{BCRR14} for an application to the Gromov-Witten theory of local $\mathbb{P}^1$-orbifolds).
\begin{proposition}
The Miura transformation
\begin{gather}\label{eq:tv and u}
\widetilde v^\alpha\mapsto u^\alpha(\widetilde v)=\widetilde v^\alpha-\frac{\eps^2}{24}\d_x^2 c^{\alpha\mu}_\mu(\widetilde v)+O(\eps^4)
\end{gather}
maps the hamiltonian operator (\ref{g1DZPoisson}) to
\begin{equation}\label{standardPoisson}
K_u^{\alpha\beta} = \eta^{\alpha\beta} \partial_x +O(\epsilon^4)
\end{equation}
and the Hamiltonians (\ref{g1DZHamiltonian}) to the Hamiltonians of the DR hierarchy:
\begin{equation}\label{g1DZHamiltoniant}
\og_{\beta,p} =\oh_{\beta,p}^{[0]}[u]  -\frac{\epsilon^2}{24} \int\left( \frac{\partial h_{\beta,p-1}^{[0]}}{\partial u^\zeta} c^\zeta_{\mu\nu\alpha}c^{\mu\nu}_{\gamma}+ \frac{\partial h_{\beta,p-2}^{[0]}}{\partial u^\zeta} c^\zeta_{\delta\sigma} c^{\sigma\mu}_\mu c^\delta_{\alpha\gamma}\right) u^\alpha_x u^\gamma_x \ dx+O(\epsilon^4) .
\end{equation}
\end{proposition}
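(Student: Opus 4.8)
The statement contains two independent claims — the transformation of the hamiltonian operator, and the transformation of the Hamiltonians — and the plan is to verify each by a direct order‑by‑order computation modulo $O(\eps^4)$, using the transformation rules of Section~\ref{subsection:Miura transformations}, the genus~$0$ topological recursion relations $\frac{\d^2 h^{[0]}_{\beta,p+1}}{\d u^\gamma\d u^\delta}=c^\mu_{\gamma\delta}\frac{\d h^{[0]}_{\beta,p}}{\d u^\mu}$ in the flat (normal) coordinates, with $h^{[0]}_{\gamma,-1}=\eta_{\gamma\mu}u^\mu$, and the associativity (WDVV) equations for the $c^\mu_{\alpha\beta}$. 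One also uses that the fourth‑derivative object appearing in (\ref{g1DZPoisson}) satisfies $c^{\mu\alpha\beta}_\mu=\eta^{\alpha\gamma}\d_\gamma\big(c^{\beta\mu}_\mu\big)=\eta^{\beta\gamma}\d_\gamma\big(c^{\alpha\mu}_\mu\big)$, which exhibits (\ref{g1DZPoisson}) as a Dubrovin--Novikov‑type deformation of $\eta\,\d_x$ built from the single vector field $c^{\alpha\mu}_\mu$, and from which $F^\red$, $g^{[0]}_{\gamma,p}=h^{[0]}_{\gamma,p}$ and related identities will be used freely.

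\textbf{The hamiltonian operator.} Writing (\ref{eq:tv and u}) as $u^\alpha(\tv)=\tv^\alpha-\frac{\eps^2}{24}\d_x^2\big(c^{\alpha\mu}_\mu(\tv)\big)+O(\eps^4)$, its linearization is the matrix of operators $L^{\alpha\beta}=\delta^\alpha_\beta-\frac{\eps^2}{24}\d_x^2\circ\big(\d_\beta c^{\alpha\mu}_\mu\big)+O(\eps^4)$, and the second factor in (\ref{eq:transformation of an operator}) is the matrix $(L^\dagger)^{\nu\beta}$ of formal adjoints, $(L^\dagger)^{\nu\beta}=\delta^\nu_\beta-\frac{\eps^2}{24}\big(\d_\nu c^{\beta\mu}_\mu\big)\circ\d_x^2+O(\eps^4)$. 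Plugging these and $K_{\tv}$ from (\ref{g1DZPoisson}) into (\ref{eq:transformation of an operator}), $K_u=L\circ K_{\tv}\circ L^\dagger$, and extracting the $\eps^2$ part, one must show
$$
L^{[2]}\circ\eta\,\d_x+\eta\,\d_x\circ(L^\dagger)^{[2]}+K^{[2]}_{\tv}=0 .
$$
Using $\d_x\circ M_g=M_g\circ\d_x+M_{\d_x g}$ to move all multiplication operators to the left, every term is brought to the form $M_{e^{\alpha\beta}}\circ\d_x^k$, $M_{\d_x e^{\alpha\beta}}\circ\d_x^k$, etc., with $e^{\alpha\beta}=c^{\mu\alpha\beta}_\mu$; collecting the coefficients of $\d_x^3,\d_x^2,\d_x,\d_x^0$ one finds they each vanish identically. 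This establishes (\ref{standardPoisson}).

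\textbf{The Hamiltonians.} Substitute the inverse transformation $\tv^\alpha=u^\alpha+\frac{\eps^2}{24}\d_x^2\big(c^{\alpha\mu}_\mu(u)\big)+O(\eps^4)$ into $\oh'_{\beta,p}$ from (\ref{g1DZHamiltonian}). The leading term $\oh^{[0]}_{\beta,p}[\tv]$ contributes, besides $\oh^{[0]}_{\beta,p}[u]$, the $\eps^2$‑correction $\frac{1}{24}\int\frac{\delta\oh^{[0]}_{\beta,p}}{\delta u^\alpha}\,\d_x^2\big(c^{\alpha\mu}_\mu\big)\,dx$; after two integrations by parts, using $\frac{\delta\oh^{[0]}_{\beta,p}}{\delta u^\alpha}=\frac{\d h^{[0]}_{\beta,p}}{\d u^\alpha}$ and the genus~$0$ recursion relation, this becomes a local functional $\int(\cdots)\,u^\alpha_x u^\gamma_x\,dx$ with coefficients built from third and fourth derivatives of $f$, while the explicit $\eps^2$‑part of (\ref{g1DZHamiltonian}) is unchanged to the required order under $\tv\mapsto u$. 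Adding the two contributions, integrating by parts once more to bring everything to the $u^\alpha_x u^\gamma_x$ form, and applying the genus~$0$ recursion relation and WDVV to reorganize the structure constants, one arrives at (\ref{g1DZHamiltoniant}). To see that (\ref{g1DZHamiltoniant}) is indeed the DR Hamiltonian, one finally compares it with the formula for $\og^{[2]}_{\beta,p}$ proved in the Lemma above: expanding there $\d_\sigma\big(c^\epsilon_{\alpha\beta}c^\mu_{\epsilon\mu}\big)=c^\epsilon_{\sigma\alpha\beta}c^\mu_{\epsilon\mu}+c^\epsilon_{\alpha\beta}c^\mu_{\sigma\epsilon\mu}$, using $g^{[0]}_{\gamma,p}=h^{[0]}_{\gamma,p}$, integrating $\d_\zeta g^{[0]}_{\gamma,p-1}$ by parts against one factor $u_x$ and applying the recursion relation to pass from the index $p-1$ to $p-2$, together with one more use of WDVV, the two expressions are seen to coincide.

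\textbf{Main obstacle.} The operator identity is routine. The real work is the Hamiltonian computation: terms containing fourth derivatives $c_{\alpha\beta\gamma\delta}$ of the Frobenius potential enter from three distinct sources — the Taylor expansion of $\oh^{[0]}_{\beta,p}[\tv]$, the manifest $\eps^2$‑term in (\ref{g1DZHamiltonian}), and the reconciliation with the Lemma's formula for $\og^{[2]}_{\beta,p}$ — and one must verify that, after all integrations by parts and applications of the genus~$0$ topological recursion relations and WDVV, these assemble into precisely the two terms displayed in (\ref{g1DZHamiltoniant}). Keeping accurate track of which integration by parts lowers the descendant index of $h^{[0]}_{\beta,\bullet}$ by one, and of the symmetrizations forced by WDVV, is where all the care lies; no geometric input beyond the Lemma above and the standard genus~$0$ relations is needed.
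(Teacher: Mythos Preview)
Your proposal is correct and follows essentially the same approach as the paper: a direct computation using the Miura transformation rule~\eqref{eq:transformation of an operator} for the hamiltonian operator, and Taylor expansion together with genus~$0$ topological recursion relations for the Hamiltonians. The one place you differ slightly is in the final identification of (\ref{g1DZHamiltoniant}) with the DR formula (\ref{g1DRHamiltonian}) from the Lemma: you propose expanding $\d_\sigma(c^\epsilon_{\alpha\beta}c^\mu_{\epsilon\mu})$, integrating by parts to shift the descendant index, and invoking WDVV, whereas the paper shortcuts this by observing that the difference between the $\eps^2$ coefficients of (\ref{g1DRHamiltonian}) and (\ref{g1DZHamiltoniant}) is the contraction of a tensor antisymmetric in $\alpha,\gamma$ with the symmetric quantity $u^\alpha_x u^\gamma_x$, hence vanishes under the integral. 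Your route is more laborious but equivalent; the paper's observation spares you the integrations by parts and the explicit WDVV manipulation.
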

\begin{proof}
The proof is an immediate consequence of the formula $K_{\widetilde v}^{\alpha\beta} = (L^*)^\alpha_\mu \circ K_u^{\mu\nu} \circ L^\beta_\nu$ for the transformation of the hamiltonian operator $K^{\alpha\beta}$, where $(L^*)^\alpha_\mu = \sum_{s\geq 0} \frac{\partial \widetilde v^\alpha}{\partial u^\mu_s} \partial_x^s$ and $L^\beta_\nu = \sum_{s\geq 0} (-\partial_x)^s \circ\frac{\partial \widetilde v^\beta}{\partial u^\nu_s}$. For the Hamiltonians one simply evaluates the functions at the shifted values, performs Taylor's expansion and uses genus $0$ topological recursion relations, further remarking that the difference between  (\ref{g1DRHamiltonian}) and the coefficient of $\eps^2$ in (\ref{g1DZHamiltoniant}) consists in the (integral of a) contraction of a tensor antisymmetric in $\alpha,\gamma$ with the symmetric quantity $u^\alpha_x u^\gamma_x$.
\end{proof}
From this result, the strong DR/DZ equivalence at genus less or equal to $1$ follows easily.
\begin{theorem}
The strong DR/DZ equivalence conjecture is true for an arbitrary semisimple cohomological field theory modulo $O(\eps^4)$.
\end{theorem}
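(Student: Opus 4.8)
The plan is to assemble the pieces already in place. We have just shown (the preceding Proposition) that the Miura transformation \eqref{eq:tv and u} takes the hamiltonian operator \eqref{g1DZPoisson} to the standard operator \eqref{standardPoisson} and the Hamiltonians \eqref{g1DZHamiltonian} to the Hamiltonians of the double ramification hierarchy modulo $O(\eps^4)$. Combining this with Theorem~\ref{lemma:genus1}, we see that the composition of the Miura transformation $v^\alpha\mapsto\widetilde v^\alpha$ (the ``intermediate'' coordinates, corresponding to the first term of \eqref{eq:g1partitionf}) with \eqref{eq:tv and u} identifies, modulo $O(\eps^4)$, the Hamiltonians and the hamiltonian operator of the Dubrovin--Zhang hierarchy written in intermediate coordinates $\widetilde v^\alpha$ with those of the double ramification hierarchy written in the coordinates $u^\alpha$; by Lemma~\ref{lemma:properties of the DR normal} the latter are precisely the normal coordinates $\tu^\alpha$ of the DR hierarchy modulo $O(\eps^4)$, since \eqref{eq:form of tu} and \eqref{eq:tv and u} have the same leading $\eps^2$-correction structure. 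So the first step is to package these facts into the statement that the total Miura transformation $w^\alpha\mapsto\tu^\alpha(w)$ of Conjecture~\ref{conjecture:strong} exists modulo $O(\eps^4)$.

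Next I would verify that this Miura transformation has the special form \eqref{eq:sufficient form of a transformation} required by Proposition~\ref{proposition:sufficient condition for strong}. Concretely, the transformation from DZ normal coordinates $w^\alpha$ to $\tu^\alpha$ is, modulo $O(\eps^4)$, the inverse of the normal Miura transformation generated by $\cF=\eps^2 G(\widetilde v)+O(\eps^4)$ composed with \eqref{eq:tv and u}. One checks that \eqref{eq:tv and u} is itself generated by a differential polynomial of the form $\cQ_0 = \eps^2\,c^{\,\mu}_{\mu}$-type density (more precisely $\eta^{\alpha\mu}\d_x\{\cQ,\og_{\mu,0}[\tu]\}_{K^\DR_\tu}$ reproduces $-\frac{\eps^2}{24}\d_x^2 c^{\alpha\nu}_\nu$ at leading order), so that the total generating differential polynomial is $\cQ = \eps^2\bigl(G - \tfrac{1}{24}\, \partial_x\text{-primitive of }c^{\,\mu}_\mu\bigr) + O(\eps^4)$ in the coordinates $\tu$. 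One then needs $\frac{\d\cQ}{\d\tu^1}=\eps^2\<\tau_0(e_1)\>_1 + O(\eps^4)$; since $\<\tau_0(e_1)\>_1 = \frac{\d G}{\d v^1}\big|_{v^*=0}$ up to the contribution of the string equation, this is the genus~$1$ version of the normalization already encoded in formula \eqref{eq:definition of F10}, and it reduces to a short check using the string equation \eqref{eq:string for F} and the explicit form of $G$.

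Having done this, Proposition~\ref{proposition:sufficient condition for strong} applies modulo $O(\eps^4)$ and gives the strong DR/DZ equivalence at genus $\le 1$; equivalently, by Proposition~\ref{proposition:reformulation}, $F^\DR = F^\red$ modulo $O(\eps^4)$. I expect the main obstacle to be bookkeeping rather than conceptual: one must be careful that the three distinct changes of variables (DZ flat coordinates $v$, intermediate coordinates $\widetilde v$, DR normal coordinates $\tu$, and the DR ``flat'' coordinates $u$ with operator $\eta\d_x$) are composed in the right order and that all the genus~$0$ topological recursion relations used to match the antisymmetric-tensor corrections in \eqref{g1DRHamiltonian} versus \eqref{g1DZHamiltoniant} are invoked consistently. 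The genuinely delicate point is the normalization $\frac{\d\cQ}{\d\tu^1}=\eps^2\<\tau_0(e_1)\>_1$, i.e.\ pinning down the additive $G$-function ambiguity correctly so that the inverse normal Miura transformation coincides with the canonical $\cP$ of Proposition~\ref{proposition:reduced potential}; everything else is a direct comparison of the explicit genus~$1$ formulas.
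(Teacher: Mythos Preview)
Your plan has the right overall shape, but there is a genuine gap at the first step. You write that ``by Lemma~\ref{lemma:properties of the DR normal} the latter are precisely the normal coordinates $\tu^\alpha$ of the DR hierarchy modulo $O(\eps^4)$, since \eqref{eq:form of tu} and \eqref{eq:tv and u} have the same leading $\eps^2$-correction structure.'' But Lemma~\ref{lemma:properties of the DR normal} only says $\tu^\alpha(u)=u^\alpha+\d_x^2 z^\alpha$ for \emph{some} $z^\alpha$; it does not identify $z^\alpha$. Having the same ``structure'' as the inverse of \eqref{eq:tv and u} is not enough: you must actually compute $\tu^\alpha(u)=\eta^{\alpha\mu}\frac{\delta\og_{\mu,0}}{\delta u^1}$ at order $\eps^2$ using the explicit $\og^{[2]}$ and check that it equals $u^\alpha+\frac{\eps^2}{24}\d_x^2 c^{\alpha\mu}_\mu+O(\eps^4)$, i.e.\ that $\tu^\alpha(u)=\widetilde v^\alpha(u)+O(\eps^4)$. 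This is the computation the paper carries out, and it is the only substantive calculation in the proof.

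Once you know $\tu=\widetilde v+O(\eps^4)$, your second paragraph becomes much simpler than you make it. The composition you are tracking collapses: passing from $w$ to $\widetilde v$ is the inverse of the normal Miura generated by $\cF=\eps^2 G$, and passing from $\widetilde v$ to $\tu$ is the identity modulo $O(\eps^4)$. So the generating polynomial is just $\cQ=\eps^2 G(\tu)+O(\eps^4)$, with no extra ``$-\tfrac{1}{24}\,\d_x$-primitive of $c^\mu_\mu$'' term; that term is an artifact of not having established $\tu=\widetilde v$. The condition $\frac{\d\cQ}{\d\tu^1}=\eps^2\<\tau_0(e_1)\>_1$ then follows immediately from the string equation applied to $G$. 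In fact the paper bypasses Proposition~\ref{proposition:sufficient condition for strong} altogether here: since $\tu=\widetilde v$, the DR hierarchy in normal coordinates coincides with the intermediate DZ hierarchy modulo $O(\eps^4)$; Lemma~\ref{lemma:uniqueness of tau-structure} then matches the tau-structures, and one reads off $\cP=-\eps^2 G+O(\eps^4)$ directly from the first step $F^{(1,0)}$ of the recursion in the proof of Proposition~\ref{proposition:reduced potential}.
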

\begin{proof}
From equation~\eqref{eq:tv and u} it follows that $\widetilde v^\alpha(u)=u^\alpha+\eps^2\d_x^2 c^{\alpha\mu}_\mu(u)+O(\eps^4)$. On the other hand, for the normal coordinates $\tu^\alpha(u)$ of the double ramification hierarchy we compute
\begin{align*}
\tu^\alpha(u)=&\eta^{\alpha\mu}h^\DR_{\mu,-1}=\eta^{\alpha\mu}\frac{\delta\og_{\mu,0}}{\delta u^1}=\eta^{\alpha\mu}\frac{\delta}{\delta u^1}\frac{\d\og}{\d u^\mu}=\eta^{\alpha\mu}\frac{\d}{\d u^\mu}\frac{\delta\og}{\delta u^1}=u^\alpha+\eps^2\eta^{\alpha\mu}\frac{\d}{\d u^\mu}\frac{\delta\og^{[2]}}{\delta u^1}+O(\eps^4)=\\
=&u^\alpha+\frac{\eps^2}{24}\eta^{\alpha\mu}\frac{\d}{\d u^\mu}\d_x(c^\gamma_{\nu\gamma}u^\nu_x)+O(\eps^4)=u^\alpha+\frac{\eps^2}{24}\d^2_x c^{\alpha\mu}_\mu+O(\eps^4).
\end{align*}
We see that $\widetilde v^\alpha(u)=\tu^\alpha(u)+O(\eps^4)$. Therefore, the double ramification hierarchy in the coordinates $\tu^\alpha$ coincides with the hierarchy given by the Hamiltonians~\eqref{g1DZHamiltonian} and the hamiltonian operator~\eqref{g1DZPoisson} modulo $O(\eps^4)$. By Lemma~\ref{lemma:uniqueness of tau-structure}, the tau-structures of these two hierarchies also coincide modulo $O(\eps^4)$. From the proof of Proposition~\ref{proposition:reduced potential} it is easy to see that $\cP=-\eps^2 G(w^1,\ldots,w^N)+O(\eps^4)$. Therefore, the strong DR/DZ conjecture is true modulo~$O(\eps^4)$. 
\end{proof}


\section{Generalized double ramification hierarchies}

In this section we remark that the construction of the double ramification hierarchy, as described in \cite{Bur15,BR14}, associating an infinite sequence of commuting Hamiltonians to a cohomological field theory, actually works in more general situations, where we relax some of the axioms of cohomological field theory.

\subsection{Partial cohomological field theories}
A system of linear maps $c_{g,n}:V^{\otimes n} \to H^\even(\oM_{g,n},\mathbb{C})$, where $V$ is a vector space with basis $e_1,\ldots,e_N$ and a symmetric bilinear form $\eta$, satisfying all axioms of a cohomological field theory with the exception of the loop axiom
\begin{equation}\label{eq:loopaxiom}
i^* c_{g,n}(e_{\alpha_1}\otimes\ldots\otimes e_{\alpha_n}) =  c_{g-1,n+2}(e_{\alpha_1}\otimes\ldots\otimes e_{\alpha_n}\otimes e_\mu\otimes e_\nu)\eta^{\mu\nu},
\end{equation}
where $i\colon\oM_{g-1,n+2}\to \oM_{g,n}$ is the natural boundary inclusion, was already considered in~\cite{LRZ} under the name of {\it partial cohomological field theory}.
\begin{proposition}
Given a partial cohomological field theory we can associate to it, via the same definitions used for the double ramification hierarchy, a system of commuting Hamiltonians and hamiltonian densities, which we call {\it generalized double ramification hierarchy}, satisfying all the properties of the usual double ramification hierarchy from \cite{Bur15,BR14}.
\end{proposition}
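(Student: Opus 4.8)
The plan is to observe that the loop axiom \eqref{eq:loopaxiom} --- the only axiom of a cohomological field theory that a partial cohomological field theory is allowed to violate --- is never used in the construction of the double ramification hierarchy, nor in any of the proofs of its properties in \cite{Bur15,BR14}. The reason is structural: every integrand appearing in the definition of the Hamiltonians $\og_{\alpha,d}$ in \eqref{DR Hamiltonians}, of the densities $g_{\alpha,d}$ and $h^\DR_{\alpha,d}$, and of the correlators $\langle\cdots\rangle^\DR_g$, carries a factor $\lambda_g$, and $\lambda_g$ vanishes on $\oM_{g,n}\setminus\cM^{\ct}_{g,n}$. Hence, whenever one restricts such a class to a boundary divisor and expands, only the separating (compact-type) boundary divisors $\delta^J_0$, $\delta^J_h$ can contribute; the non-separating divisor $\delta_{\mathrm{irr}}$, whose associated gluing map is precisely $i\colon\oM_{g-1,n+2}\to\oM_{g,n}$, is killed. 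On the separating strata the only compatibility needed is the splitting (tree) axiom, which a partial cohomological field theory does satisfy, together with the flat unit axiom and $S_n$-equivariance, which it also retains.

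First I would recall from \cite{LRZ} the precise list of axioms of a partial cohomological field theory and note that the definitions of $\og_{\alpha,d}$, $g_{\alpha,d}$ and $h^\DR_{\alpha,d}$ make literal sense for any such system of maps $c_{g,n}$, since these maps enter only as multilinear coefficients. The polynomiality of the integrals \eqref{DR integral} in $a_1,\ldots,a_n$ follows, exactly as in Section~\ref{section:DR hierarchy}, from Hain's formula \eqref{eq:Hain's formula}, the result of \cite{MW13}, and the vanishing of $\lambda_g$ off compact type; none of this involves $c_{g,n}$. The argument of \cite{Bur15} that $\og_{\alpha,d}\in\hLambda^{[0]}_N$ does not depend on the polynomial ambiguity is likewise purely about the double ramification cycle, so the Hamiltonians and their distinguished densities are well defined.

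Next I would go through the properties one at a time and check that each proof quoted from \cite{Bur15,BR14} or proved in Sections~\ref{section:partition function}--\ref{subsection:low vanishing} above uses only: (i) geometric facts about $\DR_g$ and $\lambda_g$ (Hain's formula, the pull-back formula under the forgetful map, the $\boxtimes$-splitting formula of \cite{BSSZ15} along \emph{separating} nodes, Lemma~\ref{lemma:DR and fundamental class}, the divisibility Lemma~\ref{proposition:divisibility}); (ii) the splitting and unit axioms of the CohFT; (iii) pure differential algebra (Proposition~\ref{proposition:sufficient condition}, Lemma~\ref{lemma:uniqueness of tau-structure}). Concretely: the commutativity $\{\og_{\alpha,p},\og_{\beta,q}\}_{\eta\d_x}=0$ of \cite{Bur15} reduces to identities on the compact-type boundary of $\oM_{g,n}$, resolved by the tree axiom; the recursion $\d\og_{\alpha,d}/\d u^1=\og_{\alpha,d-1}$ (\cite[Lemma 4.6]{Bur15}) and the fact that $\og_{1,0}$ generates spatial translations (\cite[Lemma 4.3]{Bur15}) use only the forgetful map and the unit; hence the tau-structure Proposition of Section~\ref{section:partition function} follows from Proposition~\ref{proposition:sufficient condition}. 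The string, dilaton and homogeneity equations for $F^\DR$, the one-point correlators of Proposition~\ref{proposition:one-point}, and the high- and low-degree vanishing of Propositions~\ref{proposition:high vanishing} and~\ref{proposition:low vanishing} then go through verbatim, the geometric formula Lemma~\ref{lemma:geometric formula for DR correlators} being built entirely from gluing along separating nodes of stable \emph{trees}. The divisor equation carries over for those partial CohFTs (such as the even part of Gromov--Witten theory) that retain the divisor axiom, via \cite[Lemmas 5.2, 5.3]{BR14} as before.

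The only real content of the proof, and the main point requiring care, is to verify honestly that the commutativity proof of \cite{Bur15} and the boundary-restriction formulas borrowed from \cite{BSSZ15} never invoke a gluing map along a non-separating node \emph{without} an accompanying $\lambda_g$: once multiplied by $\lambda_g$, any such term is supported on a locus where $\lambda_g=0$ and therefore vanishes, so the expansion collapses onto the compact-type strata where only the tree axiom is needed. I expect no genuine obstacle beyond this bookkeeping. What does \emph{not} generalize, and is not claimed here, is the comparison with the Dubrovin--Zhang hierarchy, which is undefined in the non-semisimple (partial) setting.
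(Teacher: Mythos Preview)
Your proposal is correct and takes essentially the same approach as the paper: the paper's proof is a one-sentence observation that the proofs in \cite{Bur15,BR14} of commutativity and of the recursion relations never invoke the loop axiom, so they go through verbatim for a partial CohFT. Your version is considerably more expansive --- you supply the structural reason (the factor $\lambda_g$ kills everything off compact type, so only separating-node splittings, governed by the tree axiom, ever matter) and you extend the check to the tau-structure and correlator properties proved in the present paper --- but the core idea is identical. Note that the statement as written only claims the properties from \cite{Bur15,BR14}, so your additional verifications of Sections~\ref{section:partition function}--\ref{subsection:low vanishing} go beyond what is strictly asserted, though they are correct and in the spirit of how the proposition is used later.
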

\begin{proof}
The proof of \cite{Bur15} of the commutativity for double ramification hierarchy Hamiltonians and all other related constructions and properties, including the recursion formulae studied in \cite{BR14}, never involve the loop axiom~\eqref{eq:loopaxiom} and can hence be reproduced in the partial CohFT case.
\end{proof}

The main example we will consider in the following is the restriction of a cohomological field theory to the $\Gamma$-invariant subspace of $V$, where $\Gamma$ is a finite group acting on the vector space $V$ in such a way that the linear maps $c_{g,n}$ are left invariant.

\subsection{Even part of a partial cohomological field theory}

In fact a further generalization is possible. Up to now, in this paper (and in the other papers on double ramification hierarchies \cite{Bur15,BR14,BR15,BG15}), we always considered cohomological field theories where the image of the system of linear maps $c_{g,n}: V^{\otimes n} \to H^{\even}(\oM_{g,n},\mathbb{C})$ is in the even cohomology of the moduli space of stable curves. However, in the general definition of \cite{KM94}, such restriction was not required. Let then $c_{g,n}: V^{\otimes n} \to H^*(\oM_{g,n},\mathbb{C})$ be a (possibly partial) cohomological field theory which is $\mathbb{Z}_2$-graded, i.e. one where $V$ is a $\mathbb{Z}_2$-graded vector space, the maps $c_{g,n}$ are even and graded equivariant with respect to the permutation of vectors and marked points, the bilinear form $\eta$ on $V$ and the unit $e_1\in V$ are even and the maps $c_{g,n}$ satisfy the graded version of the axioms of (partial) cohomological field theory, as described in detail in \cite{KM94}. Consider the restriction of such CohFT to the even part of $V$, $c^{\text{even}}_{g,n}:(V^{\text{even}})^{\otimes n} \to H^{\text{even}}(\oM_{g,n},\mathbb{C})$. We have the following proposition.

\begin{proposition}
Given the even part of a (possibly partial) $\mathbb{Z}_2$-graded cohomological field theory we can associate to it, via the same definitions used for the double ramification hierarchy, a system of commuting Hamiltonians and hamiltonian densities, which we call, again, {\it generalized double ramification hierarchy}, satisfying all the properties of the usual double ramification hierarchy from \cite{Bur15,BR14}. 
\end{proposition}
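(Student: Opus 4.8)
The plan is to observe that this proposition is a nearly verbatim restatement of the preceding one about partial cohomological field theories, with the only new ingredient being the $\mathbb{Z}_2$-grading, and that the double ramification hierarchy construction is insensitive to this extra structure once one restricts to the even part. Concretely, I would first recall that all the objects entering the definition of the Hamiltonians $\og_{\alpha,d}$ — the double ramification cycle $\DR_g$, the Hodge class $\lambda_g$, the psi-classes, and the integration pairing — are classical cohomology classes on $\oM_{g,n}$ that have nothing to do with the target $V$ or its grading. The only place the CohFT enters is through the correlators $\int_{\DR_g(0,a_1,\ldots,a_n)}\lambda_g\psi_1^d c_{g,n+1}(e_\alpha\otimes\otimes_i e_{\alpha_i})$, and when all the $e_{\alpha_i}$ range over a basis of $V^{\text{even}}$, the maps $c^{\text{even}}_{g,n}$ are ordinary (ungraded) linear maps into $H^{\text{even}}(\oM_{g,n},\mathbb{C})$, so there are no signs to worry about.

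The key step is then to verify that $c^{\text{even}}_{g,n}$ really does satisfy the axioms of a (partial) cohomological field theory in the non-graded sense. This follows because the graded axioms restrict to the even subspace to give exactly the ungraded axioms: the $S_n$-equivariance becomes genuine equivariance since all permutation signs are $+1$ on even vectors; the gluing/splitting axioms (or, in the partial case, everything except the loop axiom) involve inserting $\eta^{\mu\nu}e_\mu\otimes e_\nu$ at a node, and since $\eta$ is even this sum only runs over even basis vectors, so it closes within $V^{\text{even}}$; the unit axiom is unchanged since $e_1$ is even; and the flat unit / genus-zero relations are purely about the even part. Having established that $c^{\text{even}}_{g,n}$ is a partial CohFT (dropping the loop axiom, or a genuine CohFT if the original one had it), I would then simply invoke the previous proposition: the generalized double ramification hierarchy of a partial CohFT has all the stated properties, and its construction together with the proofs in \cite{Bur15,BR14} never uses the loop axiom.

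I do not expect a genuine obstacle here — the proof is essentially a reduction to the already-established partial CohFT case. The one point that deserves a sentence of care is making sure the $\mathbb{Z}_2$-grading of $V$ is compatible with the metric and the unit in the way I claimed (namely that $\eta$ pairs even with even and odd with odd, and $e_1$ is even), but this is part of the hypothesis "$\mathbb{Z}_2$-graded cohomological field theory" as recalled from \cite{KM94}, so there is nothing to prove. Thus the argument reads:

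\begin{proof}
By the hypotheses recalled above, $V=V^{\text{even}}\oplus V^{\text{odd}}$, the bilinear form $\eta$ and the unit $e_1$ are even, and the maps $c_{g,n}$ are even and satisfy the graded (possibly partial) CohFT axioms. Restricting to $V^{\text{even}}$, all permutation signs appearing in the graded equivariance axiom equal $+1$, so $c^{\text{even}}_{g,n}$ is $S_n$-equivariant in the ordinary sense. Since $\eta$ is even, the bivector $\eta^{\mu\nu}e_\mu\otimes e_\nu$ lies in $V^{\text{even}}\otimes V^{\text{even}}$, so the gluing axiom at a separating node and the genus reduction axiom close within $V^{\text{even}}$; together with the (unchanged) unit axiom and genus $0$ relations this shows that $c^{\text{even}}_{g,n}\colon(V^{\text{even}})^{\otimes n}\to H^{\text{even}}(\oM_{g,n},\mathbb{C})$ is a partial cohomological field theory in the sense of the previous subsection (a genuine one if the loop axiom~\eqref{eq:loopaxiom} held for $c_{g,n}$). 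The statement now follows from the previous proposition applied to $c^{\text{even}}_{g,n}$: the definitions of the Hamiltonians $\og_{\alpha,d}$, of the densities $g_{\alpha,d}$, and of the tau-structure only involve the classes $\DR_g(0,a_1,\ldots,a_n)$, $\lambda_g$, $\psi_1^d$ and the integration pairing on $\oM_{g,n}$, together with the ungraded linear maps $c^{\text{even}}_{g,n}$; the proofs of commutativity and of the recursion formulae in \cite{Bur15,BR14} never use the loop axiom and hence apply verbatim.
\end{proof}
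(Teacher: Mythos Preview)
Your reduction to the previous proposition has a real gap. You assert that ``since $\eta$ is even, the bivector $\eta^{\mu\nu}e_\mu\otimes e_\nu$ lies in $V^{\text{even}}\otimes V^{\text{even}}$''. This is false: evenness of $\eta$ means $\eta_{\mu\nu}=0$ unless $e_\mu,e_\nu$ have the \emph{same} parity, so the Casimir element splits as an even--even part \emph{plus} an odd--odd part. Consequently, when you pull back $c_{g,n}(v_1\otimes\cdots\otimes v_n)$ (all $v_i$ even) along a separating node, the right-hand side of the splitting axiom contains terms $c_{g_1,n_1+1}(v_I\otimes e_\mu)\otimes c_{g_2,n_2+1}(v_J\otimes e_\nu)$ with $e_\mu,e_\nu$ odd. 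These terms live in $H^{\odd}(\oM_{g_1,n_1+1})\otimes H^{\odd}(\oM_{g_2,n_2+1})$, which is a nontrivial summand of the even cohomology of the product, so they need not vanish at the level of classes. Hence $c^{\text{even}}_{g,n}$ together with $\eta|_{V^{\text{even}}}$ is \emph{not} in general a partial CohFT, and you cannot simply invoke the previous proposition.

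The paper's argument is different and sidesteps this issue. It does not claim $c^{\text{even}}$ is a partial CohFT. Instead it observes that the DR hierarchy only involves \emph{intersection numbers} of $c_{g,n}$ against even classes ($\DR_g$, $\lambda_g$, $\psi$-classes), and that whenever all but one input are even and the remaining $e_\mu$ is odd, the class $c_{g,n+1}(e_{\alpha_1}\otimes\cdots\otimes e_{\alpha_n}\otimes e_\mu)$ is odd, so its integral against any even class vanishes. Thus, when one reruns the proofs of \cite{Bur15,BR14} for the full graded CohFT, the odd--odd node contributions drop out \emph{after integration}, and the identities needed for commutativity and the recursions hold for the Hamiltonians built from $V^{\text{even}}$. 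Your proof can be repaired by replacing the false bivector claim with exactly this integration argument.
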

\begin{proof}
The double ramification hierarchy Hamiltonians and hamiltonian densities only involve intersection numbers of a given (possibly partial) cohomological field theory with even cohomology classes in $H^{\text{even}}(\oM_{g,n},\mathbb{C})$ (namely psi-classes, lambda-classes and the double ramification cycle). Commutativity of the Hamiltonians and all other properties will then follow from the fact that the intersection numbers of $c_{g,n+1}(e_{\alpha_1}\otimes\ldots\otimes e_{\alpha_n}\otimes e_\mu)$ with any even class will vanish whenever $e_{\alpha_1},\ldots,e_{\alpha_n} \in V^{\text{even}}$ and $e_\mu \in V^{\text{odd}}$.
\end{proof}

The main example we will consider in the following is the restriction to the even cohomology $H^\text{even}(X,\mathbb{C})$ of the Gromov-Witten theory of a given target variety $X$ (see also \cite{Ros12} for the same idea in symplectic field theory).

\section{Examples and applications}
In this section we consider various examples of both ordinary and generalized double ramification hierarchies, in particular from Gromov-Witten theory and the quantum singularity theory of Fan, Jarvis and Ruan.
 
\subsection{$I_2(k-1)$ double ramification hierarchies and regularity at the origin}\label{I2}

Let us consider the double ramification hierarchy associated with the Coxeter group $I_2(k-1)$, whose underlying $2$-dimensional Frobenius manifold (see \cite{Dub93}) has the potential
$$g^{[0]}= F = \frac{u^2 v}{2}+ \frac{v^k}{72},\qquad k\geq 3$$
The full potential $\og$, in this case, is homogeneous of degree $\mathrm{deg}\ \og=2k$ with respect to the grading $\mathrm{deg}\ u_j=k-1$, $\mathrm{deg}\ v_j = 2$, $\mathrm{deg}\ \eps = 1$. Recall that the cases $k=3,4,5,6,7$ correspond, respectively, to the Coxeter groups $A_1\times A_1,$ $A_2, B_2, H_2, G_2$. If the underlying cohomological field theory is genuine (so that we are dealing with genuine DR hierarchies, not the generalized kind) Lemma~\ref{lemma:genus1} yields
$$\og^{[2]}=-\frac{1}{48}\int \left(2 u_x{}^2+\frac{1}{36} (k-2) (k-1) k v^{k-3} v_x{}^2 \right)dx$$
Also, a direct Mathematica computation yields that the most general homogeneous deformation of the genus $0$ potential that satisfy the double ramification recursion relations of \cite{BR14} must have the genus $1$ term
$$\og^{[2]} = \int \left(\frac{a_0}{2} u_x{}^2+\frac{2 a_1 v^{\frac{k-3}{2}} u_x v_x}{k+1}+\frac{a_0}{144} (k-2)(k-1) k v^{k-3} v_x{}^2 \right) dx$$
The genuine cohomological field theory case corresponds then to $a_0=-1/12$, $a_1=0$. For generic choices of the parameters $a_0$ and $a_1$, the genus $2$ potential $g^{[4]}$ is unique and singular at $v=0$. Imposing regularity at the origin in genus $2$ yields either $k=3$ or the following constraint on the genus $1$ parameters:
$$a_1=\pm\frac{ \sqrt{(k-4) (k-2) (k-1) k (k+1)}}{12 \sqrt{3}}a_0$$
Regularity here means that the potential in genus $2$ does not tend to infinity at $v=0$, but for even $k$ one still has a $2$-valued potential branching at $v=0$ (as apparent already in the genus $1$ formula above). Analiticity is achieved for odd $k$ or $k=4$.\\
Notice that $k=3,4$ are the only two cases where the genuine DR hierarchy has an analytic potential. This is somewhat expected, since the cohomological field theories associated to Coxeter groups themselves are known to be analytic only in the $ADE$ cases (see \cite{Mil13}). In particular, for $k=5$ and up to some irrelevant rescaling of the variables $u,v$, the regularity constraint $a_1=\frac{1}{2}\sqrt{\frac{10}{3}} a_0$ yields the genus $1$ part of the $\mathbb{Z}_2$-invariant part of the $4$-spin cohomological field theory, see section \ref{section:FJRW} below. The full double ramification primary potential $\og$ for such partial cohomological field theory is obtained from $\og_{4-\text{spin}}=\og_{4-\text{spin}}[u^1,u^2,u^3]$ as $\og= \og_{4-\text{spin}}^{\mathbb{Z}_2} = \og_{4-\text{spin}}|_{u^2=0}$. 

\subsection{Manifolds with non-positive first Chern class}

An important class of examples for which the double ramification hierarchy vanishes in positive genus, up to one term in genus $1$ (see below), is given by manifolds $X$ with non-positive first Chern class (except for $X=\text{pt}$).
When we write $c_1(X) \leq 0$ or similar expressions below, we mean such expressions for the intersection of $c_1(X)$ with any holomorphic curve $C$ in $X$, so in this case $\langle c_1(X),[C]\rangle\leq 0$.
This class is vast and includes for instance all Calabi-Yau manifolds, surfaces of general type, Enriques surfaces and degree $D$ hypersurfaces in $\CP^N$ with $D> N>1$. By a theorem of Kodaira, varieties with negative first Chern class have ample canonical bundle $K_X=\wedge^{\dim X} T^*X$ and vice-versa. Recall finally that $c_1(X):=c_1(TX)=-c_1(K_X)$.\\

\begin{proposition}\label{proposition:nonpositivec1}
Let $X$ be a smooth variety with $\dim X>0$ and non-positive first Chern class. Let $1,\theta_1,\ldots,\theta_M$ be a homogeneous basis for $H^{\text{even}}(X,\mbQ)$ (hence with $\deg \theta_i \geq 2$). Then the associated (generalized) double ramification hierarchy is given by
$$\og=\og^{[0]}+\eps^2\frac{\chi(X)}{48}\int u^1 u^1_{xx}dx,$$
$$g_{\alpha,p} = g^{[0]}_{\alpha,p}+\delta_{\alpha,1} \frac{\eps^2}{24} \frac{\chi(X)}{p!} (u^1)^p \ u^1_{xx},$$
where $\chi(X)$ is the Euler characteristic of $X$ and $u^1$ is the variable associated with the class $1$. 
\end{proposition}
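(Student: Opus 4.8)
The plan is to compute the double ramification Hamiltonians directly from their definition \eqref{DR Hamiltonians}, using the hypothesis $c_1(X)\le 0$ to force almost all the relevant integrals to vanish. The key structural fact is a dimension/degree count combined with the grading (homogeneity) of the CohFT coming from Gromov--Witten theory. Recall that for the Gromov--Witten CohFT of a smooth projective variety $X$, the classes $c_{g,n}(\otimes e_{\alpha_i})$ live in degrees controlled by the virtual dimension, and the string/divisor structure together with $\deg\theta_i\ge 2$ for the non-unit basis vectors forces the insertion of any $\theta_i$ with $i\ge 1$ to raise cohomological degree. First I would observe that in the correlator $\int_{\DR_g(0,a_1,\dots,a_n)}\lambda_g\psi_1^d c_{g,n+1}(e_\alpha\otimes\otimes e_{\alpha_i})$ the class $\lambda_g\DR_g$ already sits in degree $4g$ (using $\lambda_g$ vanishes off compact type and Hain's formula \eqref{eq:Hain's formula}), so that, combined with $\psi_1^d$ and the degree of the CohFT class, a dimension count on $\oM_{g,n+1}$ (whose dimension is $3g-3+n+1$) leaves essentially no room: for $g\ge 2$ everything vanishes, and for $g=1$ only a very restricted set of insertions survives, all of which must be copies of the unit $e_1$ with $d\le$ small, contributing only the stated $u^1u^1_{xx}$ term with coefficient governed by $\int_{\oM_{1,1}}\lambda_1 c_{1,1}(e_1)$-type integrals and ultimately $\chi(X)$.

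Concretely, the key steps in order are: (1) reduce to the primary potential $\og$ and the densities $g_{\alpha,p}$ via the recursion/definitions, and reduce the $g_{\alpha,p}$ statement to the $\og$ statement plus a short computation with the string equation \eqref{eq:string for FDR} and Proposition~\ref{proposition:one-point} (indeed the one-point correlators already vanish for $g\ge 2$ and are highly constrained for $g=1$); (2) invoke the homogeneity (Euler field) structure of the GW CohFT: the condition $c_1(X)\le 0$ means the shifts $b^\gamma$ in the Euler field are non-positive in the relevant grading, and the conformal dimension $\delta=\dim X$ enters Proposition~\ref{proposition:homogeneity for FDR}; combined with the high-degree vanishing Proposition~\ref{proposition:high vanishing} ($\sum d_i>3g-3+m$ kills the correlator) and the low-degree vanishing Proposition~\ref{proposition:low vanishing} ($\sum d_i\le 2g-2$ kills it), the surviving window $2g-1\le \sum d_i\le 3g-3+m$ is empty or nearly empty for $g\ge 2$; (3) for $g=1$, pin down the single surviving family: only $\og^{[2]}$ can contribute, and by the Lemma of Section~\ref{section:genus1} (Hain's formula in genus $1$, $\lambda_1=\tfrac{1}{24}\delta_{\mathrm{irr}}$) we have $\og^{[2]}=-\tfrac1{48}\int c_{\alpha\beta}^\epsilon c_{\epsilon\mu}^\mu u^\alpha_x u^\beta_x\,dx$; (4) evaluate the tensor $c_{\alpha\beta}^\epsilon c_{\epsilon\mu}^\mu$ for $X$ with $c_1(X)\le 0$: since the genus-$0$ three-point function with a non-unit divisor or higher class is constrained by the divisor/dimension axioms, the only surviving contribution is $\alpha=\beta=1$, and $c_{1\,\mu}^\mu = \mathrm{tr}(\text{mult. by }e_1)=\dim V$ while $c_{11}^\epsilon=\delta^{\epsilon}_{?}$ collapses appropriately, leaving $\chi(X)=\dim H^{\mathrm{even}}(X)$ as the coefficient (after checking the sign and the $\tfrac1{48}\to\tfrac1{24}$ normalization against $\int_{\oM_{1,1}}\psi_1$-type integrals).

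The main obstacle I expect is step (4): carefully showing that $c_{\alpha\beta}^\epsilon c_{\epsilon\mu}^\mu$ reduces to $\chi(X)\,\delta_\alpha^1\delta_\beta^1$ (up to the correct scalar) purely from $c_1(X)\le 0$. This requires a genus-$0$ argument: the quantum product structure constants $c_{\alpha\beta}^\gamma$ receive only the classical (cup product) contribution in the relevant graded pieces because any genuinely quantum (positive-degree curve class) correction would require a divisor insertion paired against $c_1(X)\le 0$, which by the divisor equation and dimension constraints cannot survive once one more contraction with $c_{\epsilon\mu}^\mu$ is taken; moreover $\langle\theta_i, \text{anything}\rangle_0$ with $\deg\theta_i\ge 2$ forces extra $\psi$ or boundary insertions that are already saturated. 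So the real content is a Gromov--Witten dimension bookkeeping argument showing the non-positivity of $c_1$ exactly kills all quantum corrections to this particular contracted tensor, after which the classical computation $c_{\alpha\beta}^\epsilon c_{\epsilon\mu}^\mu\big|_{\mathrm{classical}}$ with the Poincaré pairing immediately gives $\chi(X)$ in the $(1,1)$ slot and zero elsewhere. The secondary (but routine) obstacle is propagating the $g=1$ result to the densities $g_{\alpha,p}$ with the exact combinatorial prefactor $\tfrac{\eps^2}{24}\tfrac{\chi(X)}{p!}(u^1)^p u^1_{xx}$, which should follow from the explicit genus-$1$ formula \eqref{g1DRHamiltonian} of the Lemma in Section~\ref{section:genus1} specialized to $c_{\alpha\beta}^\epsilon c_{\epsilon\mu}^\mu = \chi(X)\,\delta_\alpha^1\delta_\beta^1$ together with $g^{[0]}_{1,p-1}=\tfrac{(u^1)^{?}}{\,?\,}\cdots$ from the genus-$0$ topological recursion.
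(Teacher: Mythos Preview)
Your approach has two genuine gaps that make it not work as stated.

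First, step~(2) confuses the DR \emph{correlators} $\langle\tau_{d_1}(e_{\alpha_1})\ldots\rangle_g^{\DR}$ (which enter the potential $F^{\DR}$) with the DR \emph{Hamiltonians} $\og_{\alpha,d}$ and densities $g_{\alpha,p}$. Propositions~\ref{proposition:high vanishing} and~\ref{proposition:low vanishing} constrain the former, not the latter, and your ``surviving window'' $2g-1\le\sum d_i\le 3g-3+m$ is in any case nonempty for $m\ge 1$. So you have no argument for the vanishing of $\og^{[2g]}$ when $g\ge 2$.

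Second, and more seriously, steps~(3)--(4) invoke the genus-$1$ formula of Section~\ref{section:genus1}, but the paper explicitly warns that this formula is valid only for genuine CohFTs, not for the generalized (partial/graded) kind relevant here. Indeed your own computation illustrates the problem: you write $c^{\mu}_{1\,\mu}=\dim V=\dim H^{\mathrm{even}}(X)$, but the proposition claims $\chi(X)$, and these differ whenever $H^{\mathrm{odd}}(X)\ne 0$. The point is that $\lambda_1=\tfrac{1}{24}\delta_{\mathrm{irr}}$ triggers the loop axiom, which contracts over the \emph{full} graded $H^*(X)$ with Koszul signs, yielding the alternating sum of Betti numbers rather than the even dimension. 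Your restricted contraction misses this.

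The paper's proof takes a completely different route that avoids both issues. It uses the virtual-dimension formula for the GW class to count the number $a$ of unit insertions forced in any nonvanishing integral appearing in $\og$: one finds
\[
a=\sum_i b_i\Bigl(\tfrac{\deg\theta_i}{2}-1\Bigr)+g(\dim X-1)-(\dim X-3)-\langle c_1(X),d\rangle,
\]
and $c_1(X)\le 0$, $\dim X\ge 2$, $\deg\theta_i\ge 2$ force $a>g$ whenever $g\ge 1$. One then pushes forward along the map forgetting $g$ unit-marked points using $\pi_*\DR_g(a_1,\ldots,a_n)=g!\,a_1^2\cdots a_g^2[\oM_{g,n-g}]$ (Lemma~\ref{lemma:DR and fundamental class}); since $a>g$, a further forgetful pushforward is available and kills the integral, \emph{except} in the borderline case $g=1$, $n=2$, which is computed directly to give $\tfrac{\chi(X)}{24}a^2$. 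The densities $g_{\alpha,p}$ are then obtained not from \eqref{g1DRHamiltonian} but from the recursion $\d_x(D-1)g_{\alpha,p}=\{g_{\alpha,p-1},(D-2)\og\}$ of \cite{BR14}, via an inductive argument on whether $\partial g^{[0]}_{\alpha,p}/\partial u^{\mathrm{top}}$ vanishes.
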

\begin{proof}
Recall that, for the cohomological field theory given by the Gromov-Witten classes of a smooth variety X, we have
$$\deg c^d_{g,n}(1^a\otimes \theta_1^{b_1}\otimes\ldots\otimes \theta_M^{b_M})=2 \left(\dim X (g-1)+\sum_{i=1}^M b_i \frac{\deg \theta_i}{2} - \langle c_1(X),d\rangle\right),$$
where $d\in H_2(X,\mathbb{Z})$ is the degree of the curves. The intersection numbers appearing in the potential $\og$ are $\int_{\DR_g(a_1,\ldots,a_n)} \lambda_g c^d_{g,n}(1^a\otimes \theta_1^{b_1}\otimes\ldots\otimes \theta_M^{b_M})$, which vanish unless
$$2g-3+a+\sum_{i=1}^M b_i = g+ \dim X (g-1)+\sum_{i=1}^M b_i \frac{\deg \theta_i}{2} - \langle c_1(X),d\rangle$$
which gives, as necessary number of insertions of the unit,
\begin{equation}\label{eq:units}
a = \sum_{i=1}^M b_i\left(\frac{\deg \theta_i}{2} - 1\right) +g(\dim X-1) - (\dim X -3) - \langle c_1(X), d\rangle .
\end{equation}
Assuming $\dim X \geq 2$ one immediately concludes $a>g$ for $g\geq 1$. Hence, from the formula $\pi_*\DR_g(a_1,\ldots,a_n) = g! a_1^2\ldots a_g^2 [\oM_{g,n-g}]$ for the push-forward along the forgetful morphism $\pi:\oM_{g,n} \to \oM_{g,n-g}$, by pushing forward once more to $\oM_{g,n-g-1}$ we get that $\og$ has no positive genus term unless $g=1,  n=2$. In this case this last push-forward is not defined and we know from~\cite{KM94} that $c^0_{1,2}(e_1^{2})=\pi^* c^0_{1,1}(e_1) = \chi(X) \in H^0(\oM_{1,2},\mbC)$. On the other hand $\DR_1(a,-a)=\frac{a^2}{2}(\psi_1+\psi_2)$ and $\lambda_1 = \frac{1}{24} \delta_{\mathrm{irr}}$, so $\int_{\DR_1(a,-a)}\lambda_1 c^d_{1,2}(e_1^2)=\frac{a^2}{24}\int_{\oM_{0,4}} \psi_1 c^d_{0,4}(e_1^2\otimes e_\mu \otimes e_\nu)\eta^{\mu \nu}$ which vanishes for $d\ne 0$. We conclude that $\int_{\DR_1(a,-a)}\lambda_1 c^d_{1,2}(e_1^2)=\delta_{d,0}\frac{\chi (X)}{24}a^2$. Notice that we can also recover the same result recalling from Section \ref{section:genus1} that, for DR hierarchies associated to actual even CohFTs, the coefficient of $u^1 u^1_{xx}$ is $\frac{1}{48} \dim V$, where $V$ is the CohFT underlying vector space. In the graded CohFT case, however, the same formula gives $\frac{1}{48} \sum_{\mu} (-1)^{|\mu|}\delta_\mu^\mu$ so instead of the dimension of $H^{\text{even}}(X,\mbQ)$, because of the graded nature of $H^*(X,\mbQ)$, one has to use the alternating sum of the Betti numbers, whence the Euler characteristic.\\

In the $1$-dimensional case of target Riemann surfaces, a simple degeneration argument, as suggested to us by R. Pandharipande, is sufficient to guarantee the vanishing in positive degree (the class $\lambda_g$ vanishes on the boundary divisors with non-separating nodes) at which point the result is straightforward.\\

Finally, in order to compute the hamiltonian densities $g_{\alpha,p}$ we can use the recursion formula from \cite{BR14}, $\partial_x (D-1) g_{\alpha,p} = \{g_{\alpha,p-1} , (D-2) \og \}$, where $D:=\sum_{k\ge 0} (k+1) u^\alpha_k \frac{\d}{\d u^\alpha_k}$. Indeed, let $u^{\text{top}}$ be the variable associated with the fundamental class $\theta_{\text{top}}$ of $X$. We have $\eta(1,\theta_i) = 0$ for $\theta_i \neq \theta_{\text{top}}$ and $\eta(1, \theta_{\text{top}}) = 1$. Suppose that $\frac{\partial g^{[0]}_{\alpha,p-1}}{\partial u^\text{top}}=0$. The intersection numbers in $g_{\alpha,p}$ are $\int_{\DR_g(a_0,a_1,\ldots,a_n)} \lambda_g \psi_1^p c^d_{g,n+1}(1^a\otimes \theta_1^{b_1}\otimes\ldots\otimes \theta_M^{b_M})$, which vanish in genus $0$ unless $\sum b_i(\frac{\deg \theta_i}{2} -1) - (\dim X-3) -\langle c_1(X),d\rangle - a +p =0$. Now, if $a>0$ and $b_\text{top}>0$, by the string equation $\frac{\partial g_{\alpha,p}}{\partial u^1} = g_{\alpha,p-1}$ we get $\frac{\partial g^{[0]}_{\alpha,p-1}}{\partial u^\text{top}}\neq 0$, a contradiction, so $a>0$ implies $b_\text{top}=0$. If $a=0$ then, by the above dimension counting, we get again $b_\text{top}=0$. So we have proved that $\frac{\partial g^{[0]}_{\alpha,p-1}}{\partial u^\text{top}}=0$ implies $\frac{\partial g^{[0]}_{\alpha,p}}{\partial u^\text{top}}=0$.  Now, since $g_{\alpha,-1}= \eta_{\alpha\mu} u^\mu$, this argument and the recursion give us $g_{\alpha,p}=g^{[0]}_{\alpha,p}$ if $\alpha\neq 1$. If $\alpha=1$, instead, we get that, in genus $0$, $b_\text{top}$ can be positive only when $b_{\text{top}}=1$, $b_i=0$, $d=0$ and $a=p+2$ (notice that $a$ can never be bigger than $p+2$ otherwise repeated use of the string equation would lead to vanishing of the intersection number). The recursion then gives $g_{1,p} = g^{[0]}_{1,p} +\frac{ \eps^2}{24} \frac{\chi(X)}{p!} (u^1)^p u^1_{xx}$.
\end{proof}

\begin{remark}\label{remark:trivialgenus0}
Notice that, for the class of manifolds with non-positive first Chern class such that $c_1(X) < 3 - \dim X$, by dimension counting, the genus $0$ primary potential is given by $\og^{[0]} =\int\left( \int_X \frac{\theta^3}{3!} \right) dx$ where $\theta := u^1 1+ \sum_{i=1}^M u^{i+1} \theta_i$  (see equation (\ref{eq:units}) with $g=0$ and $a=0$, since in positive degree there can be no insertion of the unit in the primary potential). In this case, apart from the even classical cohomology ring structure, the only geometric invariant entering the double ramification hierarchy is the Euler characteristic of $X$.
\end{remark}

\subsubsection{Hypersurfaces}
A well studied class of varieties with non-positive Chern class is given by smooth hypersurfaces of degree $D$ in $\CP^N$, where $D> N>1$. Recall (see for instance \cite{Dim92}) that a generic degree $D$ hypersurface $X$ of $\CP^N$ has cohomology concentrated in even and mid degrees and in particular
$$H^k(X,\mbZ) = \left\{\begin{array}{ll}  \mbZ, & k \ \text{even},\  k\neq N-1; \\
                                             0, & k \ \text{odd}, \ k\neq N-1; \\
                                \mbZ^{b_{N-1}}, & k=N-1 , \ b_{N-1} = \frac{((D-1)^{N+1}+(-1)^{N+1} ((N+1)D-1))}{D}+(-1)^N 2 \lceil \frac{N-1}{2} \rceil .\\
\end{array}                                        
\right.$$
If $H$ is the hyperplane class in $\CP^N$, by abuse of notation, we also denote with $H$ its pull-back to $X$. Denoting by $j:X\to \mbP^N$ the injection and by $e^{[2k]}$ the generator of $ H^{2k}(X,\mbZ)=\mbZ$, $2k\neq N-1$, we have $j^*(H^k) = e^{[2k]}$ for $2k<N-1$ and $j^*(H^k) = D e^{[2k]}$ for $2k>N-1$.

The total Chern class of $X$ is given by the formula $c(X) := c(TX) = \frac{(1+H)^{N+1}}{1+DH}$, so that, for the first Chern class,
$$c_1(X) = (N+1-D) H.$$
Finally, the Euler characteristic is computed as the alternating sum of the Betti numbers, or via the formula
$$\chi(X) = \sum_{k=0}^{N-1} (-1)^{N-1-k} \binom{N+1}{k} D^{N-k}.$$\\
For all these hypersurfaces with $D>N$ the double ramification hierarchy is determined by Proposition \ref{proposition:nonpositivec1}. In particular, if $N$ is even, disregarding the odd part of the cohomology of $X$ drastically reduces the dimension of the double ramification hierarchy as, in this case, $\dim H^{\text{even}}(X,\mbQ) = N$.

\begin{example}\label{quin}
{\bf (quintic threefold)}. The generic hypersurface $X$ of degree $5$ in $\CP^4$ has Euler characteristic $\chi(X)=-200$ and the rank $4$ generalized double ramification hierarchy associated to its even cohomology is given by
$$\og=\int \left(\frac{1}{2} (u^1)^2 u^4 + \frac{5}{6} (u^2)^3 + u^1 u^2 u^3 + \sum_{d=1}^\infty c_d\  q^d e^{d u^2} + \frac{25 \eps^2}{6}(u^1_x)^2 \right) \ dx,$$
\begin{equation*}
\begin{split}
&\left\{ \begin{array}{l}  
g_{1,p} =\frac{(u^1)^{p+1}}{(p+1)!} +\frac{5}{6 }\frac{(u^1)^{p-1}}{(p-1)!} (u^2)^3+ \frac{(u^1)^p}{p!}  u^2 u^3 + \sum_{d=1}^\infty c_d (d u^2 -2) e^{du^2}\frac{(u^1)^{p-1}}{(p-1)!} -\frac{25  \eps^2}{3} \frac{(u^1)^p}{ p!} u^1_{xx},\\
g_{2,p} = \frac{1}{(p+1)!} (u^1)^{p+1} u^3 + \frac{5}{2 } \frac{(u^1)^p}{p!} (u^2)^2+\sum_{d=1}^\infty  c_d\ d\ e^{d u^2} \frac{(u^1)^p}{p!}, \\
g_{3,p} = \frac{ (u^1)^{p+1}}{(p+1)!} u^2,\\
g_{4,p} = \frac{(u^1)^{p+2}}{(p+2)!}. \\
\end{array}\right. 
\end{split}
\end{equation*}
 Here, if $\theta\in H^*(X,\mbC)$, we set $\theta = u^1 1 + \sum_{k=1}^3 u^{k+1} e^{[2k]}$. The coefficients $c_d$ are the number of degree $d$ rational curves in $X$ and were famously predicted in \cite{COGP91} from mirror symmetry considerations (see Table $4$ in \cite{COGP91} for the first several $c_d$).
\end{example}

\subsubsection{Complete intersections}
More in general, recall (see always \cite{Dim92}) that a smooth codimesion $c$ complete intersection $X$ in $\mathbb{P}^N$ of degree $(D_1,\ldots,D_c)$, total degree $D=D_1\ldots D_c$ and dimension $\dim X = N-c$ has cohomology
$$H^k(X,\mbZ) = \left\{\begin{array}{ll}  \mbZ, & k \ \text{even},\  k\neq \dim X, \\
                                             0, & k \ \text{odd}, \ k\neq \dim X, \\
                             \mbZ^{b_{\dim X}}, & k=\dim X,  \\
\end{array}                                        
\right.$$
where $b_{\dim X}$ is determined by the Euler characteristic $\chi(X) = D\ \text{coeff}_{H^{\dim X}}[c(X)]$, where the total Chern class is $c(X)= \frac{(1+H)^{N+1}}{(1+D_1 H)\ldots (1+D_c H)}$ and $H$ is (the restriction to $X$ of) the hyperplane class in $\mathbb{P^N}$. As before, if $j:X\to \mbP^N$ is the injection and $e^{[2k]}$ is the generator of $ H^{2k}(X,\mbZ)=\mbZ$, $2k\neq \dim X$, we have $j^*(H^k) = e^{[2k]}$ for $2k<\dim X$ and $j^*(H^k) = D e^{[2k]}$ for $2k>\dim X$.

For instance, in the case of surfaces,
$$c_1(X)=(c+3-D_1-\ldots-D_c)H,$$
$$\chi(X) = D_1 \ldots D_c \left[\sum D_i^2 +\sum_{i\neq j} D_i D_j - (c+3) \sum D_i + \binom{c+3}{2}\right].$$ 
In this case, the above vanishing result for the higher genus DR hierarchy holds for $(c+3-D_1-\ldots-D_c)\leq 0$.
\begin{example}
Projective $K3$ surfaces, for which $c_1(X)=0$, are complete intersections whose degree is either $(4)$, $(3,2)$ or $(2,2,2)$ in $\mbP^3$, $\mbP^4$, and $\mbP^5$, respectively, and $b_2 =22$, $\chi(X)=24$. All positive degree genus $0$ primary Gromov-Witten invariants vanish by Remark \ref{remark:trivialgenus0} which leaves us, by Proposition \ref{proposition:nonpositivec1}, with a $24$-dimensional DR hierarchy given by
$$\og= \int \left(\int_X \frac{\theta^3}{3!} - \eps^2 \frac{(u^1_x)^2}{2} \right) dx.$$
\end{example}

\subsubsection{Enriques surfaces and Enriques Calabi-Yau}

Finally, two interesting examples come from smooth surfaces and threefolds that are not complete intersections. To construct them, consider the generic $K3$ surface of degree $(2,2,2)$ in $\mbP^5$ given by $\{P_i(x_0,x_1,x_2)+Q_i(x_3,x_4,x_5)=0, 1\leq i \leq 3\}\subset \mbP^5$. On $K3$ the Enriques involution $\sigma:(x_0,x_1,x_2,x_3,x_4,x_5)\mapsto (-x_3,-x_4,-x_5,x_0,x_1,x_2)$ is generically free and the Enriques surface is defined as $X=K3/\sigma$. It is another example for which the first Chern class vanishes numerically (actually $2 c_1(X) = 0$) and the Betti numbers are $b_0=b_4=1$, $b_1=b_3=0$, $b_2=10$, so that $\chi(X) = 12$. As for $K3$ surfaces we can compute the $12$-dimensional DR hierarchy as
$$\og =  \int \left(\int_X \frac{\theta^3}{3!} - \eps^2 \frac{(u^1_x)^2}{4} \right) dx.$$\\

The Enriques Calabi-Yau threefold, instead, is defined by $X= \frac{K3 \times E}{(\sigma,-1)}$, where $E$ is an elliptic curve with its natural involution $(-1)$. It is an example of smooth CY3 with Betti numbers $b_0=b_6=1$, $b_1=b_5=0$, $b_2=b_4=11$, $b_3=24$ so that $\chi(X) = 0$, hence Proposition \ref{proposition:nonpositivec1} implies that, (the even part of) the corresponding cohomological field theory, gives a $24$-dimensional DR hierarchy with
$$\og = \og^{[0]}.$$


\subsection{Fan-Jarvis-Ruan-Witten theory and partial CohFTs}\label{section:FJRW}
The quantum singularity theory was introduced by Fan, Jarvis, and Ruan \cite{FJRW,FJRW2} after ideas of Witten \cite{Witten}.
Their main motivation was to find a generalization of Witten conjecture to Drinfeld--Sokolov hierarchies of ADE type, see below.

More precisely, Fan--Jarvis--Ruan--Witten theory, or simply FJRW theory, is a cohomological field theory attached to the data of $(W,G)$ where
\begin{itemize}
\item $W$ is a quasi-homogeneous polynomial with weights $w_1,\dotsc,w_N$ and degree $d$, which has an isolated singularity at the origin,
\item $G$ is a group of diagonal matrices $\gamma=(\gamma_1,\dotsc,\gamma_N)$ leaving the polynomial $W$ invariant and containing the diagonal matrix $\grj:=(e^{\frac{2 \ci \pi w_1}{d}},\dotsc,e^{\frac{2 \ci \pi w_N}{d}})$.
\end{itemize}
We usually denote by $\textrm{Aut}(W)$ the maximal group of diagonal symmetries of $W$.
The state space of the theory is
\begin{eqnarray*}
\st_{(W,G)} & = & \bigoplus_{\gamma \in G} \st_\gamma \\
& = & \bigoplus_{\gamma \in G} (\cQ_{W_\gamma} \otimes d\underline{x}_\gamma)^G,
\end{eqnarray*}
where $W_\gamma$ is the $\gamma$-invariant part of the polynomial $W$, $\cQ_{W_\gamma}$ is its Jacobian ring, the differential form $d\underline{x}_\gamma$ is $\bigwedge_{x_j \in (\mathbb{C}^N)^\gamma} dx_j$, and the upper-script $G$ stands for the invariant part under the group $G$.
It comes equipped with a bidegree and a pairing, see \cite[Equation (4)]{LG/CY} or \cite[Equation (5.12)]{Polish1}.

ADE singularities are the following polynomials
\begin{equation*}
\begin{array}{ll}
A_r: & x^{r+1}, \\
D_n: & x^2y+y^{n-1}, \\
E_6: & x^3+y^4, \\
E_7: & x^3y+y^3, \\
E_8: & x^3+y^5.
\end{array}
\end{equation*}
Witten's generalized conjecture below has been proved by Faber--Shadrin--Zvonkine \cite{FSZ} in the A-case and by Fan--Jarvis--Ruan \cite{FJRW} in the D-case and E-case, together with Francis and Merrell \cite{FJRWD4} for the $D_4$-case.

\begin{theorem}[\cite{FSZ,FJRW,FJRWD4}]
The potential function of the FJRW theory of $(W,G)$ is a $\tau$-function of the Drinfeld--Sokolov hierarchy of type $H$ as follows:
\begin{equation*}
\begin{array}{|c|c|c|}
\hline
W & G & H \\
\hline
A_r & \langle \grj \rangle = \mathrm{Aut}(W) & A_r \\
\hline
D_{2n} & \begin{array}{c}
\langle \grj \rangle \\
\mathrm{Aut}(W)
\end{array}
& \begin{array}{c}
D_{2n} \\
A_{4n-3}
\end{array} \\
\hline
D_{2n+1} & \langle \grj \rangle = \mathrm{Aut}(W) & A_{4n-1} \\
\hline
E_{6,7,8} & \langle \grj \rangle = \mathrm{Aut}(W) & E_{6,7,8} \\
\hline
D_n^T & \mathrm{Aut}(W) & D_n \\
\hline
\end{array}
\end{equation*}
where the polynomial of type $D_n^T$ is $W=y^{n-1}x+x^2$, $n \geq 4$.
\end{theorem}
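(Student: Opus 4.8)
The plan is to reduce each row of the table to the Givental--Teleman reconstruction of a semisimple cohomological field theory from its genus $0$ data, and then to match the result with the Dubrovin--Zhang hierarchy underlying the relevant Drinfeld--Sokolov hierarchy. First I would show that for each pair $(W,G)$ above the FJRW theory is a semisimple cohomological field theory: the genus $0$ three-point FJRW invariants equip the state space $\st_{(W,G)}$ with a Frobenius algebra, and the full genus $0$ potential makes it into a conformal Frobenius manifold; for ADE singularities and their Berglund--H\"ubsch transposes with the quoted groups this Frobenius manifold is the one attached to the singularity by K.~Saito's theory of primitive forms, which is generically semisimple. Identifying the genus $0$ FJRW structure with the Saito structure is elementary for the $A_r$ series (Witten's original $r$-spin setup) and is precisely the genus $0$ mirror theorem of Fan--Jarvis--Ruan in the $D$ and $E$ cases; for $(D_{2n},\mathrm{Aut}(W))$, $(D_{2n+1},\mathrm{Aut}(W))$ and $(D_n^T,\mathrm{Aut}(W))$ one additionally invokes Berglund--H\"ubsch--Krawitz mirror symmetry to recognise the resulting Frobenius manifold as the $A_{4n-3}$, $A_{4n-1}$, respectively $D_n$ one.

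Granting the genus $0$ input and semisimplicity, Teleman's theorem says the FJRW partition function is uniquely reconstructed from the genus $0$ Frobenius manifold together with its homogeneous calibration, i.e.\ the $R$-matrix fixed by the Euler field and the flat unit. On the integrable side, the Drinfeld--Sokolov hierarchy of type $H$ is a tau-symmetric hamiltonian hierarchy whose dispersionless limit is the principal hierarchy of the same Frobenius manifold, and its topological tau-function is, by the Dubrovin--Zhang theory recalled in Section~\ref{recallDZ}, reconstructed from exactly the same genus $0$ data and calibration; this identification of the Drinfeld--Sokolov and Dubrovin--Zhang hierarchies is classical for $A_r$ (Gelfand--Dickey, KdV) and established for the remaining ADE types. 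Hence the FJRW potential and the logarithm of the topological Drinfeld--Sokolov tau-function solve the same reconstruction problem, and I would finish by checking that the two normalisations agree: the conformal dimension $\delta$ and the spectrum of the Euler vector field on the FJRW side are read off from the bidegree on $\st_{(W,G)}$ and the homogeneity of the FJRW classes, and one verifies that they coincide with the data built into the Drinfeld--Sokolov tau-function, which pins the calibration down uniquely.

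The main obstacle is the genus $0$ comparison in the non-classical rows: proving that the FJRW Frobenius manifolds for $(D_{2n},\mathrm{Aut}(W))$, $(D_{2n+1},\mathrm{Aut}(W))$ and $(D_n^T,\mathrm{Aut}(W))$ are the advertised $A$- or $D$-type ones. This is where the construction of the FJRW virtual cycle --- via the Witten equation, or its algebraic incarnations --- and explicit computations of three- and four-point correlators are unavoidable, and the argument is genuinely case-by-case; the $D_4$ case is exceptional, owing to the extra $S_3$ symmetry of $x^2y+y^3$, and was handled separately by Francis--Merrell. A secondary, essentially bookkeeping point is to confirm that the Givental $R$-matrix delivered by Teleman's theorem coincides with the one implicit in the Drinfeld--Sokolov tau-function, but this follows from uniqueness of the homogeneous calibration once the Euler fields have been matched.
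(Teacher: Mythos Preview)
The paper does not contain a proof of this theorem; it is quoted as a known result, with attribution to \cite{FSZ} for the $A_r$ case, \cite{FJRW} for the $D$ and $E$ cases, and \cite{FJRWD4} for the exceptional $D_4$ case. There is therefore no in-paper argument to compare your proposal against.

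That said, your outline is in the spirit of the approach actually used in \cite{FJRW}: establish the genus-$0$ mirror theorem identifying the FJRW Frobenius manifold with the Saito structure, check generic semisimplicity, invoke Teleman's reconstruction, and then identify the resulting Dubrovin--Zhang hierarchy with the Drinfeld--Sokolov hierarchy of the appropriate type. Two caveats are worth flagging. First, for $A_r$ the original argument of \cite{FSZ} predates Teleman's classification and instead proceeds via tautological relations on $\oM_{g,n}$ that pin down the descendent potential from its genus-$0$ part; your Teleman-based route is a legitimate alternative but not the historical one. Second, and more substantively, the identification of the Dubrovin--Zhang hierarchy attached to an ADE Frobenius manifold with the corresponding Drinfeld--Sokolov hierarchy is itself a nontrivial theorem (due to Dubrovin--Liu--Zhang, Wu, and others, depending on the type) that you have folded into a single sentence; in a complete proof this step carries real weight and should be cited or argued separately. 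Your remark that the $D_4$ case is exceptional and was handled in \cite{FJRWD4} is accurate.
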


Another natural class of Drinfeld--Sokolov hierarchies consists of the types $B_n$, $C_n$, $F_4$, and $G_2$.
The situation is then more subtle, as the Saito--Givental--Dubrovin--Zhang potentials of the corresponding singularities are not $\tau$-functions of these hierarchies, see \cite{DZ05}.

To find a positive solution to this problem, Liu, Ruan, and Zhang \cite{LRZ} introduced the notion of a cohomological field theory with finite symmetry, which is the additional data of a finite group $\Gamma$ acting on the state space $V$ such that the linear maps $c_{g,n} \colon V^{\otimes n} \rightarrow H^*(\overline{\cM}_{g,n},\mbC)$ defining the cohomological field theory are invariant under $\Gamma$ (with the trivial action of $\Gamma$ on the cohomology space $H^*(\overline{\cM}_{g,n},\mbC)$).

Therefore, the restriction $c_{g,n}^\Gamma \colon (V^\Gamma)^{\otimes n} \rightarrow H^*(\overline{\cM}_{g,n},\mbC)$ of the cohomological field theory to the $\Gamma$-invariant part of the state space is a partial cohomological field theory, i.e.~it satisfies all the axioms except the gluing-loop axiom.

\begin{theorem}[\cite{LRZ}]\label{pWitt2}
The potential function of the $\Gamma$-invariant part of the FJRW theory of $(W,G)$ is a $\tau$-function of the Drinfeld--Sokolov hierarchy of type $H$ as follows:
\begin{equation*}
\begin{array}{|c|c|c|c|}
\hline
W & G & \Gamma & H \\
\hline
D_n^T & \mathrm{Aut}(W) & \ZZ/2\ZZ & B_n \\
\hline
A_{2n-1} & \langle \grj \rangle = \mathrm{Aut}(W) & \ZZ/2\ZZ & C_n \\
\hline
E_6 & \langle \grj \rangle = \mathrm{Aut}(W) & \ZZ/2\ZZ & F_4 \\
\hline
D_4 & \langle \grj \rangle & \ZZ/3\ZZ & G_2 \\
\hline
\end{array}
\end{equation*}
where the action of $\Gamma$ on the state space is given by
\begin{itemize}
\item for $D_n^T$, $\ZZ/2\ZZ$ acts by $(-1)$ on $\st_{1}$ and trivially otherwise,
\item for $A_{2n-1}$, $\ZZ/2\ZZ$ acts by $(-1)^{k+1}$ on $\st_{\grj^k}$ for $1 \leq k \leq 2n$,
\item for $E_6$, $\ZZ/2\ZZ$ acts by $(-1)$ on $\st_{\grj^2}$ and on $\st_{\grj^{10}}$, and acts trivially otherwise,
\item for $D_4$, $\ZZ/3\ZZ$ acts trivially on $\st_{\grj}$ and on $\st_{\grj^2}$. The subspace $\st_{1}$ has a natural basis given by the differential forms $e_x:=x dx dy$ and $e_y:=y dx dy$, and the action of $\xi \in \ZZ/3\ZZ$ is given there by $\xi \cdot (e_x,e_y) = (\xi^{-1} e_x, \xi e_y)$.
\end{itemize}
\end{theorem}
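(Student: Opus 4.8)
The plan is to deduce the theorem from the already-recalled result of Faber--Shadrin--Zvonkine and Fan--Jarvis--Ruan together with a ``folding'' statement for tau-functions of Drinfeld--Sokolov hierarchies. By that result, the FJRW partition function $\mathcal F^{(W,G)}$ of $(W,G)$ is, in each of the four cases, the topological tau-function of the simply-laced Drinfeld--Sokolov hierarchy of a type $\widehat H$ (the one appearing in the previous table for $(W,G)$; e.g. $\widehat H=A_{2n-1}$ when $H=C_n$, $\widehat H=E_6$ when $H=F_4$, $\widehat H=D_4$ when $H=G_2$), and $H\in\{B_n,C_n,F_4,G_2\}$ is the fold of $\widehat H$ along a Dynkin-diagram automorphism $\sigma$. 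The statement then follows once one proves: (a) $\mathcal F^{(W,G)}$ is invariant under the action of $\sigma$ on the phase space of the $\widehat H$-hierarchy; (b) the $\Gamma$-invariant FJRW partition function equals the restriction of $\mathcal F^{(W,G)}$ to the $\sigma$-fixed times; and (c) the restriction of a $\sigma$-invariant tau-function of the $\widehat H$-hierarchy to the $\sigma$-fixed times is a tau-function of the folded $H$-hierarchy.

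For (a) I would argue that, by definition of a cohomological field theory with finite symmetry, $\Gamma$ fixes every $c_{g,n}$, hence fixes $\mathcal F^{(W,G)}$ and acts on the underlying Frobenius manifold by automorphisms, and then identify this automorphism with $\sigma$ by matching the explicit $\Gamma$-action on the sectors $\st_\gamma$ given in the statement with the diagram automorphism acting on the principal-graded piece of the relevant loop algebra, under the mirror identification used in the previous theorem. This is short sign bookkeeping for $A_{2n-1}$ and $E_6$, and genuinely more involved for $D_4\to G_2$, where $\sigma$ has order $3$ and acts by the prescribed triality on the two-dimensional untwisted sector spanned by $e_x,e_y$. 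Statement (b) is essentially built into the definitions: the restriction $c^\Gamma_{g,n}$ of the partial cohomological field theory to $\st^\Gamma=(\st_{(W,G)})^\sigma$ has partition function obtained by retaining only the times associated with the $\sigma$-fixed directions, which is exactly the $\sigma$-fixed locus of time space.

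The heart of the matter is (c), a statement purely about integrable hierarchies. One realizes the $\widehat H$-hierarchy through its principal Heisenberg subalgebra inside the affine algebra of $\widehat H$; $\sigma$ induces an automorphism of this picture whose fixed-point subalgebra is the principal Heisenberg subalgebra for $H$, so a $\sigma$-invariant point of the associated Sato Grassmannian restricts to a point of the Grassmannian for $H$, and correspondingly a $\sigma$-invariant tau-function restricts to one satisfying the $H$ Hirota bilinear equations. Translating this back into the Dubrovin--Zhang normalization (so as to match the normalization in which $\mathcal F^{(W,G)}$ is a tau-function), via a Miura transformation as in Section~\ref{subsection:Miura transformations}, forces one to keep track of the precise rescalings of times and of the dilaton/string shift produced by the folding. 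I expect this bookkeeping, and in particular the order-$3$ case $D_4\to G_2$, to be the main obstacle; once (c) is established the rest is formal, since (a)--(c) together say precisely that the $\Gamma$-invariant FJRW partition function is the topological tau-function of the Drinfeld--Sokolov hierarchy of type $H$. An alternative, more cohomological route to (a) and to the $\sigma$-equivariance needed in (c) is to invoke Givental--Teleman reconstruction for the full, conformal, generically semisimple cohomological field theory of $(W,G)$, exhibit the Givental $R$-matrix as $\sigma$-equivariant because the genus $0$ data are, and restrict its action to $\st^\Gamma$.
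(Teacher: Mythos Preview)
The paper does not prove this theorem; it is quoted verbatim from \cite{LRZ} with attribution and no argument is supplied. There is therefore nothing in the paper to compare your proposal against.

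As a separate comment on your sketch: the folding strategy you outline --- identify the $\Gamma$-action with a Dynkin-diagram automorphism $\sigma$ of the simply-laced type $\widehat H$, use the already-established simply-laced Witten conjecture for $(W,G)$, and then restrict a $\sigma$-invariant tau-function to the $\sigma$-fixed times --- is indeed the shape of the argument in \cite{LRZ}. However, your step~(c) is where the real content lies, and your description of it is too optimistic. The claim that ``a $\sigma$-invariant point of the Sato Grassmannian restricts to a point of the Grassmannian for $H$'' hides a genuine difficulty: the non-simply-laced Drinfeld--Sokolov hierarchies are built from \emph{twisted} affine algebras, and one must check that the principal Heisenberg subalgebra, the Kac--Moody vertex-operator realization, and the specific bilinear Hirota equations all pass correctly through the fixed-point construction, including the correct identification of exponents and time-variable normalizations. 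This is carried out in detail in \cite{LRZ} and is not a formality. Your alternative route via Givental--Teleman reconstruction would establish $\sigma$-equivariance of the full potential but does not by itself produce the Drinfeld--Sokolov bilinear equations for the folded hierarchy; you would still need the integrable-systems input of step~(c).
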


In general, FJRW theory of $(W,G)$ is always a cohomological field theory with finite symmetry $\textrm{Aut}(W)$, where the group acts naturally on each sector $\st_\gamma$ of the state space.
Explicitly, the $\textrm{Aut}(W)$-invariant part of the state space is
\begin{eqnarray*}
\st_{(W,G),\mathrm{Aut}(W)} & = & \bigoplus_{\gamma \in G} (\cQ_{W_\gamma} \otimes d\underline{x}_\gamma)^{\mathrm{Aut}(W)} \\
& \subset & \st_{(W,G)}.
\end{eqnarray*}

Following \cite{Guere1, Guere2}, for any chain polynomial $W=x_1^{a_1}x_2 + \dotsb + x_{N-1}^{a_{N-1}}x_N + x_N^{a_N}$ with any group $G$ of symmetries, we can compute every product\footnote{The computation is in fact in the Chow ring of the moduli space of $(W,G)$-spin curves. Under some extra assumptions, it is also done for loop polynomials.}
\begin{equation}\label{prodJ}
\lambda_g \cdot c^{W,G}_{g,n}(u_1\otimes\dotsc\otimes u_n) \in H^*(\overline{\cM}_{g,n},\mbC),
\end{equation}
where $c^{W,G}_{g,n}$ is the cohomological class for the FJRW theory of $(W,G)$, and the vectors $u_i$ are in the $\textrm{Aut}(W)$-invariant part $\st_{(W,G),\mathrm{Aut}(W)}$ of the state space.
Furthermore, the third author has written a computer program \cite{computerprogram} in Maple to compute integrals involving the product~\eqref{prodJ}, $\psi$-classes, and a double ramification cycle.
As a consequence, it is possible to evaluate with a computer any density of the DR hierarchy attached to this partial cohomological field theory.

\begin{remark}
Let $G_1 \subset G_2$ be two distinct groups of symmetries of the same polynomial $W$. Then, the restriction of the FJRW theory of $(W,G_2)$ to the subspace
\begin{equation*}
\bigoplus_{\gamma \in G_1} (\cQ_{W_\gamma} \otimes d\underline{x}_\gamma)^{\mathrm{Aut}(W)} \subset \st_{(W,G_2)}
\end{equation*}
is a partial cohomological field theory, but it is in general distinct from the $\textrm{Aut}(W)$-invariant part of the FJRW of $(W,G_1)$.
In fact, even the products \eqref{prodJ} are distincts, as well as the DR hierarchies; we will see an example in Remark \ref{quinticsing}.
Note also that in general for a small group $G$, the FJRW theory of $(W,G)$ is not generically semisimple.
\end{remark}

In this part, we give examples of computations for the singularity $B_2=C_2$, and we compare DR and DZ hierarchies.

\subsubsection{The two faces of the singularity $B_2$}
We discuss the two candidate theories for the singularity $B_2$, already appearing in Proposition \ref{pWitt2} and in Section \ref{I2}.

First, we start with the singularity $A_3$, i.e.~$W=x^4$, whose potential in genus zero is
\begin{equation*}
F^{4\textrm{-spin}}_0(t^1,t^2,t^3) = \frac{(t^1)^2 t^3}{2} + \frac{t^1(t^2)^2}{2} + \frac{(t^2)^2(t^3)^2}{16} + \frac{(t^3)^5}{960},
\end{equation*}
the dimension of the state space being $3$ and the coordinate $t^k$ corresponding to the state element $e_{\grj^k}$.
We then consider the action of $\ZZ/2\ZZ$ as in Proposition \ref{pWitt2}: $(-1) \cdot t^k := (-1)^{k+1} t^k$.
Thus, the invariant coordinates are $t^1$ and $t^3$, leading to the $B_2$-potential in genus zero:
\begin{equation*}
F_0(t^1,t^3) = \frac{(t^1)^2 t^3}{2} + \frac{(t^3)^5}{960},
\end{equation*}
where, to compare with Section~\ref{I2}, we need to take the reparametrization
$$u = a t^1 ~,~~ v=b t^3 ~,~~ 40 b^5=3 ~,~~ a^2 b = 1.$$
The $B_2$-potential\footnote{Note that we erase the terms of degree strictly less than $3$.} shifted by $s \cdot e_{\grj^3}$ is
\begin{equation*}
F_0(t^1,t^3,s) = \frac{(t^1)^2 t^3}{2} + \frac{1}{960} ((t^3)^5 + 5s (t^3)^4 + 10 s^2 (t^3)^3).
\end{equation*}
The Euler vector field is the invariant part of the Euler vector field for $4$-spin, i.e.~
\begin{equation*}
E = t^1 \frac{\partial}{\partial t^1} + \frac{1}{2} (t^3+s) \frac{\partial}{\partial t^3}
\end{equation*}
and we have
\begin{equation*}
E \cdot F_0 = \frac{5}{2} F_0 + \frac{s}{2} \left( \frac{(t^1)^2}{2} + \frac{s^2(t^3)^2}{32} \right) .
\end{equation*}
Note that the conformal dimension $\delta$ for the $B_2$-potential satisfies $3-\delta=\frac{5}{2}$.

One interesting aspect of the $B_2$-potential in genus zero is that we have two different natural ways to extend it to higher genus and to descendants:
\begin{enumerate}
\item using Teleman's reconstruction theorem \cite{Teleman} for a non-zero $s$; away from the origin the Frobenius structure is semisimple and conformal with respect to $E$. We denote by $c^T:=\left\lbrace c^T_{g,n} \right\rbrace_{2g-2+n>0}$ the associated CohFT depending on $s \neq 0$ and by $F^T$ its potential function.
As already discussed in Section \ref{I2}, we will see explicitly in Remark \ref{diverg} that the theory $c^T$ diverges at $s=0$.
\item taking the invariant part $F^I$ of the $4$-spin potential $F^{4\textrm{-spin}}$ ; we denote by $c^I:=\left\lbrace c^I_{g,n} \right\rbrace_{2g-2+n>0}$ the corresponding partial CohFT. The theory $c^I$ is well-defined at $s=0$.
\end{enumerate}

\begin{remark}
The Dubrovin-Zhang hierarchy is not defined for partial CohFTs.
However, the discussion in Section \ref{recallDZ} makes sense for CohFTs with finite symmetries.
First, we take the full potential of the underlying CohFT and define the power series $(w^\top)^\alpha_n$ and the differential polynomial $\Omega^{\mathrm{DZ}}_{\alpha,p;\beta,q}$ as in Section~\ref{recallDZ}.
Then, we observe that the unity is always in the invariant part of the state space under the finite symmetry and that any correlator involving exactly one non-invariant state vanishes.
Thus, for any index $\alpha$ corresponding to a non-invariant coordinate, the power series $(w^\top)^\alpha_n$ becomes zero once we restrict to invariant coordinates $t^*_*$, and the differential polynomial $\Omega^{\mathrm{DZ}}_{\alpha,p;\beta,q}$ only depends on the variables $w^\alpha_n$ which are invariant under the symmetry.
As a consequence, we define the equations of the Dubrovin-Zhang hierarchy for CohFT with finite symmetries by restricting equation \eqref{eq:DZ system}
$$\frac{\d w^\alpha}{\d t^\beta_q}=\eta^{\alpha\mu}\d_x\Omega^\DZ_{\mu,0;\beta,q}$$
to the invariant coordinates.
\end{remark}

Recall that the equations of the DR hierarchy are given by
$$
\frac{\d u^\alpha}{\d t^\beta_q}=\eta^{\alpha\mu}\d_x\frac{\delta\og_{\beta,q}}{\delta u^\mu}.
$$
In this section, we explain how to compute the functions $\Omega^{\mathrm{DZ}}_{3,0;3,0}$ and $\frac{\delta \og_{3,0}}{\delta u^3}$ up to genus $1$ for the CohFT $c^T$ and for the partial CohFT $c^I$.

\subsubsection*{DZ hierarchies for $c^I$ and for $c^T$}
The full potentials $F^I$ and $F^T$ have the form
\begin{equation*}
\frac{\partial^2 F^\bullet}{\partial t^3 \partial t^3} = \frac{(t^3)^3}{48} +  \frac{s (t^3)^2}{16} + \frac{s^2 t^3}{16} + \epsilon^2 P^\bullet_1 + O(\epsilon^4),
\end{equation*}
where
$$P^\bullet_1 = \sum_{\substack{\alpha_1,\dotsc,\alpha_n \in \{1,3\} \\ d_1,\dotsc,d_n}} \langle \tau_{d_1}(e_{\grj^{\alpha_1}}) \dotsm \tau_{d_n}(e_{\grj^{\alpha_n}}) \tau_0(e_{\grj^3})^2 \rangle^\bullet_{1,n+2} \frac{t_{d_1}^{\alpha_1} \dotsm t_{d_n}^{\alpha_n}}{n!}$$
and where $\langle\ldots\rangle^\bullet_{1,n+2}$ are the correlators for the theories $c^I$ or $c^T$.
For both theories, we compute the function $\Omega^{\mathrm{DZ},\bullet}_{3,0;3,0}$ written in the normal coordinates $w^\alpha_n:=\left.(w^\top)^\alpha_n\right|_{x=0}$ satisfying equation~\eqref{eq:property of wtop1}.
For this, we use the homogeneity condition given by the Euler vector field: 
$$\left( \sum_{d \geq 0} w^1_d \frac{\partial}{\partial w^1_d} + \frac{1}{2} \sum_{d \geq 0} w^3_d \frac{\partial}{\partial w^3_d} + \frac{1}{2} s \frac{\partial}{\partial w^3_0} + \frac{1}{2} \epsilon^2 \frac{\partial}{\partial \epsilon^2} \right) \frac{\partial^2 F^\bullet}{\partial t^3 \partial t^3} = \frac{3}{2} \frac{\partial^2 F^\bullet}{\partial t^3 \partial t^3} + \frac{s^3}{32}.$$
Therefore, the function $P^\bullet_1$ has the form
\begin{equation*}
P^\bullet_1 = \sum_{\lambda, \mu} a_{\lambda,\mu} w^1_\lambda w^3_\mu (w^3+s)^{2-2l(\lambda)-l(\mu)} ~, ~~ a_{\lambda,\mu} \in \mathbb{C},
\end{equation*}
where $\lambda$ and $\mu$ are multi-indices, $w^\alpha_{\lambda_1,\dotsc,\lambda_p}:=w^\alpha_{\lambda_1} \dotsm w^\alpha_{\lambda_p}$, $l(\cdot)$ is the length of the multi-index and we have
$$\sum_{k=1}^{l(\lambda)} \lambda_k + \sum_{k=1}^{l(\mu)} \mu_k = 2g.$$
As a consequence, we obtain
\begin{equation*}
P^\bullet_1 = a w_2^1 + b (w_1^3)^2 + c w_2^3 (w^3+s) + d w_1^1 w_1^3 (w^3+s)^{-1} + e (w_1^1)^2 (w^3+s)^{-2},
\end{equation*}
where, using equation \eqref{eq:property of wtop1}, the constants $a,b,c,d,e$ are related to the correlators of the theory:
\begin{eqnarray*}
a & = & \langle \tau_0(e_{\grj^3})^2 \tau_2(e_{\grj}) \rangle_{1,3} - \frac{s^2}{16} \langle \tau_0(e_\grj)^2 \tau_2(e_\grj) \rangle_{1,3} , \\
b & = & \frac{1}{2} \langle \tau_1(e_{\grj^3})^2 \tau_0(e_{\grj^3})^2 \rangle_{1,4} - \frac{s^2}{32} \langle \tau_1(e_{\grj^3})^2 \tau_0(e_{\grj})^2 \rangle_{1,4} - \frac{c s}{2} \langle \tau_1(e_{\grj^3})^2 \tau_0(e_{\grj})^4 \rangle_{0,6} \\
& & - \frac{e}{s^2} ( \langle \tau_1(e_{\grj^3})^2 \tau_0(e_{\grj^3}) \tau_0(e_\grj)^2 \rangle_{0,5} - \frac{1}{s} \langle \tau_1(e_{\grj^3})^2 \tau_0(e_\grj)^2 \rangle_{0,4} ), \\
c & = & \frac{1}{s} \langle \tau_2(e_{\grj^3}) \tau_0(e_{\grj^3})^2 \rangle_{1,3} - \frac{s}{16} \langle \tau_2(e_{\grj^3}) \tau_0(e_\grj)^2 \rangle_{1,3} , \\
d & = & s \langle \tau_0(e_{\grj^3})^2 \tau_1(e_{\grj^3}) \rangle_{1,3} - \frac{s^3}{16} \langle \tau_0(e_\grj)^2 \tau_1(e_{\grj^3}) \rangle_{1,3} , \\
e & = & s^2 \langle \tau_0(e_{\grj^3})^2 \rangle_{1,2}. \\
\end{eqnarray*}
Both theories $c^I$ and $c^T$ have the same genus-zero correlators, but genus-one correlators are different.
For the theory $c^T$, we can use the formula \cite[Equation 3.30]{DZ98} for the function
$$G^T(t^1,t^3):=F^T_1 |_{t^*_{\geq 1}=0}=- \frac{1}{48} \log \bigl(1+\frac{t^3}{s} \bigr)$$
to get the values
\begin{eqnarray*}
\langle \tau_0(e_{\grj^3}) \rangle^T_{1,1} = - \frac{1}{48 s}, & & \langle \tau_0(e_{\grj^3})^2 \rangle^T_{1,2} = \frac{1}{48 s^2}, \\
\langle \tau_0(e_{\grj^3})^2 \tau_2(e_{\grj}) \rangle^T_{1,3} = 0, & & \langle \tau_0(e_{\grj^3})^2 \tau_1(e_{\grj^3}) \rangle^T_{1,3} = 0, \\
\langle \tau_0(e_{\grj^3})^2 \tau_1(e_{\grj^3})^2 \rangle^T_{1,4} =  \frac{1}{96}, & & \langle \tau_2(e_{\grj^3}) \tau_0(e_{\grj^3})^2 \rangle^T_{1,3} = \frac{s}{96} (1-\frac{1}{8} ),
\end{eqnarray*}
and the final expression
\begin{equation*}
\Omega^{\mathrm{DZ},T}_{3,0;3,0} = \frac{\partial^2 F^T}{\partial t^3 \partial t^3} = \frac{(w^3)^3}{48} +  \frac{s (w^3)^2}{16} + \frac{s^2 w^3}{16} + \epsilon^2 \left( \frac{1}{768} (w_1^3)^2 + \frac{1}{96} w_2^3 (w^3+s) + \frac{1}{48} \frac{(w_1^1)^2}{(w^3+s)^2} \right) + O(\epsilon^4).
\end{equation*} 

\begin{remark}\label{diverg}
We see that the $B_2$-theory for $F^T$ is singular at $s=0$, as already explained in Section \ref{I2}.
\end{remark}

The theory $c^I$ is given by the shifted Witten $4$-spin class
\begin{equation*}
c^{I}_{g,n_1+n_3}(e_{\grj}^{n_1} \otimes e_{\grj^3}^{n_3}) = \sum_{m \geq 0} \frac{s^m}{m!} p_{m,*} c^{\textrm{$4$-spin}}_{g,n_1+n_3+m}(e_{\grj}^{n_1} \otimes e_{\grj^3}^{n_3+m})
\end{equation*}
which is a class of mixed Chow degrees equal to
$$\frac{g-1 + n_3-m}{2},~~ \mathrm{with } ~~ m \geq 0.$$
In particular, in genus $1$, it is less than $\frac{n_3}{2}$, and we get the vanishing of the correlators
\begin{equation*}
\langle \tau_0(e_{\grj^3}) \rangle^I_{1,1} = 0, \quad \langle \tau_0(e_{\grj^3})^2 \rangle^I_{1,2} = 0, \quad
\langle \tau_0(e_{\grj^3})^2 \tau_1(e_{\grj^3}) \rangle^I_{1,3} = 0.
\end{equation*}
The remaining correlators are
\begin{eqnarray*}
\langle \tau_2(e_{\grj}) \tau_0(e_{\grj^3})^2 \rangle^I_{1,3} & = & \int_{\overline{\cM}_{1,3}} \psi_1^2 c^{\textrm{$4$-spin}}_{1,3}(e_{\grj^3}^3) = \frac{1}{48}, \\
\langle \tau_2(e_{\grj^3}) \tau_0(e_{\grj^3})^2 \rangle^I_{1,3} & = & s \int_{\overline{\cM}_{1,4}} \psi_1^2 c^{\textrm{$4$-spin}}_{1,4}(e_{\grj^3}^4) = \frac{s}{64}, \\
\langle \tau_1(e_{\grj^3})^2 \tau_0(e_{\grj^3})^2 \rangle^I_{1,4} & = & \int_{\overline{\cM}_{1,4}} \psi_1 \psi_2 c^{\textrm{$4$-spin}}_{1,4}(e_{\grj^3}^4)
= \frac{1}{64},
\end{eqnarray*}
where we use the equality $\psi_k = \lambda_1 + \delta_{(k)}$ in genus $1$ 
(where $\delta_{(k)}$ is the divisor of nodal curves with the $k$-th marking on the genus-zero component) and the formula in \cite{Guere2}.
Therefore, we obtain the final expression
\begin{equation*}
\Omega^{\mathrm{DZ},I}_{3,0;3,0}=\frac{\partial^2 F^I}{\partial t^3 \partial t^3} = \frac{(w^3)^3}{48} +  \frac{s (w^3)^2}{16} + \frac{s^2 w^3}{16} + \epsilon^2 \left( \frac{1}{48} w_2^1 + \frac{1}{128} (w_1^3)^2 + \frac{1}{64} w_2^3 (w^3+s) \right) + O(\epsilon^4).
\end{equation*} 

\subsubsection*{DR hierarchy for $c^I$}
Recall the Hamiltonian $\og^I_{3,0}$ of the DR hierarchy,
\begin{equation*}
\og^I_{3,0} := \sum_{g \geq 0} \sum_{n \geq 2} \frac{(-\epsilon^2)^g}{n!} \sum_{\substack{a_1+\dotsb+a_n=0 \\ \alpha_1, \dotsc,\alpha_n \in \lbrace 1,3 \rbrace}} \left(\int_{\textrm{DR}_g(0,a_1,\dotsc,a_n)} \lambda_g c^{I}_{g,n+1}(e_{\grj^3} \otimes e_{\grj^{\alpha_1}} \otimes \dotsm \otimes e_{\grj^{\alpha_n}}) \right) \prod_{i=1}^n p_{a_i}^{\alpha_i}.
\end{equation*}
For degree reasons, non-zero integrals must satisfy $g+2n_1+n_3+m = 4$, where $n_k$ is the number of integer $k$ among $\alpha_1,\dotsc,\alpha_n$.
Since we have $n_1+n_3 \geq 2$, we obtain the following possibilities
\begin{eqnarray*}
g=0, & \quad & (n_1,n_3,m) \in  \left\lbrace (2,0,0), (1,2,0), (1,1,1), (0,2,2), (0,3,1), (0,4,0) \right\rbrace, \\
g=1, & \quad & (n_1,n_3,m) \in  \left\lbrace (1,1,0), (0,3,0), (0,2,1) \right\rbrace, \\
g=2, & \quad & (n_1,n_3,m) \in  \left\lbrace (0,2,0) \right\rbrace. \\
\end{eqnarray*}
Some of these contributions come from the non-shifted $\og^I_{3,0}$, which was computed 
for instance in \cite{BR14}:
\begin{gather*}
\og^I_{3,0}|_{s=0}=\int \left( \frac{(u^1)^2}{2} + \frac{(u^3)^4}{192} + \frac{\epsilon^2}{4} \left(\frac{(u^3)^2 u_2^3}{64} + \frac{u^3 u_2^1}{24}\right)+\frac{\epsilon^4}{16}\frac{u^3 u^3_4}{128}\right) dx.
\end{gather*}
For the shifted theory, the new terms are in genus $0$ and in genus $1$. For instance we have
\begin{eqnarray*}
A & := & \frac{\epsilon^2 s}{2} \sum_{a\in\mbZ} \left( \int_{\textrm{DR}_1(0,a,-a)} -\lambda_1 p_*c^\textrm{$4$-spin}_{1,4}(e_{\grj^3}^4) \right) p_{a}^{3} p_{-a}^{3} \\
& = & \frac{\epsilon^2 s}{2} \sum_{a\in\mbZ} \left( \int_{\delta^{\{1,2,3,4\}}_0} -\lambda_1 c^\textrm{$4$-spin}_{1,4}(e_{\grj^3}^4) \right) a^2 p_{a}^{3} p_{-a}^{3} \\
& = & -\frac{\epsilon^2 s}{2} \sum_{a\in\mbZ} \frac{a^2}{64}p_{a}^{3} p_{-a}^{3} \\
& = & \int\frac{\epsilon^2}{4}\frac{s (u_1^3)^2}{32}dx. \\
\end{eqnarray*}
In genus zero, we have two new terms in $\og_{3,0}$ that we compute to be $\frac{s^2}{32} (u^3)^2$ and $\frac{s}{48} (u^3)^3$ respectively.
At last, we have
\begin{gather*}
\og^I_{3,0} = \int \left( \frac{(u^1)^2}{2} + \frac{(u^3)^4}{192} + \frac{s}{48} (u^3)^3 + \frac{s^2}{32} (u^3)^2  + \frac{\epsilon^2}{4} \left(\frac{(u^3+s)^2 u_2^3}{64} + \frac{u^3 u_2^1}{24} \right)+ O(\epsilon^4) \right) dx
\end{gather*}
and we obtain
\begin{equation*}
\frac{\delta \og^I_{3,0}}{\delta u^3} = \frac{(u^3)^3}{48} +\frac{s(u^3)^2}{16} + \frac{s^2u^3}{16} + \frac{\epsilon^2}{4} \left( \frac{1}{24} u_2^1  +\frac{1}{32} (u_1^3)^2  +\frac{1}{16} u_2^3 (u^3+s)  \right) + O(\epsilon^4).
\end{equation*}

\subsubsection*{DR hierarchy for $c^T$}
The last step is to compute the Hamiltonian~$\og^T_{3,0}$ up to terms of genus bigger than $1$ for the cohomological field theory~$c^T$.
In genus $0$, the only non-zero correlators without descendants are
\begin{equation*}
\begin{array}{lcllcl}
\langle \tau_0(e_{\grj})^2 \tau_0(e_{\grj^3}) \rangle_0 & = & 1, &
\langle \tau_0(e_{\grj^3})^5 \rangle_0 & = & \frac{1}{8}, \\
\langle \tau_0(e_{\grj^3})^4 \rangle_0 & = & \frac{s}{8}, &
\langle \tau_0(e_{\grj^3})^3 \rangle_0 & = & \frac{s^2}{16}, \\
\end{array}
\end{equation*}
and we find as before
\begin{equation*}
\og^T_{3,0}  =  \int \left( \frac{(u^1)^2}{2} + \frac{s^2}{32} (u^3)^2 + \frac{s}{48} (u^3)^3 + \frac{(u^3)^4}{192} + O(\epsilon^2) \right) dx .
\end{equation*}
In genus $1$, the double ramification cycle equals
\begin{equation*}
\left.\DR_1(0,a_1,\dotsc,a_n)\right|_{\cM_{1,n+1}^\ct}  =  - \sum_{\substack{J \subset \left\lbrace 0,\dotsc,n\right\rbrace  \\ \left| J \right| \geq 2}} \sum_{\substack{i,j \in J \\ i<j }} a_i a_j \delta_0^J, 
\end{equation*}
where $a_0:=0$. This splits the computation into a genus-zero correlator and a product between a genus-one virtual class and the class $-\lambda_1$, which equals $-\frac{1}{24} \delta_\textrm{irr}$.
Since the theory $c^T$ is a CohFT, we use the loop-gluing axiom and we obtain
\begin{equation*}
-\frac{1}{24}\langle \tau_0(e_J)\tau_0(e_\alpha)\rangle_0 \eta^{\alpha\beta} \langle\tau_0(e_\beta)\tau_0(e_{J^c})\tau_0(e_\gamma)\tau_0(e_\delta) \rangle_0 \eta^{\gamma\delta},
\end{equation*}
where the notation $\tau_0(e_J)$ stands for the product $\prod_{i\in J}\tau_0(e_j)$ and similarly for the complementary set $J^c$. We notice that for $\eta^{\gamma\delta}$ to be non-zero, we need one of the state $e_\gamma$ or $e_\delta$ to be~$e_{\grj}$.
Thus, looking again at the non-vanishing genus-zero correlators, we must have $e_\beta=e_{\grj}$ and $J^c=\emptyset$.
Hence, $e_\alpha=e_{\grj^3}$ and $\tau_0(e_J)$ contains at least one $\tau_0(e_{\grj^3})$ (the one coming from the $3$ in~$\og^T_{3,0}$). Furthermore, $\tau_0(e_J)$ contains no $\tau_0(e_{\grj})$. Precisely, we have
$$\langle \tau_0(e_J)\tau_0(e_\alpha) \rangle_0 = \langle\tau_0(e_{\grj^3})^{n+2} \rangle_0,$$
and since the integer $n$ is at least $2$, we have only two possibilities:
\begin{enumerate}
\item $n=2$ leading to $-\frac{\epsilon^2}{12} \frac{s  (u_1^3)^2}{16}$,
\item $n=3$ leading to $-\frac{\epsilon^2}{12} \frac{u^3  (u_1^3)^2}{16}$.
\end{enumerate}
At last, we have
\begin{equation*}
\og^T_{3,0}  = \int \left( \frac{(u^1)^2}{2} + \frac{s^2}{32} (u^3)^2 + \frac{s}{48} (u^3)^3 + \frac{(u^3)^4}{192} - \frac{\epsilon^2}{192} (u_1^3)^2 (u^3+s) + O(\epsilon^4)\right) dx
\end{equation*}
and we obtain
\begin{equation*}
\frac{\delta \og^T_{3,0}}{\delta u^3} = \frac{(u^3)^3}{48} +\frac{s (u^3)^2}{16} + \frac{s^2 u^3}{16} + \frac{\epsilon^2}{192} \left((u_1^3)^2 + 2 u_2^3 (u^3 + s) \right) + O(\epsilon^4).
\end{equation*}

\subsubsection*{Summary of the results}
We compare the formulas found for the DZ and DR hierarchies:
for the $B_2$-singularity as a CohFT, i.e.~for the theory $c^T$, we have
\begin{eqnarray*}
\Omega^{\mathrm{DZ},T}_{3,0;3,0} & = & \frac{(w^3)^3}{48} +  \frac{s (w^3)^2}{16} + \frac{s^2 w^3}{16} + \frac{\epsilon^2}{4} \left( \frac{1}{192} (w_1^3)^2 + \frac{1}{24} w_2^3 (w^3+s) + \frac{1}{12} \frac{(w_1^1)^2}{(w^3+s)^2} \right) + O(\epsilon^4), \\
\frac{\delta \og^T_{3,0}}{\delta u^3} & = & \frac{(u^3)^3}{48} +\frac{s (u^3)^2}{16} + \frac{s^2 u^3}{16} + \frac{\epsilon^2}{4} \left(\frac{1}{48} (u_1^3)^2 +  \frac{1}{24} u_2^3(u^3 + s) \right) + O(\epsilon^4),
\end{eqnarray*}
and for the $B_2$-singularity as a partial CohFT, i.e.~ for the theory $c^I$, we have
\begin{eqnarray*}
\Omega^{\mathrm{DZ},I}_{3,0;3,0} & = & \frac{(w^3)^3}{48} +  \frac{s (w^3)^2}{16} + \frac{s^2 w^3}{16} + \frac{\epsilon^2}{4} \left( \frac{1}{32} (w_1^3)^2 + \frac{1}{16} w_2^3 (w^3+s) + \frac{1}{12} w_2^1 \right) + O(\epsilon^4), \\
\frac{\delta \og^I_{3,0}}{\delta u^3} & = & \frac{(u^3)^3}{48} +\frac{s(u^3)^2}{16} + \frac{s^2u^3}{16} + \frac{\epsilon^2}{4} \left( \frac{1}{32} (u_1^3)^2 +\frac{1}{16} u_2^3 (u^3+s) + \frac{1}{24} u_2^1 \right) + O(\epsilon^4).
\end{eqnarray*}

\begin{remark}
For the theory $c^I$, the coefficients for $\epsilon^2 w^1_2$ and for $\epsilon ^2 u^1_2$ are different. It is explained by the Miura transformation
\begin{equation*}
w^1  =  u^1 + \frac{\epsilon^2}{96} u^3_2, \quad
w^2  =  u^2, \quad w^3  =  u^3,
\end{equation*}
found in \cite{BG15} to go from the DR hierarchy to the DZ hierarchy of the $4$-spin theory.
Indeed, restricting to the invariant coordinates, i.e.~taking $w^2=u^2=0$, we should have
$$
\d_x\Omega^{\mathrm{DZ},I}_{3,0;3,0} = \frac{\d w^1}{\d t^3} = \frac{\d u^1}{\d t^3} + \frac{\epsilon^2}{96} \frac{\d u^3_2}{\d t^3} = \d_x \left(\frac{\delta\og^I_{3,0}}{\delta u^3} + \frac{\epsilon^2}{96} \d_x^2 \frac{\delta\og^I_{3,0}}{\delta u^1} \right),
$$
and from the expression of the Hamiltonian $\og^I_{3,0}$, we find
$$\frac{\delta\og^I_{3,0}}{\delta u^1} = u^1  + \frac{\epsilon^2}{4}  \frac{u^3_2}{24} + O(\epsilon^4),$$
so that we obtain
$$
\Omega^{\mathrm{DZ},I}_{3,0;3,0} - \left(\frac{\delta\og^I_{3,0}}{\delta u^3} + \frac{\epsilon^2}{96} \d_x^2 \frac{\delta\og^I_{3,0}}{\delta u^1} \right) = \frac{\epsilon^2}{4} \frac{1}{12} w_2^1 - \left( \frac{\epsilon^2}{4} \frac{1}{24} u_2^1 + \frac{\epsilon^2}{96} u_2^1 \right) + O(\epsilon^4) = O(\epsilon^4).
$$
\end{remark}


\subsection{Singularities of low degree}

For polynomial singularities of low degree, we have a dual statement to Proposition \ref{proposition:nonpositivec1}.

\begin{proposition}\label{proposition:nonpositivec2}
Let $(W,\langle \grj \rangle)$ be a Landau--Ginzburg orbifold, where $W$ is a homogeneous polynomial with $N$ variables and degree $d$ with $4 \leq d \leq N$ or $2d=6\leq N$.
The double ramification hierarchy associated to the FJRW theory of $(W,\langle \grj \rangle)$ is given by
$$g_{\alpha,p} = g^{[0]}_{\alpha,p}+ \delta_{\alpha,1} \frac{\eps^2}{24} \frac{\chi_{(W,\langle \grj \rangle)}}{p!} (u^1)^p \ u^1_{xx},$$
where $u^1$ is the variable associated with the unity and $\chi_{(W,\langle \grj \rangle)}$ is the difference of dimension between the even and the odd subspaces of the $\ZZ_2$-graded state space $\st_{(W,\langle \grj \rangle)}$, i.e.~
$$\chi_{(W,\langle \grj \rangle)} = \frac{d^2-1 + (1-d)^N}{d}.$$
\end{proposition}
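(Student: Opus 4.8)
## Proof plan for Proposition~\ref{proposition:nonpositivec2}

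The plan is to mimic closely the proof of Proposition~\ref{proposition:nonpositivec1}, replacing the Gromov-Witten dimension axiom by the analogous dimension axiom for FJRW theory, and replacing the Euler characteristic of the target variety by the ``Euler characteristic'' $\chi_{(W,\langle\grj\rangle)}$ of the $\mathbb{Z}_2$-graded state space. First I would recall that the FJRW theory of $(W,\langle\grj\rangle)$ is a $\mathbb{Z}_2$-graded cohomological field theory with values in $H^*(\oM_{g,n},\mathbb{C})$, whose even part $c^{\mathrm{even}}_{g,n}$ therefore produces a generalized double ramification hierarchy by the results of Section~\ref{section:FJRW}. The key input is the degree (Hodge grading) axiom: a class $c^{W,\langle\grj\rangle}_{g,n}(u_1\otimes\cdots\otimes u_n)$ sits in cohomological degree
$$
\deg = \sum_{i=1}^n \mathrm{deg}(u_i) + (\hat c_W - 2)(g-1) + \text{(correction terms)},
$$
where $\hat c_W = \sum_j(1 - 2w_j/d) = N - 2N/d$ is the central charge; more precisely, for the narrow insertions entering the DR hierarchy the relevant statement is the same kind of linear constraint in the number $a$ of unit insertions as equation~\eqref{eq:units}, with $\dim X$ replaced by $\hat c_W$ and $\langle c_1(X),d\rangle$ absent. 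Under the hypothesis $4\le d\le N$ or $2d=6\le N$ one has $\hat c_W \ge 2$ (equivalently the singularity is not of ``Fano'' type), so the analogue of the bound $a > g$ for $g\ge 1$ holds, possibly with the single exceptional case $g=1$, $n=2$.

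The second step is the push-forward argument. Using Lemma~\ref{lemma:DR and fundamental class} — $\pi_*\DR_g(a_1,\ldots,a_{n+g}) = g!\,a_{n+1}^2\cdots a_{n+g}^2[\oM_{g,n}]$ — together with the fact that $a > g$ forces at least $g+1$ unit insertions (which can be pushed forward along the forgetful maps exactly as in the proof of Proposition~\ref{proposition:nonpositivec1}), one concludes that $\og$ has no term of genus $g\ge 1$ except possibly the $g=1$, $n=2$ term. For that term, $\DR_1(a,-a) = \frac{a^2}{2}(\psi_1+\psi_2)$ and $\lambda_1 = \frac{1}{24}\delta_{\mathrm{irr}}$ on $\oM_{1,2}$, so $\int_{\DR_1(a,-a)}\lambda_1 c^{\mathrm{even}}_{1,2}(e_1^2) = \frac{a^2}{24}\int_{\oM_{0,4}}\psi_1\, c^{\mathrm{even}}_{0,4}(e_1^2\otimes e_\mu\otimes e_\nu)\eta^{\mu\nu}$, and one evaluates the genus-$0$ class via the loop/splitting structure. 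The point is that for a $\mathbb{Z}_2$-graded CohFT the coefficient of $u^1 u^1_{xx}$, which for an honest even CohFT equals $\frac{1}{48}\dim V$, becomes $\frac{1}{48}\sum_\mu (-1)^{|\mu|}\delta_\mu^\mu = \frac{1}{48}(\dim \st^{\mathrm{even}} - \dim\st^{\mathrm{odd}}) = \frac{1}{48}\chi_{(W,\langle\grj\rangle)}$, exactly as remarked at the end of the proof of Proposition~\ref{proposition:nonpositivec1}. The explicit value $\chi_{(W,\langle\grj\rangle)} = \frac{d^2-1+(1-d)^N}{d}$ then follows from the standard formula for the signed dimension of the state space of a Landau-Ginzburg orbifold with group $\langle\grj\rangle$ (equivalently, the signed count of monomials in the Milnor ring twisted by powers of $\grj$), which I would record with a short direct computation from the definition of $\st_{(W,\langle\grj\rangle)}$.

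Finally, for the hamiltonian densities $g_{\alpha,p}$ the plan is to repeat verbatim the recursion argument at the end of the proof of Proposition~\ref{proposition:nonpositivec1}: using the recursion $\partial_x(D-1)g_{\alpha,p} = \{g_{\alpha,p-1},(D-2)\og\}$ from~\cite{BR14}, the string equation $\frac{\partial g_{\alpha,p}}{\partial u^1} = g_{\alpha,p-1}$, and the dimension count to show inductively that the variable dual to the ``top'' state never appears for $\alpha\ne 1$, while for $\alpha=1$ it appears exactly through the term $\frac{\eps^2}{24}\frac{\chi_{(W,\langle\grj\rangle)}}{p!}(u^1)^p u^1_{xx}$ generated from the genus $1$, $n=2$ primary term. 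The main obstacle I anticipate is not the DR/push-forward machinery — which transfers with no change — but rather stating and citing the correct FJRW Hodge-grading/dimension axiom in the narrow (non-broad) sector and checking that the hypotheses $4\le d\le N$ or $2d=6\le N$ are precisely what is needed to guarantee $\hat c_W\ge 2$ with the $g=1$, $n=2$ term as the unique survivor; a secondary routine point is pinning down the sign conventions so that $\sum_\mu(-1)^{|\mu|}\delta_\mu^\mu$ really computes the claimed closed-form $\chi_{(W,\langle\grj\rangle)}$.
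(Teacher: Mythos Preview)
Your proposal is correct and follows essentially the same approach as the paper's proof. The only ingredient you leave slightly implicit is that, in addition to $\hat c_W\ge 2$, one must also check $\deg e_k\ge 2$ for every non-unit sector $e_k$ (this is where the hypothesis $N\ge d$ enters, giving $\deg e_k = 2(k-1)N/d\ge 2$ for $2\le k\le d-1$ and $\deg e_d=\hat c_W\ge 2$), so that the sum $\sum_k b_k(\tfrac{\deg e_k}{2}-1)$ in your analogue of~\eqref{eq:units} is nonnegative and the bound $a>g$ follows.
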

\begin{proof}
In general, for a quasi-homogeneous polynomial $W$ with weights $w_1, \dotsc,w_N$ and degree $d$, we define the charges $\fq_j := \cfrac{w_j}{d}$.
Then, we recall that the state space $\st_{(W,\langle \grj \rangle)}$ is a direct sum of subspaces $\st_{\grj^k}$ for $1 \leq k \leq d$, and that $\st_{\grj}$ is always one-dimensional and generated by the unity.
Furthermore, the cohomological degree of the map $c_{g,n}$ for the FJRW theory of $(W,\grj)$ is $$\deg c_{g,n}(1^a \otimes e_2^{b_2}\otimes\ldots\otimes e_d^{b_d})= 2(\hat{c}_W-1) (g-1)+\sum_{k=1}^d b_k \deg e_k,$$
where $e_k \in \st_{\grj^k}$, the central charge $\hat{c}_W$ is defined as $\hat{c}_W=\sum_j (1-2 \fq_j)$, the degree of $e_k$ is
$$\deg e_k = \textrm{card}\left\lbrace j \left| \right. k \fq_j \in \ZZ \right\rbrace  + 2 \sum_j \langle k \fq_j\rangle - \fq_j,$$
and the $\ZZ/2$-grading is
$$\deg_{\ZZ_2} e_k = (-1)^{\textrm{card}\left\lbrace j \left| \right. k \fq_j \in \ZZ \right\rbrace}.$$
Here, the polynomial $W$ is homogeneous, so that $w_1=\dotsc=w_N=1$ and we see that $$\deg e_k = 2 \cfrac{(k-1)N}{d} - N \delta_{k=d} \geq 2 ~,~~ \textrm{for }k \geq 2.$$

The intersection numbers $\int_{\DR_g(a_1,\ldots,a_n)} \lambda_g c_{g,n}(1^a\otimes e_2^{b_2}\otimes\ldots\otimes e_d^{b_d})$ appearing in the potential~$\og$ vanish unless the number of insertions of the unit is
$$a = \sum_{k=2}^d b_k\left(\frac{\deg e_k}{2} - 1\right) + 2 + (\hat{c}_W-1) (g-1).$$ Since we have $\hat{c}_W \geq 2$, we get $a>g$ for $g\geq 1$. Thus, the class $c_{g,n}(1^a\otimes e_2^{b_2}\otimes\ldots\otimes e_d^{b_d})$ is a pull-back from $\oM_{g,n-a}$ via the forgetful morphism $\oM_{g,n} \to \oM_{g,n-g} \to \oM_{g,n-a}$, and so is the class $\lambda_g$. But we have the formula $\pi_*\DR_g(a_1,\ldots,a_n) = g! a_1^2\ldots a_g^2 [\oM_{g,n-g}]$ for the push-forward along the first forgetful morphism $\pi:\oM_{g,n} \to \oM_{g,n-g}$. Therefore, the integral $$\int_{\DR_g(a_1,\ldots,a_n)} \lambda_g c_{g,n}(1^a\otimes e_2^{b_2}\otimes\ldots\otimes e_d^{b_d})$$ is zero unless $g=1,  n=2$.
In this case the forgetful map $\oM_{g,n} \to \oM_{g,n-g} \to \oM_{g,n-a}$ above is not defined and we know from Section \ref{section:genus1} and from Proposition \ref{proposition:nonpositivec1} that the coefficient of $u^1 u^1_{xx}$ is $\frac{1}{48} \chi_{(W,\langle \grj \rangle)}$, where the Euler characteristic means the difference of dimension between the even and the odd subspaces of the $\ZZ_2$-graded state space $\st_{(W,\langle \grj \rangle)}$.

Explicitly, the state space is a direct sum of subspaces $\st_{\grj^k}$ for $0 \leq k \leq d-1$.
They are all one-dimensional and even-degree, except for $\st_{\grj^0} = (\cQ_W)^\grj$, the $\grj$-invariant part of the Jacobian ring of $W$.
The space $\st_{\grj^0}$ is odd-degree if and only if $N$ is odd.

Let $h_k$ denote the dimension of the homogeneous subspace of $\cQ_W$ of degree $k$. Then, we have
$$ \sum_{k \geq 0} h_k t^k = \left( t + \dotsb + t^{d-1} \right)^N =:P(t).$$
Therefore, the dimension of the subspace $(\cQ_W)^\grj$ is
$$ \sum_{k \geq 0} h_{kd} t^{kd} = \cfrac{P(\zeta_d^0)+ \dotsb + P(\zeta_d^{d-1})}{d} = \frac{(d-1)^N+(-1)^N (d-1)}{d}$$
and the  Euler characteristic $\chi_{(W,\langle \grj \rangle)}$ is
\begin{eqnarray*}
\chi_{(W,\langle \grj \rangle)} & = & (d-1) + (-1)^N \frac{(d-1)^N+(-1)^N (d-1)}{d} \\
& = & \frac{d^2-1 + (1-d)^N}{d}. \\
\end{eqnarray*}

The rest of the proof is the same as for Proposition \ref{proposition:nonpositivec1}.

\end{proof}

\begin{remark}
When $d=N$, there is an isomorphism of graded vector spaces between $\st_{(W,\langle \grj \rangle)}$ and the cohomology of the associated hypersurface in $\mathbb{P}^{N-1}$, see \cite{statespace}.
In particular, we see that the Euler characteristics agree.
Furthermore, there is a precise conjecture \cite{Chiodo2} relating Gromov--Witten invariants of the hypersurface to FJRW invariants of $W$; it is called the Landau--Ginzburg/Calabi--Yau correspondence.
We see that Propositions \ref{proposition:nonpositivec1} and \ref{proposition:nonpositivec2} are compatible with such correspondence.
\end{remark}

\begin{remark}
Proposition \ref{proposition:nonpositivec2} holds for any Landau--Ginzburg orbifold $(W,G)$ when two conditions are satisfied: the central charge is $\hat{c}_W >1$ and every homogeneous element of the state space $\st_{(W,G)}$ is a multiple of the unity or is of degree more than $2$. This last property implies that $w_1+\dotsb+w_N \geq d$, but the latter is not a sufficient condition. For instance, take the polynomial $W=x^{12}+y_1^3+y_2^3+y_3^3$ with weights $(1,4,4,4)$ and degree $12$, then we have $\textrm{deg}(e_{\grj^4}) = 1/2$.
\end{remark}

\begin{example}
{\bf (quintic singularity)}. The Euler characteristic of the quintic polynomial $W=x_1^5+\dotsb+x_5^5$ is $-200$ and the rank-$4$ generalized double ramification hierarchy associated to the $\Aut(W)$-invariant part of the FJRW theory of $(W,\langle \grj \rangle)$ is given by
$$\og=\int \left(\frac{1}{2} (u^1)^2 u^4 + \frac{1}{6} (u^2)^3 + u^1 u^2 u^3 + \sum_{k \geq 1} n_{3+5k} \cfrac{(u^2)^{3+5k}}{(3+5k)!} + \frac{25 \eps^2}{6}(u^1_x)^2 \right) \ dx,$$
where the numbers $n_k$ are the following FJRW invariants of the quintic singularity $$n_k := \int_{\overline{\cM}_{0,k}} c_{0,k}(e_{\grj^2}^k).$$
We see that the coefficient of $\frac{1}{6}(u^2)^3$ in above expression differs from the one in the quintic hypersurface Example \ref{quin}. The coefficient $1$ is indeed the value of the FJRW correlator $\langle \tau_0(e_{\grj^2})^3\rangle^W_{0,3}$ and enters into the coefficient of the small quantum product
$$
e_{\grj^2} \star^W_0 e_{\grj^2} = \frac{\langle \tau_0(e_{\grj^2})^3 \rangle^W_{0,3}}{(e_{\grj^2},e_{\grj^3})_W} \cdot e_{\grj^3} = e_{\grj^3} ~,~~ \textrm{with } (e_{\grj^2},e_{\grj^3})_W=1.
$$
For the quintic hypersurface $X$, the coefficient $5$ is the value of the GW correlator $\langle\tau_0(h)^3\rangle^X_{0,3,0}$ and also comes from the quantum product
$$h \star^X_0 h = \frac{\langle\tau_0(h)^3\rangle^X_{0,3,0}}{(h,h^2)_X} \cdot h^2 =h^2 ~,~~ \textrm{with } (h,h^2)_X=5,$$
where $h$ is the hyperplane class.
The quantum products $\star^X$ and $\star^W$ for the GW theory of $X$ and for the FJRW theory of $(W,\langle \grj \rangle)$ are not expected to be the same.
Instead, we should view the Landau--Ginzburg/Calabi--Yau correspondence as a duality of the quantum products
$$\star^X_q \sim_{q=t^{-5}} \star^W_t.$$

\begin{remark}\label{quinticsing}
The restriction of the FJRW theory of $(W,\Aut(W))$ to the subspace $$(\st_{(W,\langle \grj \rangle)})^{\Aut(W)} \subset \st_{(W,\Aut(W))}$$ also has rank $4$ and it has the same genus-$0$ part as the quintic singularity, but the Euler characteristic is $1075$, so the double ramification hierarchy is different:
$$\og=\int \left(\frac{1}{2} (u^1)^2 u^4 + \frac{1}{6} (u^2)^3 + u^1 u^2 u^3 + \sum_{k \geq 1} n_{3+5k} \cfrac{(u^2)^{3+5k}}{(3+5k)!} - \frac{1075 \eps^2}{48}(u^1_x)^2 \right) \ dx.$$
\end{remark}

\addtocontents{toc}{\protect\setcounter{tocdepth}{1}}

\subsection*{Landau--Ginzburg/Calabi--Yau correspondence}
As already mentioned, the generating function of the coefficients $n_{3+5k}$ is related to the generating function of the numbers $c_d$ appearing in the quintic hypersurface Example \ref{quin}.
It is called the Landau--Ginzburg/Calabi--Yau correspondence. We briefly explain it below and we refer to \cite{LG/CYquintique} for a detailed treatment.

Take a cohomological field theory with vector space $V$ and variables $u^1,\dotsc,u^N$ associated to a basis $e_1,\dotsc,e_N$ of $V$, with unit $e_1$.
We define the J-function
$J \colon V \to V[z,z^{-1}]\!]$
to be
\begin{eqnarray*}
J(u,-z) & := & -z e_1+u+ \sum_{n \geq 2} \sum_{d \geq 0} \frac{1}{n!} \langle\tau_0(u)^n\tau_d(e_\alpha)\rangle_{0,n+1} \eta^{\alpha \beta} \frac{e_\beta}{(-z)^{d+1}} \\
& = & -z e_1 + u + \sum_{d \geq 0}g^{[0]}_{\alpha,d} (u) \cdot \frac{e^\alpha}{(-z)^{d+1}},\quad u = u^\alpha e_\alpha.
\end{eqnarray*}
The J-function has some special properties. For instance, it is the only function of the form $-z e_1 + u + O(z^{-1})$ lying on the so-called Givental cone of the cohomological field theory, see~\cite{Giv04}.
Therefore, when it is possible to find another function $I$ on the Givental cone, with the form
$$I(t,-z)= -\omega_{-1}(t) z e_1 + \sum_{d \geq 0} \frac{\omega_d(t)}{(-z)^d},$$
where $\omega_{-1}(t) \in \mathbb{C}^*$ and $\omega_d(t) \in V$ for $d \geq 0$, then we obtain
\begin{equation}
J \left( u:= \frac{\omega_0(t)}{\omega_{-1}(t)},-z \right) = - z e_1 + u + \sum_{d \geq 0} \frac{\omega_{d+1}(t)}{\omega_{-1}(t)}\frac{1}{(-z)^{d+1}},
\end{equation}
and as a consequence we have
$$\frac{\omega_{d+1}(t)}{\omega_{-1}(t)} = g^{[0]}_{\alpha,d} (u) ~ e^\alpha.$$

The above discussion on I- and J-functions applies in particular to the Gromov--Witten theory of the quintic hypersurface $X$ and to the FJRW theory of the pair $(W,\langle \grj \rangle)$.
Explicitly, we have
\begin{eqnarray*}
I_{GW}(t) & = & z ~ t^{\frac{H}{z}} \sum_{k \geq 0}  t^k \cfrac{(5H+z) \cdot (5H+2z) \dotsm (5H+5kz)}{((H+z) \cdot (H+2z) \dotsm (H+kz))^5}\\
& = & \omega_{-1}^{GW}(t) z + \omega_0^{GW}(t) H + \omega_1^{GW}(t) \frac{H^2}{z} + \omega_2^{GW}(t) \frac{H^3}{z^2}, \\
I_{FJRW}(s) & = & z \sum_{k \geq 1} s^k e_{\grj^k}  \cfrac{(\langle \frac{k}{5}\rangle \cdot (\langle \frac{k}{5}\rangle + 1) \dotsm (\langle \frac{k}{5}\rangle + \lfloor \frac{k}{5} \rfloor - 1))^5}{\lfloor \frac{k}{5} \rfloor !} (-z)^{4 \cdot \lfloor \frac{k}{5} \rfloor} \\
& = & \omega_{-1}^{FJRW}(s) e_{\grj} z + \omega_0^{FJRW}(s) e_{\grj^2} + \omega_1^{FJRW}(s) \frac{e_{\grj^3}}{z} + \omega_2^{FJRW}(s) \frac{e_{\grj^4}}{z^2}.
\end{eqnarray*}
Furthermore, the relations
$$u^{GW}= \frac{\omega^{GW}_0(t) H}{\omega^{GW}_{-1}(t)} \quad \textrm{and} \quad u^{FJRW}= \frac{\omega^{FJRW}_0(s) e_{\grj^2}}{\omega^{FJRW}_{-1}(s)}$$
can be inverted and we deduce the values of the numbers $c_d$ and $n_k$, see \cite{Giv0,LG/CYquintique}:
\begin{eqnarray*}
& & c_1=2975, \quad c_2=609250, \quad c_3=317206375, \dotsc \\
& & n_3=1, \quad n_8=\frac{8}{625}, \quad  n_{13}=\frac{5736}{78125}, \dotsc
\end{eqnarray*}
The Landau--Ginzburg/Calabi--Yau correspondence \cite{Chiodo2} relates these two series of numbers via a change of variables and an analytic continuation of the I-functions.

\begin{theorem}[{\cite[Theorem 4.2.4]{LG/CYquintique}}]
There exists an explicit linear isomorphism $$\mathbb{U} \colon (\st_{(W,\langle \grj \rangle)})^{\Aut(W)} [z,z^{-1}] \to H^{\textrm{even}}(X,\mathbb{C})[z,z^{-1}]$$ such that
$$\mathbb{U}(I_{FJRW}(s)) =: \widetilde{I}(s)$$
is an analytic continuation of the function $I_{GW}(t)$ under the change of variables $s^5 \cdot t=1$.
\end{theorem}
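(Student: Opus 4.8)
The plan is to reduce the statement to the classical theory of analytic continuation of generalized hypergeometric series. \textbf{Step 1: both $I$-functions solve one Fuchsian system.} First I would strip off the overall factors $z\,t^{H/z}$ on the Gromov--Witten side and the leading $z\,e_{\grj^k}$ on the FJRW side and observe that the remaining components of $I_{GW}(t)$ and of $I_{FJRW}(s)$ are ${}_4F_3$-type series in $t$, respectively in $s^{5}=t^{-1}$, built from the same numerical data: the numerator Pochhammer symbols come from $(5H+z)(5H+2z)\cdots(5H+5kz)$, resp.\ from $\bigl((\langle k/5\rangle)_{\lfloor k/5\rfloor}\bigr)^{5}$, and the denominators are $\bigl((H+z)\cdots(H+kz)\bigr)^{5}$, resp.\ $\lfloor k/5\rfloor!$. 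Consequently all entries of $z^{-1}I_{GW}$ are annihilated by the same order-four (quantum Picard--Fuchs) operator
$$
L_z=\Bigl(z\,t\tfrac{d}{dt}\Bigr)^{\!4}-5^{5}\,t\prod_{m=1}^{4}\Bigl(z\,t\tfrac{d}{dt}+\tfrac{m}{5}\,z\Bigr),
$$
whose regular singular points are $t=0$, $t=5^{-5}$ and $t=\infty$; the entries of $z^{-1}I_{FJRW}$, after the substitution $s^{5}t=1$, are exactly the Frobenius basis of solutions of $L_z$ at $t=\infty$, the local exponents $\langle k/5\rangle$ and the prefactor $t^{H/z}$ recording the monodromy at the two respective points.

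\textbf{Step 2: the connection matrix.} The heart of the proof is to continue $I_{GW}$ analytically from a punctured neighbourhood of $t=0$ to one of $t=\infty$ along a path avoiding $t=5^{-5}$, and to identify the limit. I would use a Mellin--Barnes (Meijer $G$-function) representation, writing each component of $z^{-1}I_{GW}(t)$ as $\frac{1}{2\pi i}\int_{\mathcal C}R(w)\,t^{w}\,dw$ with $R(w)$ a ratio of $\Gamma$-factors ($\Gamma(-w)$, $\Gamma(5w+\ast)$ in the numerator, $\Gamma(w+1)^{5}$ in the denominator, shifted by $H/z$). Deforming $\mathcal C$ to the right and summing the residues at the poles of the numerator $\Gamma$'s reassembles precisely the hypergeometric series defining $I_{FJRW}$ in the variable $t^{-1}$, the residue computation producing the coefficients $\bigl((\langle k/5\rangle+\ell)_{\ell}\bigr)^{5}/\lfloor k/5\rfloor!$. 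Since the resulting identity between the analytic continuation of $I_{GW}$ and $I_{FJRW}$ is valid for all $t$ in the overlap of the two domains, it is realized by a $t$-independent $\mbC[z,z^{-1}]$-linear transformation; I would take its matrix, whose entries are assembled from roots of unity and values $\Gamma(k/5)$, as the \emph{definition} of $\mathbb{U}$, so that $\mathbb{U}(I_{FJRW}(s))=\widetilde I(s)$ with $\widetilde I$ the continuation of $I_{GW}$.

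\textbf{Step 3: properties of $\mathbb{U}$.} Finally I would check that $\mathbb{U}$ has the advertised source and target and is an isomorphism. Both $(\st_{(W,\langle \grj \rangle)})^{\Aut(W)}$ and $H^{\mathrm{even}}(X,\mbC)$ are four-dimensional and graded, and in the natural bases $e_{\grj^{k}}\leftrightarrow H^{k-1}$ the connection matrix is block-triangular with nonvanishing diagonal, hence invertible; $z$-linearity and the correct homogeneous degrees follow from the $z$-gradings of the $I$-functions, and compatibility with the substitution $s^{5}t=1$ is forced by matching the two Euler (grading) operators. I expect Step 2 to be the main obstacle: one must justify the contour deformation past the singular point $t=5^{-5}$, bound the Mellin--Barnes integrand so that the residue series converges to exactly the FJRW series with its a priori opaque hypergeometric coefficients, and keep track of the chosen branch of $t^{H/z}$; the remaining verifications are linear algebra and degree bookkeeping. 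Since the statement is quoted from \cite{LG/CYquintique}, for the purposes of this paper it suffices to invoke that reference, the above sketch exhibiting the origin of the isomorphism $\mathbb{U}$.
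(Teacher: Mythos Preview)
The paper does not prove this theorem at all: it is stated with attribution to \cite[Theorem~4.2.4]{LG/CYquintique} and is used only as background for the discussion of the Landau--Ginzburg/Calabi--Yau correspondence. So there is no ``paper's own proof'' to compare against; you correctly observe in your final sentence that invoking the reference suffices here.

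That said, your sketch is a fair outline of the argument actually given in \cite{LG/CYquintique}: one identifies the components of both $I$-functions as solutions of the same Picard--Fuchs type operator with regular singular points at $t=0$, $t=5^{-5}$, $t=\infty$, and then uses a Mellin--Barnes representation to carry out the analytic continuation from the Gepner point to the large-radius point, the connection matrix furnishing the isomorphism $\mathbb{U}$. Your caveats about Step~2 (contour deformation, convergence, branch choice) are exactly where the work lies in the original reference. One minor point: you describe $\mathbb{U}$ as block-triangular in the bases $e_{\grj^k}\leftrightarrow H^{k-1}$, but the actual connection matrix coming out of the Mellin--Barnes computation is generically full; invertibility follows rather from the non-vanishing of its determinant (a product of Gamma values and roots of unity), not from triangularity.
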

\end{example}

\end{document}